\documentclass[11pt,reqno]{amsart}

\usepackage{amsmath}
\usepackage{amssymb}
\usepackage{amsfonts}
\usepackage{amsthm}
\usepackage{a4wide}
\usepackage{bbm}
\usepackage{graphics}
\usepackage{color}
\usepackage[sans]{dsfont}




\newtheorem{theorem}{Theorem}[section]
\newtheorem{lemma}[theorem]{Lemma}
\newtheorem{proposition}[theorem]{Proposition}
\newtheorem{corollary}[theorem]{Corollary}

\newtheorem{remark}[theorem]{Remark}

\numberwithin{equation}{section}

\renewcommand{\d}{\mathrm{d}}
\newcommand{\el}{\mathrm{el}}

\newcommand{\ii}{\mathrm{i}}

\def\C{{\mathbb C}}

\def\N{{\mathbb N}}
\def\R{{\mathbb R}}

\def\Z{{\mathbb Z}}

\def\<{\langle}
\def\>{\rangle}

\newcommand{\im}{\operatorname{Im}}
\newcommand{\re}{\operatorname{Re}}
\newcommand{\supp}{\operatorname{supp}}
\newcommand{\ad}{\operatorname{ad}}
\newcommand{\slim}{\mathop{\text{\rm{s-lim}}}}
\newcommand{\spec}{\operatorname{Spec}}

\newcommand{\DETAILS}[1]{}

\newcommand{\fract}[2]{\genfrac{}{}{0pt}{}{\scriptstyle #1}{\scriptstyle #2}}

\setcounter{tocdepth}{1}

\pagestyle{myheadings}                         

\title[]{Resolvent smoothness and local decay at low energies for the standard model of non-relativistic QED}

\author[J.-F. Bony]{Jean-Fran\c{c}ois Bony}
\address[J.-F. Bony]{Institut de Math{\'e}matiques de Bordeaux \\
UMR-CNRS 5251, Universit{\'e} de Bordeaux 1 \\
351 cours de la lib{\'e}ration, 33405 Talence Cedex, France}
\email{bony@math.u-bordeaux1.fr}
\author[J. Faupin]{J{\'e}r{\'e}my Faupin}
\address[J. Faupin]{Institut de Math{\'e}matiques de Bordeaux \\
UMR-CNRS 5251, Universit{\'e} de Bordeaux 1 \\
351 cours de la lib{\'e}ration, 33405 Talence Cedex, France}
\email{jeremy.faupin@math.u-bordeaux1.fr}


\begin{document}

\begin{abstract}
We consider an atom interacting with the quantized electromagnetic field in the standard model of non-relativistic QED. The nucleus is supposed to be fixed. We prove smoothness of the resolvent and local decay of the photon dynamics for quantum states in a  spectral interval $I$ just above the ground state energy. Our results are uniform with respect to $I$. Their proofs are based on abstract Mourre's theory, a Mourre inequality established in \cite{FGS1}, Hardy-type estimates in Fock space, and a low-energy dyadic decomposition.
\end{abstract}

\maketitle

\tableofcontents

\section{Introduction and main results}

We study the dynamics of a non-relativistic atom interacting with the quantized electromagnetic field in the so-called \emph{standard model of non-relativistic QED}. If simplified to the extreme, the physical picture describing the evolution of states according to the dynamics associated with this model can be summed up as follows: As time goes to infinity, any initial state will eventually relax to the ground state by emitting photons that escape to spatial infinity. In the last years, a lot of works have been devoted to rigorous mathematical justifications of some aspects of this physical picture. In particular, among many others, we mention the following references on the proof of existence of a ground state (\cite{BFS1,GLL,BFP}), the study of resonances and lifetime of metastable states (\cite{BFS1,BFS,AFFS}),  spectral analysis (\cite{Sk,GGM,FGS1}) and partial results on scattering theory (\cite{Sp,DG,FGSc,Ge}). Completely justifying the above picture in a mathematically rigorous way would require, of course, to develop a full scattering theory for the model, in particular to prove asymptotic completeness of the wave operators, which remains an important open problem for systems of non-relativistic particles interacting with massless bosons.

In this paper, we study spectral and dynamical properties of the standard model of non-relativistic QED in the low-energy region, more precisely, in a spectral interval located just above the ground state energy and strictly below the first excited eigenvalue of the electronic Hamiltonian. In some sense, our main results will justify that the propagation velocity of low-energy photons is momentum independent, which, of course, reflects the constant speed of light. On the technical level, as is often the case, the infrared singularity intrinsic to the interaction between the atom and photons involves substantial difficulties related to the infrared problem.

Generally speaking the issue we address concerns the study of a self-adjoint operator near a threshold. The asymptotic behavior of the resolvent (and of associated quantities) near thresholds has been the subject of many studies in various fields of mathematical physics. The employed methods are varied, too. As in the work of Jensen and Kato \cite{JK}, perturbation theory can be used to consider $- \Delta + V(x)$. Resonance theory is very effective to treat dilatation analytic operators (see e.g. \cite{BFS1}) and compactly (or exponentially decaying) perturbation (see e.g. Va{\u{\i}}nberg \cite{Va}). One can also use Mourre's theory to prove limiting absorption principles at low energies. This approach was adopted, for example, by Richard \cite{Ri} who gives an abstract formalism, Bouclet \cite{Bo1,Bo2} and H\"{a}fner and the first author \cite{BH2,BH3} for long range metric perturbations of $- \Delta$, Boussaid and Gol\'enia \cite{BG} for Dirac systems, Soffer \cite{So} for $( - \Delta )^{1/2} + V(x)$, \ldots

Our paper rests on abstract results established in the framework of Mourre's theory (\cite{JMP,HSS}), in conjunction with a Mourre inequality obtained recently by Fr{\"o}hlich, Griesemer and Sigal in \cite{FGS1}. Since the work of Jensen, Mourre and Perry, \cite{JMP}, it is a well-known fact that a Mourre inequality combined with multiple commutator estimates and regularity properties yield \emph{smoothness of the resolvent}. More precisely, given a self-adjoint operator, $P$, another self-adjoint operator, $A$, conjugate to $P$ in the sense of Mourre, and a compact interval $J$ where the Mourre inequality holds, the following is satisfied:
\begin{equation}\label{d34}
\sup_{\re z \in J , \, \im z \neq 0} \left \Vert \frac{\d^n}{\d z^n} \langle A \rangle^{-n - \frac{1}{2} - \varepsilon} ( P - z )^{-1} \langle A \rangle^{-n - \frac{1}{2} - \varepsilon} \right \Vert < \infty,
\end{equation}
for any $\varepsilon>0$, where $\langle A \rangle := (1+A^2)^{1/2}$, provided that the iterated commutators $\ad^k_{A}(P)$ (defined, as usual, by $\ad^{0}_{A}(P) : = P$ and $\ad^{k+1}_{A}(P) : = [ \ad^k_{A}(P) , A ]$) are suitably bounded for $1 \leq k \leq n + 2$; (see Theorem \ref{b9} of the present paper for a precise statement). From \eqref{d34} follows the existence (and smoothness) of the boundary values of the resolvent $\langle A \rangle^{-1/2-\varepsilon} ( P - \lambda \pm \ii 0 )^{-1} \langle A \rangle^{-1/2-\varepsilon}$, and the absolute continuity of the spectrum of $P$ in $J$.

In \cite{HSS}, under similar assumptions, Hunziker, Sigal and Soffer establish the \emph{local decay} property:
\begin{equation}\label{d35}
\left \Vert \langle A \rangle^{-s } e^{-\ii t P} \chi(P) \langle A \rangle^{-s } \right \Vert \lesssim \langle t \rangle^{-s}, \quad t \in \mathbb{R},
\end{equation}
for any $s>0$ and $\chi \in \mathrm{C}_0^\infty( J )$, provided that the commutators $\ad^k_{A}(P)$ are bounded for $0 \leq k \leq n$, where $n > s + 1$ (see Theorem \ref{a21} below). It should be noted that, via Fourier transform, resolvent smoothness \eqref{d34} implies local decay \eqref{d35} with, however, the weaker rate of decay $\langle t \rangle^{-s+1/2+\varepsilon}$. Likewise, \eqref{d35} implies \eqref{d34} with the ``bigger'' weights $\langle A \rangle^{-n-1-\varepsilon}$.

For the standard model of non-relativistic QED describing an atom with static nucleus and interacting with the quantized electromagnetic field, a Mourre estimate at low-energies has been proven in \cite{FGS1}. The conjugate operator in \cite{FGS1} is the generator of dilatations in Fock space, denoted by the symbol $B$. If $\sigma \ll 1$ represents the size of the spectral interval $J_\sigma$ under consideration and its distance to the bottom of the spectrum of the Hamiltonian $H_\alpha$ (see \eqref{d25} for the definition of $H_\alpha$), then the Mourre inequality is of the form 
\begin{equation}\label{d32}
\mathds{1}_{J_\sigma}(H_\alpha) [H_\alpha , \ii B] \mathds{1}_{J_\sigma}(H_\alpha) \geq \mathrm{c}_0 \sigma \mathds{1}_{J_\sigma}(H_\alpha),
\end{equation}
for some positive constant $\mathrm{c}_0$. Assuming in addition uniform bounds with respect to $\sigma$ on the iterated commutators $\ad_B^k(\chi_\sigma(H_\alpha))$ (such bounds are proven in Appendix \ref{c12} below), this Mourre inequality yields the local decay property 
\begin{equation}\label{d33}
\left \Vert \langle B \rangle^{-s } e^{-\ii t H_\alpha} \chi_\sigma(H_\alpha) \langle B \rangle^{-s } \right \Vert \lesssim \langle \sigma t \rangle^{-s},
\end{equation}
for any $\chi_\sigma \in \mathrm{C}_0^\infty( J_\sigma )$. Similarly, one obtains bounds on weighted powers of the resolvent of the form 
\begin{equation}
\sup_{\re z \in J_\sigma , \, \im z \neq 0} \left \Vert \langle B \rangle^{-n+\frac{1}{2}-\varepsilon} ( H_\alpha - z )^{-n} \langle B \rangle^{-n+\frac{1}{2}-\varepsilon} \right \Vert \lesssim \sigma^{-n}.
\end{equation}
This non-uniformity with respect to $\sigma$ is, in fact, a typical problem one encounters when analyzing spectral and dynamical properties of a self-adjoint operator near thresholds. 

Now, it is not difficult to verify that the weights $\langle B \rangle^{-s}$ in \eqref{d33} can be replaced by $\langle X \rangle^{-s}$, where $X$ is the second quantization of the norm of the photon ``position'' operator (see \eqref{d21}). The local decay property \eqref{d33} then becomes a statement on the photon dynamics that can be interpreted as follows: Assume that the system is prepared in an initial state $\Phi$ that is in the domain of $\langle X \rangle^s$ and with spectral support in $J_\sigma$ (where, recall, $J_\sigma$ is an interval of size $\sigma$ located at a distance $\sigma \ll 1$ from the bottom of the spectrum of $H_\alpha$). Then, for large time $t \gg \sigma^{-1}$, the probability that the evolved state $e^{-\ii t H_\alpha} \Phi$ has remained in the domain of $\langle X \rangle^s$ is small (of order $\langle \sigma t \rangle^{-s}$). In other words, during the scattering process, some photons disperse to spatial infinity. This is in agreement with the physical picture mentioned above.

Our aim in this paper is the following: Since photons travel at the constant speed of light, it can be expected that resolvent smoothness and local decay hold uniformly in $\sigma$. This is precisely what we intend to prove.

Our starting point is \cite{FGS1}. For technical reasons, we consider the Mourre estimate with a modified conjugate operator, $B^\sigma$, given as the generator of dilatations in Fock space with a cutoff in the photon momentum variable; Roughly speaking, $B^\sigma$ restricts the action of $B$ to low-energy photons (see \eqref{g3} for the exact definition of $B^\sigma$). As in \eqref{d32}, the Mourre estimate is of the form $\mathds{1}_{J_\sigma}(H_\alpha) [H_\alpha , \ii B^\sigma] \mathds{1}_{J_\sigma}(H_\alpha) \geq \mathrm{c}_0 \sigma \mathds{1}_{J_\sigma}(H_\alpha)$. This inequality is established in \cite{FGS1} and is one of the main ingredients of the present paper. Next, we use methods similar to the ones of \cite{BH2,BH3}. From (second quantized versions of) Hardy's inequality, we derive bounds of the type
\begin{equation}
\left \Vert \langle X \rangle^{-s} \chi_\sigma(H_\alpha) \langle B^\sigma \rangle^s \right \Vert \lesssim \sigma^s.
\end{equation}
Thanks to a suitable low-energies dyadic decomposition, we then obtain uniform resolvent smoothness and local decay estimates, with weights expressed in terms of the second quantization of the norm of the photon position operator, $X$.

Our paper is organized as follows. Before stating our main results and comparing them with the literature in Subsection \ref{d27}, we begin with precisely defining the model we consider in Subsection \ref{d24}. In Section \ref{c20}, we recall results previously established in \cite{JMP}, \cite{HSS} and \cite{FGS1}, and we state a uniform estimate on multiple commutators. The latter is proven in Appendix \ref{c12}. In Section \ref{f4}, we derive Hardy-type estimate in Fock space as well as several other related inequalities. Our main theorems are proven in Section \ref{f5}. Finally, in Appendix \ref{d9}, various technical lemmata are gathered.

\subsection{Definition of the model} \label{d24}

We consider an atom interacting with the quantized electromagnetic field in the standard model of non-relativistic QED. The nucleus is supposed to be infinitely heavy and fixed at the origin. Moreover, to simplify the presentation, we consider a hydrogen atom, and we suppose that the electron is spinless. The Hilbert space of the total system is then the tensor product $\mathcal{H} := \mathcal{H}_{ \mathrm{el} } \otimes \mathcal{F}$, where $\mathcal{H}_{\mathrm{el} }$ is the Hilbert space for the electron given by $\mathcal{H}_{\mathrm{el} } := \mathrm{L}^2( \mathbb{R}^3 )$, and $\mathcal{F}$ is the symmetric Fock space over $\mathrm{L}^2( \mathbb{R}^3 \times \{ 1,2 \} )$, that is
\begin{equation}
\mathcal{F} := \Gamma( \mathrm{L}^2( \mathbb{R}^3 \times \{ 1,2 \} ) ).
\end{equation}
Here, for any Hilbert space $\mathfrak{h}$, $\Gamma( \mathfrak{h} )$ denotes the symmetric Fock space over $\mathfrak{h}$ defined by
\begin{equation}
\Gamma( \mathfrak{h} ) := \mathbb{C} \oplus \bigoplus_{n=1}^{\infty} \otimes^n_s \mathfrak{h},
\end{equation}
where $\otimes^n_s$ denotes the symmetric $n$th tensor product of $\mathfrak{h}$. The Hamiltonian of the model acts on $\mathcal{H}$ and is given by
\begin{align} \label{d25}
H_\alpha := \big( p + \alpha^{ \frac{3}{2} } A(\alpha x) \big)^2 + H_f +V(x),
\end{align}
where $\alpha$ is the fine-structure constant (which will be treated as a small coupling parameter), $x$ is the position of the electron, and $p := - \ii \nabla_x$. The units are chosen such that $\hbar = c = 1$. The operator $H_f = \d\Gamma( \omega )$ denotes the second quantization of the multiplication by $\omega(k) := |k|$, that is
\begin{align}
H_f := \sum_{\lambda=1,2} \int_{ \mathbb{R}^3 } |k| a^*_\lambda(k) a_\lambda(k) \d k,
\end{align}
where $a_\lambda^*(k)$ and $a_\lambda(k)$ are the usual creation and annihilation operators which obey the canonical commutation relations
\begin{align}
[ a^{\#}_\lambda(k) , a^{\#}_{\lambda'}(k') ] = 0, \quad [ a_\lambda(k) , a^*_{\lambda'}(k') ] = \delta_{\lambda\lambda'}\delta( k -k' ).
\end{align}
Here $a^{\#}$ stands for $a$ or $a^*$. For any $x \in \mathbb{R}^3$, $A(x)$ is the vector potential of the quantized electromagnetic field in the Coulomb gauge given by
\begin{align}\label{d26}
A(x) = \sum_{\lambda=1,2} \int_{\mathbb{R}^3} \frac{ \kappa(k) }{ |k|^{\frac{1}{2}} } \varepsilon_\lambda(k) \big( e^{\ii k\cdot x} a^*_\lambda(k) + e^{ - \ii k \cdot x } a_\lambda(k) \big) \d k.
\end{align}
In \eqref{d26}, the vectors $\varepsilon_\lambda(k)$, $\lambda = 1,2$, are normalized polarization vectors which are supposed to be orthogonal to each other and to $k$, and such that $\varepsilon_\lambda(k) = \varepsilon_\lambda(k/|k|)$. For instance, they can be chosen as
\begin{align}
\varepsilon_1(k) := \frac{ (-k_2 , k_1 , 0 ) }{ \sqrt{ k_1^2 + k_2^2 } }, \quad \varepsilon_2(k) := \frac{ k }{ |k| } \wedge \varepsilon_1(k).
\end{align}
Moreover, $\kappa \in \mathrm{C}_0^\infty( \mathbb{R}^3 ; \mathbb{R} )$ denotes some given ultraviolet cutoff function.
As usual, for any $f \in \mathrm{L}^2( \mathbb{R}^3 \times \{1,2\} )$, we set
\begin{align}
a^*(f) := \sum_{\lambda=1,2} \int_{\mathbb{R}^3} f(k,\lambda) a_{\lambda}^*(k) \d k, \quad a(f) := \sum_{\lambda=1,2} \int_{\mathbb{R}^3} \bar f(k,\lambda) a_{\lambda} (k) \d k,
\end{align}
and $\Phi(f) = a^*(f) + a(f)$. Hence, for any $x \in \mathbb{R}^3$, we have that
\begin{align}
A(x) = \Phi( h(x) ) \quad \text{where} \quad h(x,k,\lambda) := \frac{ \kappa(k) }{ |k|^{\frac{1}{2}} } \varepsilon_\lambda(k) e^{ \ii k \cdot x }.
\end{align}
The external potential $V$ belongs to $\mathrm{L}^2_{\mathrm{loc}}( \mathbb{R}^3 )$ and is supposed to be $\Delta$-bounded with relative bound 0. We assume in addition that $e_1 := \inf \spec ( -\Delta + V )$ is a simple isolated eigenvalue. We set $e_2 := \inf ( \spec( -\Delta + V ) \setminus \{ e_1 \} )$ and $e_{\mathrm{gap}} := e_2 - e_1 > 0$.

\subsection{Main results} \label{d27}

Let $E_\alpha := \inf \spec ( H_\alpha )$. It follows from \cite{BFS,GLL} that $E_\alpha$ is an eigenvalue of $H_\alpha$. Let $\Pi_\alpha$ be the projection onto the eigenspace associated with $E_\alpha$, and let $\bar \Pi_\alpha := \mathds{1} - \Pi_\alpha$. Thus, in particular, $\Pi_{0} = \pi_{1} \otimes \Pi_{\Omega}$, where $\pi_1$ denotes the projection onto the eigenspace of $H_\el$ associated with $e_1$, and $\Pi_\Omega$ is the orthogonal projection onto the Fock vacuum $\Omega : = ( 1 , 0 , 0 , \dots )$. Likewise, $\bar \Pi_0 = \bar \pi_1 \otimes \mathds{1} + \pi_1 \otimes \bar \Pi_\Omega$, with $\bar \pi_1 = \mathds{1} - \pi_1$ and $\bar \Pi_\Omega = \mathds{1} - \Pi_\Omega$. To simplify notations, let
\begin{equation} \label{d21}
X := \d \Gamma ( | \ii \nabla_k | ) ,
\end{equation}
denote the second quantization of the norm of the photon position operator. Our main results are stated in Theorems \ref{d28}, \ref{b8} and \ref{c1}.

\begin{theorem}[Limiting absorption principle]\sl \label{d28}
There exists $\alpha_c > 0$ such that, for all $s > 1/2$, there exists $\mathrm{C}_s>0$ such that, for all $0 \leq \alpha \leq \alpha_c$,
\begin{equation*}
\sup_{z \in \mathbb{C} \setminus \mathbb{R}, \, \re z \leq E_\alpha + e_{\mathrm{gap}}/4} \big\Vert \langle X \rangle^{-s} \big( H_\alpha - z \big)^{-1} \bar \Pi_\alpha \langle X \rangle^{-s} \big\Vert \leq \mathrm{C}_s .
\end{equation*}
\end{theorem}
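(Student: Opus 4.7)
The plan is a scale-by-scale analysis on a dyadic decomposition of the window $(E_\alpha, E_\alpha + e_{\mathrm{gap}}/4]$, combining the low-energy Mourre estimate \eqref{d32} with a Hardy-type substitution that trades the conjugate-operator weight $\langle B^\sigma\rangle^{-s}$ for the photon-position weight $\langle X\rangle^{-s}$ at a cost of $\sigma^s$. For each scale $\sigma_n := 2^{-n}\le e_{\mathrm{gap}}/4$ I would fix a smooth cutoff $\chi_n$ supported in the enlarged window $[E_\alpha + \sigma_n/4, E_\alpha + 4\sigma_n]$ and such that $\sum_n \chi_n = 1$ on $(E_\alpha, E_\alpha + e_{\mathrm{gap}}/4]$. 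Since $\bar\Pi_\alpha$ kills the ground-state contribution, the theorem reduces to showing that
\begin{equation*}
\sum_n \bigl\Vert \langle X\rangle^{-s} (H_\alpha - z)^{-1}\chi_n(H_\alpha)\langle X\rangle^{-s}\bigr\Vert
\end{equation*}
is bounded uniformly in $z$ with $\re z \le E_\alpha + e_{\mathrm{gap}}/4$, $\im z \neq 0$, and in $\alpha \le \alpha_c$.

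On each dyadic shell, the Mourre inequality \eqref{d32} for $B^{\sigma_n}$, together with the uniform multiple commutator estimates announced in Section \ref{c20} and proved in Appendix \ref{c12}, plugs into the quantitative Jensen--Mourre--Perry limiting absorption principle (Theorem \ref{b9}) to give, for $s > 1/2$,
\begin{equation*}
\sup_{\re z \in \supp \chi_n,\,\im z \neq 0}\bigl\Vert \langle B^{\sigma_n}\rangle^{-s}(H_\alpha - z)^{-1}\langle B^{\sigma_n}\rangle^{-s}\bigr\Vert \lesssim \sigma_n^{-1},
\end{equation*}
where the factor $\sigma_n^{-1}$ comes from the scale $\sigma_n$ on the right-hand side of \eqref{d32}. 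The Hardy-type inequality to be derived in Section \ref{f4} then yields
\begin{equation*}
\bigl\Vert \langle X\rangle^{-s}\chi_n(H_\alpha)\langle B^{\sigma_n}\rangle^s\bigr\Vert \lesssim \sigma_n^s,
\end{equation*}
reflecting the heuristic that on the energy shell $\supp \chi_n$ a photon has momentum $\sim \sigma_n$ and position $\sim \sigma_n^{-1}$, so that $B^{\sigma_n}$ acts roughly as $\sigma_n X$. Sandwiching the resolvent between two such factors and using the JMP bound in the middle gives, for $\re z \in \supp \chi_n$,
\begin{equation*}
\bigl\Vert \langle X\rangle^{-s}\chi_n(H_\alpha)(H_\alpha - z)^{-1}\chi_n(H_\alpha)\langle X\rangle^{-s}\bigr\Vert \lesssim \sigma_n^{s}\cdot\sigma_n^{-1}\cdot\sigma_n^{s} = \sigma_n^{2s-1},
\end{equation*}
which remains bounded precisely because $s > 1/2$. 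For indices $n$ with $\re z \notin \supp \chi_n$, functional calculus provides $\Vert \chi_n(H_\alpha)(H_\alpha - z)^{-1}\Vert \le \mathrm{dist}(\re z, \supp \chi_n)^{-1}$, and the Hardy weight gain then yields a contribution of size $\sigma_n^{2s}\,\mathrm{dist}(\re z,\supp \chi_n)^{-1}$.

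Summing these estimates concludes the argument. Given $z$ with $\re z - E_\alpha \sim \sigma_{n^\ast}$, only $O(1)$ shells contain $\re z$, each contributing $\lesssim \sigma_{n^\ast}^{2s-1}$; the remaining shells split into a coarse regime $n < n^\ast$ producing $\sum_{n < n^\ast}\sigma_n^{2s-1}$ and a fine regime $n > n^\ast$ producing $\sigma_{n^\ast}^{-1}\sum_{n > n^\ast}\sigma_n^{2s}$, both of which are geometric series uniformly bounded in $n^\ast$ thanks to $2s-1 > 0$. The main obstacle is the Hardy-type estimate of Section \ref{f4}: it must produce precisely the factor $\sigma^s$, so that the gain $\sigma^{2s}$ from the two weights exactly compensates the $\sigma^{-1}$ divergence produced by the Mourre constant. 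The infrared-cut-off conjugate operator $B^\sigma$ of \cite{FGS1} is specifically tailored to make this balance work uniformly, which is why the threshold $s = 1/2$ appears in the statement.
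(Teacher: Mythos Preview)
Your proposal is correct and follows essentially the same route as the paper: a dyadic partition of $(E_\alpha,E_\alpha+e_{\mathrm{gap}}/4]$, on-shell use of the Jensen--Mourre--Perry bound (Proposition \ref{c8}) giving the $\sigma^{-1}$ loss, and the Hardy-type conversion $\|\langle X\rangle^{-s}\varphi_\sigma(H_\alpha-E_\alpha)\langle B^\sigma\rangle^s\|\lesssim\sigma^{s}$ of Corollaries \ref{c15}--\ref{c16} producing the summable power $\sigma^{2s-1}$. The only technical points you gloss over are that the paper's Hardy conversion actually carries a harmless $\sigma^{-\delta}$ loss (arising from the comparison of $H_\alpha$ with the infrared-cutoff Hamiltonian $H_{\alpha,\sigma}$, cf.\ Corollary \ref{c16} and Proposition \ref{e37}), and that off-shell the paper uses the simpler uniform bound $\|(H_\alpha-z)^{-1}\varphi_\sigma(H_\alpha-E_\alpha)\|\le \mathrm{C}\sigma^{-1}$ in place of your $\mathrm{dist}(\re z,\supp\chi_n)^{-1}$, which leads to the single estimate $\mathcal{J}_\sigma\le \mathrm{C}_{s,\delta}\sigma^{2s-1-\delta}$ for every $\sigma$ and avoids your coarse/fine split.
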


\begin{remark}\sl
$i)$ The fact that the assumption $s > 1/2$ is sufficient for the limiting absorption principle to hold illustrates that the propagation velocity of photons does not depend on their momentum.

$ii)$ Theorem \ref{d28} implies that the spectrum of $H_\alpha$ in $(E_\alpha,E_\alpha+e_{\mathrm{gap}}/4)$ is purely absolutely continuous (see e.g. \cite[Theorem XIII.20]{RS4}), which was already proven in \cite{FGS1}.
\end{remark}

The next result provides a further information on the regularity of the weighted resolvent.

\begin{theorem}[Resolvent smoothness]\sl \label{b8}
There exists $\alpha_c > 0$ such that, for all $1/2 < s < 3/2$ and $\varepsilon >0$, there exists $\mathrm{C}_{s , \varepsilon} >0$ such that, for all $0 \leq \alpha \leq \alpha_c$,
\begin{equation*}
\big\Vert \< X \>^{- s} \big( ( H_\alpha - z )^{-1} - ( H_\alpha - z^{\prime} )^{-1} \big) \bar{\Pi}_\alpha \< X \>^{- s} \big\Vert \leq \mathrm{C}_{s, \varepsilon} \vert z - z^{\prime} \vert^{s - \frac{1}{2} - \varepsilon} ,
\end{equation*}
uniformly for $z,z' \in \C \setminus \R$ with $\re z, \re z^{\prime} \leq E_{\alpha} + e_{ \mathrm{gap} } / 4$ and $\im z \cdot \im z^{\prime} > 0$.
\end{theorem}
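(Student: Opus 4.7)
The plan is to follow the same three-step pattern as the proof of Theorem~\ref{d28}: reduce to a bound at each dyadic scale using the $B^{\sigma}$-weights provided by Mourre theory, convert $B^{\sigma}$-weights into $X$-weights via the Hardy-type estimates of Section~\ref{f4}, and sum the dyadic pieces. The new ingredient compared with Theorem~\ref{d28} is higher-order Mourre theory, which provides Hölder continuity in place of mere boundedness of the $B^{\sigma}$-weighted resolvent.

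I would fix a dyadic partition of unity $\{\chi_j\}_{j\ge 0}$ of $(E_\alpha,E_\alpha+e_{\mathrm{gap}}/4)$, with $\chi_j\in\mathrm{C}_0^\infty(J_{\sigma_j})$ and $\sigma_j=2^{-j}e_{\mathrm{gap}}/8$. Since $\chi_j(H_\alpha)$ commutes with $(H_\alpha-z)^{-1}$, inserting the partition produces no cross-terms, and the part of the spectrum above $E_\alpha+e_{\mathrm{gap}}/4$ is elementary to handle because we are away from thresholds there. At scale $\sigma_j$, Theorem~\ref{b9} applied with conjugate operator $A=B^{\sigma_j}$, the Mourre inequality \eqref{d32} and the uniform commutator bounds of Appendix~\ref{c12} provides an LAP bound of order $\sigma_j^{-1}$ (case $n=0$) and a $(H_\alpha-z)^{-2}$ bound of order $\sigma_j^{-2}$ (case $n=1$). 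A standard Jensen--Mourre--Perry interpolation argument (integrating $(H_\alpha-w)^{-2}$ along vertical segments and invoking the trivial bound for $|z-z'|$ large compared with $|\im z|$) then yields, for $\tfrac{1}{2}<s<\tfrac{3}{2}$ and $\varepsilon>0$,
\begin{equation*}
\bigl\Vert\langle B^{\sigma_j}\rangle^{-s}\bigl((H_\alpha-z)^{-1}-(H_\alpha-z')^{-1}\bigr)\chi_j(H_\alpha)\langle B^{\sigma_j}\rangle^{-s}\bigr\Vert\lesssim \sigma_j^{-1}\min\!\bigl(1,(|z-z'|/\sigma_j)^{s-\frac{1}{2}-\varepsilon}\bigr).
\end{equation*}
Trading $B^{\sigma_j}$-weights for $X$-weights via the estimate $\Vert\langle X\rangle^{-s}\chi_j(H_\alpha)\langle B^{\sigma_j}\rangle^s\Vert\lesssim \sigma_j^s$ from Section~\ref{f4} produces a factor $\sigma_j^{2s-1}\min(1,(|z-z'|/\sigma_j)^{s-\frac{1}{2}-\varepsilon})$ at each scale; summing over $j$ and splitting at the threshold $\sigma_j\sim|z-z'|$, both resulting geometric series converge for $\tfrac{1}{2}<s<\tfrac{3}{2}$ and combine to the announced factor $|z-z'|^{s-\frac{1}{2}-\varepsilon}$.

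The main obstacle is the $B^{\sigma}$-weighted Hölder estimate itself: interpolating the $n=0$ and $n=1$ cases of Theorem~\ref{b9} so as to extract exactly the exponent $s-\frac{1}{2}-\varepsilon$ with the sharp $\sigma_j$-dependence. This is a quantitative, uniform-in-$\sigma$ version of the classical Jensen--Mourre--Perry argument, and it is precisely for this purpose that the uniform multiple commutator estimates of Appendix~\ref{c12} have been established; once these uniform bounds are in hand, the dyadic summation is essentially mechanical and the restriction $s<\tfrac{3}{2}$ appears naturally as the threshold beyond which the $n=1$ endpoint would be insufficient.
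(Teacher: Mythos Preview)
Your outline---dyadic decomposition, Mourre theory at each scale with conjugate operator $B^{\sigma_j}$, conversion from $B^{\sigma_j}$-weights to $X$-weights via the Hardy-type estimates, and summation over scales---is exactly the paper's strategy. Two points, however, are not right as stated.

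First, the conversion bound $\Vert\langle X\rangle^{-s}\chi_j(H_\alpha)\langle B^{\sigma_j}\rangle^{s}\Vert\lesssim\sigma_j^{s}$ is false for $s>1$. What Section~\ref{f4} actually proves (Corollaries~\ref{c15}, \ref{c16}, \ref{c9}, and their interpolation in~\eqref{a25}) is only $\sigma_j^{\min(1,s)-\delta}$; the saturation at exponent~$1$ is precisely the infrared effect (Hardy's inequality fails beyond order $3/2$, and the coupling functions carry a $|k|^{-1/2}$ singularity). This is the true source of the restriction $s<3/2$, not the order of Mourre theory: Proposition~\ref{c8} holds for all $n$, so higher $n$ would be available, but the $X$-to-$B^\sigma$ conversion cannot follow. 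Your summation still goes through once this correction is made, but the exponent governing summability becomes $\min(s-\tfrac12,\tfrac32-s)$ rather than $2s-1$, and your explanation of why $s<3/2$ is needed is misattributed.

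Second, the $B^{\sigma_j}$-weighted H\"older bound you quote is only available when $\re z$ and $\re z'$ both lie in $E_\alpha+\sigma_j I$ for a fixed $I\Subset(0,1)$, since Proposition~\ref{c8} (and the path integration of $(H_\alpha-w)^{-2}$ you invoke) requires the Mourre estimate along the whole segment. The paper therefore splits the dyadic scales into three regimes (Lemma~\ref{b7}): scales far from both $\re z$ and $\re z'$, where one uses only the elementary spectral bound $\Vert(H_\alpha-z)^{-1}\varphi_{\sigma}(H_\alpha-E_\alpha)\Vert\le C\sigma^{-1}$ together with the resolvent identity and no Mourre theory at all; scales near exactly one of them (forcing $\sigma\lesssim|z-z'|$), where the $n=1$ Mourre bound suffices; and scales near both, where the $n=2$ bound and the mean-value theorem are used. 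Your sketch applies Mourre uniformly over all scales, which is not justified; you need this case distinction.
\bigskip
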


Theorem \ref{b8} and a standard argument imply that the weighted resolvent has limits on the real axis. More precisely, letting $\mathcal{B}(\mathcal{H})$ denote the set of bounded operators in $\mathcal{H}$, we have

\begin{corollary}\sl \label{c2}
For all $0 \leq \alpha \leq \alpha_{c}$, $s > 1/2$ and $\lambda \leq E_\alpha + e_{\mathrm{gap}} / 4$, the limits
\begin{equation*}
\langle X \rangle^{- s} ( H_\alpha - \lambda \pm \ii 0 )^{-1} \bar{\Pi}_\alpha \langle X \rangle^{-s} : = \lim_{\mu \downarrow 0} \langle X \rangle^{-s} ( H_\alpha - \lambda \pm \ii \mu )^{-1} \bar{\Pi}_\alpha \langle X \rangle^{-s}
\end{equation*}
exist in the norm topology of $\mathcal{B}( \mathcal{H} )$. Moreover, for $1/2 < s < 3/2$ and $\varepsilon >0$, the maps
\begin{equation*}
(-\infty , E_\alpha + e_{\mathrm{gap}}/4 ] \ni \lambda \longmapsto \langle X \rangle^{-s} ( H_\alpha - \lambda \pm \ii 0 )^{-1} \bar \Pi_\alpha \langle X \rangle^{-s} \in \mathcal{B} ( \mathcal{H} )
\end{equation*}
are H\"{o}lder continuous of order $s - 1/2 - \varepsilon$ with respect to $\lambda$. 
\end{corollary}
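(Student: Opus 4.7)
The plan is to derive Corollary \ref{c2} as a direct consequence of the uniform H\"{o}lder-type bound of Theorem \ref{b8}, combined with elementary limiting arguments in the Banach space $\mathcal{B}( \mathcal{H} )$; all the analytic work has been done in proving resolvent smoothness, so only care in manipulating limits of bounded operators remains.

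For existence of the boundary values in the range $1/2 < s < 3/2$, I would fix $\lambda \leq E_\alpha + e_{\mathrm{gap}}/4$, a sign for $\im z$, and $\varepsilon > 0$ so small that $s - 1/2 - \varepsilon > 0$. For any sequence $\mu_n \downarrow 0$, Theorem \ref{b8} applied with $z = \lambda + \ii \mu_n$ and $z^{\prime} = \lambda + \ii \mu_m$ yields, for $T_n := \langle X \rangle^{-s} ( H_\alpha - \lambda - \ii \mu_n )^{-1} \bar{\Pi}_\alpha \langle X \rangle^{-s}$,
\[
\Vert T_n - T_m \Vert \leq \mathrm{C}_{s , \varepsilon} \vert \mu_n - \mu_m \vert^{s - 1/2 - \varepsilon} ,
\]
exhibiting $( T_n )$ as a Cauchy sequence in $\mathcal{B}( \mathcal{H} )$, hence with a norm limit that is independent of the chosen sequence (by the same estimate). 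This limit defines $\langle X \rangle^{-s} ( H_\alpha - \lambda - \ii 0 )^{-1} \bar{\Pi}_\alpha \langle X \rangle^{-s}$; the $+ \ii 0$ case is identical. To cover arbitrary $s > 1/2$, I would pick $s_0 \in ( 1/2 , 3/2 )$ and factor
\[
\langle X \rangle^{-s} ( H_\alpha - z )^{-1} \bar{\Pi}_\alpha \langle X \rangle^{-s} = \langle X \rangle^{-(s - s_0)} \bigl[ \langle X \rangle^{-s_0} ( H_\alpha - z )^{-1} \bar{\Pi}_\alpha \langle X \rangle^{-s_0} \bigr] \langle X \rangle^{-(s - s_0)} ,
\]
so that the $z$-independent bounded outer factors propagate norm convergence of the bracket to the whole product.

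For the H\"{o}lder continuity in $\lambda$, with $1/2 < s < 3/2$ and $\varepsilon > 0$ as above, I would fix $\lambda_1 , \lambda_2 \leq E_\alpha + e_{\mathrm{gap}} / 4$ and apply Theorem \ref{b8} to $z_j := \lambda_j + \ii \mu$, $\mu > 0$, which gives
\[
\bigl\Vert \langle X \rangle^{-s} \bigl[ ( H_\alpha - \lambda_1 - \ii \mu )^{-1} - ( H_\alpha - \lambda_2 - \ii \mu )^{-1} \bigr] \bar{\Pi}_\alpha \langle X \rangle^{-s} \bigr\Vert \leq \mathrm{C}_{s , \varepsilon} \vert \lambda_1 - \lambda_2 \vert^{s - 1/2 - \varepsilon}
\]
uniformly in $\mu > 0$. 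Passing to the norm limit $\mu \downarrow 0$, which is legitimate by the existence of boundary values just established, produces the claimed estimate for the $-\ii 0$ trace; the $+ \ii 0$ case is symmetric. There is essentially no obstacle here: once Theorem \ref{b8} is in hand, the corollary reduces to the completeness of $\mathcal{B}( \mathcal{H} )$ and continuity of the operator norm, the only real point to watch being that the H\"{o}lder exponent $s - 1/2 - \varepsilon$ remains positive, which is arranged by choosing $\varepsilon$ small enough.
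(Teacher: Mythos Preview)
Your proposal is correct and is precisely the ``standard argument'' the paper alludes to after stating Theorem~\ref{b8}: the paper gives no explicit proof of Corollary~\ref{c2}, merely remarking that it follows from resolvent smoothness, and your Cauchy-sequence argument together with the factorization $\langle X\rangle^{-s}=\langle X\rangle^{-(s-s_0)}\langle X\rangle^{-s_0}$ (with $s_0\in(1/2,\min(s,3/2))$) is exactly how one makes this precise.
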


Eventually, we prove the following local decay property.

\begin{theorem}[Local decay]\sl \label{c1}
There exists $\alpha_c > 0$ such that, for all $\chi \in \mathrm{C}_0^\infty( ( -\infty ,E_\alpha + e_{\mathrm{gap}}/4 ) ; \mathbb{R} )$ and $0 \leq s < 2$, we have
\begin{equation*}
\langle X \rangle^{- s} e^{- \ii t H_{\alpha}} \chi ( H_{\alpha} ) \langle X \rangle^{- s} = e^{- \ii t E_{\alpha}} \chi ( E_{\alpha} ) \langle X \rangle^{- s} \Pi_{\alpha} \langle X \rangle^{- s} + {\mathcal O} ( \< t \>^{- s} ) ,
\end{equation*}
for all $t \in \R$, uniformly with respect to $0 \leq \alpha \leq \alpha_c$.
\end{theorem}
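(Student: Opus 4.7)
The plan is to isolate the ground state contribution, which reproduces the explicit term in the conclusion, and to treat the scattering complement by applying Theorem~\ref{a21} scale by scale together with the Hardy-type estimates of Section~\ref{f4}, then summing over a low-energy dyadic partition. Since $H_{\alpha} \Pi_{\alpha} = E_{\alpha} \Pi_{\alpha}$, functional calculus yields $\chi ( H_{\alpha} ) \Pi_{\alpha} = \chi ( E_{\alpha} ) \Pi_{\alpha}$, so that
\begin{equation*}
e^{- \ii t H_{\alpha}} \chi ( H_{\alpha} ) = e^{- \ii t E_{\alpha}} \chi ( E_{\alpha} ) \Pi_{\alpha} + e^{- \ii t H_{\alpha}} \chi ( H_{\alpha} ) \bar{\Pi}_{\alpha}.
\end{equation*}
Sandwiching with $\< X \>^{-s}$, the first summand is the stated main term, so the task reduces to proving $\Vert \< X \>^{-s} e^{- \ii t H_{\alpha}} \chi ( H_{\alpha} ) \bar{\Pi}_{\alpha} \< X \>^{-s} \Vert \lesssim \< t \>^{-s}$.

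Next, I fix a smooth dyadic partition of unity $\sum_{j \geq j_{0}} \phi_{j} \equiv 1$ on $( 0 , e_{\mathrm{gap}} / 4 )$ with $\supp \phi_{j} \subset [ 2^{-j-1} , 2^{-j+1} ]$, and set $\sigma_{j} := 2^{-j}$ and $\chi_{j} ( \lambda ) := \chi ( \lambda ) \phi_{j} ( \lambda - E_{\alpha} )$. For $j \geq j_{0}$ chosen large enough that $\supp \chi_{j} \subset J_{\sigma_{j}}$, I pick $\tilde{\chi}_{j} \in \mathrm{C}_{0}^{\infty} ( J_{\sigma_{j}} )$ with $\tilde{\chi}_{j} \chi_{j} = \chi_{j}$. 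Using that $\tilde{\chi}_{j} ( H_{\alpha} )$ commutes with $e^{- \ii t H_{\alpha}}$ and inserting $\< B^{\sigma_{j}} \>^{s} \< B^{\sigma_{j}} \>^{-s}$ on both sides,
\begin{multline*}
\big\Vert \< X \>^{-s} e^{- \ii t H_{\alpha}} \chi_{j} ( H_{\alpha} ) \< X \>^{-s} \big\Vert \leq \big\Vert \< X \>^{-s} \tilde{\chi}_{j} ( H_{\alpha} ) \< B^{\sigma_{j}} \>^{s} \big\Vert^{2} \\
\cdot \big\Vert \< B^{\sigma_{j}} \>^{-s} e^{- \ii t H_{\alpha}} \chi_{j} ( H_{\alpha} ) \< B^{\sigma_{j}} \>^{-s} \big\Vert.
\end{multline*}
The first factor will be bounded by $\sigma_{j}^{s}$ via the Hardy-type estimate of Section~\ref{f4}, where the gain $\sigma_{j}^{s}$ reflects the low-momentum cutoff built into $B^{\sigma}$. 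The second factor is $\lesssim \< \sigma_{j} t \>^{-s}$ by applying Theorem~\ref{a21} to the pair $( H_{\alpha} , B^{\sigma_{j}} )$ on $J_{\sigma_{j}}$, whose Mourre constant is $\mathrm{c}_{0} \sigma_{j}$ from \eqref{d32}; the implicit constants remain uniform in $j$ thanks to the uniform commutator bounds of Appendix~\ref{c12} at an order $n > s + 1$, available for $s < 2$. Hence $\Vert \< X \>^{-s} e^{- \ii t H_{\alpha}} \chi_{j} ( H_{\alpha} ) \< X \>^{-s} \Vert \lesssim \sigma_{j}^{2s} \< \sigma_{j} t \>^{-s}$.

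Summing over $j$, I split at $\sigma_{j} \sim \< t \>^{-1}$: the scales with $\sigma_{j} \geq \< t \>^{-1}$ contribute at most $\< t \>^{-s} \sum_{j} \sigma_{j}^{s} \lesssim \< t \>^{-s}$ (geometric sum), while those with $\sigma_{j} < \< t \>^{-1}$ contribute $\sum_{j} \sigma_{j}^{2s} \lesssim \< t \>^{-2s} \leq \< t \>^{-s}$. The finitely many initial scales $j \leq j_{0}$, corresponding to spectral pieces at distance $\gtrsim 1$ from $E_{\alpha}$, are handled directly by Theorem~\ref{a21} applied with the unmodified $B$ and produce an $\mathcal{O} ( \< t \>^{-s} )$ contribution. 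Since the Mourre inequality, the Hardy-type bound, and the commutator estimates are all uniform in $\alpha \in [ 0 , \alpha_{c} ]$, so is the resulting bound.

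The chief technical obstacle is the Hardy-type estimate $\Vert \< X \>^{-s} \tilde{\chi}_{j} ( H_{\alpha} ) \< B^{\sigma_{j}} \>^{s} \Vert \lesssim \sigma_{j}^{s}$ with exactly the right power of $\sigma_{j}$: it will hinge on a second-quantized Hardy inequality, combined with the matching between the $\sigma$-cutoff defining $B^{\sigma}$ and the photon momentum scale on which $\chi_{\sigma} ( H_{\alpha} )$ concentrates, so that the momentum/position uncertainty converts low-momentum localization at scale $\sigma$ into the required gain of the $X$-weight. The secondary but also delicate point, settled by Appendix~\ref{c12}, is the uniform-in-$\sigma$ boundedness of the iterated commutators $\ad_{B^{\sigma}}^{k} ( \chi_{\sigma} ( H_{\alpha} ) )$; the ceiling $s < 2$ in the theorem reflects precisely the highest order $n$ for which this uniform bound is available in the present setting.
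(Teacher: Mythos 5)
Your overall strategy --- peel off the ground state, decompose the remainder dyadically near $E_\alpha$, gain powers of $\sigma$ from Hardy-type weight conversions between $\langle X\rangle^{-s}$ and $\langle B^\sigma\rangle^{-s}$, apply the Hunziker--Sigal--Soffer local decay at each scale, and sum --- is indeed the paper's proof (Section~\ref{f5}). However, there are two substantive inaccuracies in how you state the key estimates.

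First, the claimed Hardy-type bound $\Vert \langle X\rangle^{-s}\,\widetilde\chi_j(H_\alpha)\,\langle B^{\sigma_j}\rangle^{s}\Vert \lesssim \sigma_j^{s}$ is too strong and cannot hold for $s>1$. What the paper actually proves (combining Corollaries~\ref{c15}, \ref{c16}, \ref{c9} and interpolating) is the saturated bound
\begin{equation*}
\big\Vert \langle X\rangle^{-s}\,\widetilde\varphi_\sigma(H_\alpha-E_\alpha)\,\langle B^{\sigma}\rangle^{s}\big\Vert \leq \mathrm{C}_\delta\,\sigma^{\min(1,s)-\delta},
\qquad s\in[0,2],\ \delta>0,
\end{equation*}
i.e. the gain stalls at one power of $\sigma$ once $s\geq 1$. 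This is forced by the fact that Hardy's inequality $\Vert |k|^{-r}\varphi\Vert \leq \mathrm{C}_r\Vert\,|\ii\nabla_k|^{r}\varphi\Vert$ in $\mathrm{L}^2(\mathbb{R}^3)$ fails for $r\geq 3/2$; there is no analogue of Lemma~\ref{d2} giving $\sigma^2$ against $\langle X\rangle^2$. Fortunately your dyadic summation is robust enough that this correction does not kill the argument: with the square $\sigma^{2\min(1,s)-2\delta}$ in place of $\sigma^{2s}$ one still has $\sigma^{2\min(1,s)-2\delta}\langle \sigma t\rangle^{-s} \leq \sigma^{\min(2-s,s)-2\delta}\langle t\rangle^{-s}$, and the geometric sum converges precisely because $\min(2-s,s)>0$ for $0<s<2$.

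Second, your explanation of the ceiling $s<2$ is wrong: you attribute it to the highest order $n$ for which the uniform commutator bounds of Appendix~\ref{c12} hold, but Lemma~\ref{c18} and Remark~\ref{g8} give those bounds for \emph{all} $n$, uniformly in $\sigma$. The restriction $s<2$ comes from the infrared singularity through the Hardy estimate just discussed, as the paper's Remark~1.8 $(ii)$ makes explicit: it is the exponent $\min(2-s,s)$ (not any commutator order) that must stay positive for the dyadic sum to converge.
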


In the previous statement, ${\mathcal O} ( \< t \>^{- s} )$ stands for an operator bounded by $\mathrm{C} \< t \>^{-s}$ where $\mathrm{C}$ is uniform in $t \in \R$ and $0 \leq \alpha \leq \alpha_c$.

\begin{remark}\sl
$i)$ By Fourier transform, Corollary \ref{c2} implies the local decay property
\begin{equation*}
\langle X \rangle^{- s} e^{- \ii t H_{\alpha}} \chi ( H_{\alpha} ) \langle X \rangle^{- s} = e^{- \ii t E_{\alpha}} \chi ( E_{\alpha} ) \langle X \rangle^{- s} \Pi_{\alpha} \langle X \rangle^{- s} + {\mathcal O} \big( \< t \>^{- s + \frac{1}{2} + \varepsilon} \big) ,
\end{equation*}
for $1/2 < s < 3/2$ and $\varepsilon >0$, which is of course weaker than Theorem \ref{c1}.

$ii)$ The restrictions $s < 3/2$ in Theorem \ref{b8} and $s < 2$ in Theorem \ref{c1} are due to the infrared singularity of the model. More precisely, if one replaces the electromagnetic vector potential in \eqref{d26} by its infrared regularized version,
\begin{align*}
A_\mu(x) := \sum_{\lambda=1,2} \int_{\mathbb{R}^3} \frac{ \kappa(k) }{ |k|^{\frac{1}{2}-\mu} } \varepsilon_\lambda(k) \big( e^{\ii k\cdot x} a^*_\lambda(k) + e^{ - \ii k \cdot x } a_\lambda(k) \big) \d k,
\end{align*}
for $0 \le \mu \le 1$, then one can verify that Theorem \ref{b8} holds for any $s < 3/2 + \mu$ and Theorem \ref{c1} for any $ s < 2 + \mu$ (provided in addition that the weights $\langle X \rangle^{-s}$ are replaced by the bigger ones $\langle \d \Gamma ( \langle \ii \nabla_k \rangle )  \rangle^{-s}$). For $\mu >1$, the proofs we give do not yield better results than $s<5/2$ in Theorem \ref{b8} and $s<3$ in Theorem \ref{c1}. Here the restrictions are due to Hardy's inequality $\| |k|^{-s} \varphi \| \le \mathrm{C}_s \| | \ii \nabla_k |^s \varphi \|$ in $\mathrm{L}^2( \mathbb{R}^3)$, which is valid provided that $0 \le s < 3/2$.

$iii)$ One could replace $e_{\mathrm{gap}}/4$ by $e_{\mathrm{gap}} - \delta$, with $\delta>0$, in the statements of Theorems \ref{d28}, \ref{b8} and \ref{c1}. Of course the critical value $\alpha_c$ would then depend on $\delta$.
\end{remark}

As mentioned in the introduction, our main achievement compared to \cite{FGS1} lies into the fact that our results do not depend on the spectral interval $I \subset (E_\alpha , E_\alpha + e_{\mathrm{gap}}/4)$ on which resolvent smoothness and local decay are proven. Recently, several papers have been devoted to spectral analysis at low-energies for various quantum field theory models. We mention \cite{FGS2} where, for the standard model of non-relativistic QED, an alternative proof of the limiting absorption principle is given, based on an application of the spectral renormalization group, \cite{CFFS} where a dressed electron in non-relativistic QED is considered, and \cite{ABFG} where a mathematical model of the weak interaction is studied. In all the previously cited papers, however, the obtained estimates are not uniform with respect to the considered spectral interval.

Let us also mention that another approach has been used in the literature to study spectral and dynamical properties of non-relativistic, massless quantum field theory models. Instead of the generator of dilatations, one can, in some cases, consider the generator of radial translations, say $\widetilde{B}$, as a conjugate operator (see e.g. \cite{Ge,GGM}). Since $\widetilde{B}$ is not self-adjoint and since the commutator of $\widetilde{B}$ with the considered Hamiltonian $H$ cannot be controlled by any powers of the resolvent of $H$, serious technical difficulties appear to implement the Mourre method. Nevertheless, as shown by Georgescu, G\'erard and M{\o}ller, Mourre's theory can be extended to cover such a case. Within this approach, it may be possible to obtain a uniform Mourre estimate (at least for the Nelson model that is considered in \cite{GGM}) and, hence, uniform bounds on the resolvent and on local decay. Indeed, in some sense, if the generator of radial translation is chosen as the conjugate operator, the bottom of the spectrum is not a threshold anymore. Another significant advantage of the approach of \cite{GGM} is that the obtained results hold for any value of the coupling constant; It is presently not known whether similar results can be proven using the generator of dilatations instead. On the other hand, the infrared singularity in the iterated commutators $\ad^j_{\widetilde{B}}(H)$ is increased by a power $|k|^{-j}$ (while the order of the singularity does not change when commuting with the generator of dilatations), which makes difficult to control these iterated commutators unless one imposes from the beginning some regularity assumption on the form factor Hamiltonian.

\section{Preliminary results} \label{c20}

\subsection{Abstract setting}\label{c13}

Let $\mathcal{H}$ be a separable Hilbert space and let $P$, $A$ be two self-adjoint operators on $\mathcal{H}$. We recall that, if $P$ is bounded, $P$ is said to be in $\mathrm{C}^n(A)$ if and only if the map
\begin{align}
s \mapsto e^{- \ii s A} P e^{\ii s A} \Phi,
\end{align}
is of class $\mathrm{C}^n(\mathbb{R})$ for all $\Phi \in \mathcal{H}$. This property is equivalent to the fact that, for $1 \leq k \leq n$, the commutators $\ad_A^k(P)$ defined as quadratic forms on $D(A) \times D(A)$ extend by continuity to bounded quadratic forms on $\mathcal{H} \times \mathcal{H}$. If $P$ is unbounded, $P$ is said to be in $\mathrm{C}^n(A)$ if and only if $(P-z)^{-1}$ is in $\mathrm{C}^n(A)$ for some (and hence for all) $z$ in the resolvent set of $P$.

We recall that if $P$ is in $\mathrm{C}^1(A)$, then $D(P) \cap D(A)$ is a core for $P$ and the quadratic form $[P,A]$ defined on $( D(P) \cap D(A) ) \times ( D(P) \cap D(A) )$ extends by continuity to a bounded quadratic form on $D(P) \times D(P)$ (denoted by the same symbol). Moreover, if $P$ is in $\mathrm{C}^1(A)$, then $(P-z)^{-1}$ preserves $D(A)$ for all $z$ in the resolvent set of $P$. We also recall that if $P$ is in $\mathrm{C}^n(A)$, then for all $\varphi \in \mathrm{C}_0^\infty( \mathbb{R} ; \mathbb{R} )$, $\varphi(P)$ is in $\mathrm{C}^n(A)$.

An operator $P$ in $\mathrm{C}^1(A)$ is said to satisfy a Mourre estimate with respect to $A$ on a bounded open interval $I \subset \spec (P)$ if there exists a positive constant $\mathrm{c}_0$ such that
\begin{align}\label{d29}
\mathds{1}_I(P) [P , \ii A] \mathds{1}_I(P) \geq \mathrm{c}_0 \mathds{1}_I(P),
\end{align}
in the sense of quadratic forms on $\mathcal{H} \times \mathcal{H}$.

We now state a standard result on the power of the resolvent \cite{JMP}.

\begin{theorem}[Jensen, Mourre, Perry]\sl \label{b9}
For $n \in \N \cup \{0\}$, let $P , A$ be two self-adjoint operators such that $P \in \mathrm{C}^{n + 1} (A)$, that the commutators $\ad^{j}_{\ii A} P$ are bounded for $1 \leq j \leq n + 1$ and that the Mourre estimate \eqref{d29} holds with $c_{0} >0$ and $I \subset \spec (P)$ an open interval. Then, for all $J \Subset I$ and $\varepsilon>0$, there exists a positive constant $\mathrm{C}_{J,\varepsilon}$ such that
\begin{equation}\label{d31}
\sup_{\fract{\re z \in J}{\im z \neq 0}} \big\Vert \< A \>^{- n + \frac{1}{2} - \varepsilon} ( P - \lambda )^{-n} \< A \>^{- n + \frac{1}{2} - \varepsilon} \big\Vert \leq \mathrm{C}_{J,\varepsilon} .
\end{equation}
\end{theorem}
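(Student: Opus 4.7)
The plan is induction on $n$, with the base case $n=1$ being the classical Mourre limiting absorption principle, and the induction step trading units of weight $\<A\>^{-1}$ against additional factors of the resolvent via the identity $[A,(P-z)^{-1}] = -(P-z)^{-1}[A,P](P-z)^{-1}$. For $n=1$, I would follow Mourre's regularization scheme. Pick $\phi\in\mathrm{C}_0^\infty(I;\R)$ with $\phi\equiv 1$ on $J$ and set $M := \phi(P)[P,\ii A]\phi(P) + \mathrm{c}_0(\mathds{1}-\phi(P)^2)$; by \eqref{d29}, $M\geq \mathrm{c}_0/2$. For $\epsilon>0$ and $\im z>0$ the regularized resolvent $G_\epsilon(z) := (P - z - \ii\epsilon M)^{-1}$ is bounded, and setting
\begin{equation*}
F_\epsilon(z) := \<A\>^{-1/2-\varepsilon}G_\epsilon(z)\<A\>^{-1/2-\varepsilon},
\end{equation*}
the identity $\partial_\epsilon G_\epsilon = \ii G_\epsilon M G_\epsilon$ together with the $\mathrm{C}^2(A)$ regularity of $P$ and the boundedness of $\ad_{\ii A}^k(P)$ for $k\leq 2$ produces a Riccati-type differential inequality $\Vert\partial_\epsilon F_\epsilon\Vert \leq \mathrm{C}(1+\Vert F_\epsilon\Vert^2)$. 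A trivial initial bound $\Vert F_{\epsilon_0}\Vert = O(\epsilon_0^{-1})$ then integrates to a uniform control as $\epsilon\downarrow 0$ and yields \eqref{d31} for $n=1$ after passing to the limit.

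For the induction step, assume \eqref{d31} holds up to $n-1$. I would write $(P-z)^{-n} = (P-z)^{-1}(P-z)^{-(n-1)}$ and split the outer weight as $\<A\>^{-n+1/2-\varepsilon} = \<A\>^{-1}\<A\>^{-(n-1)+1/2-\varepsilon}$. Commuting the spare $\<A\>^{-1}$ across the outer resolvent yields a principal contribution already controlled by the induction hypothesis, plus a remainder of the shape $(P-z)^{-1}[P,\ii A](P-z)^{-1}(P-z)^{-(n-1)}$ in which $[P,\ii A]$ is bounded by hypothesis and one $(P-z)^{-1}$ merges with $(P-z)^{-(n-1)}$; the remainder is then again of the inductive form. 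The fractional tail $\<A\>^{-\varepsilon}$ is handled via the Balakrishnan integral representation
\begin{equation*}
\<A\>^{-2s} = \frac{\sin\pi s}{\pi}\int_0^\infty t^{-s}(t+\<A\>^2)^{-1}\d t,\quad 0<s<1,
\end{equation*}
commuting $(t+\<A\>^2)^{-1}$ past $(P-z)^{-1}$ using the boundedness of $\ad_{\ii A}^j(P)$ for $1\leq j\leq n+1$. The one extra regularity order (going up to $n+1$ rather than $n$) is precisely what ensures convergence of the interpolation integral.

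The main obstacle will be to avoid any naive application of the trivial bound $\Vert(P-z)^{-1}\Vert\leq\vert\im z\vert^{-1}$, which would destroy uniformity in $z$. I would handle this by inserting spectral cutoffs $\chi(P)$ with $\chi\in\mathrm{C}_0^\infty(I)$, $\chi\equiv 1$ on $J$, so that every intermediate resolvent acts on a spectrally localized vector and can be controlled by invoking the previous step of the induction or the base case; the contribution of $(\mathds{1}-\chi(P))(P-z)^{-1}$ is bounded by the spectral theorem thanks to $\mathrm{dist}(\supp(1-\chi),J)>0$. Overall, the proof reduces to careful bookkeeping of commutator terms rather than a single deep estimate, and the hypothesis $P\in \mathrm{C}^{n+1}(A)$ with bounded $\ad_{\ii A}^j(P)$ for $j\leq n+1$ is tight: each integer unit of weight commuted through $(P-z)^{-1}$ generates one new commutator, with one extra order reserved for closing the fractional step.
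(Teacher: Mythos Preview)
The paper does not prove Theorem~\ref{b9}; it is quoted without proof as a standard result from \cite{JMP}. So there is no in-text argument to compare your proposal against. Your overall architecture---Mourre's regularized-resolvent differential-inequality method for the base case, followed by an induction that trades units of weight $\<A\>^{-1}$ for resolvent factors via commutator expansion---is indeed the scheme of the original Jensen--Mourre--Perry paper.

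There is, however, a genuine error in your base case. The inequality you write, $\Vert\partial_\epsilon F_\epsilon\Vert \leq \mathrm{C}(1+\Vert F_\epsilon\Vert^2)$, does \emph{not} integrate to a uniform bound from an initial datum $\Vert F_{\epsilon_0}\Vert = O(\epsilon_0^{-1})$: a genuine Riccati inequality blows up in finite $\epsilon$-time, and your starting value is already large. The actual Mourre mechanism is different and more delicate. Taking imaginary parts yields the quadratic estimate $\epsilon\,\Vert M^{1/2}G_\epsilon\<A\>^{-s}\Vert^2 \leq \Vert F_\epsilon\Vert$, and one then rewrites $G_\epsilon M G_\epsilon$ essentially as $-\ii[G_\epsilon,A]$ plus controllable errors (using that $M$ agrees with $[P,\ii A]$ on $\mathrm{Ran}\,\phi(P)$). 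After sandwiching by $\<A\>^{-s}$, the commutator produces operators of the type $\<A\>^{1-s}G_\epsilon\<A\>^{-s}$; since $s>1/2$ there is residual decay, and interpolating against the quadratic estimate gives a \emph{linear} differential inequality of the shape $\Vert\partial_\epsilon F_\epsilon\Vert \leq \mathrm{C}\,\epsilon^{-\gamma}(1+\Vert F_\epsilon\Vert)$ with some $\gamma<1$, which is integrable on $(0,\epsilon_0]$. This is precisely where the hypothesis $s>1/2$ and the $\mathrm{C}^2(A)$ regularity enter, and without it the argument does not close. Your induction sketch is correct in spirit; the bookkeeping with the Balakrishnan formula for fractional powers is standard but does require the extra commutator order you identify.
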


The following abstract result is taken from \cite{HSS}.

\begin{theorem}[Hunziker, Sigal, Soffer]\sl \label{a21}
Let $s >0$ and $\overline{s} = \min \{ n \in \mathbb{N} ; \, n > s +1 \}$. Let $P , A$ be two self-adjoint operators such that $P$ is bounded, $P \in \mathrm{C}^{\overline{s}} (A)$, and the Mourre estimate \eqref{d29} holds with $\mathrm{c}_{0} >0$ and $I \subset \spec (P)$ an open interval. Then, for all $\chi \in \mathrm{C}_0^{\infty} ( I )$, there exists a positive constant $\mathrm{C}_{\chi,s}$ such that
\begin{equation} \label{d30}
\big\Vert \langle A \rangle^{- s} e^{-i t P} \chi (P) \langle A \rangle^{-s} \big\Vert \leq \mathrm{C}_{\chi,s} \langle t \rangle^{- s}.
\end{equation}
\end{theorem}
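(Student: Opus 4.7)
The plan is to extract polynomial decay from the Mourre estimate by deriving a differential inequality for a propagation observable built from $A$ and then iterating it. First I would reduce to the case of integer $s$: the analytic family $z \mapsto \langle A\rangle^{-z}e^{-\ii tP}\chi(P)\langle A\rangle^{-z}$ is uniformly bounded on the line $\re z = 0$ and, once one proves decay like $\langle t\rangle^{-n}$ on $\re z = n \in \mathbb{N}$, Stein interpolation yields the non-integer case. Picking $\tilde\chi \in \mathrm{C}_0^\infty(I)$ with $\tilde\chi \equiv 1$ on $\supp\chi$, one can upgrade the Mourre inequality to $\chi(P)[P,\ii A]\chi(P) \geq \mathrm{c}_0\,\chi(P)^2$, so that positivity is effectively global on the range of $\chi(P)$.

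The engine of the argument is the identity
\begin{equation*}
\tfrac{d}{dt}\langle \psi_t, G(A)\psi_t\rangle = \langle \psi_t, \ii[P,G(A)]\psi_t\rangle, \qquad \psi_t := e^{-\ii tP}\chi(P)\psi,
\end{equation*}
combined with a Helffer-Sj\"{o}strand type commutator expansion
\begin{equation*}
\ii[P,G(A)] = G'(A)\,[P,\ii A] + \sum_{k=2}^{\overline{s}}\tfrac{1}{k!}\,G^{(k)}(A)\,\ad^{k-1}_{\ii A}([P,\ii A]) + R_{\overline{s}}(G),
\end{equation*}
valid under the $\mathrm{C}^{\overline{s}}(A)$ hypothesis, with the remainder $R_{\overline{s}}(G)$ bounded in operator norm by an integral of $G^{(\overline{s})}$ against the bounded iterated commutators. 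Choosing $G$ as a suitable bounded nondecreasing smooth function of $A$ and invoking the Mourre inequality then gives, for the base case $n=1$, a Kato-smoothness bound for $\langle A\rangle^{-1/2-\varepsilon}\chi(P)$ with respect to $P$, which together with uniform boundedness on $\re z = 0$ delivers the $\langle t\rangle^{-1}$ rate by a standard interpolation.

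For the inductive step $n \to n+1$, I would write
\begin{equation*}
\langle A\rangle^{-(n+1)} e^{-\ii tP}\chi(P)\langle A\rangle^{-(n+1)} = \langle A\rangle^{-1}\bigl(\langle A\rangle^{-n} e^{-\ii tP}\chi(P)\langle A\rangle^{-n}\bigr)\langle A\rangle^{-1},
\end{equation*}
commute one outer $\langle A\rangle^{-1}$ through $e^{-\ii tP}$ via Duhamel's formula, and use that $[\langle A\rangle^{-1},P]$ carries an extra factor of $\langle A\rangle^{-1}$; the induction hypothesis then applies to the integrand with rate $\langle r\rangle^{-n}$, and integration over $r \in [0,t]$ produces the extra $\langle t\rangle^{-1}$ factor needed to reach $\langle t\rangle^{-(n+1)}$.

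The main obstacle is the careful bookkeeping of the commutator remainders. Each induction step consumes one order of $\mathrm{C}^k(A)$-regularity, while the base case requires an additional commutator to close the differential inequality, which is precisely why the hypothesis is $P \in \mathrm{C}^{\overline{s}}(A)$ with $\overline{s} > s+1$. Maintaining the $\chi(P)$-localization through every commutator expansion and Duhamel splitting, and carrying out the Stein interpolation with constants uniform in $t$, is the technically delicate point.
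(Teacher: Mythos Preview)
The paper does not prove this theorem; it is quoted from \cite{HSS}. Your sketch, however, has a genuine gap in the inductive step. After Duhamel, the boundary term
\[
\langle A\rangle^{-n}\,e^{-\ii tP}\,\langle A\rangle^{-1}\chi(P)\,\langle A\rangle^{-(n+1)}
\]
is, up to commuting $\langle A\rangle^{-1}$ through $\chi(P)$, of the form $\bigl(\langle A\rangle^{-n}e^{-\ii tP}\chi(P)\langle A\rangle^{-n}\bigr)\langle A\rangle^{-2}$, and the induction hypothesis only gives $O(\langle t\rangle^{-n})$ for this factor --- the extra $\langle A\rangle^{-2}$ is merely bounded and contributes no time decay. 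The integral term fares no better: even granting that $[P,\langle A\rangle^{-1}]$ carries two factors of $\langle A\rangle^{-1}$, the best one can extract from the two propagators with the available (mismatched) weights is $\langle t-r\rangle^{-1}\langle r\rangle^{-1}$, whose integral over $[0,t]$ is $O(\langle t\rangle^{-1}\log\langle t\rangle)$, not $O(\langle t\rangle^{-(n+1)})$. So ``integration over $r\in[0,t]$ produces the extra $\langle t\rangle^{-1}$'' is simply false as stated. The base case is also not secured: Kato smoothness is an $L^2_t$ statement and does not by itself yield pointwise $\langle t\rangle^{-1}$ decay; interpolating it against the trivial $L^\infty_t$ bound produces only an $L^p_t$ bound, not a pointwise rate.

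The argument in \cite{HSS} is structurally different. One proves a \emph{minimal velocity} estimate
\[
\big\Vert F(A<\theta t)\,e^{-\ii tP}\chi(P)\langle A\rangle^{-s}\big\Vert = O(\langle t\rangle^{-s})
\]
for some $\theta>0$ depending on the Mourre constant $\mathrm{c}_0$, by differentiating the expectation of a time-dependent propagation observable of the type $g\bigl((A-a t)/t^{\rho}\bigr)$ and using the commutator expansion you wrote to show that the leading contribution has a sign (from Mourre) while the remainders are integrable in $t$. The desired bound \eqref{d30} then follows because on the complementary region $\{A\ge\theta t\}$ one has $\langle A\rangle^{-s}\le C\langle t\rangle^{-s}$ trivially. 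The hypothesis $P\in\mathrm{C}^{\overline{s}}(A)$ with $\overline{s}>s+1$ is consumed in controlling the remainder of the commutator expansion at each step of this bootstrap, not in a Duhamel induction on the weight exponent.
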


Moreover, from the proofs of these results, the constants $\mathrm{C}_{J,\varepsilon}$ and $\mathrm{C}_{\chi,s}$ appearing in \eqref{d31} and \eqref{d30} only depend on the constant $\mathrm{c}_{0}$ and on $\Vert \ad^{j}_{\ii A} P \Vert$. In other words, if $P$ and $A$ depend on a parameter in a such way that the Mourre estimate and the upper bounds on the commutators are uniform with respect to this parameter, then the constants $\mathrm{C}_{J,\varepsilon}$ and $\mathrm{C}_{\chi,s}$ in the conclusion of Theorem \ref{b9} and Theorem \ref{a21} do not depend on the parameter.

\subsection{Infrared decomposition}\label{d36}

In this subsection, we introduce notations related to the infrared decomposition which will be an important tool in our proof of Theorem \ref{d28}, Theorem \ref{b8} and Theorem \ref{c1}.

For any $\sigma>0$, let $\mathcal{F}^{\le\sigma}$ and $\mathcal{F}_{\ge\sigma}$ denote the Fock spaces over $\mathrm{L}^2( \{ (k,\lambda) , |k| \leq \sigma \} )$ and $\mathrm{L}^2( \{ (k,\lambda) , |k| \geq \sigma \} )$ respectively. The Hilbert spaces $\mathcal{F}$ and $\mathcal{F}_{\ge\sigma} \otimes \mathcal{F}^{\le\sigma}$ are isomorphic and we shall not distinguish between the two of them. Moreover we set $\mathcal{H}_{\ge\sigma} := \mathrm{L}^2( \mathbb{R}^3 ) \otimes \mathcal{F}_{\ge\sigma}$. The infrared cutoff Hamiltonian acts on $\mathcal{H}$ and is defined by
\begin{align}
H_{\alpha,\sigma} := \big( p + \alpha^{ \frac{3}{2} } A_{\ge\sigma}(\alpha x) \big)^2 + H_f +V(x),
\end{align}
where
\begin{align}\label{f3}
A_{\ge\sigma}(x) := \Phi( h_{\ge\sigma}(x) ), \quad h_{\ge\sigma}(x,k,\lambda) := \mathds{1}_{|k|\ge\sigma}(k) h(x,k,\lambda).
\end{align}
Using Lemma \ref{d10} and the Kato-Rellich theorem, it is not difficult to verify that for all $\sigma\ge0$ and $\alpha$ small enough, $H_{\alpha,\sigma}$ is self-adjoint with domain $D(H_{\alpha,\sigma}) = D(H_0)$. The restriction of $H_{\alpha,\sigma}$ to $\mathcal{H}_{\ge\sigma}$ is denoted by $K_{\alpha,\ge\sigma}$, so that we can decompose
\begin{align}
H_{\alpha,\sigma} = K_{\alpha,\ge\sigma} \otimes \mathds{1}_{ \mathcal{F}^{\le\sigma}}+ \mathds{1}_{ \mathcal{H}_{\ge\sigma} } \otimes H_f.
\end{align}
Let $E_{\alpha,\sigma} := \inf \spec ( H_{\alpha,\sigma} ) = \inf \spec ( K_{\alpha,\ge\sigma} )$ and let $\Pi_{\alpha,\ge\sigma} := \mathds{1}_{ \{ E_{\alpha,\sigma}\} }( K_{\alpha,\ge\sigma} )$. We recall the following proposition from \cite{BFP,FGS1}:

\begin{proposition}\sl \label{g1}
There exists $\alpha_c>0$ such that, for all $0\le\alpha\le\alpha_c$ and $0 < \sigma \leq e_{\mathrm{gap}} / 2$,
\begin{equation*}
\spec ( K_{\alpha,\ge\sigma} ) \cap ( E_{\alpha,\sigma} , E_{\alpha,\sigma} + \sigma ) = \emptyset.
\end{equation*}
\end{proposition}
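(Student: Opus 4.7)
The plan is to establish Proposition \ref{g1} by first computing $\spec(K_{0,\ge\sigma})$ explicitly at $\alpha = 0$ and then perturbing in the coupling constant, relying on a Hunziker--Van Winter--Zhislin (HVZ) type theorem for the essential spectrum and on a Feshbach--Schur reduction for eigenvalues in the gap. For the unperturbed case, on $\mathcal{H}_{\ge\sigma} = \mathrm{L}^2(\R^3) \otimes \mathcal{F}_{\ge\sigma}$ one has the tensor-sum decomposition
\begin{equation*}
K_{0,\ge\sigma} = (-\Delta + V) \otimes \mathds{1}_{\mathcal{F}_{\ge\sigma}} + \mathds{1}_{\mathrm{L}^2(\R^3)} \otimes H_{f,\ge\sigma},
\end{equation*}
where $H_{f,\ge\sigma}$ is the second quantization of $\omega$ restricted to $\mathcal{F}_{\ge\sigma}$; its spectrum is $\{0\} \cup [\sigma,\infty)$, with $0$ coming from the vacuum in $\mathcal{F}_{\ge\sigma}$. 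Combined with $\spec(-\Delta + V) \subset \{e_1\} \cup [e_1 + e_{\mathrm{gap}},\infty)$ and the hypothesis $\sigma \le e_{\mathrm{gap}}/2$, this gives $\spec(K_{0,\ge\sigma}) = \{e_1\} \cup [e_1 + \sigma,\infty)$. In particular $E_{0,\sigma} = e_1$, and the conclusion holds for $\alpha = 0$ with gap exactly $\sigma$.

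For $\alpha > 0$ small, I would proceed in two parts. On the one hand, a standard HVZ argument for Pauli--Fierz type Hamiltonians with positive effective photon mass shows that $\inf \spec_{\mathrm{ess}}(K_{\alpha,\ge\sigma}) = E_{\alpha,\sigma} + \sigma$; the point is that the infrared cutoff gives each photon energy $\ge \sigma$, while geometric photon localization realizes the decomposition of the essential spectrum at spatial infinity (see \cite{BFP,FGS1}). Existence of the ground-state eigenvalue $E_{\alpha,\sigma}$ likewise comes from \cite{BFP}. On the other hand, to rule out embedded eigenvalues of $K_{\alpha,\ge\sigma}$ in $(E_{\alpha,\sigma}, E_{\alpha,\sigma} + \sigma)$, I would apply a Feshbach--Schur reduction with respect to the rank-one projection $P_0 := \pi_1 \otimes \Pi_{\Omega_{\ge\sigma}}$, where $\Omega_{\ge\sigma}$ is the Fock vacuum in $\mathcal{F}_{\ge\sigma}$. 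At $\alpha = 0$, the previous step gives $\bar P_0 K_{0,\ge\sigma} \bar P_0 \ge (e_1 + \sigma) \bar P_0$, and the interaction $W_\alpha := K_{\alpha,\ge\sigma} - K_{0,\ge\sigma}$ satisfies a creation--annihilation estimate $\pm W_\alpha \le c \alpha^3 ( K_{0,\ge\sigma} + \mathrm{const} )$, obtained from the $\mathrm{L}^2$-bounds on $h_{\ge\sigma}(x,\cdot)$ and $\omega^{-1/2} h_{\ge\sigma}(x,\cdot)$. This upgrades the lower bound to $\bar P_0 K_{\alpha,\ge\sigma} \bar P_0 \ge (e_1 + \sigma - c\alpha^3) \bar P_0$, while analytic perturbation theory on the finite-dimensional block $P_0 K_{\alpha,\ge\sigma} P_0$ produces a unique eigenvalue close to $e_1$. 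Isospectrality of the Feshbach--Schur map then identifies $E_{\alpha,\sigma}$ as the only spectral point of $K_{\alpha,\ge\sigma}$ in $(-\infty, E_{\alpha,\sigma} + \sigma)$.

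The main obstacle is to make every constant uniform in $\sigma$. This requires verifying that $\|h_{\ge\sigma}(x,\cdot)\|_{\mathrm{L}^2}$ and $\|\omega^{-1/2} h_{\ge\sigma}(x,\cdot)\|_{\mathrm{L}^2}$ are bounded uniformly in $\sigma \in (0, e_{\mathrm{gap}}/2]$ (which follows from $\kappa \in \mathrm{C}_0^\infty$ and Lemma \ref{d10}), and checking that the HVZ argument, the relative bound on $W_\alpha$, and the Feshbach--Schur inversion all produce constants depending only on $e_{\mathrm{gap}}$, $V$, and the ultraviolet cutoff $\kappa$, but not on $\sigma$. The hypothesis $\sigma \le e_{\mathrm{gap}}/2$, rather than $\sigma \le e_{\mathrm{gap}}$, provides the slack needed in these perturbative estimates.
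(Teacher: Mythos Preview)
The paper does not give its own proof of Proposition~\ref{g1}; it simply recalls the statement from \cite{BFP,FGS1} and remarks that a slight modification of the argument in \cite{BFP} (which is carried out for a discrete sequence $\sigma_n\to 0$, while \cite{FGS1} uses a smooth cutoff) yields the result for all $0<\sigma\le e_{\mathrm{gap}}/2$. Your outline --- explicit computation at $\alpha=0$, HVZ for the essential spectrum of the massive model, and a Feshbach--Schur reduction on the vacuum sector $P_0=\pi_1\otimes\Pi_{\Omega_{\ge\sigma}}$ with uniformity in $\sigma$ coming from the $\sigma$-independent bounds on $\|h_{\ge\sigma}\|$ and $\|\omega^{-1/2}h_{\ge\sigma}\|$ --- is precisely the standard route taken in those references, so your proposal is aligned with what the paper defers to.

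Two small points to tighten. First, the interaction $W_\alpha=K_{\alpha,\ge\sigma}-K_{0,\ge\sigma}$ contains the cross term $2\alpha^{3/2}A_{\ge\sigma}(\alpha x)\cdot p$, so the natural relative form bound is of order $\alpha^{3/2}$, not $\alpha^{3}$; this does not affect the argument but the exponent should be corrected. Second, your Feshbach step as written only guarantees that $\bar P_0K_{\alpha,\ge\sigma}\bar P_0\ge(e_1+\sigma-c\alpha^{3/2})\bar P_0$, hence rules out eigenvalues only below $e_1+\sigma-c\alpha^{3/2}$; to close the full interval $(E_{\alpha,\sigma},E_{\alpha,\sigma}+\sigma)$ you must invoke the HVZ identity $\inf\spec_{\mathrm{ess}}(K_{\alpha,\ge\sigma})=E_{\alpha,\sigma}+\sigma$ together with the estimate $|E_{\alpha,\sigma}-e_1|=O(\alpha^{3/2})$ (uniform in $\sigma$), so that any point in the remaining sliver $[e_1+\sigma-c\alpha^{3/2},E_{\alpha,\sigma}+\sigma)$ lies strictly below the essential spectrum and would have to be a discrete eigenvalue --- which your Feshbach count then excludes by continuity from $\alpha=0$. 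Once these two wrinkles are straightened, the argument is complete.
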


Let us mention that, in \cite{BFP}, Proposition \ref{g1} is proven for some sequence $\sigma_n \to 0$, while in \cite{FGS1}, the result is established with a smooth infrared cutoff. However, slightly modifying the proof of \cite{BFP}, it is not difficult to obtain the gap property as stated in Proposition \ref{g1}.

For any $\varphi \in \mathrm{C}_0^\infty( (0,1) )$, we set $\varphi_\sigma(\cdot) := \varphi( \cdot / \sigma )$. Observe that by Proposition \ref{g1}, for any $\varphi \in \mathrm{C}_0^\infty( (0,1) ; \mathbb{R} )$ and $0 < \sigma \leq e_{ \mathrm{gap} } / 2$, we have
\begin{equation}\label{g2}
\varphi_\sigma( H_{\alpha,\sigma} - E_{\alpha,\sigma} ) = \Pi_{\alpha,\ge\sigma} \otimes \varphi_\sigma( H_f ).
\end{equation}

Let $\eta_{\sigma} ( k ) : = \eta ( k / \sigma )$ with $\eta \in \mathrm{C}_0^\infty( \mathbb{R}^3 ; [0,1] )$ such that $\eta(k) = 1$ if $| k | \leq 1/2$ and $\eta ( k ) = 0$ if $| k | \geq 1$. The generator of dilatations in Fock space with a cutoff in the momentum variable is denoted by $B^\sigma$ and is defined by
\begin{equation} \label{g3}
B^\sigma := \d \Gamma ( b^\sigma ), \quad b^\sigma :=  \frac{\ii}{2} \eta_\sigma(k) ( k \cdot \nabla_k + \nabla_k \cdot k ) \eta_\sigma(k).
\end{equation}
Finally we set
\begin{equation} \label{g4}
A^{\le\sigma}(x) := \Phi( h^{\le\sigma}(x) ), \quad h^{\le\sigma}(x,k,\lambda) := \mathds{1}_{|k|\le\sigma}(k) h(x,k,\lambda).
\end{equation}

We recall the following commutation properties which  will be used in the sequel:
\begin{align}
& [ A(x) , \ii B^\sigma ] = [ A^{\le\sigma}(x) , \ii B^\sigma ] = - \Phi( \ii b^\sigma h^{\le\sigma}(x) ), \label{g5} \\
& [ H_f , \ii B^\sigma ] = \d \Gamma ( \eta_\sigma(k) ^2 |k| ), \label{g6}
\end{align}
in the sense of quadratic forms on $D( H_0 ) \cap D( B^\sigma )$.

\subsection{The Mourre inequality and multiple commutators estimates}

In this section, we recall the Mourre estimate obtained in \cite{FGS1} with $B^\sigma$ as a conjugate operator. We also state uniform estimates on the commutators of $B^\sigma$ with functions of $H_\alpha$. For the convenience of the reader, the proof of these multiple commutators estimates are deferred to Appendix \ref{c12}. Applying the abstract results from Subsection \ref{c13}, we then deduce resolvent smoothness and local decay estimates for $H_\alpha$, with weights expressed in terms of $B^\sigma$. Notice that, here, the obtained estimates are not uniform in $\sigma$.

The next Mourre estimate follows from Proposition 6, Proposition 7, Lemma 17 of \cite{FGS1} and their proof.

\begin{theorem}[Fr\"ohlich, Griesemer, Sigal]\sl \label{c17}
Let $I \Subset ( 0 , 1 )$ be an open interval. There exist $\alpha_c>0$ and $\mathrm{c}_{0} > 0$ such that, for all $0 \leq \alpha \leq \alpha_c$ and $0 < \sigma \leq e_{\mathrm{gap}} / 2$,
\begin{equation*}
\mathds{1}_{\sigma I} ( H_{\alpha} - E_{\alpha} ) [ H_{\alpha} , \ii B^{\sigma} ] \mathds{1}_{\sigma I} ( H_{\alpha} - E_{\alpha} ) \geq \mathrm{c}_{0} \sigma \mathds{1}_{\sigma I} ( H_{\alpha} - E_{\alpha} ) .
\end{equation*}
\end{theorem}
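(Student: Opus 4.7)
\smallskip

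\noindent\textbf{Proof proposal for Theorem \ref{c17}.} My plan is to first compute the commutator explicitly and identify the leading term, then reduce everything to the infrared-cutoff Hamiltonian $H_{\alpha,\sigma}$, where the projection $\mathds{1}_{\sigma I}(H_{\alpha,\sigma}-E_{\alpha,\sigma})$ factorizes thanks to \eqref{g2}. Using the commutation relations \eqref{g5}--\eqref{g6}, one has, as a quadratic form,
\begin{equation*}
[H_\alpha,\ii B^\sigma] = \d\Gamma(\eta_\sigma^2|k|) - 2\alpha^{\frac{3}{2}}(p+\alpha^{\frac{3}{2}}A(\alpha x))\cdot\Phi(\ii b^\sigma h^{\le\sigma}(\alpha x)) + \text{h.c.}\,,
\end{equation*}
so the leading term is $N_\sigma := \d\Gamma(\eta_\sigma^2|k|)$, which satisfies $N_\sigma\ge \d\Gamma(\mathds{1}_{|k|\le\sigma/2}|k|)$, and the remaining \emph{interaction commutator} is formally of order $\alpha^{3/2}$ times a $\Phi$-type form factor with cutoff $|k|\le\sigma$.

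Next I would transfer the spectral localization from $H_\alpha$ to $H_{\alpha,\sigma}$. Writing $H_\alpha - H_{\alpha,\sigma} = 2\alpha^{3/2}(p+\alpha^{3/2}A(\alpha x))\cdot A^{\le\sigma}(\alpha x) + \alpha^3 A^{\le\sigma}(\alpha x)^2 + 2\alpha^3 A_{\ge\sigma}(\alpha x)\cdot A^{\le\sigma}(\alpha x)$, standard Hardy-type bounds (of the kind developed in Section~\ref{f4} of the present paper, and used in \cite{FGS1}) give $\|(H_0+1)^{-1/2}(H_\alpha - H_{\alpha,\sigma})(H_0+1)^{-1/2}\| = O(\alpha^{3/2}\sigma)$. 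A functional-calculus/Helffer--Sj\"ostrand argument then yields
\begin{equation*}
\mathds{1}_{\sigma I}(H_\alpha-E_\alpha) = \varphi_\sigma(H_{\alpha,\sigma}-E_{\alpha,\sigma}) \mathds{1}_{\sigma I}(H_\alpha-E_\alpha) + R_{\alpha,\sigma},
\end{equation*}
for a suitable $\varphi\in\mathrm{C}_0^\infty((0,1))$ with $\varphi\equiv 1$ on $I$, where $R_{\alpha,\sigma}$ is $O(\alpha^{3/2})$ in norm (using also $|E_\alpha - E_{\alpha,\sigma}|=O(\alpha^{3}\sigma)$, which is controlled in \cite{BFP,FGS1}).

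By the gap property, Proposition \ref{g1}, and identity \eqref{g2}, we have $\varphi_\sigma(H_{\alpha,\sigma}-E_{\alpha,\sigma}) = \Pi_{\alpha,\ge\sigma}\otimes\varphi_\sigma(H_f)$. Since $\varphi$ is supported in $I \Subset (0,1)$, the factor $\varphi_\sigma(H_f)$ lives on states whose photons have total energy in an interval $[c_1\sigma,c_2\sigma]$ with $c_1>0$; restricted to these states, $\d\Gamma(\mathds{1}_{|k|\le\sigma/2}|k|)$ is bounded below by $c\sigma$ up to a further error supported on photon-momenta in $[\sigma/2,\sigma]$, which is itself comparable to $\sigma\,\varphi_\sigma(H_f)$. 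Thus, sandwiching the leading term $N_\sigma$ between $\varphi_\sigma(H_{\alpha,\sigma}-E_{\alpha,\sigma})$-projectors produces a lower bound $\ge c\sigma\,\varphi_\sigma(H_{\alpha,\sigma}-E_{\alpha,\sigma})^2$. The interaction commutator is bounded (using again Hardy-type estimates on $\Phi(\ii b^\sigma h^{\le\sigma})$ and the $H_f^{1/2}$-boundedness of $(p+\alpha^{3/2}A(\alpha x))$) by $\mathrm{C}\alpha^{3/2}\sigma\,\varphi_\sigma(H_{\alpha,\sigma}-E_{\alpha,\sigma})^2$, and the error term $R_{\alpha,\sigma}$ gives a contribution which is also $o(\sigma)$. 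Choosing $\alpha_c$ small enough so that all these error terms are absorbed, one recovers the announced inequality with some $\mathrm{c}_0>0$ uniform in $\sigma\le e_{\mathrm{gap}}/2$ and $\alpha\le\alpha_c$.

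The main obstacle, and the crucial point behind the $\sigma$-uniformity, is to control the interaction commutator $\Phi(\ii b^\sigma h^{\le\sigma}(\alpha x))$ so that its contribution scales like $\alpha^{3/2}\sigma$ rather than $\alpha^{3/2}$: this is where one genuinely uses that $\ii b^\sigma h^{\le\sigma}$ enjoys an extra factor $|k|$ coming from the dilatation vector field combined with the cutoff $\eta_\sigma$, together with Hardy's inequality $\||k|^{-1}\varphi\|\le\mathrm{C}\||\ii\nabla_k|\varphi\|$ applied after switching to the $X$-representation. Once this scaling is established, the above combination of infrared decomposition, gap estimate and functional calculus closes the argument.
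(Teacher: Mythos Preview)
The paper does not give its own proof of this theorem; it simply attributes the result to Fr\"ohlich--Griesemer--Sigal and cites Proposition~6, Proposition~7 and Lemma~17 of \cite{FGS1}. Your sketch does follow the same broad strategy as \cite{FGS1}: compute the commutator, reduce to the infrared-cutoff Hamiltonian via Proposition~\ref{c10}, factorise via the gap \eqref{g2}, and show that the interaction commutator is $O(\alpha^{3/2}\sigma)$. The interaction-commutator estimate is essentially correct (though $(p+\alpha^{3/2}A)$ is controlled by $(H_\alpha+\mathrm{C})^{1/2}$, not by $H_f^{1/2}$, and the $\sigma$-scaling comes from the small volume of $\{|k|\le\sigma\}$, not from an ``extra factor $|k|$'' in $b^\sigma h^{\le\sigma}$, since dilatation preserves the homogeneity of $h$). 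Hardy's inequality plays no role here; it enters only in Section~\ref{f4} when passing to $\langle X\rangle$-weights.

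There is, however, a genuine gap in your lower bound for the leading term. After factorising you need
\[
\varphi_\sigma(H_f)\,\d\Gamma(\eta_\sigma^2|k|)\,\varphi_\sigma(H_f)\ \ge\ \mathrm{c}_0\sigma\,\varphi_\sigma(H_f)^2
\qquad\text{on }\mathcal{F}^{\le\sigma}.
\]
You argue by writing $H_f^{\le\sigma}=\d\Gamma(\mathds{1}_{|k|\le\sigma/2}|k|)+\d\Gamma(\mathds{1}_{\sigma/2<|k|\le\sigma}|k|)$ and claiming the second piece is a controllable error. But with the paper's choice of $\eta$ ($\eta\equiv1$ only on $\{|k|\le1/2\}$), a single soft photon with $|k_0|\in\sigma I\cap(\sigma/2,\sigma]$ satisfies $H_f=|k_0|\in\sigma I$ while $\d\Gamma(\eta_\sigma^2|k|)=\eta(k_0/\sigma)^2|k_0|$, which can be arbitrarily small (indeed zero) since nothing forces $\eta>0$ on $(1/2,1)$. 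So the ``error'' can be the entire quantity, and your inequality fails as stated whenever $\sup I>1/2$.

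The remedy in \cite{FGS1} is not to use the same scale $\sigma$ for the Hamiltonian cutoff as for $B^\sigma$. One introduces an auxiliary infrared scale (tied to $\inf I$) so that the soft photons coming from the factorisation genuinely live where $\eta_\sigma\equiv1$; the gap is then smaller than $\sigma$, and a more careful spectral-localisation argument (this is where Lemma~17 and the proof of Proposition~6 in \cite{FGS1} are invoked) is needed to recover the factorised form and close the estimate. Your sketch collapses this two-scale step into one, which is precisely where it breaks.
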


The following lemma is proven in Appendix \ref{c12} below.

\begin{lemma}\sl \label{c18}
There exists $\alpha_c>0$ such that, for all $n \in \mathbb{N}\cup\{0\}$ and $\varphi \in \mathrm{C}_{0}^{\infty} ( ( - \infty , 1 ) ; \mathbb{R} )$, there exists a positive constant $\mathrm{C}_{n,\varphi}$ such that, for all $0 \leq \alpha \leq \alpha_c$ and $0 < \sigma \leq e_{\mathrm{gap}} / 2$,
\begin{equation*}
\big\Vert \ad^n_{\ii B^\sigma}( \varphi_\sigma ( H_\alpha - E_\alpha ) ) \big\Vert \leq \mathrm{C}_{n,\varphi}.
\end{equation*}
\end{lemma}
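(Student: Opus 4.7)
The plan is to prove Lemma \ref{c18} via the Helffer--Sjöstrand functional calculus, reducing the estimate to a uniform relative bound on the iterated commutators $\ad^k_{\ii B^\sigma}(H_\alpha)$ and an almost-analytic integration in the spectral variable.

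First, introduce the rescaled operator $G_\sigma := \sigma^{-1}(H_\alpha - E_\alpha)$, so that $\varphi_\sigma(H_\alpha - E_\alpha) = \varphi(G_\sigma)$. Choose a $\sigma$-independent almost analytic extension $\widetilde{\varphi} \in \mathrm{C}_0^\infty(\C)$ of $\varphi$, satisfying the standard bound $|\bar{\partial}\widetilde{\varphi}(z)| \le \mathrm{C}_{N} |\im z|^N$ for every $N$, with support in a fixed neighborhood of $\supp \varphi \subset (-\infty,1)$. The Helffer--Sjöstrand formula then gives
\begin{equation*}
\varphi(G_\sigma) = \frac{1}{\pi} \int_\C \bar{\partial} \widetilde{\varphi}(z)\, (z - G_\sigma)^{-1}\, dL(z),
\end{equation*}
which, combined with the Leibniz-type expansion of $\ad^n_{\ii B^\sigma}$ applied to a resolvent, yields
\begin{equation*}
\ad^n_{\ii B^\sigma}(\varphi(G_\sigma)) = \frac{1}{\pi} \sum_{\mathbf k} c_{\mathbf k} \int_\C \bar{\partial}\widetilde{\varphi}(z)\, (z - G_\sigma)^{-1} \prod_{j=1}^{\ell} \Bigl( \sigma^{-1} \ad^{k_j}_{\ii B^\sigma}(H_\alpha) \cdot (z - G_\sigma)^{-1} \Bigr) dL(z),
\end{equation*}
where the sum runs over compositions $\mathbf k = (k_1, \ldots, k_\ell)$ of $n$ with $k_j \ge 1$.

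Second, the crux of the argument is the uniform relative bound: for every $k \in \N$, there is $\mathrm{C}_k$ such that
\begin{equation*}
\bigl\Vert \ad^k_{\ii B^\sigma}(H_\alpha)(H_\alpha - E_\alpha + 1)^{-1} \bigr\Vert \le \mathrm{C}_k\, \sigma,
\end{equation*}
uniformly in $0 \le \alpha \le \alpha_c$ and $0 < \sigma \le e_{\mathrm{gap}}/2$. To prove this, one computes the commutators explicitly: using \eqref{g6}, $\ad^k_{\ii B^\sigma}(H_f) = \d\Gamma(w_k^\sigma)$ with $w_k^\sigma$ supported in $\{|k| \le \sigma\}$ and satisfying $|w_k^\sigma(k)| \le \mathrm{C}_k \sigma$, giving the relative bound via $\d\Gamma(w_k^\sigma) \le \mathrm{C}_k \sigma (H_f + 1)$; for the term $(p + \alpha^{3/2} A(\alpha x))^2$ one iterates \eqref{g5} to obtain expressions involving $\Phi((\ii b^\sigma)^k h^{\le \sigma}(\alpha x))$, whose form factors have $L^2$ norm controlled by $\mathrm{C}_k \sigma (1 + \sigma \alpha |x|)^k$, as follows from $\nabla_k e^{\ii k \cdot \alpha x} = \ii \alpha x\, e^{\ii k \cdot \alpha x}$ together with $|k| \le \sigma$ on the support of $\eta_\sigma$. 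The polynomial growth in $\alpha x$ is absorbed using $\alpha \langle x \rangle \le \mathrm{C} (H_\alpha - E_\alpha + 1)^{1/2}$ for small $\alpha$, a consequence of the confining character of the Coulomb-type potential $V$ together with the $N_\tau$-estimates of Lemma \ref{d10}.

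Third, it remains to piece the bounds together. Writing $\sigma^{-1} \ad^{k_j}_{\ii B^\sigma}(H_\alpha) = \bigl(\sigma^{-1} \ad^{k_j}_{\ii B^\sigma}(H_\alpha) (H_\alpha - E_\alpha + 1)^{-1} \bigr)(H_\alpha - E_\alpha + 1)$ and using $(H_\alpha - E_\alpha + 1)(z - G_\sigma)^{-1} = (\sigma z + 1)(z - G_\sigma)^{-1} - \sigma \mathds{1}$, one finds that each pair $\sigma^{-1} \ad^{k_j}_{\ii B^\sigma}(H_\alpha) \cdot (z - G_\sigma)^{-1}$ is bounded by $\mathrm{C} \langle z \rangle / |\im z|$ uniformly in $\sigma$. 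Consequently each term in the expansion is bounded by $\mathrm{C}_n \langle z \rangle^n / |\im z|^{n+1}$. Choosing $N \ge n + 2$ in the estimate of $\bar{\partial}\widetilde\varphi$ yields an integrable upper bound independent of $\sigma$ and $\alpha$, proving the lemma.

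The principal technical obstacle is the second step, namely controlling the iterated commutators of $(p + \alpha^{3/2} A(\alpha x))^2$ with $B^\sigma$ uniformly in $\sigma$: the differentiations $k \cdot \nabla_k$ hidden in $b^\sigma$ produce factors of $\alpha x$ when acting on $e^{\ii k\cdot \alpha x}$, and only the smallness of $\alpha$ together with the confining nature of $V$ makes these factors harmless in the relative estimate above.
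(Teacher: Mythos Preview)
Your overall strategy---Helffer--Sj\"ostrand plus a Leibniz expansion of the iterated commutator of the resolvent---is the same as the paper's. The gap is in your second step. The potential $V$ here is Coulomb-type (more generally, only $\Delta$-bounded with relative bound $0$); it is \emph{not} confining, so the inequality $\alpha\langle x\rangle \le \mathrm{C}(H_\alpha - E_\alpha + 1)^{1/2}$ is simply false. Without it, the polynomial factors $(1+\sigma\alpha|x|)^k$ produced by $b^\sigma$ acting on $e^{\ii k\cdot\alpha x}$ cannot be absorbed by $(H_\alpha - E_\alpha + 1)^{-1}$, and the uniform relative bound $\|\ad^k_{\ii B^\sigma}(H_\alpha)(H_\alpha - E_\alpha + 1)^{-1}\|\le \mathrm{C}_k\sigma$ fails. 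This is exactly why the paper's proof needs the weight $\langle\sigma x\rangle^{-j_l}$ in estimates such as \eqref{e30}--\eqref{e32}.

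The paper's workaround is the following. One first writes $\varphi = \varphi_0\varphi_1\cdots\varphi_n$ and applies Leibniz (\eqref{e24}--\eqref{e25}) so that in every term at least one factor $\varphi_{l_0}(H_\alpha)$ survives without being commuted. One then inserts weights $\langle\sigma x\rangle^{\pm t}$ between successive factors: the growing weights are absorbed by the uncommuted factor $\langle\sigma x\rangle^{s_{l_0}}\varphi_{l_0}(H_\alpha)\langle\sigma x\rangle^{\tilde s_{l_0}}$, which is bounded thanks to the exponential decay of spectrally localized states (Lemma~\ref{e7}), while the decaying weights $\langle\sigma x\rangle^{-j_l}$ tame the $x$-growth in $\ad^{j_l}_{\ii B^\sigma}(H_\alpha)$. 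Lemma~\ref{d7} tracks the price in powers of $|\im z|^{-1}$ incurred when commuting $\langle\sigma x\rangle^{\pm}$ through the resolvent. In short, the missing idea in your argument is that control of the $x$-growth comes from exponential localization below the ionization threshold, not from any confining property of $V$.
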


Combining these results with Theorem \ref{b9}, we get the following proposition.

\begin{proposition}\sl \label{c8}
Let $I \Subset ( 0 , 1 )$ be a open interval. There exists $\alpha_c>0$ such that, for all $n \in \N$ and $\varepsilon > 0$, there exists $\mathrm{C}_{n , \varepsilon} > 0$ such that, for all $0 \leq \alpha \leq \alpha_c$ and $0 < \sigma \leq e_{\mathrm{gap}} / 2$,
\begin{equation*}
\sup_{ z \in \mathbb{C} \setminus \mathbb{R}, \, \re z \in E_\alpha + \sigma I} \big\Vert \langle B^\sigma \rangle^{-n + \frac{1}{2} - \varepsilon} ( H_\alpha - z )^{-n} \langle B^\sigma \rangle^{-n + \frac{1}{2}-\varepsilon} \big\Vert \leq \frac{\mathrm{C}_{n , \varepsilon}}{ \sigma^n }.
\end{equation*}
\end{proposition}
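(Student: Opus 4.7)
The plan is to apply Theorem~\ref{b9} with conjugate operator $A = B^\sigma$. Since the first commutator $[H_\alpha, \ii B^\sigma]$ is unbounded, Theorem~\ref{b9} cannot be applied to $H_\alpha$ directly, so we use a bounded function of $H_\alpha$ instead. Rescaling $\tilde H := (H_\alpha - E_\alpha)/\sigma$, Theorem~\ref{c17} takes the scale-free form
\begin{equation*}
\mathds{1}_{\tilde I}(\tilde H)[\tilde H, \ii B^\sigma]\mathds{1}_{\tilde I}(\tilde H) \geq \mathrm{c}_0 \mathds{1}_{\tilde I}(\tilde H)
\end{equation*}
for any open interval $\tilde I \Subset (0,1)$. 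Fix such a $\tilde I$ with $I \Subset \tilde I$, and pick $\tilde\varphi \in \mathrm{C}_0^\infty((0,1);\R)$ equal to the identity on an open neighborhood $J$ of $\overline{\tilde I}$ and satisfying $\tilde\varphi^{-1}(\tilde I) \cap \spec(\tilde H) = \tilde I \cap \spec(\tilde H)$ (easily arranged since $\spec(\tilde H) \subset [0,\infty)$). Set $P := \tilde\varphi(\tilde H) = \tilde\varphi_\sigma(H_\alpha - E_\alpha)$. By Lemma~\ref{c18}, $P \in \mathrm{C}^{n+1}(B^\sigma)$ with $\|\ad^j_{\ii B^\sigma}(P)\| \leq \mathrm{C}_j$ uniformly in $\alpha \in [0, \alpha_c]$ and $\sigma \in (0, e_{\mathrm{gap}}/2]$.

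Let $R := \mathds{1}_{\tilde I}(P) = \mathds{1}_{\tilde I}(\tilde H)$ (equality via functional calculus and the choice of $\tilde\varphi$). Since $\tilde\varphi = \mathrm{id}$ on $\spec(\tilde H) \cap \tilde I$, the identities $PR = \tilde H R$ and $RP = R\tilde H$ hold as bounded operators; expanding $[P, \ii B^\sigma]$ and $[\tilde H, \ii B^\sigma]$ as quadratic forms on $D(B^\sigma)$ then yields
\begin{equation*}
R[P, \ii B^\sigma]R = R[\tilde H, \ii B^\sigma]R \geq \mathrm{c}_0 R ,
\end{equation*}
which is the Mourre estimate for $P$ on $\tilde I$. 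Applying Theorem~\ref{b9} to $(P, B^\sigma)$ and invoking the uniformity remark after Theorem~\ref{a21}, we obtain, for $\varepsilon > 0$,
\begin{equation*}
\sup_{\re w \in I,\, \im w \neq 0} \big\Vert \langle B^\sigma \rangle^{-n+\frac{1}{2}-\varepsilon}(P - w)^{-n} \langle B^\sigma \rangle^{-n+\frac{1}{2}-\varepsilon} \big\Vert \leq \mathrm{C}_{n,\varepsilon} ,
\end{equation*}
with $\mathrm{C}_{n,\varepsilon}$ independent of $\alpha$ and $\sigma$.

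To transfer this bound from $P$ back to $\tilde H$, fix $\tilde\chi \in \mathrm{C}_0^\infty(J;\R)$ with $\tilde\chi \equiv 1$ on $\overline{\tilde I}$. Because $\tilde\varphi = \mathrm{id}$ on $\supp\tilde\chi$, the functional calculus gives $\tilde\chi(\tilde H) = \tilde\chi(P)$ and $(\tilde H - w)^{-n}\tilde\chi(\tilde H) = (P - w)^{-n}\tilde\chi(\tilde H)$; splitting
\begin{equation*}
(\tilde H - w)^{-n} = (P - w)^{-n}\tilde\chi(\tilde H) + (\tilde H - w)^{-n}(1 - \tilde\chi(\tilde H)),
\end{equation*}
the second summand is uniformly bounded for $\re w \in I$ since $|t - w|$ is bounded below on $\supp(1 - \tilde\chi)$, and the first is controlled by the previous display after commuting $\tilde\chi(\tilde H)$ through the weights $\langle B^\sigma \rangle^{-n+\frac{1}{2}-\varepsilon}$, which is allowed uniformly by the iterated commutator estimates of Lemma~\ref{c18} together with standard interpolation. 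Substituting $w = (z - E_\alpha)/\sigma$, so that $(\tilde H - w)^{-n} = \sigma^n(H_\alpha - z)^{-n}$, yields the claim. The delicate point throughout is maintaining uniformity in $\sigma$: this is secured by the rescaling combined with the uniform Mourre estimate of Theorem~\ref{c17}, the uniform commutator bounds of Lemma~\ref{c18}, and the fact recalled after Theorem~\ref{a21} that the constants in Theorem~\ref{b9} depend on $P$ and $A$ only through $\mathrm{c}_0$ and the norms $\|\ad^j_{\ii A}P\|$.
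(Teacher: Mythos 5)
Your overall strategy matches the paper's: rescale $\tilde H = (H_\alpha - E_\alpha)/\sigma$, replace $\tilde H$ by a bounded function $P$ of it so that Theorem~\ref{b9} applies (the first commutator $[H_\alpha, \ii B^\sigma]$ being unbounded), verify the Mourre estimate and uniform iterated-commutator bounds via Theorem~\ref{c17} and Lemma~\ref{c18}, invoke the uniformity remark after Theorem~\ref{a21}, and transfer back. However, there is a genuine gap in the construction of the auxiliary function.

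You require $\tilde\varphi \in \mathrm{C}_0^\infty((0,1);\R)$ equal to the identity on a neighborhood $J$ of $\overline{\tilde I}$ and satisfying $\tilde\varphi^{-1}(\tilde I) \cap \spec(\tilde H) = \tilde I \cap \spec(\tilde H)$, and you claim this is easily arranged. It is not: $\spec(\tilde H) = [0,\infty)$ (since $H_\alpha$ has continuous spectrum filling $[E_\alpha,\infty)$), and a continuous function compactly supported in $(0,1)$ that equals the identity on $J$ must descend from the value $\sup J$ back to $0$ on $(\sup J, 1)$, hence passes again through every value in $\tilde I \subset (0,\sup J)$. Thus $\tilde\varphi^{-1}(\tilde I) \cap [0,\infty)$ strictly contains $\tilde I$, and both identities you rely on — $\mathds{1}_{\tilde I}(P) = \mathds{1}_{\tilde I}(\tilde H)$ for the Mourre estimate, and $\tilde\chi(\tilde H) = \tilde\chi(P)$ for the transfer — fail. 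In particular $\mathds{1}_{\tilde I}(P) = \mathds{1}_{\tilde\varphi^{-1}(\tilde I)}(\tilde H)$ picks up a spectral piece in $(\sup J,1)$ on which $\tilde\varphi$ is decreasing, where the commutator $[P,\ii B^\sigma]$ carries the wrong sign, so positivity does not propagate.

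The correct choice, which is the one the paper makes, is a \emph{non-decreasing} $\psi \in \mathrm{C}^\infty(\R;[0,\infty))$ with $\psi \equiv 0$ near $(-\infty,0]$, $\psi = \mathrm{id}$ near $J$, and $\psi \equiv \text{Const.}$ near $[1,\infty)$. Monotonicity gives $\psi^{-1}(\tilde I) \cap [0,\infty) = \tilde I$, which is exactly what your argument needs at both steps; and since $H_\alpha - E_\alpha \geq 0$ one writes $\psi_\sigma(H_\alpha - E_\alpha) = \text{Const.} + \tilde\psi_\sigma(H_\alpha - E_\alpha)$ with $\tilde\psi \in \mathrm{C}_0^\infty((-\infty,1))$ before applying Lemma~\ref{c18}. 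Once this is corrected, your argument goes through. I also note that the paper's transfer step is a bit more direct: it bounds $\Vert ((\sigma P_{\alpha,\sigma} + E_\alpha) - z)^{-n} - (H_\alpha - z)^{-n} \Vert \leq \mathrm{C}_n / \sigma^n$ via the spectral theorem, which avoids commuting $\tilde\chi(\tilde H)$ through the fractional weights $\langle B^\sigma \rangle^{-n+\frac{1}{2}-\varepsilon}$; your split also works but needs the extra commutator/interpolation argument you sketch.
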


\begin{proof}
Let $J$ be an open interval such that $I \Subset J \Subset ( 0 , 1 )$ and let $\psi \in \mathrm{C}^{\infty} ( \mathbb{R} ; [ 0 , + \infty ) )$ be a non-decreasing function such that
\begin{align*}
\psi (x) = \left\{ \begin{aligned}
&0 &&\text{near } ( - \infty , 0] ,  \\
&x &&\text{near } J ,  \\
&\text{Const.} &&\text{near } [ 1 , + \infty ) .
\end{aligned} \right.
\end{align*}
In particular, $\psi$ is a bijection from $J$ onto itself. We define
\begin{align*}
P_{\alpha , \sigma} : = \psi_{\sigma} ( H_{\alpha} - E_{\alpha} ) .
\end{align*}
From the properties of $\psi$, Theorem \ref{c17} yields
\begin{align}
\mathds{1}_{J} ( P_{\alpha , \sigma} ) [ P_{\alpha , \sigma} , \ii B^{\sigma} ] \mathds{1}_{J} ( P_{\alpha , \sigma}) &= \mathds{1}_{\sigma J} ( H_{\alpha} - E_{\alpha} ) [ H_{\alpha} , \ii B^{\sigma} ] \mathds{1}_{\sigma J} ( H_{\alpha} - E_{\alpha} )   \nonumber \\
&\geq \mathrm{c}_{0} \mathds{1}_{\sigma J} ( H_{\alpha} - E_{\alpha} ) = \mathrm{c}_{0} \mathds{1}_{J} ( P_{\alpha , \sigma} ) ,   \label{a19}
\end{align}
uniformly for $0 < \sigma \leq e_{\mathrm{gap}} / 2$ and $0 \leq \alpha \leq \alpha_c$.

On the other hand, since $H_{\alpha} - E_{\alpha} \geq 0$, we can write
\begin{align*}
P_{\alpha , \sigma} = \psi_{\sigma} ( H_{\alpha} - E_{\alpha} ) = \text{Const.} + \widetilde{\psi}_{\sigma} ( H_{\alpha} - E_{\alpha} ) ,
\end{align*}
for some $\widetilde{\psi} \in \mathrm{C}_0^\infty ( ( - \infty , 1 ) )$. Then, Lemma \ref{c18} implies that, for all $j \in \mathbb{N} \cup \{ 0 \}$, there exists a positive constant $\mathrm{C}_{j}$ such that
\begin{align}\label{a20}
\big\Vert \ad^j_{\ii B^\sigma} ( P_{\alpha , \sigma} ) \big\Vert \leq \mathrm{C}_{j} ,
\end{align}
uniformly for $0 < \sigma \leq e_{\mathrm{gap}} / 2$ and $0 \leq \alpha \leq \alpha_c$. 

Therefore, combining the uniform Mourre estimate \eqref{a19} and the uniform upper bounds \eqref{a20} with Theorem \ref{b9} and the remark below Theorem \ref{a21}, we get
\begin{equation*}
\sup_{w \in \mathbb{C} \setminus \mathbb{R}, \, \re w \in I} \big\Vert \langle B^\sigma \rangle^{-n + \frac{1}{2} - \varepsilon} ( P_{\alpha , \sigma} - w )^{-n} \langle B^\sigma \rangle^{-n + \frac{1}{2}-\varepsilon} \big\Vert \leq \mathrm{C}_{n , \varepsilon} ,
\end{equation*}
and then
\begin{equation*}
\sup_{z \in \mathbb{C} \setminus \mathbb{R}, \, \re z \in E_\alpha + \sigma I} \big\Vert \langle B^\sigma \rangle^{-n + \frac{1}{2} - \varepsilon} \big( ( \sigma P_{\alpha , \sigma} + E_{\alpha} ) - z \big)^{-n} \langle B^\sigma \rangle^{-n + \frac{1}{2}-\varepsilon} \big\Vert \leq \frac{\mathrm{C}_{n , \varepsilon}}{ \sigma^n } ,
\end{equation*}
uniformly for $0 < \sigma \leq e_{\mathrm{gap}} / 2$ and $0 \leq \alpha \leq \alpha_c$. Now the proposition follows from
\begin{equation*}
\sup_{z \in \mathbb{C} \setminus \mathbb{R}, \, \re z \in E_\alpha + \sigma I} \big\Vert \big( ( \sigma P_{\alpha , \sigma} + E_{\alpha} ) - z \big)^{-n} - ( H_{\alpha} - z )^{-n} \big\Vert \leq \frac{\mathrm{C}_{n}}{ \sigma^{n}} ,
\end{equation*}
which is a consequence of the spectral theorem and the choice of $\psi$.
\end{proof}

Likewise, the next proposition follows from Theorem \ref{a21}, Theorem \ref{c17} and Lemma \ref{c18}.

\begin{proposition}\sl \label{c14}
Let $\varphi \in \mathrm{C}_0^\infty( ( 0 , 1 ) ; \mathbb{R} )$. There exists $\alpha_{c} > 0$ such that, for all $s \geq 0$, there exits $\mathrm{C}_{s , \varphi} > 0$ such that, for all $0 \leq \alpha \leq \alpha_c$, $0 < \sigma \leq e_{\mathrm{gap}} / 2$ and $t \in \R$,
\begin{equation*}
\big\Vert \langle B^\sigma \rangle^{-s} e^{-\ii t H_\alpha } \varphi_\sigma( H_\alpha - E_\alpha ) \langle B^\sigma \rangle^{-s} \big\Vert \leq \frac{\mathrm{C}_{s , \varphi}}{\< t \sigma \>^{s}} .
\end{equation*}
\end{proposition}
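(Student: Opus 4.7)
The plan is to mimic the proof of Proposition \ref{c8}, applying the Hunziker--Sigal--Soffer local decay theorem (Theorem \ref{a21}) to a rescaled and truncated version of $H_\alpha - E_\alpha$. Recall $\varphi \in \mathrm{C}_0^\infty((0,1);\R)$; fix an open interval $J$ with $\supp \varphi \Subset J \Subset (0,1)$, and choose a non-decreasing $\psi \in \mathrm{C}^\infty(\R;[0,+\infty))$ such that $\psi(x)=0$ near $(-\infty,0]$, $\psi(x)=x$ near $J$, and $\psi(x)$ is constant near $[1,+\infty)$. Set
\begin{equation*}
P_{\alpha,\sigma} := \psi_\sigma(H_\alpha - E_\alpha),
\end{equation*}
which is a bounded self-adjoint operator. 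Since $\psi$ is the identity on $J$, the scaling gives $\psi_\sigma(x) = x/\sigma$ for $x \in \sigma J$, so that $\sigma P_{\alpha,\sigma} + E_\alpha = H_\alpha$ on the range of $\mathds{1}_{\sigma J}(H_\alpha - E_\alpha)$. In particular,
\begin{equation*}
\varphi(P_{\alpha,\sigma}) = \varphi_\sigma(H_\alpha - E_\alpha), \qquad e^{-\ii t H_\alpha} \varphi_\sigma(H_\alpha - E_\alpha) = e^{-\ii t E_\alpha} e^{-\ii t \sigma P_{\alpha,\sigma}} \varphi(P_{\alpha,\sigma}).
\end{equation*}

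Next, I would verify the hypotheses of Theorem \ref{a21} for the pair $(P_{\alpha,\sigma}, B^\sigma)$ on the interval $J$, uniformly in $\sigma \in (0, e_{\mathrm{gap}}/2]$ and $\alpha \in [0,\alpha_c]$. The uniform Mourre inequality $\mathds{1}_J(P_{\alpha,\sigma})[P_{\alpha,\sigma}, \ii B^\sigma]\mathds{1}_J(P_{\alpha,\sigma}) \geq \mathrm{c}_0 \mathds{1}_J(P_{\alpha,\sigma})$ follows exactly as in \eqref{a19}: the identity $\mathds{1}_J(P_{\alpha,\sigma}) = \mathds{1}_{\sigma J}(H_\alpha - E_\alpha)$ (because $\psi_\sigma$ is the identity on $\sigma J$ and a bijection of $\sigma J$ onto itself) combined with Theorem \ref{c17} gives the result. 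For the commutator bounds, I would write $P_{\alpha,\sigma} = \mathrm{Const.} + \widetilde{\psi}_\sigma(H_\alpha - E_\alpha)$ with $\widetilde{\psi} \in \mathrm{C}_0^\infty((-\infty,1);\R)$, and invoke Lemma \ref{c18} to get, for every $j \in \N \cup \{0\}$,
\begin{equation*}
\sup_{0<\sigma \leq e_{\mathrm{gap}}/2, \, 0 \leq \alpha \leq \alpha_c} \big\Vert \ad^j_{\ii B^\sigma}(P_{\alpha,\sigma}) \big\Vert \leq \mathrm{C}_j.
\end{equation*}

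With these two uniform ingredients and the remark following Theorem \ref{a21} concerning the dependence of the constant on $\mathrm{c}_0$ and the norms $\|\ad^j_{\ii B^\sigma}(P_{\alpha,\sigma})\|$, Theorem \ref{a21} applied with the cutoff function $\varphi \in \mathrm{C}_0^\infty(J)$ yields
\begin{equation*}
\big\Vert \langle B^\sigma \rangle^{-s} e^{-\ii t' P_{\alpha,\sigma}} \varphi(P_{\alpha,\sigma}) \langle B^\sigma \rangle^{-s} \big\Vert \leq \mathrm{C}_{s,\varphi} \langle t' \rangle^{-s},
\end{equation*}
uniformly in $t' \in \R$, $\sigma \in (0,e_{\mathrm{gap}}/2]$, $\alpha \in [0,\alpha_c]$. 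Specializing to $t' = t\sigma$ and using the identities recorded in the first paragraph produces
\begin{equation*}
\big\Vert \langle B^\sigma \rangle^{-s} e^{-\ii t H_\alpha} \varphi_\sigma(H_\alpha - E_\alpha) \langle B^\sigma \rangle^{-s} \big\Vert \leq \mathrm{C}_{s,\varphi} \langle t\sigma \rangle^{-s},
\end{equation*}
which is the desired estimate.

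The step I expect to require the most care is the commutator bound of Lemma \ref{c18} (proved in Appendix \ref{c12}), on which the uniformity relies; this is where the infrared cutoff built into $B^\sigma$ via $\eta_\sigma$ is essential. Once that input is available, the argument here is a direct transcription of the proof of Proposition \ref{c8}, with Theorem \ref{a21} replacing Theorem \ref{b9} and the rescaling $t \leftrightarrow t\sigma$ converting the decay rate $\langle t' \rangle^{-s}$ for $P_{\alpha,\sigma}$ into $\langle t\sigma \rangle^{-s}$ for $H_\alpha$.
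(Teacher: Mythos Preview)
Your proposal is correct and follows essentially the same approach as the paper: you introduce the same rescaled operator $P_{\alpha,\sigma}=\psi_\sigma(H_\alpha-E_\alpha)$, invoke the uniform Mourre estimate \eqref{a19} and the uniform commutator bounds from Lemma \ref{c18}, apply Theorem \ref{a21}, and undo the rescaling $t'=t\sigma$. The only cosmetic difference is that the paper writes the key identity at the level of norms (absorbing the unimodular phase $e^{-\ii t E_\alpha}$), whereas you record it as an operator identity with the phase explicit.
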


\begin{proof}
We use the notations of the proof of Proposition \ref{c8} with $I$ such that $\supp ( \varphi ) \subset I$. From the properties of $\psi$, we have
\begin{equation*}
\big\Vert \langle B^{\sigma} \rangle^{- s} e^{- \ii t H_{\alpha}} \varphi_{\sigma} ( H_{\alpha} - E_{\alpha} ) \Phi \big\Vert = \big\Vert \langle B^{\sigma} \rangle^{- s} e^{- \ii t \sigma P_{\alpha , \sigma}} \varphi ( P_{\alpha , \sigma} ) \Phi \big\Vert ,
\end{equation*}
for all $ \Phi \in \mathcal{H}$. On the other hand, combining the uniform Mourre estimate \eqref{a19}, the uniform upper bounds \eqref{a20} with Theorem \ref{a21} and the remark after it, we get
\begin{equation*}
\big\Vert \langle B^{\sigma} \rangle^{- s} e^{- \ii \tau P_{\alpha , \sigma}} \varphi ( P_{\alpha , \sigma} ) \Phi \big\Vert \leq \mathrm{C}_{s , \varphi} \langle \tau \rangle^{- s} \big\Vert \langle B^{\sigma} \rangle^{s} \Phi \big\Vert ,
\end{equation*}
uniformly for $\Phi \in D ( \langle B^{\sigma} \rangle^{s} )$, $0 < \sigma \leq e_{\mathrm{gap}} / 2$, $0 \leq \alpha \leq \alpha_c$ and $\tau \in \R$. The proposition follows from the last two equations.
\end{proof}

\section{Hardy type estimates in Fock space}\label{f4}

\subsection{Hardy estimates}

The classical Hardy estimate in $\mathbb{R}^3$ states that
\begin{align}\label{d13}
\int_{\mathbb{R}^3} \frac{ \vert \varphi(x) \vert^2 }{\vert x\vert^2} \d x \leq 4 \int_{\mathbb{R}^3} \vert \nabla_x \varphi(x) \vert^2 \d x,
\end{align}
for any $\varphi \in \mathrm{C}_0^\infty( \mathbb{R}^3 )$. Since $\mathrm{C}_0^\infty( \mathbb{R}^3 )$ is a core for $\vert\ii \nabla_x\vert = \sqrt{ - \Delta_x }$, this implies that $D(\vert\ii \nabla_x\vert) \subset D(\vert x\vert^{-1})$ and that for all $\varphi \in D(\vert\ii \nabla_x\vert)$, $\Vert \vert x\vert^{-1}\varphi\Vert \leq 2 \Vert \vert \ii \nabla_x \vert \varphi \Vert$. In this subsection we transfer this inequality to Fock space by means of the following lemma. We do not present its proof; (it is essentially the same as the ones of \cite[Lemma A.2]{Ge} and \cite[Proposition 3.4]{GGM}).

\begin{lemma}\sl \label{d14}
Let $n \in \mathbb{N}$. Let $a,b$ be two self-adjoint operators on $\mathrm{L}^2( \mathbb{R}^3 \times \{1,2\} )$ with $b\ge0$, $D(b^n) \subset D(a^n)$ and $\Vert a^n \varphi \Vert \leq \Vert b^n \varphi \Vert$ for all $\varphi \in D(b^n)$. Then $D( \d\Gamma(b)^n ) \subset D( \d\Gamma(a)^n )$ and $\Vert \d\Gamma(a)^n \Phi \Vert \leq \Vert \d\Gamma(b)^n \Phi\Vert$ for all $\Phi \in D( \d\Gamma(b)^n )$.
\end{lemma}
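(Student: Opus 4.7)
The proof is carried out sector by sector on Fock space. On the $m$-particle sector $\otimes_s^m \mathfrak{h}$, $\d\Gamma(a)$ acts as $A_m := \sum_{j=1}^m a_j$, where $a_j$ denotes $a$ acting on the $j$-th tensor factor and the identity on the others; similarly $\d\Gamma(b)|_{(m)} = B_m := \sum_j b_j$. The key structural fact is that $a_i$ and $a_j$ (and $b_i$ and $b_j$) commute whenever $i \neq j$. Since $a,b$ are self-adjoint, so are $A_m$ and $B_m$, and thus
\begin{equation*}
\| \d\Gamma(a)^n \Phi \|^2 = \langle \Phi , \d\Gamma(a)^{2n} \Phi \rangle ,
\end{equation*}
so it suffices to establish the form inequality $A_m^{2n} \leq B_m^{2n}$ on a suitable core of the $m$-th sector (for instance on $\otimes_s^m D(b^n)$), and then extend by closure and sum over sectors via Parseval.

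The first reduction removes the sign of $a$. Using the spectral resolution $a = a_+ - a_-$ with $a_\pm \geq 0$ and $a_+ a_- = 0$, set $A_{m,\pm} := \sum_j (a_\pm)_j$. Since $a_+$ and $a_-$ commute (they are Borel functions of $a$) and operators on distinct sites commute, $A_{m,+}$ and $A_{m,-}$ are commuting non-negative self-adjoint operators, with $A_m = A_{m,+} - A_{m,-}$ and $A_{m,+} + A_{m,-} = \sum_j |a|_j =: C_m$. The elementary binomial identity
\begin{equation*}
(A_{m,+} + A_{m,-})^{2n} - (A_{m,+} - A_{m,-})^{2n} = 2 \sum_{\substack{1 \leq k \leq 2n \\ k \text{ odd}}} \binom{2n}{k} A_{m,+}^{2n-k} A_{m,-}^{k}
\end{equation*}
exhibits the difference as a sum of products of pairwise commuting non-negative operators, hence $A_m^{2n} \leq C_m^{2n}$, and the problem is reduced to replacing $a$ by $|a| \geq 0$.

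The second step is a termwise comparison after multinomial expansion. By L\"owner--Heinz, the hypothesis $|a|^{2n} = a^{2n} \leq b^{2n}$ together with operator monotonicity of $x \mapsto x^{k/(2n)}$ on $[0,\infty)$ for $0 \leq k \leq 2n$ gives $|a|^k \leq b^k$ for each such $k$. Using commutativity across distinct sites, one expands
\begin{equation*}
C_m^{2n} = \sum_{|\mathbf{r}| = 2n} \binom{2n}{\mathbf{r}} |a|_1^{r_1} \cdots |a|_m^{r_m} , \qquad B_m^{2n} = \sum_{|\mathbf{r}| = 2n} \binom{2n}{\mathbf{r}} b_1^{r_1} \cdots b_m^{r_m} ,
\end{equation*}
and compares term by term. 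For fixed $\mathbf{r}$, the operators $|a|_j^{r_j}$ and $b_j^{r_j}$ ($j = 1 , \ldots , m$) are pairwise commuting and non-negative, and $|a|_j^{r_j} \leq b_j^{r_j}$ on the $j$-th factor by the bound just obtained. The telescoping identity
\begin{equation*}
\prod_{j=1}^m b_j^{r_j} - \prod_{j=1}^m |a|_j^{r_j} = \sum_{k=1}^m \Big( \prod_{j<k} b_j^{r_j} \Big) \big(b_k^{r_k} - |a|_k^{r_k}\big) \Big( \prod_{j>k} |a|_j^{r_j} \Big)
\end{equation*}
then displays the difference as a sum of products of pairwise commuting non-negative operators; this gives $|a|_1^{r_1} \cdots |a|_m^{r_m} \leq b_1^{r_1} \cdots b_m^{r_m}$. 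Summing with positive multinomial weights yields $C_m^{2n} \leq B_m^{2n}$, and combined with the first reduction, $A_m^{2n} \leq B_m^{2n}$, as desired.

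The only non-algebraic point, and the main technical obstacle I anticipate, is the transition from the form inequality on the natural core $\otimes_s^m D(b^n)$ to the full domain inclusion $D(\d\Gamma(b)^n) \subset D(\d\Gamma(a)^n)$: this requires a standard closure argument in each sector combined with Parseval across sectors, performed in detail for closely related statements in \cite[Lemma A.2]{Ge} and \cite[Proposition 3.4]{GGM}, to which the authors defer.
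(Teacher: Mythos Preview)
Your argument is correct. The paper itself does not supply a proof of this lemma but defers entirely to \cite[Lemma~A.2]{Ge} and \cite[Proposition~3.4]{GGM}, so there is no internal proof to compare against; your sector-by-sector route via the reduction $A_m^{2n}\le\big(\sum_j|a|_j\big)^{2n}$ followed by multinomial expansion and L\"owner--Heinz termwise comparison is a clean and self-contained way to fill the gap, and your closing remark already pinpoints the one genuinely technical step (the closure/domain extension) and the same references the authors cite.
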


For any vector space $\mathcal{V} \subset \mathrm{L}^2( \mathbb{R}^3 \times \{1,2\} )$, we set
\begin{equation}
\begin{aligned}
\Gamma_{\mathrm{fin}}( \mathcal{V} ) := \big\{ \Phi = ( \Phi^{(0)} & , \Phi^{(1)} , \dots ) \in \Gamma( \mathrm{L}^2( \mathbb{R}^3 \times \{1,2\} )) ;   \\
& \forall n , \ \Phi^{(n)} \in \mathcal{V} \quad \text{and} \quad \exists n_0 , \ \forall n \geq n_0 , \ \Phi^{(n)} = 0 \},
\end{aligned}
\end{equation}
and the number operator in $\mathcal{F}$ is defined by 
\begin{align}
\mathcal{N} := \sum_{\lambda=1,2} \int_{\mathbb{R}^3} a^*_\lambda(k)a_\lambda(k) \d k.
\end{align}

\begin{theorem}[Hardy estimate in Fock space]\sl \label{d17}
For all $\rho>0$, the operator $( \d\Gamma( \vert\ii \nabla_k\vert ) + \rho )^{-1} \mathcal{N}^{\,2}$ defined on $\Gamma_{\mathrm{fin}}( \mathrm{C}_0^\infty( \mathbb{R}^3 ) \times \{1,2\} )$ extends by continuity to a bounded operator on $D( \d\Gamma(\vert k\vert))$ and we have that
\begin{equation*}
\big\Vert ( \d\Gamma( \vert\ii \nabla_k\vert ) + \rho )^{-1} \mathcal{N}^{\,2} \Phi \big\Vert \leq 2 \big\Vert \d\Gamma( \vert k \vert ) \Phi \big\Vert,
\end{equation*}
for all $\Phi \in D( \d\Gamma( \vert k \vert ) )$.
\end{theorem}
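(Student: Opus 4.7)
The plan is to reduce the Fock-space bound to the classical Hardy inequality via Lemma~\ref{d14}, and to handle the prefactor $\mathcal{N}^{\,2}$ by an arithmetic-harmonic-mean (AM-HM) argument applied on each $n$-particle sector.

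The first step is to establish the auxiliary single-particle bound
\begin{equation*}
\big\Vert ( \vert \ii \nabla_k \vert + \mu )^{-1} \varphi \big\Vert \leq 2 \big\Vert \vert k \vert \varphi \big\Vert ,
\end{equation*}
valid for every $\mu > 0$ and every $\varphi \in D ( \vert k \vert )$. Conjugating by the Fourier transform in $k$ (which exchanges $\vert \ii \nabla_k \vert$ with multiplication by $\vert x \vert$, and $\vert k \vert$ with $\vert \ii \nabla_x \vert$), this is equivalent to $\Vert ( \vert x \vert + \mu )^{-1} \widehat \varphi \Vert \leq 2 \Vert \vert \ii \nabla_x \vert \widehat \varphi \Vert$, which follows by combining the operator inequality $( \vert x \vert + \mu )^{-1} \leq \vert x \vert^{-1}$ with the classical Hardy inequality \eqref{d13}. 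Applying Lemma~\ref{d14} with $n = 1$, $a = \tfrac{1}{2} ( \vert \ii \nabla_k \vert + \mu )^{-1}$ and $b = \vert k \vert$ (whose hypotheses are trivially met) lifts this to the Fock-space Hardy-type bound
\begin{equation*}
\big\Vert \d\Gamma \big( ( \vert \ii \nabla_k \vert + \mu )^{-1} \big) \Phi \big\Vert \leq 2 \big\Vert \d\Gamma ( \vert k \vert ) \Phi \big\Vert
\end{equation*}
for every $\mu > 0$ and every $\Phi \in D ( \d\Gamma ( \vert k \vert ) )$.

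The second step is a sector-by-sector AM-HM argument. On the $n$-particle sector, the positive operators $b_i := \vert \ii \nabla_{k_i} \vert + \mu$ act on distinct tensor factors and therefore pairwise commute, so by joint functional calculus the pointwise AM-HM inequality $n^2 / \sum_i b_i \leq \sum_i b_i^{-1}$ holds as an operator inequality. Specializing to $\mu = \rho / n$, so that $\sum_{i=1}^n b_i = \d\Gamma ( \vert \ii \nabla_k \vert ) + \rho$, this reads
\begin{equation*}
\mathcal{N}^{\,2} \big( \d\Gamma ( \vert \ii \nabla_k \vert ) + \rho \big)^{-1} \leq \d\Gamma \big( ( \vert \ii \nabla_k \vert + \rho / n )^{-1} \big) \quad \text{on the $n$-sector.}
\end{equation*}
Both sides are positive functions of the commuting family $\{ \vert \ii \nabla_{k_i} \vert \}_{i=1}^n$, hence squaring preserves the inequality. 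Using $[ \mathcal{N} , \d\Gamma ( \vert \ii \nabla_k \vert ) ] = 0$ and applying the Fock-space bound of the previous step with $\mu = \rho / n$, I obtain, for every $\Phi^{(n)}$ in the $n$-sector,
\begin{equation*}
\big\Vert \big( \d\Gamma ( \vert \ii \nabla_k \vert ) + \rho \big)^{-1} \mathcal{N}^{\,2} \Phi^{(n)} \big\Vert^2 \leq \big\Vert \d\Gamma \big( ( \vert \ii \nabla_k \vert + \rho / n )^{-1} \big) \Phi^{(n)} \big\Vert^2 \leq 4 \big\Vert \d\Gamma ( \vert k \vert ) \Phi^{(n)} \big\Vert^2 .
\end{equation*}
Since every operator involved preserves the particle number, summing the orthogonal sector contributions gives the same bound globally on the core $\Gamma_{\mathrm{fin}} ( \mathrm{C}_0^\infty ( \mathbb{R}^3 ) \times \{ 1 , 2 \} )$, and extension by continuity to $D ( \d\Gamma ( \vert k \vert ) )$ concludes.

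The main obstacle is that $( \d\Gamma ( \vert \ii \nabla_k \vert ) + \rho )^{-1} \mathcal{N}^{\,2}$ is not itself of the form $\d\Gamma ( \cdot )$, so Lemma~\ref{d14} does not apply to it directly. The key device is the sector-dependent regularization $\mu = \rho / n$: the identity $n \mu = \rho$ matches the regularization inside $( \d\Gamma ( \vert \ii \nabla_k \vert ) + \rho )^{-1}$ so that AM-HM absorbs exactly the combinatorial factor $\mathcal{N}^{\,2} = n^2$ without degrading the constant $2$ coming from the single-particle Hardy inequality.
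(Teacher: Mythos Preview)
Your proof is correct, and it takes a genuinely different route from the paper's argument, though the two share the same ingredients (Hardy's inequality lifted via Lemma~\ref{d14}, plus an arithmetic inequality to absorb $\mathcal{N}^{\,2}$).

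The paper introduces $\d\Gamma(|k|^{-1})$ as the intermediate object: it first applies Hardy in the direct form $\||k|^{-1}\varphi\|\le 2\||\ii\nabla_k|\varphi\|$ and Lemma~\ref{d14} to get $\|\d\Gamma(|k|^{-1})(\d\Gamma(|\ii\nabla_k|)+\rho)^{-1}\|\le 2$, then invokes the Cauchy--Schwarz bound $\mathcal{N}^{\,2}\le \d\Gamma(|k|^{-1})\,\d\Gamma(|k|)$ (all multiplication operators in $k$, hence commuting) to pass from $\d\Gamma(|k|^{-1})$ to $\d\Gamma(|k|)$; an auxiliary parameter $\tilde\rho\to 0$ cleans up the regularization. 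You instead apply Hardy in the Fourier-dual form $\|(|\ii\nabla_k|+\mu)^{-1}\varphi\|\le 2\||k|\varphi\|$, lift that via Lemma~\ref{d14}, and then use AM--HM on the commuting family $\{|\ii\nabla_{k_i}|+\mu\}$, with the sector-dependent choice $\mu=\rho/n$ matching $n\mu=\rho$ so that the $\mathcal{N}^{\,2}$ factor is absorbed exactly. Your route avoids the unbounded intermediate $\d\Gamma(|k|^{-1})$ and the extra limiting argument; the paper's route keeps a single global $\rho$ and a single application of Hardy in the ``natural'' direction. Both recover the sharp constant $2$.
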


\begin{proof}
Let $\rho > 0$ and $\Phi \in \Gamma_{\mathrm{fin}}( \mathrm{C}_0^\infty( \mathbb{R}^3 ) \times \{1,2\} )$. Hardy's inequality \eqref{d13} together with Lemma \ref{d14} with $n=1$ implies that $D( \d\Gamma( \vert \ii \nabla_k \vert ) ) \subset D( \d \Gamma( \vert k\vert^{-1} ) )$ and that
\begin{equation*}
\big\Vert \d \Gamma( \vert k\vert^{-1} ) ( \d \Gamma( \vert \ii \nabla_k \vert ) + \rho )^{-1} \big\Vert \leq 2.
\end{equation*}
Therefore
\begin{align}\label{d19}
\big\Vert ( \d\Gamma( \vert\ii \nabla_k\vert ) + \rho )^{-1} \mathcal{N}^{\,2} \Phi \big\Vert \leq ( 2 + \widetilde{\rho} \rho^{-1} ) \big\Vert ( \d\Gamma( \vert k\vert^{-1} ) + \widetilde{\rho} )^{-1} \mathcal{N}^{\,2} \Phi \big\Vert,
\end{align}
for all $\widetilde{\rho} > 0$. An easy application of the Cauchy-Schwarz inequality shows that
\begin{equation*}
\big\Vert \mathcal{N}^{\,2} \Psi \big\Vert \leq \big\Vert \d\Gamma( \vert k\vert^{-1} ) \d\Gamma( \vert k\vert ) \Psi \big\Vert,
\end{equation*}
for all $\Psi \in \Gamma_{\mathrm{fin}}( \mathrm{C}_0^\infty( \mathbb{R}^3 ) \times \{1,2\} )$. This implies that 
\begin{equation}\label{d20}
\big\Vert ( \d\Gamma( \vert k\vert^{-1} ) + \widetilde{\rho} )^{-1} \mathcal{N}^{\,2} \Phi \big\Vert \leq \big\Vert \d\Gamma(\vert k\vert) \Phi \big\Vert,
\end{equation}
for all $\widetilde{\rho}>0$. Combining \eqref{d19} and \eqref{d20}, we obtain that
\begin{equation*}
\big\Vert ( \d\Gamma( \vert\ii \nabla_k\vert ) + \rho )^{-1} \mathcal{N}^{\,2} \Phi \big\Vert \leq ( 2 + \widetilde{\rho} \rho^{-1} ) \big\Vert \d\Gamma(\vert k\vert) \Phi \big\Vert,
\end{equation*}
for all $\widetilde{\rho} > 0$. Letting $\widetilde{\rho} \to 0$ and using the fact that $\Gamma_{\mathrm{fin}}( \mathrm{C}_0^\infty( \mathbb{R}^3 ) \times \{1,2\} )$ is a core for $\d\Gamma(\vert k\vert)$ conclude the proof of the lemma.
\end{proof}

\begin{remark}\sl
In particular, Theorem \ref{d17} implies the following resolvent estimate away from the real axis: for all $K \Subset \C \setminus \R$, there exists $\mathrm{C}_{K} > 0$ such that
\begin{equation*}
\big\Vert \< X \>^{- \frac{1}{2}} ( \d \Gamma ( \vert k \vert ) - z)^{-1} \< X \>^{- \frac{1}{2}} \Phi \big\Vert \leq \mathrm{C}_{K} \big\Vert \< {\mathcal N} \>^{-2} \Phi \big\Vert ,
\end{equation*}
for all $\Phi \in {\mathcal F}$ and $z \in K$, where $X = \d \Gamma( \vert \ii \nabla_k \vert )$ from \eqref{d21}.
\end{remark}

\begin{lemma}\sl \label{d18}
There exists $\mathrm{C}>0$ such that, for all $\sigma > 0$, $\rho > 0$ and $\Phi \in \mathcal{F}$,
\begin{equation} \label{d22}
\big\Vert ( X + \rho )^{-1} ( \mathds{1}_{ \mathcal{F}_{\ge\sigma} } \otimes \bar \Pi_\Omega ) \Phi \big\Vert \leq \mathrm{C} \sigma \big\Vert ( \mathds{1}_{ \mathcal{F}_{\ge\sigma} } \otimes \bar \Pi_\Omega ) \Phi \big\Vert .
\end{equation}
Moreover, there exists $\mathrm{C}^{\prime} > 0$ such that, for all $\sigma>0$, $\tau>0$, $\rho > 0$ and $\Phi \in \mathcal{F}$,
\begin{equation} \label{d23}
\big\Vert ( X + \rho )^{-1} ( \mathds{1}_{ \mathcal{F}_{\ge\sigma} } \otimes \mathds{1}_{(0,\tau]}( H_f ) ) \Phi \big\Vert \leq \mathrm{C}^{\prime} \tau \big\Vert ( \mathds{1}_{ \mathcal{F}_{\ge\sigma} } \otimes \mathds{1}_{(0,\tau]}( H_f ) ) \Phi \big\Vert .
\end{equation}
\end{lemma}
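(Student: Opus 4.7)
The strategy is to adapt the proof of Theorem \ref{d17}, pinching the one-particle Hardy inequality by the multiplication operator $q^{\le\sigma}:=\mathds{1}_{|k|\le\sigma}$ on $\mathrm{L}^2(\mathbb{R}^3\times\{1,2\})$, so that only the soft photons contribute to the right-hand side. This will produce a bound proportional to $\sigma$ (resp.\ $\tau$) after restricting to the range of $\mathds{1}_{\mathcal{F}_{\ge\sigma}}\otimes\bar\Pi_\Omega$ (resp.\ $\mathds{1}_{\mathcal{F}_{\ge\sigma}}\otimes\mathds{1}_{(0,\tau]}(H_f)$), on which the number of soft photons is at least $1$.

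The one-particle ingredient is the pinched Hardy estimate
\begin{equation*}
\big\|q^{\le\sigma}|k|^{-1}q^{\le\sigma}\varphi\big\|^2=\int_{|k|\le\sigma}|k|^{-2}|\varphi|^2\,\d k\le \big\||k|^{-1}\varphi\big\|^2\le 4\big\||\ii\nabla_k|\varphi\big\|^2,
\end{equation*}
which follows from \eqref{d13}. Lemma \ref{d14} with $n=1$ (absorbing the factor $2$) then lifts it to $\|\d\Gamma(q^{\le\sigma}|k|^{-1}q^{\le\sigma})\Phi\|\le 2\|X\Phi\|$. Setting $A:=\d\Gamma(q^{\le\sigma}|k|^{-1}q^{\le\sigma})$, the substitution $\Phi=(X+\rho)^{-1}\Psi$ combined with $\|X(X+\rho)^{-1}\|\le 1$ gives $\|A(X+\rho)^{-1}\|\le 2$, and taking adjoints yields $\|(X+\rho)^{-1}A\|\le 2$. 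The resolvent-insertion trick used in Theorem \ref{d17} then produces, for every $\tilde\rho>0$,
\begin{equation*}
\big\|(X+\rho)^{-1}\Psi\big\|\le (2+\tilde\rho\rho^{-1})\big\|(A+\tilde\rho)^{-1}\Psi\big\|.
\end{equation*}

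The soft counterpart of the Cauchy--Schwarz inequality $\mathcal{N}^{\,2}\le\d\Gamma(|k|^{-1})\d\Gamma(|k|)$ used in Theorem \ref{d17} is $\mathcal{N}_{\le\sigma}^{\,2}\le A\cdot H_{f,\le\sigma}$, where
\begin{equation*}
\mathcal{N}_{\le\sigma}:=\d\Gamma(q^{\le\sigma}),\qquad H_{f,\le\sigma}:=\d\Gamma(q^{\le\sigma}|k|q^{\le\sigma}).
\end{equation*}
On the $n$-photon sector this is the pointwise inequality
\begin{equation*}
\Big(\sum_{j=1}^n\mathds{1}_{|k_j|\le\sigma}|k_j|^{-1}\Big)\Big(\sum_{j=1}^n\mathds{1}_{|k_j|\le\sigma}|k_j|\Big)\ge \Big(\sum_{j=1}^n\mathds{1}_{|k_j|\le\sigma}\Big)^2,
\end{equation*}
the factors $|k_j|^{-1}\cdot|k_j|$ collapsing to $1$ on the soft region. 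Since $A$, $H_{f,\le\sigma}$ and $\mathcal{N}_{\le\sigma}$ are positive and mutually commuting, this passes to $\|\mathcal{N}_{\le\sigma}^{\,2}\Phi\|\le\|AH_{f,\le\sigma}\Phi\|$. Feeding this into the previous display, using $\|(A+\tilde\rho)^{-1}A\|\le 1$, letting $\tilde\rho\downarrow 0$, and finally invoking $H_{f,\le\sigma}\le\sigma\mathcal{N}_{\le\sigma}$ produce the key estimate
\begin{equation*}
\big\|(X+\rho)^{-1}\mathcal{N}_{\le\sigma}^{\,2}\Phi\big\|\le 2\big\|H_{f,\le\sigma}\Phi\big\|\le 2\sigma\big\|\mathcal{N}_{\le\sigma}\Phi\big\|.
\end{equation*}

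Both inequalities of Lemma \ref{d18} follow at once. For \eqref{d22}, I substitute $\Phi$ by $\mathcal{N}_{\le\sigma}^{-2}(\mathds{1}_{\mathcal{F}_{\ge\sigma}}\otimes\bar\Pi_\Omega)\Phi$, legitimate since $\mathds{1}_{\mathcal{F}_{\ge\sigma}}\otimes\bar\Pi_\Omega$ kills the zero-soft-photon subspace and $\mathcal{N}_{\le\sigma}\ge 1$ on its orthogonal complement; one then reads off $\|(X+\rho)^{-1}(\mathds{1}_{\mathcal{F}_{\ge\sigma}}\otimes\bar\Pi_\Omega)\Phi\|\le 2\sigma\|\mathcal{N}_{\le\sigma}^{-1}(\mathds{1}_{\mathcal{F}_{\ge\sigma}}\otimes\bar\Pi_\Omega)\Phi\|\le 2\sigma\|(\mathds{1}_{\mathcal{F}_{\ge\sigma}}\otimes\bar\Pi_\Omega)\Phi\|$, which is \eqref{d22} with $\mathrm{C}=2$. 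Inequality \eqref{d23} is obtained by the same substitution with $\bar\Pi_\Omega$ replaced by $\mathds{1}_{(0,\tau]}(H_f)$: on its range $H_{f,\le\sigma}\le\tau$ by the spectral theorem and $\mathcal{N}_{\le\sigma}\ge 1$ since $H_{f,\le\sigma}>0$ on non-vacuum soft states, so the same chain produces $\tau$ in place of $\sigma$. The main technical point is to justify the domains of the unbounded operator $A$ and of the substitution $\mathcal{N}_{\le\sigma}^{-2}(\cdots)\Phi$, which proceeds by standard density arguments exactly as in the proof of Theorem \ref{d17}.
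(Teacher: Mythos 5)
Your proof is correct, but it organizes the argument differently from the paper. The paper proceeds by choosing an orthonormal basis $\{ e_i \}$ of $\mathcal{F}_{\ge\sigma}$, writing $\Phi = \sum_i e_i \otimes \Phi_i$ with $\Phi_i \in \mathrm{Ran}\,\bar\Pi_\Omega$ (resp.\ $\mathrm{Ran}\,\mathds{1}_{(0,\tau]}(H_f)$), discarding the hard-photon contribution of $( X + \rho )^{-1}$ on each $n$-photon sector via the operator-norm bound
\begin{equation*}
\big\Vert \big( \vert \ii \nabla_{k_1} \vert + \dots + \vert \ii \nabla_{k_n} \vert + \rho \big)^{-1} \big( \vert \ii \nabla_{k_{\tau(j+1)}} \vert + \dots + \vert \ii \nabla_{k_{\tau(n)}} \vert + \rho \big) \big\Vert \leq 1,
\end{equation*}
then applying the \emph{unpinched} Theorem \ref{d17} to each soft factor $\Phi_i$, using $\Vert (H_f \Phi_i)^{(n-j)} \Vert \leq (n-j)\sigma \Vert \Phi_i^{(n-j)} \Vert$ (resp.\ $\leq \tau \Vert \Phi_i^{(n-j)} \Vert$), and summing via Cauchy–Schwarz. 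You instead \emph{pinch at the one-particle level}: you introduce $A := \d\Gamma(\mathds{1}_{|k|\le\sigma}|k|^{-1})$, lift the pinched Hardy bound to $\Vert A(X+\rho)^{-1}\Vert \le 2$ via Lemma \ref{d14}, replay the resolvent-insertion and Cauchy–Schwarz steps of Theorem \ref{d17} with $A$, $\mathcal{N}_{\le\sigma}$ and $H_{f,\le\sigma}$ in place of $\d\Gamma(|k|^{-1})$, $\mathcal{N}$, $\d\Gamma(|k|)$, and then read off both inequalities from the single operator estimate $\Vert (X+\rho)^{-1} \mathcal{N}_{\le\sigma}^{\,2}\Phi\Vert \le 2\Vert H_{f,\le\sigma}\Phi\Vert$ together with $\mathcal{N}_{\le\sigma}\ge 1$ and $H_{f,\le\sigma}\le\sigma\mathcal{N}_{\le\sigma}$ (resp.\ $H_{f,\le\sigma}\le\tau$) on the relevant ranges. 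Your route avoids the explicit ONB decomposition and the sector-by-sector combinatorics, yields both \eqref{d22} and \eqref{d23} at once, and gives the slightly sharper constant $\mathrm{C}=\mathrm{C}^\prime=2$ versus the paper's $2\sqrt{\pi^2/6}$; the cost is that you must keep track of the joint spectral calculus for the commuting positive operators $A$, $\mathcal{N}_{\le\sigma}$, $H_{f,\le\sigma}$ (so that, e.g., $\mathcal{N}_{\le\sigma}^{\,2}\le AH_{f,\le\sigma}$ implies $\Vert\mathcal{N}_{\le\sigma}^{\,2}\Phi\Vert\le\Vert AH_{f,\le\sigma}\Phi\Vert$), a point you should make explicit, along with the density argument on $\Gamma_{\mathrm{fin}}(\mathrm{C}_0^\infty(\mathbb{R}^3)\times\{1,2\})$, rather than just invoking ``standard density arguments''.
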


\begin{proof}
Let $\{e_i\}_{i=1}^\infty$ denote an orthonormal basis of $\mathcal{F}_{\ge\sigma}$. Any state $\Phi$ in $\mathrm{Ran} ( \mathds{1}_{\mathcal{F}_{\ge\sigma}} \otimes \bar \Pi_{\Omega} )$ can be written as $\Phi = \sum_{i=1}^\infty e_i \otimes \Phi_i$ where for all $i \in \N$, $\Phi_i \in \mathrm{Ran} \, \bar \Pi_\Omega$. For $n \in \mathbb{N}$, we have
\begin{equation}
\Phi^{(n)}(k_1,\dots,k_n) = \frac{1}{n!} \sum_{i=1}^\infty \sum_{j=0}^{n-1} \sum_{\tau\in\Sigma_n} e_i^{(j)}(k_{\tau(1)},\dots,k_{\tau(j)})\Phi_i^{(n-j)}(k_{\tau(j+1)},\dots,k_{\tau(n)}),
\end{equation}
and
\begin{align}
\big\Vert \Phi^{(n)} \big\Vert^2 = \sum_{j=0}^{n-1} \sum_{i=1}^\infty \big\Vert \Phi_i^{(n-j)} \big\Vert^2.
\end{align}
Thus, using the fact that, for all $j \in \{1,\dots,n-1\}$,
\begin{equation*}
\big\Vert \big( \vert \ii \nabla_{k_1} \vert + \dots + \vert \ii \nabla_{k_n} \vert + \rho \big)^{-1} \big( \vert \ii \nabla_{k_{\tau(j+1)}} \vert + \dots + \vert \ii \nabla_{k_{\tau(n)}} \vert + \rho \big) \big\Vert \leq 1,
\end{equation*}
we can write
\begin{align}
\big\Vert \big( ( & X + \rho )^{-1} \Phi \big)^{(n)}( k_1,\dots,k_n ) \big\Vert \notag \\
&\leq \frac{1}{n!} \sum_{j=0}^{n-1} \sum_{\tau\in\Sigma_n} \Big \Vert \sum_{i=1}^\infty e_i^{(j)}(k_{\tau(1)},\dots,k_{\tau(j)})  \big( ( X + \rho )^{-1} \Phi_i \big)^{(n-j)}(k_{\tau(j+1)},\dots,k_{\tau(n)}) \Big \Vert \notag \\
&= \sum_{j=0}^{n-1} \Big ( \sum_{i=1}^\infty \big\Vert e_i^{(j)}(k_{1},\dots,k_{j})  \big( ( X + \rho )^{-1} \Phi_i \big)^{(n-j)}(k_{j+1},\dots,k_{n}) \big\Vert^2 \Big )^{\frac{1}{2}} \notag \\
&=\sum_{j=0}^{n-1} \Big ( \sum_{i=1}^\infty \big\Vert \big( ( X + \rho )^{-1} \Phi_i \big)^{(n-j)}(k_{j+1},\dots,k_{n}) \big\Vert^2 \Big )^{\frac{1}{2}}.
\end{align}
Applying Theorem \ref{d17} and using that $\Vert ( H_f \Phi_i )^{(n-j)} \Vert \leq (n-j) \sigma \Vert \Phi_i^{(n-j)} \Vert$, this yields
\begin{align*}
\big\Vert \big( ( X + \rho )^{-1} \Phi \big)^{(n)} \big\Vert & \leq 2 \sum_{j=0}^{n-1} \frac{1}{(n-j)^2} \Big ( \sum_{i=1}^\infty \big\Vert ( H_f \Phi_i)^{(n-j)} \big\Vert^2 \Big )^{\frac{1}{2}}  \\
& \leq 2\sigma \sum_{j=0}^{n-1} \frac{1}{n-j} \Big ( \sum_{i=1}^\infty \big\Vert \Phi_i^{(n-j)} \big\Vert^2 \Big )^{\frac{1}{2}}.
\end{align*}
Applying then the Cauchy-Schwarz inequality, we obtain that
\begin{align}
\big\Vert \big( ( X + \rho )^{-1} \Phi \big)^{(n)} \big\Vert & \leq 2\sigma \Big ( \sum_{j=0}^{n-1} \frac{1}{(n-j)^2} \Big )^{\frac{1}{2}} \Big ( \sum_{j=0}^{n-1} \sum_{i=1}^\infty \big\Vert \Phi_i^{(n-j)} \big\Vert^2 \Big )^{\frac{1}{2}} \notag \\
&\leq 2\sqrt{\frac{\pi^{2}}{6}} \sigma \big\Vert \Phi^{(n)} \big\Vert,
\end{align}
and hence \eqref{d22} is proven. In order to prove \eqref{d23}, it suffices to proceed in the same way, using instead that, if $\Phi \in \mathrm{Ran}(\mathds{1}_{ \mathcal{F}_{\ge\sigma} } \otimes \mathds{1}_{(0,\tau]}( H_f ) )$, then $\Vert ( H_f \Phi_i )^{(n-j)} \Vert \leq \tau \Vert \Phi_i^{(n-j)} \Vert$.
\end{proof}

\begin{corollary}\sl \label{d15}
There exists $\alpha_c>0$ such that, for all $\varphi \in \mathrm{C}_0^\infty( (0, 1 ) ; \mathbb{R} )$, there exists $\mathrm{C}_{\varphi} > 0$ such that, for all $0 \leq \alpha \leq \alpha_c$ and $0 < \sigma \leq e_{\mathrm{gap}} / 2$, 
\begin{equation*}
\big\Vert \< X \>^{-1} \varphi_\sigma( H_{\alpha,\sigma} - E_{\alpha,\sigma} ) \big\Vert \leq \mathrm{C}_\varphi \sigma.
\end{equation*}
\end{corollary}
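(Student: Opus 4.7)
The plan is to reduce the Corollary to the Hardy-type bound \eqref{d23} of Lemma \ref{d18} via the product formula \eqref{g2}.

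First, I would invoke \eqref{g2} to write, for any $\Phi \in \mathcal{H}$,
$$\Psi := \varphi_\sigma(H_{\alpha,\sigma} - E_{\alpha,\sigma})\Phi = \big(\Pi_{\alpha,\ge\sigma} \otimes \varphi_\sigma(H_f)\big)\Phi,$$
the second factor being understood as a function of $H_f$ on $\mathcal{F}^{\le\sigma}$. Since $\mathrm{supp}\,\varphi \Subset (0,1)$, I can fix $b \in (0,1)$ with $\mathrm{supp}\,\varphi \subset (0,b]$, so that $\varphi_\sigma(H_f) = \varphi_\sigma(H_f)\,\mathds{1}_{(0,b\sigma]}(H_f)$. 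Combined with the inclusion $\mathrm{Ran}\,\Pi_{\alpha,\ge\sigma} \subset \mathcal{H}_{\ge\sigma} = \mathrm{L}^2(\mathbb{R}^3) \otimes \mathcal{F}_{\ge\sigma}$, this shows, under the identification $\mathcal{H} \cong \mathrm{L}^2(\mathbb{R}^3) \otimes \mathcal{F}$ with $\mathcal{F} \cong \mathcal{F}_{\ge\sigma} \otimes \mathcal{F}^{\le\sigma}$, that the Fock part of $\Psi$ lies in $\mathrm{Ran}\big(\mathds{1}_{\mathcal{F}_{\ge\sigma}} \otimes \mathds{1}_{(0,b\sigma]}(H_f)\big)$ for a.e. electron coordinate, and that $\|\Psi\| \le \|\varphi\|_\infty \|\Phi\|$.

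Second, using the elementary inequality $\langle X \rangle^{-1} \le \sqrt{2}\,(X+1)^{-1}$ (valid since $X \ge 0$), I would apply \eqref{d23} with $\rho = 1$ and $\tau = b\sigma$ fiber-wise in the electron variable $x$; the operator $X$ and the projection $\mathds{1}_{\mathcal{F}_{\ge\sigma}} \otimes \mathds{1}_{(0,b\sigma]}(H_f)$ both act trivially on the $\mathrm{L}^2(\mathbb{R}^3)$ factor, so the fiberwise bound
$$\|(X+1)^{-1}\Psi(x,\cdot)\|_{\mathcal{F}} \le \mathrm{C}'\,b\,\sigma\,\|\Psi(x,\cdot)\|_{\mathcal{F}}$$
integrates to $\|(X+1)^{-1}\Psi\| \le \mathrm{C}'\,b\,\sigma\,\|\Psi\|$. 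Combining everything gives
$$\|\langle X \rangle^{-1} \Psi\| \le \sqrt{2}\,\|(X+1)^{-1}\Psi\| \le \sqrt{2}\,\mathrm{C}'\,b\,\|\varphi\|_\infty\,\sigma\,\|\Phi\|,$$
which is the claim with $\mathrm{C}_\varphi := \sqrt{2}\,\mathrm{C}'\,b\,\|\varphi\|_\infty$.

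I do not anticipate a substantive obstacle: the genuine analytic work, namely Hardy's inequality combined with the careful bookkeeping over the Fock grading, has already been absorbed into Lemma \ref{d18}. The only routine points to verify are the correct reading of the tensor factorization in \eqref{g2} under $\mathcal{F} \cong \mathcal{F}_{\ge\sigma} \otimes \mathcal{F}^{\le\sigma}$, and the immediate extension of Lemma \ref{d18} from $\mathcal{F}$ to $\mathrm{L}^2(\mathbb{R}^3) \otimes \mathcal{F}$ via fibration in the electron variable.
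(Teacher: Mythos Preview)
Your proof is correct and follows essentially the same approach as the paper: invoke the factorization \eqref{g2} and then apply Lemma \ref{d18}. The paper's one-line proof leaves implicit which part of Lemma \ref{d18} to use; either \eqref{d22} (since $\varphi_\sigma(H_f)$ factors through $\bar\Pi_\Omega$) or your choice \eqref{d23} works, and your spelling-out of the fibration over the electron variable and the passage from $\langle X\rangle^{-1}$ to $(X+\rho)^{-1}$ is exactly the routine verification the paper omits.
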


\begin{proof}
It suffices to use \eqref{g2} and next to apply Lemma \ref{d18}.
\end{proof}

\begin{corollary}\sl \label{c15}
There exists $\alpha_c>0$ such that, for all $\varphi \in \mathrm{C}_0^\infty( (0, 1 ) ; \mathbb{R} )$, there exists $\mathrm{C}_{\varphi} > 0$ such that, for all $0 \leq \alpha \leq \alpha_c$ and $0 < \sigma \leq e_{\mathrm{gap}} / 2$, 
\begin{equation} \label{d16}
\big\Vert \< X \>^{-1} \varphi_\sigma( H_{\alpha} - E_{\alpha} ) \big\Vert \leq \mathrm{C}_{\varphi} \sigma.
\end{equation}
\end{corollary}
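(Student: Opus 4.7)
The plan is to reduce the claim to Corollary~\ref{d15} by comparing $\varphi_\sigma ( H_\alpha - E_\alpha )$ with its infrared-regularized analogue $\varphi_\sigma ( H_{\alpha,\sigma} - E_{\alpha,\sigma} )$. I would first pick $\tilde\varphi \in \mathrm{C}_0^\infty ( ( 0 , 1 ) ; \R )$ equal to $1$ on a neighborhood of $\supp \varphi$, so that $\tilde\varphi_\sigma ( H_\alpha - E_\alpha ) \varphi_\sigma ( H_\alpha - E_\alpha ) = \varphi_\sigma ( H_\alpha - E_\alpha )$, and split
\begin{align*}
\< X \>^{-1} \varphi_\sigma ( H_\alpha - E_\alpha ) = \< X \>^{-1} \tilde\varphi_\sigma ( H_{\alpha,\sigma} - E_{\alpha,\sigma} ) \varphi_\sigma ( H_\alpha - E_\alpha ) + \< X \>^{-1} D_\sigma \varphi_\sigma ( H_\alpha - E_\alpha ) ,
\end{align*}
where $D_\sigma := \tilde\varphi_\sigma ( H_\alpha - E_\alpha ) - \tilde\varphi_\sigma ( H_{\alpha,\sigma} - E_{\alpha,\sigma} )$. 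The first term is bounded by $\mathrm{C} \sigma$ by Corollary~\ref{d15} applied with $\tilde\varphi$ (using $\Vert \varphi_\sigma \Vert_\infty \leq \Vert \varphi \Vert_\infty$). Since $\Vert \< X \>^{-1} \Vert \leq 1$, the proof reduces to establishing $\Vert D_\sigma \Vert = O ( \sigma )$.

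For the bound on $D_\sigma$, I would apply the Helffer--Sj\"ostrand formula with an almost analytic extension $\tilde{\tilde\varphi}$ of $\tilde\varphi$, together with the second resolvent identity, to obtain
\begin{equation*}
D_\sigma = -\frac{\sigma}{\pi} \int_\C \partial_{\bar w} \tilde{\tilde\varphi} ( w ) \, R_\alpha ( \sigma w ) \, W_\sigma \, R_{\alpha,\sigma} ( \sigma w ) \, \d L ( w ) ,
\end{equation*}
with $R_\alpha ( z ) = ( H_\alpha - E_\alpha - z )^{-1}$, $R_{\alpha,\sigma} ( z ) = ( H_{\alpha,\sigma} - E_{\alpha,\sigma} - z )^{-1}$, and $W_\sigma = H_{\alpha,\sigma} - H_\alpha + E_\alpha - E_{\alpha,\sigma}$. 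Two smallness inputs are available: the standard creation/annihilation operator bounds (Lemma~\ref{d10} in the appendix), together with $\Vert h^{\le \sigma} \Vert_{L^2} = O ( \sigma )$ and $\Vert | k |^{-1/2} h^{\le \sigma} \Vert_{L^2} = O ( \sigma^{1/2} )$, give $\Vert ( H_\alpha - H_{\alpha,\sigma} ) ( H_0 + 1 )^{-1} \Vert = O ( \sigma^{1/2} )$; and the ground-state-energy shift satisfies $| E_\alpha - E_{\alpha,\sigma} | = O ( \sigma^2 )$ by the analyses of \cite{BFP, FGS1}.

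The main obstacle will be converting these smallness estimates into the sharp bound $\Vert D_\sigma \Vert = O ( \sigma )$: the brute-force estimate of the Helffer--Sj\"ostrand integrand via $\Vert R_\alpha ( \sigma w ) \Vert , \Vert ( H_0 + 1 ) R_{\alpha,\sigma} ( \sigma w ) \Vert \lesssim ( \sigma | \Im w | )^{-1}$ only delivers $\Vert D_\sigma \Vert = O ( \sigma^{-1/2} )$, which is too weak. To recover the sharp bound I would exploit spectral localization, inserting additional smooth cutoffs $\chi_\sigma ( H_\alpha - E_\alpha )$ and $\chi_\sigma ( H_{\alpha,\sigma} - E_{\alpha,\sigma} )$ whose supports are slightly larger than $\supp \tilde{\tilde\varphi} ( \cdot / \sigma )$: on these spectral ranges the photon energy $H_f$ is effectively bounded by $O ( \sigma )$, so the bound $\Vert a^\# ( h^{\le \sigma} ) \psi \Vert \lesssim \sigma^{1/2} \Vert ( H_f + 1 )^{1/2} \psi \Vert + \sigma \Vert \psi \Vert$ produces an additional $O ( \sigma^{1/2} )$ gain. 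Commutators of the cutoffs with the resolvents are controlled uniformly in $\sigma$ by Lemma~\ref{c18}, and the decay $| \partial_{\bar w} \tilde{\tilde\varphi} ( w ) | = O_N ( | \Im w |^N )$ absorbs any residual singularity in $| \Im w |$. This yields $\Vert D_\sigma \Vert = O ( \sigma )$ and, combined with the first term in the decomposition, proves \eqref{d16}.
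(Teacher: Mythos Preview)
Your decomposition is essentially the paper's: write $\varphi_\sigma(H_\alpha-E_\alpha)$ as the infrared-cutoff version plus a remainder, then apply Corollary~\ref{d15} to the first piece and bound the second in operator norm. The auxiliary $\tilde\varphi$ is harmless but unnecessary; the paper simply writes $\varphi_\sigma(H_\alpha-E_\alpha)=\varphi_\sigma(H_{\alpha,\sigma}-E_{\alpha,\sigma})+\big(\varphi_\sigma(H_\alpha-E_\alpha)-\varphi_\sigma(H_{\alpha,\sigma}-E_{\alpha,\sigma})\big)$.

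Where you diverge from the paper is in the treatment of the remainder. The estimate $\|D_\sigma\|=O(\sigma)$ is precisely Proposition~\ref{c10} (quoted from \cite[Proposition~7]{FGS1}), and the paper just cites it. You instead try to reprove it from scratch via Helffer--Sj\"ostrand and spectral localization. Two issues with your sketch: first, your appeal to Lemma~\ref{c18} is misplaced---that lemma controls commutators with $B^\sigma$, not commutators of spectral cutoffs with resolvents, and plays no role here. Second, the ``spectral localization'' step is where all the work lies, and your outline does not close it: inserting $\chi_\sigma(H_\alpha-E_\alpha)$ on one side and $\chi_\sigma(H_{\alpha,\sigma}-E_{\alpha,\sigma})$ on the other does not commute with the resolvents in the integrand, and the resulting commutator terms reintroduce exactly the singular $|\Im w|^{-2}$ behavior you are trying to avoid. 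The actual proof in \cite{FGS1} sidesteps this by conjugating with a Pauli--Fierz transformation, which removes the worst part of $W_\sigma$ before estimating (see also the remark after Proposition~\ref{e37}). So either invoke Proposition~\ref{c10} directly, or be prepared to reproduce that Pauli--Fierz argument.
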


\begin{proof}
It suffices to decompose $\varphi_\sigma( H_{\alpha} - E_{\alpha} ) = \varphi_\sigma ( H_{\alpha,\sigma} - E_{\alpha,\sigma} ) + ( \varphi_\sigma( H_{\alpha} - E_{\alpha} ) - \varphi_\sigma( H_{\alpha,\sigma} - E_{\alpha,\sigma} ) )$. Estimate \eqref{d16} is then a consequence of Corollary \ref{d15} together with Proposition \ref{c10}.
\end{proof}

\subsection{Relative bounds on the conjugate operator}

The next lemma is an easy consequence of Lemma \ref{d14}.

\begin{lemma}\sl \label{d11}
There exists $\mathrm{C}>0$ such that, for all $\sigma > 0$, $D( X ) \subset D( B^\sigma )$ and
\begin{equation*}
\big\Vert B^\sigma \< X \>^{-1} \Phi \big\Vert \leq \mathrm{C} \sigma \Vert \Phi \Vert,
\end{equation*}
for all  $\Phi \in \mathcal{F}$.
\end{lemma}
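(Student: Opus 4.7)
The plan is to reduce the claim, via Lemma \ref{d14}, to a one-particle inequality of the form $\Vert b^\sigma \varphi \Vert \le \mathrm{C} \sigma \Vert |\ii \nabla_k| \varphi \Vert$ on $\mathrm{L}^2(\R^3 \times \{1,2\})$, and then to insert the weight $\langle X \rangle^{-1}$ by functional calculus on $X$.

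First, I would establish the single-particle bound. Using the identity $\nabla_k \cdot k = 3 + k \cdot \nabla_k$, one rewrites
\begin{equation*}
b^\sigma = \ii\, \eta_\sigma^2(k)\, k \cdot \nabla_k + F(k), \qquad F(k) := \ii\, \eta_\sigma(k)(k \cdot \nabla_k \eta_\sigma)(k) + \tfrac{3 \ii}{2} \eta_\sigma^2(k),
\end{equation*}
and observes that, since $\nabla_k \eta_\sigma(k) = \sigma^{-1} (\nabla \eta)(k/\sigma)$, the multiplication operator $F$ satisfies the uniform pointwise bound $|F(k)| \le \mathrm{C}\, \mathds{1}_{|k| \le \sigma}(k)$. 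For the differential part, the bound $|k| \le \sigma$ on the support of $\eta_\sigma$ yields $\Vert \eta_\sigma^2 \, k \cdot \nabla_k \varphi \Vert \le \sigma \Vert |\ii \nabla_k| \varphi \Vert$. For the multiplicative part, combining $|k|^2 \le \sigma^2$ on $\{|k| \le \sigma\}$ with Hardy's inequality \eqref{d13} gives
\begin{equation*}
\Vert F \varphi \Vert^2 \le \mathrm{C}^2 \int_{|k| \le \sigma} |\varphi|^2 \,\d k \le \mathrm{C}^2 \sigma^2 \int_{\R^3} |k|^{-2} |\varphi|^2\, \d k \le 4 \mathrm{C}^2 \sigma^2 \Vert |\ii \nabla_k| \varphi \Vert^2.
\end{equation*}
Summing these two contributions produces $\Vert b^\sigma \varphi \Vert \le \mathrm{C}' \sigma \Vert |\ii \nabla_k| \varphi \Vert$ on the core $\mathrm{C}_0^\infty(\R^3 \times \{1,2\})$, and hence on $D(|\ii\nabla_k|)$ by density, which also gives $D(|\ii \nabla_k|) \subset D(b^\sigma)$.

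Next, Lemma \ref{d14} applied with $n=1$, $a = b^\sigma$ (self-adjoint) and $b = \mathrm{C}' \sigma |\ii \nabla_k|$ ($\geq 0$) yields $D(X) = D(\d\Gamma(|\ii \nabla_k|)) \subset D(\d\Gamma(b^\sigma)) = D(B^\sigma)$ together with the second-quantized bound $\Vert B^\sigma \Phi \Vert \le \mathrm{C}' \sigma \Vert X \Phi \Vert$ for all $\Phi \in D(X)$.

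Finally, setting $\Phi = \langle X \rangle^{-1} \Psi$ for arbitrary $\Psi \in \mathcal{F}$—which belongs to $D(X)$ by the spectral theorem applied to $X$—and using $\Vert X \langle X \rangle^{-1} \Vert \le 1$, I conclude $\Vert B^\sigma \langle X \rangle^{-1} \Psi \Vert \le \mathrm{C}' \sigma \Vert \Psi \Vert$, which is the claim. The only nontrivial step is the Hardy step inside the single-particle bound: it is precisely what converts the trivial $\mathrm{L}^\infty$ bound on $F$ on the region $\{|k|\le\sigma\}$ into the crucial factor $\sigma$, and it is what limits the argument (in line with remark $ii)$ of the paper).
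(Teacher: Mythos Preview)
Your proof is correct and follows essentially the same approach as the paper: both establish the single-particle bound $\Vert b^\sigma\varphi\Vert \le \mathrm{C}\sigma\Vert |\ii\nabla_k|\varphi\Vert$ by splitting $b^\sigma$ into its differential part $\ii\eta_\sigma^2\, k\cdot\nabla_k$ (trivially bounded since $|k|\le\sigma$ on $\supp\eta_\sigma$) and its multiplicative part (bounded via Hardy's inequality \eqref{d13}), and then lift to Fock space through Lemma \ref{d14} with $a=b^\sigma$, $b=\mathrm{C}\sigma|\ii\nabla_k|$. The only cosmetic difference is that you group the two multiplication terms into a single $F$ and bound them jointly, whereas the paper treats $(\eta_\sigma)^2$ and $\eta_\sigma\, k\cdot(\nabla_k\eta_\sigma)$ separately; your final explicit insertion of $\langle X\rangle^{-1}$ is left implicit in the paper.
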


\begin{proof}
Observe that
\begin{equation} \label{d12}
b^\sigma = \frac{ 3 \ii }{ 2 } \big( \eta_\sigma(k) \big)^2 + \ii \eta_\sigma(k) k \cdot (\nabla_k \eta_\sigma) (k) + \ii \big( \eta_\sigma(k) \big)^2 k \cdot \nabla_k .
\end{equation}
Hardy's inequality \eqref{d13} implies that, for all $\varphi \in \mathrm{C}_0^\infty( \mathbb{R}^3 )$,
\begin{equation*}
\big\Vert \big( \eta_\sigma(k) \big)^2 \varphi \big\Vert \leq 2 \sigma \big\Vert \vert \ii \nabla_k \vert \varphi \big\Vert ,
\end{equation*}
and likewise
\begin{equation*}
\big\Vert \eta_\sigma(k) k \cdot (\nabla_k \eta_\sigma) (k) \varphi \big\Vert \leq \mathrm{C} \sigma \big\Vert \vert \ii \nabla_k \vert \varphi \big\Vert .
\end{equation*}
Moreover, we obviously have that
\begin{equation*}
\big\Vert \big( \eta_\sigma(k) \big)^2 k \cdot \nabla_k \varphi \big\Vert \leq \sigma \big\Vert \vert \ii \nabla_k \vert \varphi \big\Vert .
\end{equation*}
Since $\mathrm{C}_0^\infty( \mathbb{R}^3 )$ is a core for $\vert \ii \nabla_k \vert$, \eqref{d12} and the previous estimates imply that $D( \vert \ii \nabla_k \vert ) \subset D( b^\sigma )$ and that, for all $\varphi \in D( \vert \ii \nabla_k \vert )$,
\begin{align} \label{a1}
\Vert b^\sigma \varphi \Vert \leq \mathrm{C} \sigma \big\Vert \vert \ii \nabla_k \vert \varphi \big\Vert .
\end{align}
Applying Lemma \ref{d14} with $a = b^\sigma$ and $b = \mathrm{C} \sigma \vert \ii \nabla_k \vert$ concludes the proof of the lemma.
\end{proof}

\begin{corollary}\sl \label{d1}
There exists $\alpha_c>0$ such that, for all $\varphi \in \mathrm{C}_0^\infty( (0, 1 ) ; \mathbb{R} )$, there exists $\mathrm{C}_{\varphi} > 0$ such that, for all $0 \leq \alpha \leq \alpha_c$ and $0 < \sigma \leq e_{\mathrm{gap}} / 2$,
\begin{equation*}
\big\Vert \< X \>^{-1} \varphi_\sigma( H_{\alpha,\sigma} - E_{\alpha,\sigma} ) B^\sigma \big\Vert \leq \mathrm{C}_{\varphi} \sigma.
\end{equation*}
\end{corollary}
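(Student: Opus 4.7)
The plan is to commute $B^\sigma$ past $\varphi_\sigma ( H_{\alpha,\sigma} - E_{\alpha,\sigma} )$ and bound the two resulting pieces separately. Starting from
\begin{equation*}
\< X \>^{-1} \varphi_\sigma ( H_{\alpha,\sigma} - E_{\alpha,\sigma} ) B^\sigma = \< X \>^{-1} B^\sigma \varphi_\sigma ( H_{\alpha,\sigma} - E_{\alpha,\sigma} ) + \< X \>^{-1} [ \varphi_\sigma ( H_{\alpha,\sigma} - E_{\alpha,\sigma} ) , B^\sigma ] ,
\end{equation*}
the first summand is handled immediately by taking the adjoint of Lemma \ref{d11}, which shows that $\< X \>^{-1} B^\sigma$ extends to a bounded operator of norm $\leq \mathrm{C} \sigma$; combined with $\Vert \varphi_\sigma ( H_{\alpha,\sigma} - E_{\alpha,\sigma} ) \Vert \leq \Vert \varphi \Vert_\infty$ this yields the contribution $\mathrm{C}_\varphi \sigma$.

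For the commutator summand, I would first use the fact that $b^\sigma$ is supported in $\{ | k | \leq \sigma \}$ (from the outer cutoff $\eta_\sigma$ in its definition) to factor $B^\sigma = \mathds{1}_{ \mathcal{F}_{\geq\sigma} } \otimes B^{\sigma,\leq\sigma}$; combined with \eqref{g2} this gives
\begin{equation*}
[ \varphi_\sigma ( H_{\alpha,\sigma} - E_{\alpha,\sigma} ) , B^\sigma ] = \Pi_{\alpha,\geq\sigma} \otimes [ \varphi_\sigma ( H_f^{\leq\sigma} ) , B^{\sigma,\leq\sigma} ] .
\end{equation*}
Since $\varphi ( 0 ) = 0$ and $B^{\sigma,\leq\sigma} = \d\Gamma ( b^{\sigma,\leq\sigma} )$ both annihilate the low-momentum vacuum $\Omega^{\leq\sigma}$, each of the two terms in the commutator maps $\mathcal{F}^{\leq\sigma}$ into $\mathrm{Ran} ( \bar{\Pi}_\Omega )$ on the low-momentum factor, and hence so does their difference. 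The range of the commutator therefore lies in $\mathrm{Ran} ( \mathds{1}_{ \mathcal{H}_{\geq\sigma} } \otimes \bar{\Pi}_\Omega )$, which is precisely the subspace on which Lemma \ref{d18}, estimate \eqref{d22} (with $\rho = 1$), delivers the bound $\Vert ( 1 + X )^{-1} \Psi \Vert \leq \mathrm{C} \sigma \Vert \Psi \Vert$. Combining this with the elementary operator inequality $\< X \>^{-1} \leq \sqrt{2}\, ( 1 + X )^{-1}$, I would conclude
\begin{equation*}
\big\Vert \< X \>^{-1} [ \varphi_\sigma ( H_{\alpha,\sigma} - E_{\alpha,\sigma} ) , B^\sigma ] \big\Vert \leq \mathrm{C} \sigma \big\Vert [ \varphi_\sigma ( H_{\alpha,\sigma} - E_{\alpha,\sigma} ) , B^\sigma ] \big\Vert .
\end{equation*}

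It then remains to bound this commutator uniformly in $\sigma$, which via the tensor factorization reduces to controlling $\Vert [ \varphi_\sigma ( H_f^{\leq\sigma} ) , B^{\sigma,\leq\sigma} ] \Vert$ on $\mathcal{F}^{\leq\sigma}$. I would use the scaling unitary $u_\sigma \colon \mathrm{L}^2 ( \{ | k | \leq 1 \} ) \to \mathrm{L}^2 ( \{ | k | \leq \sigma \} )$, $( u_\sigma f )( k ) := \sigma^{-3/2} f ( k / \sigma )$, which intertwines $H_f^{\leq\sigma}$ with $\sigma H_f^{\leq 1}$ and $B^{\sigma,\leq\sigma}$ with $B^{1,\leq 1}$ (verified on the one-particle symbols and promoted by $\d\Gamma$), so that $\varphi_\sigma ( H_f^{\leq\sigma} )$ becomes $\varphi ( H_f^{\leq 1} )$ and the commutator is unitarily equivalent to the $\sigma$-independent $[ \varphi ( H_f^{\leq 1} ) , B^{1,\leq 1} ]$. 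The latter is bounded by a standard $\mathrm{C}^1 ( A )$-regularity argument, since by \eqref{g6}, $[ H_f , \ii B^1 ] = \d\Gamma ( \eta^2 | k | ) \leq H_f$ as a quadratic form on $D ( H_f )$.

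The main obstacle I anticipate is the range analysis of the commutator in the second paragraph: it is precisely the observation that both summands of $[ \varphi_\sigma ( H_f^{\leq\sigma} ) , B^{\sigma,\leq\sigma} ]$ kill the low-momentum vacuum that enables Lemma \ref{d18} to supply the extra factor of $\sigma$; without this step the commutator contribution would only be $O ( 1 )$ and the desired estimate would fail.
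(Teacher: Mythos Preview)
Your proof is correct and follows essentially the same strategy as the paper: commute $B^\sigma$ past $\varphi_\sigma(H_{\alpha,\sigma}-E_{\alpha,\sigma})$, bound $\<X\>^{-1}B^\sigma$ by Lemma~\ref{d11}, observe that the commutator has range in $\mathds{1}_{\mathcal{H}_{\ge\sigma}}\otimes\bar\Pi_\Omega$ so that Lemma~\ref{d18} supplies the factor $\sigma$, and finally bound the commutator norm uniformly in $\sigma$.

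The only substantive difference is in this last step. The paper computes the commutator explicitly via \eqref{g2} and \eqref{g6} as $\Pi_{\alpha,\ge\sigma}\otimes\bigl(\d\Gamma(\eta_\sigma^2|k|)\,(\varphi_\sigma)'(H_f)\bigr)$ and then invokes Lemma~\ref{e36} for the uniform bound; you instead conjugate by the dilation $\Gamma(u_\sigma)$ to reduce to the $\sigma$-independent commutator $[\varphi(H_f^{\le1}),B^{1,\le1}]$, which is bounded by the $\mathrm{C}^1$-regularity of $H_f$ with respect to $B^1$. Your scaling argument is a clean alternative to Lemma~\ref{e36} and makes the uniformity in $\sigma$ transparent without any computation. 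One small point: your range argument (``both annihilate the vacuum, hence both terms map into $\mathrm{Ran}(\bar\Pi_\Omega)$'') tacitly uses that $\varphi_\sigma(H_f)$ and $B^\sigma$ preserve particle number, so that annihilating $\Omega$ is equivalent to having range orthogonal to $\Omega$; this is true but worth stating.
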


\begin{proof}
Using Lemma \ref{e36}, we can write
\begin{equation*}
\ii \< X \>^{-1} \varphi_\sigma( H_{\alpha,\sigma} - E_{\alpha,\sigma} ) B^\sigma = \< X \>^{-1} \ad_{\ii B^\sigma}^1 ( \varphi_\sigma( H_{\alpha,\sigma} - E_{\alpha,\sigma} ) ) +\ii  \< X \>^{-1} B^\sigma \varphi_\sigma( H_{\alpha,\sigma} - E_{\alpha,\sigma} ),
\end{equation*}
as an identity on $D(B^\sigma)$. Note that, by \eqref{g2} and \eqref{g6}, we have 
\begin{equation*}
\ad_{\ii B^\sigma}^1( \varphi_\sigma( H_{\alpha,\sigma} - E_{\alpha,\sigma} ) ) = \Pi_{\alpha,\ge\sigma} \otimes \big( \d \Gamma( ( \eta_\sigma(k) )^2 |k| ) ( \varphi_\sigma )^{\prime} ( H_f ) \big) ,
\end{equation*}
and hence, since $( \varphi_\sigma )^{\prime} ( H_f ) = \bar \Pi_\Omega ( \varphi_\sigma )^{\prime} ( H_f )$, we can write
\begin{equation*}
\ad_{\ii B^\sigma}^1( \varphi_\sigma( H_{\alpha,\sigma} - E_{\alpha,\sigma} ) ) = ( \mathds{1}_{ \mathcal{F}_{\ge\sigma} } \otimes \bar \Pi_\Omega ) \ad_{\ii B^\sigma}^1( \varphi_\sigma( H_{\alpha,\sigma} - E_{\alpha,\sigma} ) ) .
\end{equation*}
Then, Lemma \ref{d18} and Lemma \ref{e36} imply
\begin{equation*}
\big\Vert \< X \>^{-1} \ad_{\ii B^\sigma}^1( \varphi_\sigma( H_{\alpha,\sigma} - E_{\alpha,\sigma} ) ) \big\Vert \leq \mathrm{C}_\varphi \sigma.
\end{equation*}
Moreover according to Lemma \ref{d11},
\begin{equation*}
\big\Vert \< X \>^{-1} B^\sigma \varphi_\sigma( H_{\alpha,\sigma} - E_{\alpha,\sigma} ) \big\Vert \leq \mathrm{C} \sigma,
\end{equation*}
and hence the proof is complete.
\end{proof}

\begin{corollary}\sl \label{c16}
There exists $\alpha_c>0$ such that, for all $\varphi \in \mathrm{C}_0^\infty( (0, 1 ) ; \mathbb{R} )$ and $\delta > 0$, there exists $\mathrm{C}_{\varphi , \delta} >0$ such that, for all $0 \leq \alpha \leq \alpha_c$ and $0 < \sigma \leq e_{\mathrm{gap}} / 2$,
\begin{equation*}
\big\Vert \< X \>^{-1} \varphi_\sigma( H_{\alpha} - E_{\alpha} ) B^\sigma \big\Vert \leq \mathrm{C}_{\varphi , \delta} \sigma^{1 - \delta} .
\end{equation*}
\end{corollary}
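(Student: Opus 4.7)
Following the blueprint of Corollary \ref{c15}, I decompose
\begin{equation*}
\< X \>^{-1} \varphi_\sigma( H_\alpha - E_\alpha ) B^\sigma = \< X \>^{-1} \varphi_\sigma( H_{\alpha,\sigma} - E_{\alpha,\sigma} ) B^\sigma + \< X \>^{-1} R_\sigma B^\sigma,
\end{equation*}
where $R_\sigma := \varphi_\sigma( H_\alpha - E_\alpha ) - \varphi_\sigma( H_{\alpha,\sigma} - E_{\alpha,\sigma} )$. The first term on the right is bounded by $\mathrm{C}_\varphi \sigma$ thanks to Corollary \ref{d1}, which is already stronger than the desired $\sigma^{1-\delta}$. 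The whole game is therefore to estimate the remainder $\< X \>^{-1} R_\sigma B^\sigma$ by $\mathrm{C}_{\varphi,\delta} \sigma^{1-\delta}$.

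For this remainder, I plan to imitate the commutator manipulation used in the proof of Corollary \ref{d1}: invoking Lemma \ref{e36}, I write, as an identity on $D(B^\sigma)$,
\begin{equation*}
\ii \< X \>^{-1} R_\sigma B^\sigma = \< X \>^{-1} \ad_{\ii B^\sigma}^1( R_\sigma ) + \ii \< X \>^{-1} B^\sigma R_\sigma.
\end{equation*}
The second summand is handled via $\Vert \< X \>^{-1} B^\sigma \Vert \leq \mathrm{C} \sigma$, which follows from Lemma \ref{d11} after taking adjoints, combined with a bound on $\Vert R_\sigma \Vert$ furnished by Proposition \ref{c10}. For the first summand I would split $\ad_{\ii B^\sigma}^1( R_\sigma ) = \ad_{\ii B^\sigma}^1( \varphi_\sigma( H_\alpha - E_\alpha ) ) - \ad_{\ii B^\sigma}^1( \varphi_\sigma( H_{\alpha,\sigma} - E_{\alpha,\sigma} ) )$. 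Both commutators are uniformly bounded by Lemma \ref{c18}, so what is really needed is a weighted comparison estimate between functions of $H_\alpha$ and of $H_{\alpha,\sigma}$ composed with $B^\sigma$, provided precisely by Proposition \ref{c10}, which yields the factor $\sigma^{1-\delta}$.

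The main obstacle lies in this last step. In the proof of Corollary \ref{d1}, the commutator $\ad_{\ii B^\sigma}^1( \varphi_\sigma( H_{\alpha,\sigma} - E_{\alpha,\sigma} ) )$ has the explicit product form $( \mathds{1}_{\mathcal{F}_{\geq\sigma}} \otimes \bar\Pi_\Omega ) \cdot ( \text{bounded} )$ coming from \eqref{g2} and \eqref{g6}, which allows the direct use of Lemma \ref{d18} to extract a full factor of $\sigma$ from $\< X \>^{-1}$. No such structural identity is available for $\ad_{\ii B^\sigma}^1( R_\sigma )$, because the extra low-momentum interaction $A^{\leq\sigma}( \alpha x )$ in $H_\alpha$ couples back into the dynamics when passing from $H_{\alpha,\sigma}$ to $H_\alpha$. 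The small loss $\sigma^{-\delta}$ in the statement, as compared with the $\sigma^1$ of Corollary \ref{d1}, is precisely what Proposition \ref{c10} needs in order to absorb this infrared contamination.
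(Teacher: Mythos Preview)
Your decomposition and commutator manipulation match the paper's proof exactly: split off $\<X\>^{-1}\varphi_\sigma(H_{\alpha,\sigma}-E_{\alpha,\sigma})B^\sigma$ via Corollary~\ref{d1}, then rewrite $\<X\>^{-1}R_\sigma B^\sigma$ as $\<X\>^{-1}B^\sigma R_\sigma$ plus a commutator term. The first of these is indeed handled by Lemma~\ref{d11} and Proposition~\ref{c10}.

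The gap is in your treatment of $\<X\>^{-1}\ad_{\ii B^\sigma}^1(R_\sigma)$. You cite Proposition~\ref{c10}, but that result only bounds $\|R_\sigma\|$ itself; it says nothing about $\|\ad_{\ii B^\sigma}^1(R_\sigma)\|$. Knowing that each of the two commutators $\ad_{\ii B^\sigma}^1(\varphi_\sigma(H_\alpha-E_\alpha))$ and $\ad_{\ii B^\sigma}^1(\varphi_\sigma(H_{\alpha,\sigma}-E_{\alpha,\sigma}))$ is uniformly bounded (Lemma~\ref{c18}, Lemma~\ref{e36}) gives you only $O(1)$ for their difference, which is useless here since $\<X\>^{-1}$ contributes no smallness on its own. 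The ingredient you actually need is Proposition~\ref{e37}, which states precisely that
\[
\big\Vert \ad_{\ii B^\sigma}^n\big(\varphi_\sigma(H_\alpha-E_\alpha)-\varphi_\sigma(H_{\alpha,\sigma}-E_{\alpha,\sigma})\big)\big\Vert \leq \mathrm{C}_{\varphi,n,\delta}(\alpha^{3/2}\sigma)^{1-\delta}.
\]
This is where the $\sigma^{-\delta}$ loss originates: Proposition~\ref{e37} is proved by interpolating (via Kolmogorov's inequality) between the $O(\sigma)$ bound of Proposition~\ref{c10} at the zeroth commutator level and the $O(1)$ bounds of Lemmas~\ref{c18} and~\ref{e36} at higher levels. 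Your narrative in the final paragraph describes exactly this mechanism, so you have the right intuition, but the result you need is e37, not c10.
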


\begin{proof}
As in the proof of Corollary \ref{c15}, we decompose
\begin{align*}
\< X \>^{-1} \varphi_\sigma( H_{\alpha} - E_{\alpha} ) B^\sigma = {}& \< X \>^{-1} \varphi_\sigma( H_{\alpha,\sigma} - E_{\alpha,\sigma} ) B^\sigma \\
&+ \< X \>^{-1} \big( \varphi_\sigma( H_{\alpha} - E_{\alpha} ) - \varphi_\sigma( H_{\alpha,\sigma} - E_{\alpha,\sigma} ) \big) B^\sigma.
\end{align*}
The first term in the right hand side is estimated thanks to Corollary \ref{d1}. As for the second term, it suffices to write (as an identity on $D(B^\sigma)$)
\begin{align*}
\< X \>^{-1} \big( \varphi_\sigma( H_{\alpha} - E_{\alpha} ) & - \varphi_\sigma( H_{\alpha,\sigma} - E_{\alpha,\sigma} ) \big) B^\sigma  \\
={}& \< X \>^{-1} B^\sigma \big( \varphi_\sigma( H_{\alpha} - E_{\alpha} ) - \varphi_\sigma( H_{\alpha,\sigma} - E_{\alpha,\sigma} ) \big)  \\
&+ \< X \>^{-1} \big[ \big( \varphi_\sigma( H_{\alpha} - E_{\alpha} ) - \varphi_\sigma( H_{\alpha,\sigma} - E_{\alpha,\sigma} ) \big) , B^\sigma \big],
\end{align*}
and next to use Lemma \ref{d11}, Proposition \ref{c10} and Proposition \ref{e37}.
\end{proof}

\subsection{Relative bounds on powers of the conjugate operator}

We now estimate $(B^\sigma)^2$ relatively to $\< X \>^2$. Since Hardy's inequality $\| |k|^{-s} \varphi \| \le \mathrm{C}_s \| | \ii \nabla_k |^s \varphi \|$ does not hold for $s \ge 3/2$, we cannot directly apply Lemma \ref{d14}. Nevertheless, some aspects of the following proof are taken from the proof of Lemma \ref{d14}.

\begin{lemma}\sl \label{d2}
There exists $\mathrm{C}>0$ such that, for all $0 < \sigma \le e_{\mathrm{gap}}/2$, $D( X^2 ) \subset D( ( B^\sigma )^2 )$ and
\begin{equation*}
\big\Vert ( B^\sigma )^2 \< X \>^{-2} \Phi \big\Vert \leq \mathrm{C} \sigma \Vert \Phi \Vert,
\end{equation*}
for all $\Phi \in \mathcal{F}$.
\end{lemma}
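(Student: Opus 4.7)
The plan is to iterate the one-particle Hardy-type estimate underlying Lemma~\ref{d11} one more time and invoke Lemma~\ref{d14} with $n=2$. The reason the bound deteriorates from the naively expected $\sigma^2$ to $\sigma$ is the failure in $\mathbb{R}^3$ of the second-order Hardy inequality $\||k|^{-2}\varphi\|\le\mathrm{C}\||\ii\nabla_k|^2\varphi\|$, as already pointed out in Remark~(ii) following Theorem~\ref{c1}.

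First, I would establish the pointwise single-particle bound
\begin{equation*}
\|(b^\sigma)^2 \varphi\| \le \mathrm{C}\sigma \|\langle |\ii\nabla_k|\rangle^2 \varphi\|
\end{equation*}
for $\sigma\le 1$, by writing $(b^\sigma)^2\varphi = b^\sigma(b^\sigma\varphi)$, applying the Hardy-type bound of Lemma~\ref{d11} twice to the factorization $|\ii\nabla_k|\,b^\sigma = b^\sigma|\ii\nabla_k| + [|\ii\nabla_k|,b^\sigma]$, and controlling the resulting commutator using the decomposition $b^\sigma = \eta_\sigma b_0 \eta_\sigma$, the identity $[|\ii\nabla_k|,b_0]=\ii|\ii\nabla_k|$, the single-particle Hardy bound $\|b_0\eta_\sigma\varphi\|\le\mathrm{C}\sigma\||\ii\nabla_k|\varphi\|$, and the scaling identity $U_\sigma[|\ii\nabla_k|,\eta_\sigma]U_\sigma^* = \sigma^{-1}[|\ii\nabla_{\tilde k}|,\eta]$ (with $U_\sigma\varphi(\tilde k)=\sigma^{3/2}\varphi(\sigma\tilde k)$). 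These matched $\sigma$-factors cancel to produce $\|[|\ii\nabla_k|,b^\sigma]\varphi\|\le \mathrm{C}\||\ii\nabla_k|\varphi\|$, from which the stated pointwise estimate follows after combining the two iterations and using $\sigma\le 1$.

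Second, I would apply Lemma~\ref{d14} with $n=2$, $a=b^\sigma$ and $b=(\mathrm{C}\sigma)^{1/2}\langle|\ii\nabla_k|\rangle$ (both self-adjoint with $b\ge 0$), to lift the pointwise inequality to the Fock space bound
\begin{equation*}
\|(B^\sigma)^2 \Phi\| \le \mathrm{C}\sigma \|\d\Gamma(\langle|\ii\nabla_k|\rangle)^2 \Phi\|,
\end{equation*}
which simultaneously yields the domain inclusion $D(X^2)\subset D((B^\sigma)^2)$.

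The hard step is to dominate $\d\Gamma(\langle|\ii\nabla_k|\rangle)^2\langle X\rangle^{-2}$ by a constant. The naive bound $\langle|\ii\nabla_k|\rangle\le 1+|\ii\nabla_k|$ only gives $\d\Gamma(\langle|\ii\nabla_k|\rangle)^2\le 2\mathcal{N}^2+2X^2$ (using that $\mathcal{N}$, $X$, and $\d\Gamma(\langle|\ii\nabla_k|\rangle)$ commute), and the $X^2\langle X\rangle^{-2}$ piece is easily bounded; however $\mathcal{N}^2$ is \emph{not} controlled by $\langle X\rangle^2$ alone, since many photons widely spread in $k$-space can carry large $\mathcal{N}$ but small $X$. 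To absorb the $\mathcal{N}$-factor, I would exploit that $(B^\sigma)^2\langle X\rangle^{-2}\Phi$ lies in $\mathrm{Ran}(\mathds{1}_{\mathcal{F}_{\ge\sigma}}\otimes\bar\Pi_\Omega)$ --- because $B^\sigma$ annihilates the low-energy vacuum and acts as $\mathds{1}_{\mathcal{F}_{\ge\sigma}}\otimes B^{\sigma,\le}$ under the infrared decomposition $\mathcal{F}=\mathcal{F}_{\ge\sigma}\otimes\mathcal{F}^{\le\sigma}$ --- and then apply Lemma~\ref{d18}, which provides $\|\langle X\rangle^{-1}\|\le\mathrm{C}\sigma$ on precisely that subspace. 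This last step is the technical heart of the argument; it is the mechanism that recovers one missing factor of $\sigma$ lost to the Hardy-Rellich obstruction.
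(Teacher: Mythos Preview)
Your third step has a genuine gap that cannot be repaired along the lines you suggest. After applying Lemma~\ref{d14} with $n=2$, $a=b^\sigma$, $b=(\mathrm{C}\sigma)^{1/2}\langle|\ii\nabla_k|\rangle$, you obtain
\[
\|(B^\sigma)^2\Psi\|\le \mathrm{C}\sigma\,\|\d\Gamma(\langle|\ii\nabla_k|\rangle)^2\Psi\|,
\]
and, as you correctly note, $\d\Gamma(\langle|\ii\nabla_k|\rangle)^2\langle X\rangle^{-2}$ is \emph{not} bounded: the $\mathcal{N}^2$ piece survives. Your proposed fix --- using that $(B^\sigma)^2\langle X\rangle^{-2}\Phi$ lies in $\mathrm{Ran}(\mathds{1}_{\mathcal{F}_{\ge\sigma}}\otimes\bar\Pi_\Omega)$ and invoking Lemma~\ref{d18} --- does not close the argument. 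The range condition is on the \emph{output} $(B^\sigma)^2\Psi$, whereas the inequality from Lemma~\ref{d14} bounds this output in terms of $\d\Gamma(\langle|\ii\nabla_k|\rangle)^2\Psi$, which depends on the \emph{input} $\Psi=\langle X\rangle^{-2}\Phi$; that input is not localized to the range of $P=\mathds{1}_{\mathcal{F}_{\ge\sigma}}\otimes\bar\Pi_\Omega$. Inserting $P$ before applying Lemma~\ref{d14} does not help either: $P$ commutes with $\d\Gamma(\langle|\ii\nabla_k|\rangle)$ and with $\langle X\rangle$, so one is back to bounding $\|\d\Gamma(\langle|\ii\nabla_k|\rangle)^2\langle X\rangle^{-2}\Phi\|$, and Lemma~\ref{d18} gives smallness of $\langle X\rangle^{-1}$ on $\mathrm{Ran}(P)$, not control of $\mathcal{N}^2$ there.

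This is precisely why the paper explicitly states that Lemma~\ref{d14} cannot be applied directly. The paper instead works in each $n$-particle sector and expands $\|((B^\sigma)^2\Phi)^{(n)}\|^2=\sum_{i_1,\dots,i_4}\langle b^\sigma_{i_1}\cdots b^\sigma_{i_4}\Phi^{(n)},\Phi^{(n)}\rangle$, splitting according to the coincidence pattern of the indices. Off-diagonal terms (at most two equal indices) receive the full $\sigma^4$ gain from independent one-particle Hardy bounds, while the diagonal pieces involving $(b^\sigma_i)^3$ or $(b^\sigma_i)^4$ are analysed via a decomposition of $(b^\sigma)^3$, $(b^\sigma)^4$ into elementary pieces $r^{(0)},\dots,r^{(4)}$ with tailored weighted estimates. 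The net result is
\[
\|(B^\sigma)^2\Phi\|^2\le \mathrm{C}\sigma^2\|X\Phi\|^2+\mathrm{C}\sigma^4\|X^2\Phi\|^2,
\]
with $X=\d\Gamma(|\ii\nabla_k|)$ (no $\langle\cdot\rangle$), so no $\mathcal{N}$-term ever appears and $\langle X\rangle^{-2}$ controls the right-hand side directly.
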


\begin{proof}
First, we will collect some estimates on powers of $b^{\sigma}$. Since $b^{\sigma}$ is self-adjoint, an interpolation argument and \eqref{a1} imply that there exists $\mathrm{C}>0$ such that
\begin{gather}
\big\Vert \vert \ii \nabla_k \vert^{-\frac{1}{2}} b^\sigma \vert \ii \nabla_k \vert^{-\frac{1}{2}} \big\Vert \leq \mathrm{C} \sigma ,  \label{a2}  \\
\big\Vert \vert \ii \nabla_k \vert^{-1} ( b^\sigma )^{2} \vert \ii \nabla_k \vert^{-1} \big\Vert \leq \mathrm{C} \sigma^{2} ,  \label{a3}
\end{gather}
uniformly for all $\sigma >0$. Here and in the rest of the proof, $\vert \ii \nabla_k \vert^{-1}$ is an abuse of notation. To be rigorous, one has to replace $\vert \ii \nabla_k \vert^{-1}$ by $( \vert \ii \nabla_k \vert + \rho )^{-1}$ with $\rho > 0$, say that the estimates are uniform with respect to $\rho$ and, at the end, let $\rho$ goes to $0$. Since this presents no problem, and to avoid heaviness in the notations, we will omit these technical details and simply write $\vert \ii \nabla_k \vert^{-1}$.

A direct computation shows that $( b^{\sigma} )^{3}$ is a linear combination of terms of the form
\begin{gather*}
r^{(3)} : = \sigma^{3} D_{i} \Big( D_{j} f \Big( \frac{k}{\sigma} \Big) + f \Big( \frac{k}{\sigma} \Big) D_{j} \Big) D_{\ell} ,  \qquad r^{(2)} : = \sigma^{2} D_{i} f \Big( \frac{k}{\sigma} \Big) D_{j} ,   \\
r^{(1)} : = \sigma \Big( D_{i} f \Big( \frac{k}{\sigma} \Big) + f \Big( \frac{k}{\sigma} \Big) D_{i} \Big) ,  \qquad r^{(0)} : = f \Big( \frac{k}{\sigma} \Big) .
\end{gather*}
where $D_{\#}$ is a short-cut for $\frac{1}{\ii} \frac{\partial \ \ }{\partial_{k_{\#}}}$ and $f$ is a real valued $\mathrm{C}_0^\infty( \mathbb{R}^3 )$ function which may change from line to line. Using Hardy's inequality \eqref{d13}, we get
\begin{align}
\big\Vert \vert \ii \nabla_k \vert^{-\frac{1}{2}} r^{(0)} \vert \ii \nabla_k \vert^{-\frac{1}{2}} \big\Vert \leq \mathrm{C} \Big\Vert f \Big( \frac{k}{\sigma} \Big) \vert \ii \nabla_k \vert^{-\frac{1}{2}} \Big\Vert^{2} \leq \mathrm{C} \sigma .  \label{a4}
\end{align}
Since the operator $r^{(1)}$ is similar to $b^{\sigma}$, we can proceed as in \eqref{a1} to obtain
\begin{align*}
\Vert r^{(1)} \varphi \Vert \leq \mathrm{C} \sigma \big\Vert \vert \ii \nabla_k \vert  \varphi \big\Vert ,
\end{align*}
for all $\varphi \in D( \vert \ii \nabla_k \vert )$. Using that $r^{(1)}$ is self-adjoint, an interpolation argument gives
\begin{align}
\big\Vert \vert \ii \nabla_k \vert^{-\frac{1}{2}} r^{(1)} \vert \ii \nabla_k \vert^{-\frac{1}{2}} \big\Vert \leq \mathrm{C} \sigma .  \label{a5}
\end{align}
On the other hand, since $D_{\#}$ and $\vert \ii \nabla_k \vert$ commute, \eqref{a4} implies
\begin{align}
\big\Vert \vert \ii \nabla_k \vert^{-\frac{3}{2}} r^{(2)} \vert \ii \nabla_k \vert^{-\frac{3}{2}} \big\Vert &\leq \sigma^{2} \big\Vert \vert \ii \nabla_k \vert^{-\frac{1}{2}} r^{(0)} \vert \ii \nabla_k \vert^{-\frac{1}{2}} \big\Vert \big\Vert D_{i} \vert \ii \nabla_k \vert^{-1} \big\Vert \big\Vert D_{j} \vert \ii \nabla_k \vert^{-1} \big\Vert  \nonumber  \\
&\leq \mathrm{C} \sigma^{3} .  \label{a6}
\end{align}
Eventually, using \eqref{a5}, we obtain
\begin{align}
\big\Vert \vert \ii \nabla_k \vert^{-\frac{3}{2}} r^{(3)} \vert \ii \nabla_k \vert^{-\frac{3}{2}} \big\Vert &\leq \sigma^{2} \big\Vert \vert \ii \nabla_k \vert^{-\frac{1}{2}} r^{(1)} \vert \ii \nabla_k \vert^{-\frac{1}{2}} \big\Vert \big\Vert D_{i} \vert \ii \nabla_k \vert^{-1} \big\Vert \big\Vert D_{\ell} \vert \ii \nabla_k \vert^{-1} \big\Vert    \nonumber \\
&\leq \mathrm{C} \sigma^{3} .  \label{a7}
\end{align}

The same way, $(b^{\sigma})^{4}$ is a linear combination of terms of the form $r^{(0)}$, $r^{(1)}$, $r^{(2)}$, $r^{(3)}$ and
\begin{align*}
r^{(4)} : = \sigma^{4} D_{i} D_{j} f \Big( \frac{k}{\sigma} \Big) D_{\ell} D_{m} .
\end{align*}
The terms $r^{(0)} , \ldots , r^{(3)}$ have already been studied in \eqref{a4}--\eqref{a7}, but we will need other estimates to control $(b^{\sigma})^{4}$. Using Hardy's inequality \eqref{d13}, we have
\begin{align}
\big\Vert \vert \ii \nabla_k \vert^{-1} r^{(0)} \vert \ii \nabla_k \vert^{-1} \big\Vert & \leq \Big\Vert f \Big( \frac{k}{\sigma} \Big) \vert \ii \nabla_k \vert^{-1} \Big\Vert^{2} \leq \mathrm{C} \sigma^{2} ,  \label{a8}  \\
\big\Vert \vert \ii \nabla_k \vert^{-1} r^{(1)} \vert \ii \nabla_k \vert^{-1} \big\Vert & \leq 2 \sigma \Big\Vert f \Big( \frac{k}{\sigma} \Big) \vert \ii \nabla_k \vert^{-1} \Big\Vert \big\Vert D_{i} \vert \ii \nabla_k \vert^{-1} \big\Vert \leq \mathrm{C} \sigma^{2} , \label{a9}  \\
\big\Vert \vert \ii \nabla_k \vert^{-1} r^{(2)} \vert \ii \nabla_k \vert^{-1} \big\Vert & \leq \mathrm{C} \sigma^{2} \big\Vert D_{i} \vert \ii \nabla_k \vert^{-1} \big\Vert \big\Vert D_{j} \vert \ii \nabla_k \vert^{-1} \big\Vert \leq \mathrm{C} \sigma^{2} , \label{a10}
\end{align}
and
\begin{align}
\big\Vert \vert \ii \nabla_k \vert^{-2} r^{(3)} \vert \ii \nabla_k \vert^{-2} \big\Vert &\leq 2 \sigma^{3} \big\Vert D_{i} \vert \ii \nabla_k \vert^{-1} \big\Vert \big\Vert D_{j} \vert \ii \nabla_k \vert^{-1} \big\Vert \Big\Vert f \Big( \frac{k}{\sigma} \Big) \vert \ii \nabla_k \vert^{-1} \Big\Vert \big\Vert D_{\ell} \vert \ii \nabla_k \vert^{-1} \big\Vert   \nonumber \\
&\leq \mathrm{C} \sigma^{4} ,  \label{a11}    \\
\big\Vert \vert \ii \nabla_k \vert^{-2} r^{(4)} \vert \ii \nabla_k \vert^{-2} \big\Vert &\leq \mathrm{C} \sigma^{4} \big\Vert D_{i} \vert \ii \nabla_k \vert^{-1} \big\Vert \big\Vert D_{j} \vert \ii \nabla_k \vert^{-1} \big\Vert \big\Vert D_{\ell} \vert \ii \nabla_k \vert^{-1} \big\Vert \big\Vert D_{m} \vert \ii \nabla_k \vert^{-1} \big\Vert   \nonumber \\
&\leq \mathrm{C} \sigma^{4} .    \label{a17}
\end{align}

We now estimate $( B^{\sigma} )^{2} \Phi$ for $\Phi \in \Gamma_{\mathrm{fin}}( \mathrm{C}_0^\infty( \mathbb{R}^3 \times \{1,2\} ) )$. Assume that $n \in \mathbb{N}$. For $i \in \{ 1 , \ldots , n \}$, let $b_{i}^{\sigma}$ denote the operator $b^{\sigma}$ acting on the variable $k_{i}$. We can write
\begin{align}
\big\Vert \big( (B^{\sigma} )^{2} \Phi \big)^{(n)} ( k_{1} , \dots , k_{n} ) \big\Vert^{2}  &= \sum_{i_{1} , i_{2} , i_{3} , i_{4} \in \{ 1 , \ldots , n \}} \big\langle b_{i_{1}}^{\sigma} b_{i_{2}}^{\sigma} b_{i_{3}}^{\sigma} b_{i_{4}}^{\sigma} \Phi^{(n)} , \Phi^{(n)} \big\rangle    \nonumber\\
&= {\mathcal A}_{1,2} + {\mathcal A}_{3} + {\mathcal A}_{4} ,   \label{a12}
\end{align}
where
\begin{align*}
{\mathcal A}_{\#} = \sum_{i_{1} , i_{2} , i_{3} , i_{4} \in A_{\#}} \big\langle b_{i_{1}}^{\sigma} b_{i_{2}}^{\sigma} b_{i_{3}}^{\sigma} b_{i_{4}}^{\sigma} \Phi^{(n)} , \Phi^{(n)} \big\rangle ,
\end{align*}
and
\begin{align*}
A_{4} &= \big\{ ( i_{1} , i_{2} , i_{3} , i_{4} ) ; \ \text{all the } i_{\#}\text{'s} \text{ are equal} \big\} ,    \\
A_{3} &= \big\{ ( i_{1} , i_{2} , i_{3} , i_{4} ) ; \ \text{three of the } i_{\#}\text{'s} \text{ are equal} \big\} \setminus A_{4} ,  \\
A_{1,2} &= \{ ( i_{1} , i_{2} , i_{3} , i_{4} ) \} \setminus ( A_{3} \cup A_{4} ) .
\end{align*}

Note that for $( i_{1} , i_{2} , i_{3} , i_{4} ) \in A_{1,2}$, only $b^{\sigma}_{\#}$ and $( b^{\sigma}_{\#} )^{2}$ can appear in $ b_{i_{1}}^{\sigma} b_{i_{2}}^{\sigma} b_{i_{3}}^{\sigma} b_{i_{4}}^{\sigma}$ and not $( b^{\sigma}_{\#} )^{3}$ or $( b^{\sigma}_{\#} )^{4}$. Thus, using that the operators $b_{i}^{\sigma}$ and $\vert \ii \nabla_{k_{i}} \vert$ commute with the operators $b_{j}^{\sigma}$ and $\vert \ii \nabla_{k_{j}} \vert$ for $i \neq j$, the estimates \eqref{a2}--\eqref{a3} imply
\begin{align}
\big\vert {\mathcal A}_{1,2} \big\vert \leq{}& \sum_{i_{1} , i_{2} , i_{3} , i_{4} \in A_{1,2}} \Big\vert \Big\langle \vert \ii \nabla_{k_{i_{1}}} \vert^{-\frac{1}{2}} \cdots \vert \ii \nabla_{k_{i_{4}}} \vert^{-\frac{1}{2}} b_{i_{1}}^{\sigma} b_{i_{2}}^{\sigma} b_{i_{3}}^{\sigma} b_{i_{4}}^{\sigma} \vert \ii \nabla_{k_{i_{1}}} \vert^{-\frac{1}{2}} \cdots \vert \ii \nabla_{k_{i_{4}}} \vert^{-\frac{1}{2}}     \nonumber \\
&\qquad \qquad \qquad \qquad \qquad \vert \ii \nabla_{k_{i_{1}}} \vert^{\frac{1}{2}} \cdots \vert \ii \nabla_{k_{i_{4}}} \vert^{\frac{1}{2}} \Phi^{(n)} , \vert \ii \nabla_{k_{i_{1}}} \vert^{\frac{1}{2}} \cdots \vert \ii \nabla_{k_{i_{4}}} \vert^{\frac{1}{2}} \Phi^{(n)} \Big\rangle \Big\vert       \nonumber \\
\leq{}& \mathrm{C} \sigma^{4} \sum_{i_{1} , i_{2} , i_{3} , i_{4} \in A_{1,2}} \big\Vert \vert \ii \nabla_{k_{i_{1}}} \vert^{\frac{1}{2}} \vert \ii \nabla_{k_{i_{2}}} \vert^{\frac{1}{2}} \vert \ii \nabla_{k_{i_{3}}} \vert^{\frac{1}{2}} \vert \ii \nabla_{k_{i_{4}}} \vert^{\frac{1}{2}} \Phi^{(n)} \big\Vert^{2}      \nonumber \\
\leq{}& \mathrm{C} \sigma^{4} \sum_{i_{1} , i_{2} , i_{3} , i_{4}} \big\langle \vert \ii \nabla_{k_{i_{1}}} \vert \vert \ii \nabla_{k_{i_{2}}} \vert \vert \ii \nabla_{k_{i_{3}}} \vert \vert \ii \nabla_{k_{i_{4}}} \vert \Phi^{(n)} , \Phi^{(n)} \big\rangle    \nonumber \\
={}& \mathrm{C} \sigma^{4} \big\Vert \big( \d \Gamma ( \vert \ii \nabla_{k} \vert )^{2} \Phi \big)^{(n)} \big\Vert^{2} . \label{a14}
\end{align}

On the other hand, using that $( b^{\sigma} )^{3}$ is a linear combination of terms of the form $r^{(0)}$, $r^{(1)}$, $r^{(2)}$ and $r^{(3)}$, we can write
\begin{align*}
\big\vert {\mathcal A}_{3} \big\vert &= 4 \Big\vert \sum_{i \neq j} \big\langle ( b_{i}^{\sigma} )^{3} b_{j}^{\sigma} \Phi^{(n)} , \Phi^{(n)} \big\rangle \Big\vert  \\
&\leq \mathrm{C} \sum_{i \neq j} \sum_{\text{finite}} \big\vert \big\langle ( r_{i}^{(0)} + r_{i}^{(1)} ) b_{j}^{\sigma} \Phi^{(n)} , \Phi^{(n)} \big\rangle \big\vert + \mathrm{C} \sum_{i \neq j} \sum_{\text{finite}} \big\vert \big\langle ( r_{i}^{(2)} + r_{i}^{(3)} ) b_{j}^{\sigma} \Phi^{(n)} , \Phi^{(n)} \big\rangle \big\vert ,
\end{align*}
where $r_{\#}^{( \ast )}$ denoted the operator $r^{( \ast )}$ in the $k_{\#}$-variables. As in \eqref{a14}, \eqref{a4}--\eqref{a5} yield
\begin{align*}
\big\vert \big\langle ( r_{i}^{(0)} + r_{i}^{(1)} ) b_{j}^{\sigma} \Phi^{(n)} , \Phi^{(n)} \big\rangle \big\vert &\leq \big\vert \big\langle \vert \ii \nabla_{k_{i}} \vert^{-\frac{1}{2}} ( r_{i}^{(0)} + r_{i}^{(1)} ) \vert \ii \nabla_{k_{i}} \vert^{-\frac{1}{2}} \vert \ii \nabla_{k_{j}} \vert^{-\frac{1}{2}} b_{j}^{\sigma} \vert \ii \nabla_{k_{j}} \vert^{-\frac{1}{2}}   \\
&\qquad \qquad \qquad \quad \vert \ii \nabla_{k_{i}} \vert^{\frac{1}{2}} \vert \ii \nabla_{k_{j}} \vert^{\frac{1}{2}} \Phi^{(n)} , \vert \ii \nabla_{k_{i}} \vert^{\frac{1}{2}} \vert \ii \nabla_{k_{j}} \vert^{\frac{1}{2}} \Phi^{(n)} \big\rangle \big\vert  \\
&\leq \mathrm{C} \sigma^{2} \big\Vert \vert \ii \nabla_{k_{i}} \vert^{\frac{1}{2}} \vert \ii \nabla_{k_{j}} \vert^{\frac{1}{2}} \Phi^{(n)} \big\Vert^{2}  \\
&= \mathrm{C} \sigma^{2} \big\langle \vert \ii \nabla_{k_{i}} \vert \vert \ii \nabla_{k_{j}} \vert \Phi^{(n)} , \Phi^{(n)} \big\rangle .
\end{align*}
The same way, \eqref{a6}--\eqref{a7} give
\begin{align*}
\big\vert \big\langle ( r_{i}^{(2)} + r_{i}^{(3)} ) b_{j}^{\sigma} \Phi^{(n)} , \Phi^{(n)} \big\rangle \big\vert &\leq \big\vert \big\langle \vert \ii \nabla_{k_{i}} \vert^{-\frac{3}{2}} ( r_{i}^{(2)} + r_{i}^{(3)} ) \vert \ii \nabla_{k_{i}} \vert^{-\frac{3}{2}} \vert \ii \nabla_{k_{j}} \vert^{-\frac{1}{2}} b_{j}^{\sigma} \vert \ii \nabla_{k_{j}} \vert^{-\frac{1}{2}}   \\
&\qquad \qquad \qquad \quad \vert \ii \nabla_{k_{i}} \vert^{\frac{3}{2}} \vert \ii \nabla_{k_{j}} \vert^{\frac{1}{2}} \Phi^{(n)} , \vert \ii \nabla_{k_{i}} \vert^{\frac{3}{2}} \vert \ii \nabla_{k_{j}} \vert^{\frac{1}{2}} \Phi^{(n)} \big\rangle \big\vert  \\
&\leq \mathrm{C} \sigma^{4} \big\Vert \vert \ii \nabla_{k_{i}} \vert^{\frac{3}{2}} \vert \ii \nabla_{k_{j}} \vert^{\frac{1}{2}} \Phi^{(n)} \big\Vert^{2}  \\
&= \mathrm{C} \sigma^{4} \big\langle \vert \ii \nabla_{k_{i}} \vert^{3} \vert \ii \nabla_{k_{j}} \vert \Phi^{(n)} , \Phi^{(n)} \big\rangle .
\end{align*}
Combining the previous estimates, we obtain
\begin{align}
\big\vert {\mathcal A}_{3} \big\vert &\leq \mathrm{C} \sigma^{2} \sum_{i \neq j} \big\langle \vert \ii \nabla_{k_{i}} \vert \vert \ii \nabla_{k_{j}} \vert \Phi^{(n)} , \Phi^{(n)} \big\rangle + \mathrm{C} \sigma^{4} \sum_{i \neq j} \big\langle \vert \ii \nabla_{k_{i}} \vert^{3} \vert \ii \nabla_{k_{j}} \vert \Phi^{(n)} , \Phi^{(n)} \big\rangle \nonumber  \\
&\leq \mathrm{C} \sigma^{2} \sum_{i_{1} , i_{2}} \big\langle \vert \ii \nabla_{k_{i_{1}}} \vert \vert \ii \nabla_{k_{i_{2}}} \vert \Phi^{(n)} , \Phi^{(n)} \big\rangle + \mathrm{C} \sigma^{4} \sum_{i_{1} , i_{2} , i_{3} , i_{4}} \big\langle \vert \ii \nabla_{k_{i_{1}}} \vert \vert \ii \nabla_{k_{i_{2}}} \vert \vert \ii \nabla_{k_{i_{3}}} \vert \vert \ii \nabla_{k_{i_{4}}} \vert \Phi^{(n)} , \Phi^{(n)} \big\rangle  \nonumber   \\
&= \mathrm{C} \sigma^{2} \big\Vert \big( \d \Gamma ( \vert \ii \nabla_{k} \vert ) \Phi \big)^{(n)} \big\Vert^{2} + \mathrm{C} \sigma^{4} \big\Vert \big( \d \Gamma ( \vert \ii \nabla_{k} \vert )^{2} \Phi \big)^{(n)} \big\Vert^{2} . \label{a15}
\end{align}

Eventually, we have
\begin{align*}
\big\vert {\mathcal A}_{4} \big\vert &\leq \sum_{i} \big\vert \big\langle ( b_{i}^{\sigma} )^{4} \Phi^{(n)} , \Phi^{(n)} \big\rangle \big\vert   \\
&\leq \mathrm{C} \sum_{i} \sum_{\text{finite}} \big\vert \big\langle ( r_{i}^{(0)} + r_{i}^{(1)} + r_{i}^{(2)} ) \Phi^{(n)} , \Phi^{(n)} \big\rangle \big\vert + \mathrm{C} \sum_{i} \sum_{\text{finite}} \big\vert \big\langle ( r_{i}^{(3)} + r_{i}^{(4)} ) \Phi^{(n)} , \Phi^{(n)} \big\rangle \big\vert .
\end{align*}
Using the estimates \eqref{a8}--\eqref{a17} and proceeding as in the proof of \eqref{a15}, we get
\begin{align}
\big\vert {\mathcal A}_{4} \big\vert &\leq \mathrm{C} \sigma^{2} \sum_{i} \big\langle \vert \ii \nabla_{k_{i}} \vert^{2} \Phi^{(n)} , \Phi^{(n)} \big\rangle + \mathrm{C} \sigma^{4} \sum_{i} \big\langle \vert \ii \nabla_{k_{i}} \vert^{4} \Phi^{(n)} , \Phi^{(n)} \big\rangle \nonumber  \\
&\leq \mathrm{C} \sigma^{2} \sum_{i_{1} , i_{2}} \big\langle \vert \ii \nabla_{k_{i_{1}}} \vert \vert \ii \nabla_{k_{i_{2}}} \vert \Phi^{(n)} , \Phi^{(n)} \big\rangle + \mathrm{C} \sigma^{4} \sum_{i_{1} , i_{2} , i_{3} , i_{4}} \big\langle \vert \ii \nabla_{k_{i_{1}}} \vert \vert \ii \nabla_{k_{i_{2}}} \vert \vert \ii \nabla_{k_{i_{3}}} \vert \vert \ii \nabla_{k_{i_{4}}} \vert \Phi^{(n)} , \Phi^{(n)} \big\rangle  \nonumber   \\
&= \mathrm{C} \sigma^{2} \big\Vert \big( \d \Gamma ( \vert \ii \nabla_{k} \vert ) \Phi \big)^{(n)} \big\Vert^{2} + \mathrm{C} \sigma^{4} \big\Vert \big( \d \Gamma ( \vert \ii \nabla_{k} \vert )^{2} \Phi \big)^{(n)} \big\Vert^{2} .    \label{a16}
\end{align}

Combining \eqref{a12} with the estimates \eqref{a14}, \eqref{a15} and \eqref{a16}, we eventually obtain
\begin{align*}
\big\Vert \big( (B^{\sigma} )^{2} \Phi \big)^{(n)} \big\Vert &\leq \mathrm{C} \sigma^{2} \big\Vert \big( \d \Gamma ( \vert \ii \nabla_{k} \vert ) \Phi \big)^{(n)} \big\Vert^{2} + \mathrm{C} \sigma^{4} \big\Vert \big( \d \Gamma ( \vert \ii \nabla_{k} \vert )^{2} \Phi \big)^{(n)} \big\Vert^{2} \\
&\leq \mathrm{C} \sigma^{2} \big\Vert \big( \big( \d \Gamma( | \ii \nabla_k | ) + 1 \big)^{2} \Phi \big)^{(n)} \big\Vert^{2} ,
\end{align*}
with $\mathrm{C}>0$ uniform with respect to $n\in \mathbb{N}$ and $\sigma >0$. Since this estimate is trivial for $n=0$ and since $\Gamma_{\mathrm{fin}}( \mathrm{C}_0^\infty( \mathbb{R}^3 \times \{1,2\} ) )$ is a core for $D( \d \Gamma( | \ii \nabla_k | )^2 )$, the lemma follows.
\end{proof}

\begin{corollary}\sl \label{d3}
There exists $\alpha_c>0$ such that, for all $\varphi \in \mathrm{C}_0^\infty( (0, 1 ) ; \mathbb{R} )$, there exists $\mathrm{C}_{\varphi} > 0$ such that, for all $0 \leq \alpha \leq \alpha_c$ and $0 < \sigma \leq e_{\mathrm{gap}} / 2$,
\begin{equation*}
\big\Vert \< X \>^{-2} \varphi_\sigma( H_{\alpha,\sigma} - E_{\alpha,\sigma} ) ( B^\sigma )^2 \big\Vert \leq \mathrm{C}_{\varphi} \sigma.
\end{equation*}
\end{corollary}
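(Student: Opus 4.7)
The proof extends the commutator argument of Corollary \ref{d1} to accommodate two factors of $B^\sigma$. Setting $F := \varphi_\sigma( H_{\alpha,\sigma} - E_{\alpha,\sigma} )$, I would first establish, as an operator identity on a suitable dense subspace (justified via Lemma \ref{e36} exactly as in the proof of Corollary \ref{d1}), the double commutator expansion
\begin{equation*}
F ( B^\sigma )^2 = ( B^\sigma )^2 F - 2 \ii B^\sigma \ad^{1}_{\ii B^\sigma}( F ) - \ad^{2}_{\ii B^\sigma}( F ) ,
\end{equation*}
which follows from $[ F , B^\sigma ] = - \ii \ad^{1}_{\ii B^\sigma}( F )$ together with $[ \ad^{1}_{\ii B^\sigma}( F ) , B^\sigma ] = - \ii \ad^{2}_{\ii B^\sigma}( F )$. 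Premultiplying by $\< X \>^{-2}$, the problem reduces to bounding each of the three resulting summands by $\mathrm{C}_\varphi \sigma$.

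The first term, $\< X \>^{-2} ( B^\sigma )^2 F$, is handled directly by Lemma \ref{d2}: the bound $\Vert ( B^\sigma )^2 \< X \>^{-2} \Vert \leq \mathrm{C} \sigma$ dualizes, both operators being self-adjoint, to $\Vert \< X \>^{-2} ( B^\sigma )^2 \Vert \leq \mathrm{C} \sigma$, while $\Vert F \Vert \leq \Vert \varphi \Vert_\infty$ is uniform in $\sigma$ and $\alpha$. The second term, $\< X \>^{-2} B^\sigma \ad^{1}_{\ii B^\sigma}( F )$, I would rewrite as $\< X \>^{-1} \cdot \< X \>^{-1} B^\sigma \cdot \ad^{1}_{\ii B^\sigma}( F )$; the adjoint of Lemma \ref{d11} furnishes $\Vert \< X \>^{-1} B^\sigma \Vert \leq \mathrm{C} \sigma$, and Lemma \ref{c18} gives $\Vert \ad^{1}_{\ii B^\sigma}( F ) \Vert \leq \mathrm{C}_\varphi$.

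The third and most delicate term, $\< X \>^{-2} \ad^{2}_{\ii B^\sigma}( F )$, is where the vacuum-annihilation structure must be exploited. Since $\varphi \in \mathrm{C}_0^\infty( ( 0 , 1 ) )$, the identity \eqref{g2} implies that $F$ kills the Fock vacuum, so $F = ( \mathds{1} \otimes \bar{\Pi}_\Omega ) F$; and since $B^\sigma = \d \Gamma ( b^\sigma )$ commutes with $\mathds{1} \otimes \Pi_\Omega$, a short induction on the commutator order yields $\ad^{2}_{\ii B^\sigma}( F ) = ( \mathds{1} \otimes \bar{\Pi}_\Omega ) \ad^{2}_{\ii B^\sigma}( F )$. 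Inserting this identity and invoking Lemma \ref{d18} in the form $\Vert \< X \>^{-1} ( \mathds{1} \otimes \bar{\Pi}_\Omega ) \Vert \leq \mathrm{C} \sigma$, together with the uniform bound $\Vert \ad^{2}_{\ii B^\sigma}( F ) \Vert \leq \mathrm{C}_\varphi$ from Lemma \ref{c18}, produces the desired $\mathrm{C}_\varphi \sigma$ estimate. The main obstacle is indeed this last term: a direct application of Lemma \ref{c18} alone yields no smallness in $\sigma$, so the essential input is that the vacuum projection trades one power of $\< X \>^{-1}$ for a factor of $\sigma$, which exactly compensates the missing decay and matches the two powers of the weight on the left.
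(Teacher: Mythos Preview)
Your argument is correct and follows the same three-term commutator expansion as the paper's proof, with the same ingredients: Lemma \ref{d2} for the $(B^\sigma)^2 F$ term, Lemma \ref{d11} for the $B^\sigma \ad^1(F)$ term, and the vacuum-annihilation observation combined with Lemma \ref{d18} for the $\ad^2(F)$ term. One small correction: the uniform bounds $\Vert \ad^{j}_{\ii B^\sigma}(F)\Vert \le \mathrm{C}_\varphi$ for $F = \varphi_\sigma(H_{\alpha,\sigma}-E_{\alpha,\sigma})$ come from Lemma \ref{e36} (which concerns the infrared-cutoff Hamiltonian $H_{\alpha,\sigma}$), not Lemma \ref{c18} (which is stated for the full $H_\alpha$); otherwise the proof is the paper's.
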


\begin{proof}
By Lemma \eqref{e36}, we can write as an identity on $D((B^\sigma)^2)$:
\begin{align}
\varphi_\sigma( H_{\alpha,\sigma} - E_{\alpha,\sigma} ) ( \ii B^\sigma )^2 ={}& \ad_{\ii B^\sigma}^2( \varphi_\sigma( H_{\alpha,\sigma} - E_{\alpha,\sigma} ) ) + 2 \ii B^\sigma \ad_{\ii B^\sigma}^1 ( \varphi_\sigma( H_{\alpha,\sigma} - E_{\alpha,\sigma} ) )     \nonumber \\
&+ ( \ii B^\sigma )^2 \varphi_\sigma( H_{\alpha,\sigma} - E_{\alpha,\sigma} ) .   \label{d4}
\end{align}
As in the proof of Corollary \ref{d1}, we have that
\begin{equation*}
\ad_{\ii B^\sigma}^2( \varphi_\sigma( H_{\alpha,\sigma} - E_{\alpha,\sigma} ) ) = ( \mathds{1}_{ \mathcal{H}_{\ge\sigma} } \otimes \bar \Pi_\Omega ) \ad_{\ii B^\sigma}^2( \varphi_\sigma( H_{\alpha,\sigma} - E_{\alpha,\sigma} ) ),
\end{equation*}
and hence it follows from Lemma \ref{d18} and Lemma \ref{e36} that
\begin{align}\label{d5}
\big\Vert \< X \>^{-2} \ad_{\ii B^\sigma}^2( \varphi_\sigma( H_{\alpha,\sigma} - E_{\alpha,\sigma} ) ) \big\Vert \leq \mathrm{C}_\varphi \sigma.
\end{align}
On the other hand, using Lemma \ref{d11} and Lemma \ref{e36}, we obtain that
\begin{align}\label{d6}
\big\Vert \< X \>^{-2} B^\sigma \ad_{\ii B^\sigma}^1( \varphi_\sigma( H_{\alpha,\sigma} - E_{\alpha,\sigma} ) ) \big\Vert \leq \mathrm{C}_\varphi \sigma.
\end{align}
Eventually, Lemma \ref{d2} gives
\begin{equation} \label{g7}
\big\Vert \< X \>^{-2} ( \ii B^\sigma )^2 \varphi_\sigma( H_{\alpha,\sigma} - E_{\alpha,\sigma} ) \big\Vert \leq \mathrm{C}_{\varphi} \sigma .
\end{equation}
The lemma now follows from \eqref{d4}, \eqref{d5}, \eqref{d6} and \eqref{g7}.
\end{proof}

\begin{corollary}\sl \label{c9}
There exists $\alpha_c>0$ such that, for all $\varphi \in \mathrm{C}_0^\infty( (0, 1 ) ; \mathbb{R} )$ and $\delta > 0$, there exists $\mathrm{C}_{\varphi , \delta} > 0$ such that, for all $0 \leq \alpha \leq \alpha_c$ and $0 < \sigma \leq e_{\mathrm{gap}} / 2$,
\begin{equation*}
\big\Vert \< X \>^{-2} \varphi_\sigma( H_{\alpha} - E_{\alpha} ) ( B^\sigma )^2 \big\Vert \leq \mathrm{C}_{\varphi , \delta} \sigma^{1 - \delta} .
\end{equation*}
\end{corollary}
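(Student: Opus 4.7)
The plan is to mimic the proof of Corollary \ref{c16}, replacing $B^\sigma$ by $(B^\sigma)^2$ and using the second-order estimates of the preceding subsection. I decompose
\begin{equation*}
\< X \>^{-2} \varphi_\sigma( H_{\alpha} - E_{\alpha} ) ( B^\sigma )^2 = \< X \>^{-2} \varphi_\sigma( H_{\alpha,\sigma} - E_{\alpha,\sigma} ) ( B^\sigma )^2 + \< X \>^{-2} R_\sigma ( B^\sigma )^2 ,
\end{equation*}
where $R_\sigma := \varphi_\sigma( H_{\alpha} - E_{\alpha} ) - \varphi_\sigma( H_{\alpha,\sigma} - E_{\alpha,\sigma} )$. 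The first summand is exactly what Corollary \ref{d3} bounds: its norm is $\leq \mathrm{C}_\varphi \sigma$.

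For the second summand, I move $(B^\sigma)^2$ to the left of $R_\sigma$ using the algebraic identity (valid on $D((B^\sigma)^2)$ by iterating Lemma \ref{e36})
\begin{equation*}
R_\sigma ( B^\sigma )^2 = ( B^\sigma )^2 R_\sigma - 2 B^\sigma \, \mathrm{ad}_{B^\sigma}( R_\sigma ) + \mathrm{ad}_{B^\sigma}^2( R_\sigma ) .
\end{equation*}
The first term on the right-hand side, multiplied on the left by $\< X \>^{-2}$, is bounded by $\bigl\Vert \< X \>^{-2} (B^\sigma)^2 \bigr\Vert \cdot \Vert R_\sigma \Vert \leq \mathrm{C} \sigma \cdot 2 \Vert \varphi \Vert_\infty$: the operator-norm bound on $\< X \>^{-2} (B^\sigma)^2$ follows from Lemma \ref{d2} by taking adjoints, and $R_\sigma$ is trivially bounded. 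The second term is bounded by $2 \bigl\Vert \< X \>^{-1} \bigr\Vert \cdot \bigl\Vert \< X \>^{-1} B^\sigma \bigr\Vert \cdot \bigl\Vert \mathrm{ad}_{B^\sigma}( R_\sigma ) \bigr\Vert$; Lemma \ref{d11} gives $\bigl\Vert \< X \>^{-1} B^\sigma \bigr\Vert \leq \mathrm{C} \sigma$ (again by duality), and Proposition \ref{e37} furnishes the complementary estimate on the single commutator. Finally, the third term is dominated by $\bigl\Vert \mathrm{ad}_{B^\sigma}^2( R_\sigma ) \bigr\Vert$, which is supplied at order two by (an iterated version of) Proposition \ref{e37}. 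Putting these together yields $\bigl\Vert \< X \>^{-2} R_\sigma ( B^\sigma )^2 \bigr\Vert \leq \mathrm{C}_{\varphi , \delta} \sigma^{1 - \delta}$, and combining with the first summand completes the proof.

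The main technical obstacle is the rigorous handling of the commutator expansion: $B^\sigma$ is unbounded, and one must check that the iterated commutators $\mathrm{ad}_{B^\sigma}^k( R_\sigma )$ extend to bounded operators on $\mathcal{H}$ with the correct power of $\sigma$. This is precisely what Proposition \ref{e37} is designed to provide; it exploits a resolvent identity relating $H_\alpha$ and $H_{\alpha,\sigma}$ together with infrared cutoff estimates on the form factor $h - h_{\ge \sigma}$, and is responsible for the loss of $\sigma^{-\delta}$ relative to the sharp power $\sigma$ obtained in Corollary \ref{d3}.
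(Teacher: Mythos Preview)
Your proof is correct and follows exactly the approach the paper sketches: decompose via the infrared-cutoff Hamiltonian, invoke Corollary~\ref{d3} for the main term, and for the remainder $R_\sigma$ push $(B^\sigma)^2$ through using the commutator expansion, bounding the resulting pieces by Lemma~\ref{d2}, Lemma~\ref{d11}, Proposition~\ref{c10}, and Proposition~\ref{e37}. Two minor remarks: with the paper's convention $\ad_A(P)=[P,A]$ the middle term carries a $+2$ rather than $-2$ (immaterial for norms), and the justification that the commutator expansion is valid on $D((B^\sigma)^2)$ comes from Lemma~\ref{e22} together with Lemma~\ref{e36}, not Lemma~\ref{e36} alone.
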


\begin{proof}
The result can be proven exactly in the same way as Corollary \ref{c16}, using Lemma \ref{d2}, Corollary \ref{d3}, Proposition \ref{c10} and Proposition \ref{e37}.
\end{proof}

\section{Proof of the main theorems}\label{f5}

\subsection{Proof of the limiting absorption principle}

We are now ready to prove Theorem \ref{d28}. We can assume that $s \in ( 1/2 , 1 ]$. We define the set of dyadic numbers
\begin{equation*}
{\mathcal D} : = \{ 2^{-n} ; \ n \in \Z \text{ and } 2^{- n} \leq e_{\mathrm{gap}} / 2 \} .
\end{equation*}
Consider $\varphi \in \mathrm{C}_0^\infty( ( 0 , 1 ) ; \R )$ satisfying
\begin{equation} \label{c3}
\forall x \in (0,e_{\mathrm{gap}}/3 ], \qquad \sum_{\sigma \in {\mathcal D}} \varphi_\sigma ( x ) = 1 ,
\end{equation}
and define $\overline{\varphi} \in \mathrm{C}^\infty ( \R ; \R )$ by $\overline{\varphi} (x) =0$ for $x \leq 0$ and
\begin{equation} \label{c4}
\forall x > 0 , \qquad \overline{\varphi} (x) = 1 - \sum_{\sigma \in {\mathcal D}} \varphi_\sigma ( x ) .
\end{equation}
In particular, $\supp ( \overline{\varphi} ) \subset [ e_{\mathrm{gap}}/3 , + \infty )$. Let $\widetilde{\varphi} \in \mathrm{C}_0^\infty( (0, 1 ) ; \mathbb{R} )$ be such that $\widetilde{\varphi} \varphi = \varphi$.

For $\re z \leq E_{\alpha} + e_{\mathrm{gap}} / 4$, the properties of the support of $\overline{\varphi}$ and the spectral theorem give
\begin{align}
{\mathcal J} : ={}& \big\Vert \< X \>^{-s} ( H_\alpha - z )^{-1} \bar{\Pi}_\alpha \< X \>^{-s} \big\Vert \nonumber \\
\leq{}& \sum_{\sigma \in {\mathcal D}} {\mathcal J}_{\sigma} + \big\Vert \< X \>^{-s} ( H_\alpha - z )^{-1} \overline{\varphi} ( H_\alpha - E_\alpha ) \< X \>^{-s} \big\Vert  \nonumber \\
\leq{}& \sum_{\sigma \in {\mathcal D}} {\mathcal J}_{\sigma} + \mathrm{C}   \label{c5}
\end{align}
with $\mathrm{C} > 0$ and
\begin{equation*}
{\mathcal J}_{\sigma} : = \big\Vert \< X \>^{-s} ( H_\alpha - z )^{-1} \varphi_{\sigma} ( H_\alpha - E_\alpha ) \< X \>^{-s} \big\Vert .
\end{equation*}
Theorem \ref{d28} follows from \eqref{c5}, the next lemma and the assumption $s > 1/2$.

\begin{lemma}\sl \label{c7}
There exists $\alpha_{c} > 0$ such that, for all $s \in ( 1/2 , 1 ]$ and $\delta > 0$, there exists $\mathrm{C}_{s , \delta} > 0$ such that, for all $0 \leq \alpha \leq \alpha_c$, $\re z \leq E_{\alpha} + e_{\mathrm{gap}} / 4$, $\im z \neq 0$ and $\sigma \in {\mathcal D}$,
\begin{equation*}
{\mathcal J}_{\sigma} \leq \mathrm{C}_{s , \delta} \sigma^{2 s -1 - \delta} .
\end{equation*}
\end{lemma}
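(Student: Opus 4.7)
The plan is to split according to whether $\re z$ is close to or far from the spectral support of $\varphi_\sigma ( H_\alpha - E_\alpha )$, and in each case to combine the uniform resolvent bound of Proposition \ref{c8} with the Hardy-type estimates of Section \ref{f4}. I would choose $\widetilde\varphi \in \mathrm{C}_0^\infty ( ( 0 , 1 ) ; \R )$ with $\widetilde\varphi \varphi = \varphi$, and a bounded open interval $I \Subset ( 0 , 1 )$ containing $\supp \widetilde\varphi$. Since $\widetilde\varphi_\sigma \varphi_\sigma = \varphi_\sigma$ and $\widetilde\varphi_\sigma ( H_\alpha - E_\alpha )$ commutes with $( H_\alpha - z )^{-1}$ by functional calculus, the identity
\begin{equation*}
( H_\alpha - z )^{-1} \varphi_\sigma ( H_\alpha - E_\alpha ) = \widetilde\varphi_\sigma ( H_\alpha - E_\alpha ) \, ( H_\alpha - z )^{-1} \, \widetilde\varphi_\sigma ( H_\alpha - E_\alpha ) \, \varphi_\sigma ( H_\alpha - E_\alpha )
\end{equation*}
isolates the resolvent between two spectral cutoffs and is the starting point.

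In the first case, $\re z \notin E_\alpha + \sigma I$, the construction of $I$ gives $\operatorname{dist} ( \re z , E_\alpha + \sigma \supp \widetilde\varphi ) \gtrsim \sigma$, so the spectral theorem yields $\Vert \widetilde\varphi_\sigma ( H_\alpha - E_\alpha ) ( H_\alpha - z )^{-1} \widetilde\varphi_\sigma ( H_\alpha - E_\alpha ) \Vert \lesssim \sigma^{-1}$. I would then apply Hadamard's three-lines lemma to $z \mapsto \langle X \rangle^{-z} \varphi_\sigma ( H_\alpha - E_\alpha )$ on the strip $0 \leq \re z \leq 1$, interpolating between the trivial sup-norm bound and Corollary \ref{c15}, to obtain $\Vert \langle X \rangle^{-s} \varphi_\sigma ( H_\alpha - E_\alpha ) \Vert \lesssim \sigma^s$, which closes this case with $\mathcal{J}_\sigma \lesssim \sigma^{2 s - 1}$.

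In the second case, $\re z \in E_\alpha + \sigma I$, I would sandwich the middle resolvent with weights $\langle B^\sigma \rangle^{\pm s}$:
\begin{equation*}
\mathcal{J}_\sigma \leq \big\Vert \langle X \rangle^{-s} \widetilde\varphi_\sigma \langle B^\sigma \rangle^s \big\Vert \cdot \big\Vert \langle B^\sigma \rangle^{-s} ( H_\alpha - z )^{-1} \langle B^\sigma \rangle^{-s} \big\Vert \cdot \big\Vert \langle B^\sigma \rangle^s \widetilde\varphi_\sigma \varphi_\sigma \langle X \rangle^{-s} \big\Vert .
\end{equation*}
Since $s > 1/2$ and $\langle B^\sigma \rangle \geq 1$, Proposition \ref{c8} with $n = 1$ applied on an open interval slightly larger than $I$ bounds the middle factor by $\mathrm{C} \sigma^{-1}$. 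For the first outer factor, the plan is to apply Stein interpolation to the analytic family $T_z := \langle X \rangle^{-z} \widetilde\varphi_\sigma ( H_\alpha - E_\alpha ) \langle B^\sigma \rangle^z$ on the strip $0 \leq \re z \leq 1$; unitarity of $\langle X \rangle^{-\ii y}$ and $\langle B^\sigma \rangle^{\ii y}$ gives $\Vert T_{\ii y} \Vert = \Vert T_0 \Vert \leq \Vert \widetilde\varphi \Vert_\infty$, while the decomposition $\Vert T_1 \Vert^2 \leq \Vert \langle X \rangle^{-1} \widetilde\varphi_\sigma \Vert^2 + \Vert \langle X \rangle^{-1} \widetilde\varphi_\sigma B^\sigma \Vert^2$ combined with Corollaries \ref{c15} and \ref{c16} gives $\Vert T_1 \Vert \lesssim \sigma^{1 - \delta'}$. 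The three-lines lemma then yields $\Vert T_s \Vert \lesssim \sigma^{s ( 1 - \delta' )}$, and a parallel argument for the third factor delivers the same bound. Multiplying, $\mathcal{J}_\sigma \lesssim \sigma^{- 1 + 2 s ( 1 - \delta' )} = \sigma^{2 s - 1 - 2 s \delta'}$, which gives the stated inequality after renaming $\delta'$.

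I expect the main technical difficulty to be the rigorous implementation of Stein interpolation: one must identify a dense subspace (for instance vectors in the range of joint spectral projectors of $\langle X \rangle$ and $\langle B^\sigma \rangle$ over bounded sets) on which $z \mapsto T_z \Phi$ is holomorphic with controlled growth on the strip, so that Hadamard's three-lines lemma applied to $\langle \Psi , T_z \Phi \rangle$ delivers the geometric-mean estimate. Once this is secured, every other step reduces either to the functional calculus, the uniform resolvent bound of Proposition \ref{c8}, or the Hardy-type estimates of Section \ref{f4}.
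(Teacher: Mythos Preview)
Your proposal is correct and follows essentially the same route as the paper: both split into the cases $\re z$ far from versus near the support of $\varphi_\sigma$, handle the far case by the spectral theorem plus interpolation of Corollary \ref{c15}, and handle the near case by sandwiching the resolvent with $\langle B^\sigma\rangle^{\pm s}$ and invoking Proposition \ref{c8} together with an interpolation of Corollaries \ref{c15} and \ref{c16}. The only difference is cosmetic---you spell out the Stein/Hadamard interpolation explicitly where the paper simply writes ``with an interpolation argument''---and your concern about domains for the analytic family is legitimate but standard.
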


\begin{proof}[Proof of Lemma \ref{c7}]
Let $M > 2$ be a large enough constant such that $\supp ( \varphi ) \subset [ 2 /M , M /2 ]$. We define
\begin{equation*}
\Delta : = {\mathcal D} \cap \big[ \re ( z - E_{\alpha} ) / M , M \re ( z - E_{\alpha} ) \big] .
\end{equation*}
In particular, $\Delta = \emptyset$ when $\re z \leq E_{\alpha}$. We distinguish between different cases.

Assume first that $\sigma \in {\mathcal D} \setminus \Delta$. For $h$ in the support of $\varphi_{\sigma} ( \cdot - E_{\alpha} )$, we have $\vert h -z \vert^{-1} \leq \mathrm{C} \sigma^{-1}$. Then, the spectral theorem gives
\begin{equation} \label{c6}
\big\Vert ( H_\alpha - z )^{-1} \varphi_{\sigma} ( H_\alpha - E_\alpha ) \big\Vert \leq \mathrm{C} \sigma^{-1},
\end{equation}
for $\sigma \in {\mathcal D} \setminus \Delta$. Corollary \ref{c15} (with an interpolation argument) and \eqref{c6} yield
\begin{equation*}
{\mathcal J}_{\sigma} \leq \big\Vert ( H_\alpha - z )^{-1} \varphi_{\sigma} ( H_\alpha - E_\alpha ) \big\Vert \big\Vert \< X \>^{- s} \widetilde{\varphi}_{\sigma} ( H_\alpha - E_\alpha ) \big\Vert^{2} \leq \mathrm{C} \sigma^{-1} \sigma^{2 s} = \mathrm{C} \sigma^{2 s - 1} .
\end{equation*}

We now assume that $\sigma \in \Delta$, that is $\re z \in E_\alpha + \sigma [ 1/M , M ]$. From Proposition \ref{c8} for $n=1$, Corollary \ref{c15} and Corollary \ref{c16} (with an interpolation argument), we get
\begin{align*}
{\mathcal J}_{\sigma} &\leq \big\Vert \< X \>^{- s} \widetilde{\varphi}_{\sigma} ( H_\alpha - E_\alpha ) \< B^\sigma \>^{s} \big\Vert \big\Vert \< B^\sigma \>^{- s} ( H_\alpha - z )^{-1} \< B^\sigma \>^{- s} \big\Vert \big\Vert \< B^\sigma \>^{s} \varphi_{\sigma} ( H_\alpha - E_\alpha ) \< X \>^{- s} \big\Vert \\
&\leq \mathrm{C}_{s , \delta} \sigma^{s - \delta} \sigma^{-1} \sigma^{s - \delta} = \mathrm{C}_{s , \delta} \sigma^{2s-1 - 2 \delta} ,
\end{align*}
which finishes the proof of the lemma.
\end{proof}

\subsection{Proof of the smoothness of the resolvent}

We now show the H\"{o}lder continuity of the resolvent. Let $z , z^{\prime} \in \C \setminus \R$ with $\re z, \re z^{\prime} \leq E_{\alpha} + e_{ \mathrm{gap} } / 4$ and $\im z \cdot \im z^{\prime} > 0$. We can assume that $\re z, \re z^{\prime} \geq - 1$ and $-1 \le \im z , \im z' \le 1$. In the following, $z^{\#}$ will denote either $z$ or $z^{\prime}$. We have to estimate
\begin{equation} \label{b4}
{\mathcal K} : = \big\Vert \< X \>^{- s} \big( ( H_\alpha - z )^{-1} - ( H_\alpha - z^{\prime} )^{-1} \big) \bar{\Pi}_\alpha \< X \>^{- s} \big\Vert .
\end{equation}
Using \eqref{c3}--\eqref{c4} and the spectral theorem, we obtain, as in \eqref{c5},
\begin{align} \label{b1}
{\mathcal K} &\leq \sum_{\sigma \in {\mathcal D}} {\mathcal K}_{\sigma} + \big\Vert \< X \>^{- s} (z - z^{\prime}) ( H_\alpha - z )^{-1}  ( H_\alpha - z^{\prime} )^{-1} \overline{\varphi} ( H_\alpha - E_\alpha ) \< X \>^{- s} \big\Vert  \nonumber  \\
&\leq \sum_{\sigma \in {\mathcal D}} {\mathcal K}_{\sigma} + \mathrm{C} \vert z - z^{\prime} \vert ,
\end{align}
with $\mathrm{C} > 0$ and
\begin{equation*}
{\mathcal K}_{\sigma} = \big\Vert \< X \>^{- s} \big( ( H_\alpha - z )^{-1} - ( H_\alpha - z^{\prime} )^{-1} \big) \varphi_{\sigma} ( H_\alpha - E_\alpha ) \< X \>^{- s} \big\Vert .
\end{equation*}

Let $M > 2$ be a large enough constant such that $\supp ( \varphi ) \subset [ 2 /M , M /2 ]$. As before, we define the set of dyadic numbers
\begin{equation*}
\Delta^{\#} : = {\mathcal D} \cap \big[ \re ( z^{\#} - E_{\alpha} ) / M , M \re ( z^{\#} - E_{\alpha} ) \big] .
\end{equation*}
The ${\mathcal K}_{\sigma}$'s satisfy

\begin{lemma}\sl \label{b7}
There exists $\alpha_{c} > 0$ such that, for all $s \in ( 1/2 , 3/2 )$ and $\varepsilon > 0$, there exists $\mathrm{C}_{s , \varepsilon} > 0$ such that, for all $0 \leq \alpha \leq \alpha_c$ and $z , z^{\prime} \in \C \setminus \R$ with $- 1 \leq \re z, \re z^{\prime} \leq E_{\alpha} + e_{ \mathrm{gap} } / 4$, $-1 \le \im z , \im z' \le 1$ and $\im z \cdot \im z^{\prime} > 0$, we have
\begin{equation*}
{\mathcal K}_{\sigma} \leq \left\{ \begin{aligned}
&\mathrm{C}_{s , \varepsilon} \sigma^{\min ( s - \frac{1}{2} , \frac{3}{2} - s)} \vert z - z^{\prime} \vert^{s - \frac{1}{2} - \varepsilon} &&\text{ for } \sigma \in {\mathcal D} \setminus ( \Delta \cup \Delta^{\prime} ) ,  \\
&\mathrm{C}_{s , \varepsilon} \vert z - z^{\prime} \vert^{s - \frac{1}{2} - \varepsilon} &&\text{ for } \sigma \in \Delta \cup \Delta^{\prime} ,
\end{aligned} \right.
\end{equation*}
\end{lemma}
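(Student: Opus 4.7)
The plan is to prove Lemma \ref{b7} by interpolating between two bounds on $\mathcal{K}_\sigma$: a bound uniform in $z - z'$ that extends the argument of Lemma \ref{c7} to the full range $s \in ( 1/2 , 3/2 )$, and a bound carrying a factor $\vert z - z' \vert$ obtained from the resolvent identity together with Proposition \ref{c8} at level $n = 2$.

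For the uniform bound, I would follow the proof of Lemma \ref{c7} verbatim but with the outer Hardy-type factors $\Vert \langle X \rangle^{-s} \widetilde\varphi_\sigma \langle B^\sigma \rangle^{1/2 + \varepsilon} \Vert$ controlled for the extended range of $s$ by interpolating the endpoint estimates of Section \ref{f4} (Corollaries \ref{c15}, \ref{c16} and \ref{c9}). This yields $\mathcal{J}_\sigma \leq C \sigma^{\min ( 2 s - 1 ,\, 1 ) - \delta}$ for $\sigma \in \mathcal{D} \setminus ( \Delta \cup \Delta' )$ and $\mathcal{J}_\sigma \leq C_\delta \sigma^{2 s - 1 - \delta}$ (hence $\leq C_\delta$) for $\sigma \in \Delta \cup \Delta'$, and the same bounds transfer to $\mathcal{K}_\sigma \leq 2 \mathcal{J}_\sigma ( z ) + 2 \mathcal{J}_\sigma ( z' )$.

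For the $\vert z - z' \vert$-linear bound, I would use
\begin{equation*}
( H_\alpha - z )^{-1} - ( H_\alpha - z' )^{-1} = - ( z - z' ) \int_0^1 ( H_\alpha - z_t )^{-2}\, \d t , \qquad z_t = ( 1 - t ) z + t z' ,
\end{equation*}
which is well-defined since $\im z \cdot \im z' > 0$ keeps the straight path off the real axis. Inserting $\widetilde\varphi_\sigma$ (which commutes with each $( H_\alpha - z_t )^{-2}$) and sandwiching with the $\langle B^\sigma \rangle^{-3/2 - \varepsilon}$-weights dictated by Proposition \ref{c8} at $n = 2$ gives, uniformly in $t$ and for $\sigma \in \Delta \cup \Delta'$,
\begin{equation*}
\big\Vert \langle X \rangle^{-s} ( H_\alpha - z_t )^{-2} \varphi_\sigma \langle X \rangle^{-s} \big\Vert \leq C \sigma^{-2} \big\Vert \langle X \rangle^{-s} \widetilde\varphi_\sigma \langle B^\sigma \rangle^{3/2 + \varepsilon} \big\Vert^2 ,
\end{equation*}
while for $\sigma \in \mathcal{D} \setminus ( \Delta \cup \Delta' )$ the elementary spectral bound $\Vert ( H_\alpha - z_t )^{-1} \varphi_\sigma \Vert \leq C / \sigma$ yields the same $\sigma^{-2}$ factor without invoking the Mourre apparatus. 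Integrating in $t$ then produces $\mathcal{K}_\sigma \leq C \vert z - z' \vert \, \sigma^\beta$ with exponent $\beta$ essentially one unit (of $\sigma$) below the exponent $\alpha$ of the uniform bound, reflecting the extra resolvent.

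Finally, combining the two bounds through the elementary inequality $\min ( a , b ) \leq a^{1 - \theta} b^\theta$ with $\theta = s - 1/2 - \varepsilon$ gives $\mathcal{K}_\sigma \leq C \sigma^{\alpha - \theta} \vert z - z' \vert^\theta$; a case analysis on the sub-regimes $s \in ( 1/2 , 1 ]$ versus $s \in ( 1 , 3/2 )$ and $\vert z - z' \vert \leq \sigma$ versus $\vert z - z' \vert > \sigma$ assembles the claimed $\sigma^{\min ( s - 1/2 ,\, 3/2 - s )}$ prefactor for $\sigma \notin \Delta \cup \Delta'$ and a constant prefactor for $\sigma \in \Delta \cup \Delta'$. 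The main technical obstacle is the sharp control of the mismatched factor $\Vert \langle X \rangle^{-s} \widetilde\varphi_\sigma \langle B^\sigma \rangle^{3/2 + \varepsilon} \Vert$ for $s < 3/2$: the Hardy-type estimates of Section \ref{f4} are sharp only at matched exponents $r \in \{ 0 , 1 , 2 \}$, so one must first commute $\widetilde\varphi_\sigma$ past $\langle B^\sigma \rangle^{3/2 + \varepsilon}$ using the iterated-commutator bounds underlying Lemma \ref{c18} and then interpolate between the matched-exponent Hardy estimates. The threshold $s = 3/2$ in the statement reflects precisely the critical exponent of the Hardy inequality on $\mathrm{L}^2 ( \mathbb{R}^3 )$, as noted in Remark $ii)$ after Theorem \ref{c1}.
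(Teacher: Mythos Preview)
Your strategy coincides with the paper's for $\sigma \in \mathcal{D} \setminus (\Delta \cup \Delta')$, where the interpolation between the uniform estimate and the resolvent-identity estimate is exactly what the paper does (though the paper uses the algebraic identity $(H_\alpha-z)^{-1}-(H_\alpha-z')^{-1}=(z-z')(H_\alpha-z)^{-1}(H_\alpha-z')^{-1}$ rather than the path integral, thereby avoiding any question about intermediate $z_t$). The gap is in your treatment of $\sigma \in \Delta \cup \Delta'$.

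The mismatched factor $\Vert\langle X\rangle^{-s}\widetilde\varphi_\sigma(H_\alpha-E_\alpha)\langle B^\sigma\rangle^{3/2+\varepsilon}\Vert$ with $s<3/2$ is not controlled by your proposed commutation-then-interpolation step. After commuting $\widetilde\varphi_\sigma$ past $\langle B^\sigma\rangle^{3/2+\varepsilon}$ the main term is $\langle X\rangle^{-s}\langle B^\sigma\rangle^{3/2+\varepsilon}\widetilde\varphi_\sigma$, and the operator $\langle X\rangle^{-s}\langle B^\sigma\rangle^{3/2+\varepsilon}$ with the energy cutoff removed is unbounded: already on the one-photon sector, Lemma~\ref{d11} gives only $\langle b^\sigma\rangle^r\lesssim\langle\sigma|\ii\nabla_k|\rangle^r$, and $\langle|\ii\nabla_k|\rangle^{-s}\langle\sigma|\ii\nabla_k|\rangle^{3/2+\varepsilon}$ is an unbounded function of $|\ii\nabla_k|$ when $s<3/2+\varepsilon$. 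No interpolation between the matched-exponent estimates of Section~\ref{f4} (all of the form $\Vert\langle X\rangle^{-r}\widetilde\varphi_\sigma\langle B^\sigma\rangle^r\Vert$) can reach a mismatched pair. The paper bypasses the problem by never asking for a mismatched bound: it establishes the $|z-z'|$-linear estimate only at the matched endpoint $s=3/2+\varepsilon$ and the uniform estimate only at $s=1/2+\varepsilon$ (Lemma~\ref{c7}), and then interpolates \emph{in $s$}. A second, more minor point: your use of Proposition~\ref{c8} at $n=2$ along the segment $z_t$ presupposes $\mathrm{Re}\,z_t\in E_\alpha+\sigma I$ for all $t$, which fails if $\sigma\in\Delta\setminus\Delta'$ with $\Delta\cap\Delta'=\emptyset$; the paper treats this sub-case separately by observing that $\Delta\cap\Delta'=\emptyset$ forces $\sigma\le\mathrm{C}|z-z'|$, so the $n=1$ uniform bound already carries the H\"older factor.
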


We first assume Lemma \ref{b7} and finish the proof of Theorem \ref{b8}. Using $1/2 < s < 3/2$ and that the cardinals of $\Delta$ and $\Delta^{\prime}$ are uniformly bounded with respect to $z$, $z^{\prime}$, \eqref{b1} gives
\begin{align*}
{\mathcal K} &\leq \sum_{\sigma \in {\mathcal D} \setminus ( \Delta \cup \Delta^{\prime} )} {\mathcal K}_{\sigma} + \sum_{\sigma \in \Delta \cup \Delta^{\prime}} {\mathcal K}_{\sigma} + \mathrm{C} \vert z - z^{\prime} \vert  \\
&\leq \sum_{\sigma \in {\mathcal D} \setminus ( \Delta \cup \Delta^{\prime} )} \mathrm{C}_{s , \varepsilon} \sigma^{\min ( s - \frac{1}{2} , \frac{3}{2} - s)} \vert z - z^{\prime} \vert^{s - \frac{1}{2} - \varepsilon} + ( \# \Delta + \# \Delta^{\prime} ) \mathrm{C}_{s , \varepsilon} \vert z - z^{\prime} \vert^{s - \frac{1}{2} - \varepsilon} + \mathrm{C} \vert z - z^{\prime} \vert  \\
&\leq \mathrm{C}_{s , \varepsilon} \vert z - z^{\prime} \vert^{s - \frac{1}{2}-\varepsilon} ,
\end{align*}
which is the required H\"{o}lder regularity of the resolvent from \eqref{b4}.

\begin{proof}[Proof of Lemma \ref{b7}] We distinguish between different cases.

Assume first that $\sigma \in {\mathcal D} \setminus ( \Delta \cup \Delta^{\prime} )$. On one hand, Corollary \ref{c15} and \eqref{c6} give
\begin{align}
{\mathcal K}_{\sigma} \leq{}& \big( \big\Vert ( H_\alpha - z )^{-1} \varphi_{\sigma} ( H_\alpha - E_\alpha ) \big\Vert + \big\Vert ( H_\alpha - z^{\prime} )^{-1} \varphi_{\sigma} ( H_\alpha - E_\alpha ) \big\Vert \big)  \nonumber \\
&\times \big\Vert \< X \>^{- s} \widetilde{\varphi}_{\sigma} ( H_\alpha - E_\alpha ) \big\Vert^2 \nonumber  \\
\leq{}& \mathrm{C} ( \sigma^{-1} + \sigma^{-1} ) \sigma^{2 \min ( 1 , s )} \leq \mathrm{C} \sigma^{\min ( 1 , 2 s -1 )} . \label{b2}
\end{align}
On the other hand, the resolvent identity yields
\begin{align}
{\mathcal K}_{\sigma} &\leq \vert z - z^{\prime} \vert \big\Vert \< X \>^{- s} ( H_\alpha - z )^{-1} ( H_\alpha - z^{\prime} )^{-1} \varphi_{\sigma} ( H_\alpha - E_\alpha ) \< X \>^{- s} \big\Vert   \nonumber  \\
&\leq \vert z - z^{\prime} \vert \big\Vert ( H_\alpha - z )^{-1} ( H_\alpha - z^{\prime} )^{-1} \varphi_{\sigma} ( H_\alpha - E_\alpha ) \big\Vert \big\Vert \< X \>^{- s} \widetilde{\varphi}_{\sigma} ( H_\alpha - E_\alpha ) \big\Vert^2 \nonumber  \\
&\leq \mathrm{C} \vert z - z^{\prime} \vert \sigma^{-2} \sigma^{2 \min ( 1 , s )} \leq \mathrm{C} \sigma^{\min ( 0 , 2 s - 2 )} \vert z - z^{\prime} \vert . \label{b3}
\end{align}
Thus, combining \eqref{b2} and \eqref{b3}, we get
\begin{equation*}
{\mathcal K}_{\sigma} \leq \mathrm{C} \sigma^{( \frac{3}{2} - s) \min ( 1 , 2 s -1 )} \sigma^{(s - \frac{1}{2} ) \min ( 0 , 2 s - 2 )} \vert z - z^{\prime} \vert^{s - \frac{1}{2}} \leq \mathrm{C} \sigma^{\min ( s - \frac{1}{2} , \frac{3}{2} - s)} \vert z - z^{\prime} \vert^{s - \frac{1}{2}} ,
\end{equation*}
and the first estimate of Lemma \ref{b7} follows.

We now assume that $\sigma \in \Delta \cup \Delta^{\prime}$ and $\Delta \cap \Delta^{\prime} = \emptyset$. If $\sigma \in \Delta$, Proposition \ref{c8} with $n =1$, Corollary \ref{c15}, Corollary \ref{c16}, Corollary \ref{c9} and the proof of \eqref{b2} imply
\begin{align*}
{\mathcal K}_{\sigma} \leq{}& \big\Vert \< X \>^{- s} \widetilde{\varphi}_{\sigma} ( H_\alpha - E_\alpha ) \< B^\sigma \>^{s} \big\Vert^{2} \big\Vert \< B^\sigma \>^{- s} ( H_\alpha - z )^{-1} \< B^\sigma \>^{- s} \big\Vert  \\
&+ \big\Vert ( H_\alpha - z^{\prime} )^{-1} \varphi_{\sigma} ( H_\alpha - E_\alpha ) \big\Vert \big\Vert \< X \>^{- s} \widetilde{\varphi}_{\sigma} ( H_\alpha - E_\alpha ) \big\Vert^2   \\
\leq{}& \mathrm{C}_{s , \delta} \sigma^{2 \min ( 1 , s ) - 2 \delta} \sigma^{- 1} + \mathrm{C} \sigma^{-1} \sigma^{2 \min ( 1 , s )} \leq \mathrm{C}_{s , \delta} \sigma^{\min ( 1 , 2 s -1 ) - 2 \delta} ,
\end{align*}
for all $\delta > 0$. Furthermore, since $\Delta \cap \Delta^{\prime} = \emptyset$, we have $\sigma \leq \mathrm{C} \vert z  - z^{\prime} \vert$ and the last equation becomes
\begin{equation*}
{\mathcal K}_{\sigma} \leq \mathrm{C}_{s , \delta} \sigma^{\min ( s - \frac{1}{2} , \frac{3}{2} - s) + \varepsilon - 2 \delta} \vert z - z^{\prime} \vert^{s - \frac{1}{2} - \varepsilon} = \mathrm{C}_{s , \varepsilon} \sigma^{\min ( s - \frac{1}{2} , \frac{3}{2} - s)} \vert z - z^{\prime} \vert^{s - \frac{1}{2} - \varepsilon} ,
\end{equation*}
with $\delta = \varepsilon /2$. For $\sigma \in \Delta^{\prime}$, ${\mathcal K}_{\sigma}$ satisfies the same estimate if $\Delta \cap \Delta^{\prime} = \emptyset$.

It remains to study $\sigma \in \Delta \cup \Delta^{\prime}$ under the condition $\Delta \cap \Delta^{\prime} \neq \emptyset$. We denote $I : = [ 1/M^{3} , M^{3} ] \subset ( 0 , + \infty )$. For $\sigma \in \Delta \cup \Delta^{\prime}$ with $\Delta \cap \Delta^{\prime} \neq \emptyset$, we have $\re z , \re z^{\prime} \in E_{\alpha} + \sigma I$. Proposition \ref{c8} with $n=2$ then gives
\begin{equation*}
\sup_{\fract{\re w \in E_{\alpha} + \sigma I}{\im w \neq 0}} \big\Vert \< B^{\sigma} \>^{- \frac{3}{2} - \varepsilon} ( H_\alpha - w )^{- 2} \< B^{\sigma} \>^{- \frac{3}{2} - \varepsilon} \big\Vert \leq \mathrm{C}_{\varepsilon} \sigma^{- 2} .
\end{equation*}
In particular, since $\re z , \re z^{\prime} \in E_{\alpha} + \sigma I$ and $\im z \cdot \im z^{\prime} > 0$, the mean-value theorem implies
\begin{equation*}
\big\Vert \< B^{\sigma} \>^{- \frac{3}{2} - \varepsilon} \big( ( H_\alpha - z )^{- 1} - ( H_{\alpha} - z^{\prime} )^{-1} \big) \< B^{\sigma} \>^{- \frac{3}{2} - \varepsilon} \big\Vert \leq \mathrm{C}_{\varepsilon} \sigma^{- 2} \vert z - z^{\prime} \vert .
\end{equation*}
On the other hand, combining Corollary \ref{c15}, Corollary \ref{c16} and Corollary \ref{c9} with an interpolation argument, we get
\begin{equation*}
\big\Vert \< X \>^{- \frac{3}{2} - \varepsilon} \chi_{\sigma} ( H_\alpha - E_\alpha ) \< B^\sigma \>^{\frac{3}{2} + \varepsilon} \big\Vert \leq \mathrm{C}_{\varepsilon , \delta} \sigma^{1 - \delta} ,
\end{equation*}
for all $\chi \in \mathrm{C}^{\infty}_{0} ( ( 0 , 1 ) )$ and $\delta > 0$. Then, the last two estimates yield
\begin{align*}
\big\Vert \< X \>^{- \frac{3}{2} - \varepsilon} \big( ( H_\alpha & - z )^{- 1} - ( H_{\alpha} - z^{\prime} )^{-1} \big) \varphi_{\sigma} ( H_\alpha - E_\alpha ) \< X \>^{- \frac{3}{2} - \varepsilon} \big\Vert    \\
\leq{}& \big\Vert \< B^{\sigma} \>^{- \frac{3}{2} - \varepsilon} \big( ( H_\alpha - z )^{- 1} - ( H_{\alpha} - z^{\prime} )^{-1} \big) \< B^{\sigma} \>^{- \frac{3}{2} - \varepsilon} \big\Vert  \\
&\times \big\Vert \< X \>^{- \frac{3}{2} - \varepsilon} \widetilde{\varphi}_{\sigma} ( H_\alpha - E_\alpha ) \< B^\sigma \>^{\frac{3}{2} + \varepsilon} \big\Vert   \big\Vert \< X \>^{- \frac{3}{2} - \varepsilon} \varphi_{\sigma} ( H_\alpha - E_\alpha ) \< B^\sigma \>^{\frac{3}{2} + \varepsilon} \big\Vert \\
\leq{}& \mathrm{C}_{\varepsilon , \delta} \sigma^{- 2 \delta} \vert z - z^{\prime} \vert .
\end{align*}
Moreover, from Lemma \ref{c7} with $s = 1/2 + \varepsilon$, we have
\begin{align*}
\big\Vert \< X \>^{- \frac{1}{2} - \varepsilon} \big( ( H_\alpha & - z )^{- 1} - ( H_{\alpha} - z^{\prime} )^{-1} \big) \varphi_{\sigma} ( H_\alpha - E_\alpha ) \< X \>^{- \frac{1}{2} - \varepsilon} \big\Vert    \\
\leq{}& \big\Vert \< X \>^{- \frac{1}{2} - \varepsilon} ( H_\alpha - z )^{- 1} \varphi_{\sigma} ( H_\alpha - E_\alpha ) \< X \>^{- \frac{1}{2} - \varepsilon} \big\Vert     \\
&+ \big\Vert \< X \>^{- \frac{1}{2} - \varepsilon} ( H_{\alpha} - z^{\prime} )^{-1} \varphi_{\sigma} ( H_\alpha - E_\alpha ) \< X \>^{- \frac{1}{2} - \varepsilon} \big\Vert    \\
\leq{}& \mathrm{C}_{\varepsilon , \delta} \sigma^{2 \varepsilon - \delta } .
\end{align*}
Then, an interpolation between the last two estimates implies, for $\varepsilon$ small enough,
\begin{equation*}
{\mathcal K}_{\sigma} \leq \mathrm{C}_{s,\varepsilon , \delta} \sigma^{\varepsilon ( 3 - 2 s + 2 \varepsilon + \delta ) - \delta (s + \frac{1}{2} )} \vert z - z^{\prime} \vert^{s - \frac{1}{2} - \varepsilon} \leq \mathrm{C}_{s,\varepsilon} \vert z - z^{\prime} \vert^{s - \frac{1}{2} - \varepsilon},
\end{equation*}
for $s \in ( 1/2 , 3/2 )$ and $\delta \ll \varepsilon$. This finishes the proof of the lemma.
\end{proof}

\subsection{Proof of the local decay}

We finally prove Theorem \ref{c1}. Since the assertion is clear for $s =0$, we can assume that $0 < s < 2$. Let $\varphi , \widetilde{\varphi} \in \mathrm{C}_0^\infty( (0, 1 ) ; \mathbb{R} )$ be as in \eqref{c3}. Then,
\begin{equation} \label{a24}
\forall x \in \supp ( \chi ( \cdot + E_{\alpha} ) ) , \qquad  \mathds{1}_{\{ 0 \}} (x) + \sum_{\sigma \in {\mathcal D}} \varphi_\sigma (x) = 1 .
\end{equation}
From Corollary \ref{c15}, Corollary \ref{c16} and Corollary \ref{c9}, we have
\begin{equation*}
\big\Vert \< X \>^{- s} \widetilde{\varphi}_{\sigma} ( H_{\alpha} - E_{\alpha} ) \langle B^{\sigma} \rangle^{s} \big\Vert \leq \left\{ \begin{aligned}
&\mathrm{C} &&\text{for } s =0 ,  \\
&\mathrm{C}_{\delta} \sigma^{1 - \delta} &&\text{for } s =1 , 2 ,
\end{aligned} \right.
\end{equation*}
for all $\delta > 0$. Therefore, an interpolation argument gives
\begin{equation} \label{a25}
\big\Vert \< X \>^{- s} \widetilde{\varphi}_{\sigma} ( H_{\alpha} - E_{\alpha} ) \langle B^{\sigma} \rangle^{s} \big\Vert \leq \mathrm{C}_{\delta} \sigma^{\min ( 1 , s ) - \delta} ,
\end{equation}
for all $s \in [0 ,2]$. Now, Remark \ref{g8} implies that $(B^{\sigma} )^{n} \chi ( H_{\alpha} ) \langle B^{\sigma} \rangle^{-n}$ is a uniformly bounded operator for all $n \in \mathbb{N} \cup \{ 0 \}$. Therefore, an interpolation argument gives that, for all $s \geq 0$, there exists $\mathrm{C}_{s , \chi} >0$ such that
\begin{align} \label{a26}
\big\Vert \langle B^{\sigma} \rangle^{s} \chi ( H_{\alpha} ) \langle B^{\sigma} \rangle^{- s} \big\Vert \leq \mathrm{C}_{s , \chi} .
\end{align}
Thus, using \eqref{a25} and \eqref{a26}, Proposition \ref{c14} gives
\begin{align}
\big\Vert \< X \>^{- s} e^{- \ii t H_{\alpha}} & \varphi_{\sigma} ( H_{\alpha} - E_{\alpha} ) \chi ( H_{\alpha} ) \< X \>^{- s} \big\Vert  \nonumber  \\
\leq{}& \big\Vert \< X \>^{- s} \widetilde{\varphi}_{\sigma} ( H_{\alpha} - E_{\alpha} ) \langle B^{\sigma} \rangle^{s} \big\Vert    \big\Vert \langle B^{\sigma} \rangle^{- s} e^{- \ii t H_{\alpha}} \varphi_{\sigma} ( H_{\alpha} - E_{\alpha} ) \langle B^{\sigma} \rangle^{s} \big\Vert  \nonumber    \\
&\times  \big\Vert \langle B^{\sigma} \rangle^{s} \chi ( H_{\alpha} ) \langle B^{\sigma} \rangle^{-s} \big\Vert  \big\Vert \langle B^{\sigma} \rangle^{s}  \widetilde{\varphi}_{\sigma} ( H_{\alpha} - E_{\alpha} ) \< X \>^{- s} \big\Vert   \nonumber \\
\leq{}& \mathrm{C}_{s , \delta , \chi} \sigma^{\min ( 1 , s ) - \delta} \langle t \sigma \rangle^{- s} \sigma^{\min ( 1 , s ) - \delta}  \leq \mathrm{C}_{s , \delta , \chi} \sigma^{\min (2 -s , s ) - 2 \delta} \langle t \rangle^{-s} .  \label{a23}
\end{align}
Eventually, \eqref{a24} implies
\begin{align}
\big\Vert \< X \>^{- s} e^{- \ii t H_{\alpha}} \chi ( H_{\alpha} & ) \< X \>^{- s} - \< X \>^{- s} e^{- \ii t E_{\alpha}} \chi ( E_{\alpha} ) \Pi_{\alpha} \< X \>^{- s} \big\Vert  \nonumber \\
&\leq \sum_{\sigma \in {\mathcal D}} \big\Vert \< X \>^{- s} e^{- \ii t H_{\alpha}} \varphi_{\sigma} ( H_{\alpha} - E_{\alpha} ) \chi ( H_{\alpha} ) \< X \>^{- s} \big\Vert  \nonumber \\
&\leq \mathrm{C}_{s , \delta , \chi} \langle t \rangle^{-s} \sum_{\sigma \in {\mathcal D}} \sigma^{\min (2 -s , s ) - 2 \delta} \leq \mathrm{C}_{s , \chi} \langle t \rangle^{-s} ,
\end{align}
since $\min (2 - s , s ) > 0$ for $0 < s < 2$. This finishes the proof of Theorem \ref{c1}.

\appendix

\section{Properties and technicalities}\label{d9}

In this appendix, we collect a few properties regarding the infrared decomposition and the infrared cutoff Hamiltonian which were used in Subsection \ref{d36}. The notations are the ones of Subsection \ref{d36}. 
Moreover, for $f: \mathbb{R}^3 \times \{1,2\} \mapsto \mathbb{C}$ and $\sigma > 0$, we define
\begin{equation}
f^\sigma( k,\lambda ) = f( k,\lambda ) \mathds{1}_{ |k| \le \sigma }(k),
\end{equation}
and, similarly, we set
\begin{equation}
H_f^\sigma = \sum_{\lambda=1,2} \int_{ |k| \le \sigma } |k| a^*_\lambda( k ) a_\lambda( k ) \d k.
\end{equation}
Observe that $H_f^\sigma = \mathds{1}_{ \mathcal{F}_{\ge\sigma} } \otimes \d \Gamma( |k| )$. We begin with recalling the following standard lemma.

\begin{lemma}\sl \label{d10}
Let $f \in \mathrm{L}^2( \mathbb{R}^3 \times \{1,2\} )$ be such that $(k,\lambda) \mapsto |k|^{-1/2} f(k,\lambda) \in \mathrm{L}^2( \mathbb{R}^3 \times \{ 1,2 \} )$. Then, for any $\sigma>0$ and $\rho > 0$, the operators $a(f^\sigma) ( H_f^\sigma + \rho )^{-1/2}$ and $a^*(f^\sigma) ( H_f^\sigma + \rho )^{-1/2}$ extend to bounded operators on $\mathcal{F}$ satisfying
\begin{align*}
\big\Vert a(f^\sigma) ( H_f^\sigma + \rho )^{- \frac{1}{2} } \big\Vert & \leq \big\Vert |k|^{-\frac{1}{2}} f^\sigma \big\Vert ,    \\
\big\Vert a^*(f^\sigma) ( H_f^\sigma + \rho )^{- \frac{1}{2} } \big\Vert & \leq \big\Vert |k|^{-\frac{1}{2}} f^\sigma \big\Vert + \rho^{-\frac{1}{2}} \Vert f^\sigma \Vert .
\end{align*}
Let in addition $g \in \mathrm{L}^2( \mathbb{R}^3 \times \{1,2\} )$ be such that $(k,\lambda) \mapsto |k|^{-1/2} g(k,\lambda) \in \mathrm{L}^2( \mathbb{R}^3 \times \{ 1,2 \} )$. Then we have
\begin{align*}
\big\Vert a(f^\sigma) a(g^\sigma) ( H_f^\sigma + \rho )^{- 1 } \big\Vert & \leq \big\Vert |k|^{-\frac{1}{2}} f^\sigma \big\Vert \big\Vert |k|^{-\frac{1}{2}} g^\sigma \big\Vert ,     \\
\big\Vert a^*(f^\sigma) a(g^\sigma) ( H_f^\sigma + \rho )^{- 1 } \big\Vert & \leq \big( \big\Vert |k|^{-\frac{1}{2}} f^\sigma \big\Vert + \rho^{-\frac{1}{2}} \Vert f^\sigma \Vert \big) \big\Vert |k|^{-\frac{1}{2}} g^\sigma \big\Vert ,        \\
\big\Vert a^*(f^\sigma) a^*(g^\sigma) ( H_f^\sigma + \rho )^{- 1 } \big\Vert & \leq \big( \big\Vert |k|^{-\frac{1}{2}} f^\sigma \big\Vert + \rho^{-\frac{1}{2}} \Vert f^\sigma \Vert \big) \big( \big\Vert |k|^{-\frac{1}{2}} g^\sigma \big\Vert + \rho^{-\frac{1}{2}} \Vert g^\sigma \Vert \big) .
\end{align*}
\end{lemma}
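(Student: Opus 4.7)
The lemma collects standard relative $H_f^\sigma$-bounds on creation and annihilation operators. Under the tensor decomposition $\mathcal{F}\simeq\mathcal{F}_{\ge\sigma}\otimes\mathcal{F}^{\le\sigma}$ one has $a^\#(f^\sigma) = \mathds{1}_{\mathcal{F}_{\ge\sigma}}\otimes a^\#(f^\sigma)$ and $H_f^\sigma = \mathds{1}_{\mathcal{F}_{\ge\sigma}}\otimes \d\Gamma(|k|)$, so it is enough to argue on $\mathcal{F}^{\le\sigma}$, where these are the classical $\rho$--regularized estimates.

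For the single-operator bound on $a$ I would use a Cauchy--Schwarz inequality weighted by $|k|$. Writing
\begin{equation*}
a(f^\sigma)\Psi \;=\; \sum_\lambda\int_{|k|\le\sigma}\overline{f(k,\lambda)}\,a_\lambda(k)\Psi\,\d k
\end{equation*}
and splitting $\overline{f(k,\lambda)} = |k|^{-1/2}\overline{f(k,\lambda)}\cdot|k|^{1/2}$, one obtains
\begin{equation*}
\|a(f^\sigma)\Psi\|^2 \;\le\; \big\||k|^{-1/2}f^\sigma\big\|^2 \,\langle\Psi,H_f^\sigma\Psi\rangle,
\end{equation*}
using the identity $\sum_\lambda\int_{|k|\le\sigma}|k|\|a_\lambda(k)\Psi\|^2\,\d k = \langle\Psi,H_f^\sigma\Psi\rangle$. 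Setting $\Psi = (H_f^\sigma+\rho)^{-1/2}\Phi$ and using $H_f^\sigma\le H_f^\sigma+\rho$ yields the first estimate. The bound on $a^*(f^\sigma)$ is then an immediate consequence of the CCR identity $\|a^*(f^\sigma)\Psi\|^2 = \|a(f^\sigma)\Psi\|^2 + \|f^\sigma\|^2\|\Psi\|^2$, combined with $\|(H_f^\sigma+\rho)^{-1/2}\|\le\rho^{-1/2}$ and the subadditivity $\sqrt{a^2+b^2}\le a+b$.

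For the product bounds I would iterate. The pure annihilation case is handled by applying the single-operator bound to the outermost factor,
\begin{equation*}
\|a(f^\sigma)a(g^\sigma)(H_f^\sigma+\rho)^{-1}\Phi\| \;\le\; \big\||k|^{-1/2}f^\sigma\big\|\,\big\|(H_f^\sigma)^{1/2}a(g^\sigma)(H_f^\sigma+\rho)^{-1}\Phi\big\|,
\end{equation*}
and then bounding the inner factor via the pull-through formula $a_\lambda(k)(H_f^\sigma+\rho)^{-1} = (H_f^\sigma+|k|+\rho)^{-1}a_\lambda(k)$ (valid on the support of $g^\sigma$), the operator inequality $(H_f^\sigma+\rho)^{1/2}(H_f^\sigma+|k|+\rho)^{-1}\le(H_f^\sigma+\rho)^{-1/2}$ (immediate from the spectral theorem since $|k|\ge 0$), and a second weighted Cauchy--Schwarz. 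The mixed $a^*a$ and the double-creation cases reduce to the pure annihilation case by applying $\|a^*(h^\sigma)\Psi\|^2 = \|a(h^\sigma)\Psi\|^2 + \|h^\sigma\|^2\|\Psi\|^2$ once or twice; the extra $\|h^\sigma\|^2\|\Psi\|^2$ contributions, together with $\|(H_f^\sigma+\rho)^{-1/2}\|\le\rho^{-1/2}$, produce exactly the additive $\rho^{-1/2}\|h^\sigma\|$ terms in the statement.

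The only mild pitfall is orientation: one must always arrange annihilation operators on the right before invoking pull-through, otherwise one would meet the (possibly unbounded) operator $(H_f^\sigma-|k|+\rho)^{-1}$. With this orientation chosen, every step is a routine one-line computation.
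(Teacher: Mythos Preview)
Your sketch is correct and follows the standard route (weighted Cauchy--Schwarz for $a$, the CCR identity for $a^*$, and pull-through to iterate). Note, however, that the paper does not actually prove this lemma: it is introduced with ``We begin with recalling the following standard lemma'' and no argument is given, so there is nothing to compare against beyond confirming that what you wrote is indeed the classical proof of these relative bounds.
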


The following lemma is proven in \cite{FGS1}.

\begin{lemma}[{\cite[Lemma 22]{FGS1}}]\sl \label{e1}
There exist $\alpha_c>0$ and $\mathrm{C}>0$ such that, for all $0 \leq \alpha \leq \alpha_c$ and $0 < \sigma \leq e_{\mathrm{gap}}/2$, 
\begin{equation*}
| E_\alpha - E_{\alpha,\sigma} | \leq \mathrm{C} \alpha^{\frac{3}{2}} \sigma^2.
\end{equation*}
\end{lemma}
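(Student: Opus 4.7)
The plan is a two-sided variational comparison of the two ground state energies. The algebraic starting point is to expand the square $(p+\alpha^{3/2}A(\alpha x))^{2}$ around $(p+\alpha^{3/2}A_{\ge\sigma}(\alpha x))^{2}$. Setting $D_{\alpha,\sigma} := p + \alpha^{3/2} A_{\ge\sigma}(\alpha x)$ and using that in the Coulomb gauge ($\varepsilon_{\lambda}(k)\cdot k=0$) the operator $A^{\le\sigma}(\alpha x)$ commutes with $p$, and commutes with $A_{\ge\sigma}(\alpha x)$ because the two momentum regions are disjoint, one obtains the identity
\begin{equation*}
H_{\alpha} - H_{\alpha,\sigma} = 2\alpha^{3/2}\, D_{\alpha,\sigma}\cdot A^{\le\sigma}(\alpha x) + \alpha^{3}\bigl(A^{\le\sigma}(\alpha x)\bigr)^{2} .
\end{equation*}
A second crucial point is that $D_{\alpha,\sigma}$ commutes with every soft creation/annihilation operator $a^{\#}(h^{\le\sigma}(\alpha x))$, since both $p$ and $A_{\ge\sigma}$ act on factors orthogonal to $\mathcal{F}^{\le\sigma}$.

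For the direction $E_{\alpha}-E_{\alpha,\sigma}\le C\alpha^{3}\sigma^{2}$ I would use $\Phi_{\alpha,\sigma}$ as a trial state for $H_{\alpha}$. Because $H_{\alpha,\sigma}$ factorises on $\mathcal{H}_{\ge\sigma}\otimes\mathcal{F}^{\le\sigma}$ as $K_{\alpha,\ge\sigma}\otimes\mathds{1}+\mathds{1}\otimes H_{f}^{\le\sigma}$, its ground state takes the product form $\Phi_{\alpha,\sigma}=\widetilde{\Phi}_{\alpha,\sigma}\otimes\Omega^{\le\sigma}$, and in particular $a(h^{\le\sigma}(\alpha x))\Phi_{\alpha,\sigma}=0$. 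Rewriting the cross term as $4\alpha^{3/2}\,\mathrm{Re}\langle D_{\alpha,\sigma}\Phi_{\alpha,\sigma},\,a(h^{\le\sigma}(\alpha x))\Phi_{\alpha,\sigma}\rangle$, which uses the commutation of $D_{\alpha,\sigma}$ with the soft annihilation operator, shows that it vanishes identically, and a direct CCR computation gives
\begin{equation*}
\bigl\langle \Phi_{\alpha,\sigma},(A^{\le\sigma}(\alpha x))^{2}\Phi_{\alpha,\sigma}\bigr\rangle = \|h^{\le\sigma}(\alpha x)\|_{L^{2}}^{2} = 2\int_{|k|\le\sigma}\kappa(k)^{2}|k|^{-1}\,\mathrm{d}k \le C\sigma^{2} ,
\end{equation*}
yielding the desired bound (which of course implies $C\alpha^{3/2}\sigma^{2}$ for $\alpha\le 1$).

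For the opposite inequality I would use $\Phi_{\alpha}$ as a trial state for $H_{\alpha,\sigma}$, so that $E_{\alpha,\sigma}-E_{\alpha}\le\langle\Phi_{\alpha},(H_{\alpha,\sigma}-H_{\alpha})\Phi_{\alpha}\rangle$. The main obstacle is that the cross term no longer vanishes, since $\Phi_{\alpha}$ carries a non-trivial soft-photon cloud; any naive estimate risks losing a power of $\sigma$ or of $\alpha$. To control the soft cloud quantitatively I would invoke the pull-through identity derived from $H_{\alpha}\Phi_{\alpha}=E_{\alpha}\Phi_{\alpha}$ together with $[a_{\lambda}(k),H_{\alpha}]=|k|\,a_{\lambda}(k)+2\alpha^{3/2}\kappa(k)|k|^{-1/2}e^{\ii k\cdot\alpha x}\varepsilon_{\lambda}(k)\cdot(p+\alpha^{3/2}A(\alpha x))$, namely
\begin{equation*}
a_{\lambda}(k)\Phi_{\alpha} = -2\alpha^{3/2}\frac{\kappa(k)}{|k|^{1/2}}\,(H_{\alpha}-E_{\alpha}+|k|)^{-1}\,e^{\ii k\cdot\alpha x}\varepsilon_{\lambda}(k)\cdot(p+\alpha^{3/2}A(\alpha x))\Phi_{\alpha} ,
\end{equation*}
combined with $\|(p+\alpha^{3/2}A(\alpha x))\Phi_{\alpha}\|\le C$, this gives $\|a_{\lambda}(k)\Phi_{\alpha}\|\le C\alpha^{3/2}\kappa(k)|k|^{-3/2}$. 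Rewriting the cross term as $4\alpha^{3/2}\mathrm{Re}\langle D_{\alpha,\sigma}\Phi_{\alpha},a(h^{\le\sigma}(\alpha x))\Phi_{\alpha}\rangle$ and estimating $\|a(h^{\le\sigma}(\alpha x))\Phi_{\alpha}\|\le\sum_{\lambda}\int_{|k|\le\sigma}\kappa(k)|k|^{-1/2}\|a_{\lambda}(k)\Phi_{\alpha}\|\,\mathrm{d}k\le C\alpha^{3/2}\sigma$, together with the uniform bound $\|D_{\alpha,\sigma}\Phi_{\alpha}\|\le C$ coming from $D_{\alpha,\sigma}^{2}\le H_{\alpha,\sigma}-V$ and standard a priori estimates on $\Phi_{\alpha}$, controls the cross term and, by the same CCR computation as in the first step, the quadratic term, producing $|E_{\alpha}-E_{\alpha,\sigma}|\le C\alpha^{3/2}\sigma^{2}$ after elementary comparison of powers using $\alpha\le\alpha_{c}$ and $\sigma\le e_{\mathrm{gap}}/2$.
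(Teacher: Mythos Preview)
The paper does not give its own proof here; it merely cites \cite{FGS1}. Your first direction (using the infrared-cutoff ground state $\Phi_{\alpha,\sigma}=\widetilde\Phi_{\alpha,\sigma}\otimes\Omega^{\le\sigma}$ as a trial state for $H_{\alpha}$) is correct and yields $E_{\alpha}-E_{\alpha,\sigma}\le \mathrm{C}\alpha^{3}\sigma^{2}$.

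The second direction, however, has a genuine gap. Your pull-through bound $\|a_{\lambda}(k)\Phi_{\alpha}\|\le \mathrm{C}\alpha^{3/2}\kappa(k)|k|^{-3/2}$ is correct (and essentially sharp for the \emph{undressed} ground state), and it gives $\|a(h^{\le\sigma}(\alpha x))\Phi_{\alpha}\|\le \mathrm{C}\alpha^{3/2}\sigma$. Together with $\|D_{\alpha,\sigma}\Phi_{\alpha}\|\le \mathrm{C}$ this bounds the cross term only by $\mathrm{C}\alpha^{3}\sigma$, not $\mathrm{C}\alpha^{3/2}\sigma^{2}$. Your claim that ``elementary comparison of powers using $\alpha\le\alpha_{c}$ and $\sigma\le e_{\mathrm{gap}}/2$'' closes the argument is incorrect: for fixed $\alpha>0$ and $\sigma\to 0$ one has $\alpha^{3}\sigma/(\alpha^{3/2}\sigma^{2})=\alpha^{3/2}\sigma^{-1}\to\infty$, so $\alpha^{3}\sigma$ is \emph{not} dominated by $\alpha^{3/2}\sigma^{2}$.

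To recover the missing power of $\sigma$ one must exploit additional structure. In \cite{FGS1} this is done through a (generalized) Pauli--Fierz transformation which effectively replaces $h(\alpha x)$ by $h(\alpha x)-h(0)$ and gains an extra factor of $|k|$ at small momentum, so that the dressed pull-through bound reads $\|a_{\lambda}(k)\widetilde\Phi_{\alpha}\|\le \mathrm{C}\alpha^{3/2}|k|^{-1/2}$ and the cross term becomes $O(\alpha^{3}\sigma^{2})$. An equivalent route is to insert the commutator identity $(p+\alpha^{3/2}A(\alpha x))\Phi_{\alpha}=\tfrac{\ii}{2}(H_{\alpha}-E_{\alpha})\,x\,\Phi_{\alpha}$ into the pull-through formula, so that the factor $(H_{\alpha}-E_{\alpha})$ neutralises the $|k|^{-1}$ blow-up of the resolvent $(H_{\alpha}-E_{\alpha}+|k|)^{-1}$, up to commutators with $e^{\ii k\cdot\alpha x}$ that are lower order (and use the exponential localisation of $\Phi_{\alpha}$ in $x$). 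Either way, the proof needs one more idea than your outline contains.
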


Using Lemma \ref{d10} and Lemma \ref{e1}, we now establish the following lemma which will be useful in the sequel.

\begin{lemma}\sl \label{e2}
There exist $\alpha_c>0$ and $\mathrm{C}> 0$ such that, for all $0 \leq \alpha \leq \alpha_c$ and $0 < \sigma \leq e_{\mathrm{gap}}/2$, 
\begin{equation*}
\big\Vert ( \mathds{1}_{ \mathcal{H}_{\ge\sigma} } \otimes H_f ) \Phi \big\Vert \leq \mathrm{C} \Vert ( H_\alpha - E_\alpha ) \Phi \Vert + \mathrm{C} \sigma \| \Phi \| ,
\end{equation*}
for all $\Phi \in D(H_0)$.
\end{lemma}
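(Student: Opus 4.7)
The strategy proceeds in two stages.

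\textbf{Stage 1.} On $\mathcal{H}\cong\mathcal{H}_{\ge\sigma}\otimes\mathcal{F}^{\le\sigma}$, the operators $A:=(K_{\alpha,\ge\sigma}-E_{\alpha,\sigma})\otimes\mathds{1}$ and $B:=\mathds{1}_{\mathcal{H}_{\ge\sigma}}\otimes H_{f}$ are non-negative, self-adjoint, commute (they act on different tensor factors), and sum to $H_{\alpha,\sigma}-E_{\alpha,\sigma}$. Since $A,B\ge 0$ commute, $AB=A^{1/2}BA^{1/2}\ge 0$, and expanding $(A+B)^{2}$ yields $\Vert(A+B)\Phi\Vert^{2}\ge\Vert B\Phi\Vert^{2}$, so
\begin{equation*}
\Vert(\mathds{1}_{\mathcal{H}_{\ge\sigma}}\otimes H_{f})\Phi\Vert\le\Vert(H_{\alpha,\sigma}-E_{\alpha,\sigma})\Phi\Vert .
\end{equation*}
It therefore suffices to prove $\Vert(H_{\alpha,\sigma}-E_{\alpha,\sigma})\Phi\Vert\le \mathrm{C}\Vert(H_{\alpha}-E_{\alpha})\Phi\Vert+\mathrm{C}\sigma\Vert\Phi\Vert$.

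\textbf{Stage 2.} Writing $H_{\alpha,\sigma}-E_{\alpha,\sigma}=(H_{\alpha}-E_{\alpha})-W+(E_{\alpha}-E_{\alpha,\sigma})$ with $W:=H_{\alpha}-H_{\alpha,\sigma}$, Lemma \ref{e1} controls the constant: $|E_{\alpha}-E_{\alpha,\sigma}|\le \mathrm{C}\alpha^{3/2}\sigma^{2}\le \mathrm{C}\sigma$. In the Coulomb gauge, and using that $A^{\le\sigma}$ and $A_{\ge\sigma}$ commute (disjoint momentum supports), one computes
\begin{equation*}
W = 2\alpha^{3/2}A^{\le\sigma}(\alpha x)\cdot p + 2\alpha^{3}A^{\le\sigma}(\alpha x)\cdot A_{\ge\sigma}(\alpha x) + \alpha^{3}(A^{\le\sigma}(\alpha x))^{2}.
\end{equation*}
Each piece is controlled by Lemma \ref{d10} together with the two low-frequency form factor estimates $\Vert|k|^{-1/2}h^{\le\sigma}\Vert_{L^{2}}\le \mathrm{C}\sqrt\sigma$ and $\Vert h^{\le\sigma}\Vert_{L^{2}}\le \mathrm{C}\sigma$. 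Taking $\rho=\sigma$, the quadratic piece satisfies $\Vert(A^{\le\sigma}(\alpha x))^{2}(B+\sigma)^{-1}\Vert\le\mathrm{C}\sigma$, contributing $\le\mathrm{C}\alpha^{3}\sigma(\Vert B\Phi\Vert+\sigma\Vert\Phi\Vert)$. For the mixed piece, one separates the commuting factors and applies the AM--GM bound
\begin{equation*}
\Vert(H_{f,\ge\sigma}+1)^{1/2}(B+\sigma)^{1/2}\Phi\Vert\le \eta^{-1}\Vert(H_{f,\ge\sigma}+1)\Phi\Vert+\eta\Vert(B+\sigma)\Phi\Vert
\end{equation*}
(valid for commuting non-negative operators) with the tuning $\eta\sim\alpha^{3}/\sqrt\sigma$; combined with the $H_{\alpha}$-boundedness $\Vert H_{f,\ge\sigma}\Phi\Vert\le \mathrm{C}\Vert(H_{\alpha}-E_{\alpha})\Phi\Vert+\mathrm{C}\Vert\Phi\Vert$, this produces a contribution of the desired form. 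The linear piece is the most delicate: split it into its annihilation and creation halves, apply Lemma \ref{d10}, and use the identity $\Vert B^{1/2}p_{j}\Phi\Vert^{2}=\langle\Phi,p_{j}^{2}B\Phi\rangle\le \Vert{-\Delta}\Phi\Vert\,\Vert B\Phi\Vert$ (since $[p_{j},B]=0$) together with the $H_{\alpha}$-boundedness of $-\Delta$; Young's inequality $\sqrt{XY}\le\varepsilon X+Y/(4\varepsilon)$ with $\varepsilon\sim\alpha^{3/2}\sqrt\sigma$ then delivers a contribution $\le\tfrac{1}{4}\Vert B\Phi\Vert+\mathrm{C}\sigma\Vert(H_{\alpha}-E_{\alpha})\Phi\Vert+\mathrm{C}\sigma\Vert\Phi\Vert$. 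Summing the three pieces gives $\Vert W\Phi\Vert\le\tfrac{3}{4}\Vert B\Phi\Vert+\mathrm{C}\sigma\Vert(H_{\alpha}-E_{\alpha})\Phi\Vert+\mathrm{C}\sigma\Vert\Phi\Vert$, and absorbing the $\Vert B\Phi\Vert$ term on the left-hand side closes the argument.

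\textbf{Main obstacle.} The chief subtlety is producing the factor $\sigma$ (rather than the naive $\sqrt\sigma$) in front of $\Vert\Phi\Vert$. This requires simultaneously exploiting both smallness bounds on $h^{\le\sigma}$, namely $\Vert|k|^{-1/2}h^{\le\sigma}\Vert=O(\sqrt\sigma)$ and the sharper $\Vert h^{\le\sigma}\Vert=O(\sigma)$, and tuning Young's inequality to the characteristic scale $\alpha^{3/2}\sqrt\sigma$; the smallness of $\alpha$ then allows absorbing the superfluous $\Vert B\Phi\Vert$ contributions arising in intermediate steps.
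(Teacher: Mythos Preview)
Your proof is correct and follows essentially the same route as the paper. Stage~1 is identical: both you and the paper use the commuting tensor decomposition to reduce to bounding $\Vert(H_{\alpha,\sigma}-E_{\alpha,\sigma})\Phi\Vert$, then write this as $\Vert(H_\alpha-E_\alpha)\Phi\Vert+\Vert W_{\alpha,\sigma}\Phi\Vert+O(\alpha^{3/2}\sigma^2)$ via Lemma~\ref{e1}. In Stage~2 the only difference is organizational: the paper keeps $p+\alpha^{3/2}A_{\ge\sigma}(\alpha x)$ together and packages everything into a single operator bound $\Vert W_{\alpha,\sigma}(\mathds{1}\otimes H_f+\sigma(K_{0,\ge\sigma}-e_1+1)\otimes\mathds{1})^{-1}\Vert\le C\alpha^{3/2}$, using $K_{0,\ge\sigma}$ as the weight controlling the covariant momentum; you instead split $W$ into three pieces, use $-\Delta$ (resp.\ $H_{f,\ge\sigma}$) as the weight for $p$ (resp.\ $A_{\ge\sigma}$), and perform the AM--GM/Young step at the level of norms rather than operators. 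Both choices of weight are $H_\alpha$-bounded, both arguments exploit $\Vert|k|^{-1/2}h^{\le\sigma}\Vert=O(\sqrt\sigma)$ and $\Vert h^{\le\sigma}\Vert=O(\sigma)$ in the same way, and both close by absorbing a small multiple of $\Vert(\mathds{1}\otimes H_f)\Phi\Vert$ back into the left-hand side using the smallness of $\alpha$. The paper's packaging is slightly more compact; your term-by-term treatment is perhaps more transparent, but the underlying estimate is the same.
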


\begin{proof}
Let 
\begin{equation} \label{e6}
W_{\alpha,\sigma} : = H_\alpha - H_{\alpha,\sigma} = 2 \alpha^{\frac{3}{2}} A^{\le\sigma}( \alpha x ) \cdot \big( p + \alpha^{\frac{3}{2}} A_{\geq \sigma}(\alpha x) \big) + \alpha^3 \big( A^{\leq \sigma}( \alpha x ) \big)^2.
\end{equation}
Since $H_{\alpha,\sigma} = K_{\alpha,\ge\sigma} \otimes \mathds{1}_{ \mathcal{F}^{\le\sigma} } + \mathds{1}_{ \mathcal{H}_{\ge\sigma} } \otimes H_f$, we have
\begin{equation}\label{e3}
\big\Vert ( \mathds{1}_{ \mathcal{H}_{\ge\sigma} } \otimes H_f ) \Phi \big\Vert \leq \Vert ( H_{\alpha,\sigma} - E_{\alpha,\sigma} ) \Phi \Vert \le \| ( H_\alpha - E_\alpha ) \Phi \| + \| W_{\alpha,\sigma} \Phi \| + \mathrm{C} \alpha^{\frac{3}{2}} \sigma^2,
\end{equation}
where we used Lemma \ref{e1} in the last inequality. It follows from Lemma \ref{d10} that
\begin{equation*}
\big\Vert \big( A^{\leq \sigma}( \alpha x ) \big)^2 \big( \mathds{1}_{ \mathcal{H}_{\ge\sigma}} \otimes H_f + \sigma \big)^{-1} \big\Vert \leq \mathrm{C} \sigma.
\end{equation*}
Moreover, since $\| ( p + \alpha^{\frac{3}{2}} A_{\geq \sigma}(\alpha x) ) \Psi \| \le \mathrm{C} \| K_{0,\ge\sigma} \Psi \| + \mathrm{C} \| \Psi \|$ for all $\Psi \in D(K_{0,\ge\sigma})$, we have
\begin{align}
\big\Vert & A^{\le\sigma}( \alpha x ) \cdot \big( p + \alpha^{\frac{3}{2}} A_{\geq \sigma}(\alpha x) \big) \big( \mathds{1}_{ \mathcal{H}_{\ge\sigma} } \otimes H_f + \sigma \big)^{-\frac{1}{2}} \big( ( K_{0,\ge\sigma} - e_{1} + 1 ) \otimes \mathds{1}_{ \mathcal{F}^{\le\sigma} } \big)^{-\frac{1}{2}} \big\Vert \nonumber \\
& \leq \big\Vert A^{\le\sigma}( \alpha x ) \big( \mathds{1}_{ \mathcal{H}_{\ge\sigma} } \otimes H_f + \sigma \big)^{-\frac{1}{2}} \big\Vert \big\Vert \big( p + \alpha^{\frac{3}{2}} A_{\geq \sigma}(\alpha x) \big) \big( ( K_{0,\ge\sigma} - e_{1} + 1 ) \otimes \mathds{1}_{ \mathcal{F}^{\le\sigma} } \big)^{-\frac{1}{2}} \big\Vert \nonumber \\
& \leq \mathrm{C} \sigma^{\frac{1}{2}}.
\end{align}
Combining the preceding two estimates with \eqref{e6}, we obtain
\begin{equation}
\big\Vert W_{\alpha,\sigma} \big( \mathds{1}_{ \mathcal{H}_{\ge\sigma} } \otimes H_f + \sigma ( K_{0,\ge\sigma} - e_{1} + 1 ) \otimes \mathds{1}_{ \mathcal{F}^{ \leq \sigma } } \big)^{-1} \big\Vert \leq \mathrm{C} \alpha^{ \frac{3}{2} }. \label{e4}
\end{equation}
Since $\Vert ( K_{0,\ge\sigma} - e_{1} + 1 ) \Phi \Vert \leq \mathrm{C} \| ( H_\alpha - E_\alpha ) \Phi \| + \mathrm{C} \| \Phi \|$, we conclude from \eqref{e4} that
\begin{equation}\label{e5}
\big\Vert W_{\alpha,\sigma} \Phi \big\Vert \leq \mathrm{C} \alpha^{ \frac{3}{2} } \sigma \| ( H_\alpha - E_\alpha ) \Phi \| + \mathrm{C} \alpha^{ \frac{3}{2} } \sigma \| \Phi \| + \mathrm{C} \alpha^{ \frac{3}{2} } \big\| ( \mathds{1}_{ \mathcal{H}_{\ge\sigma} } \otimes H_f ) \Phi \big\Vert ,
\end{equation}
For $\alpha$ small enough, \eqref{e3} and \eqref{e5} imply the statement of the lemma.
\end{proof}

The next lemma is established in \cite{FGS1}. It is based on the fact that states with spectral support below the ionization thresholds decay exponentially in the electron position variable (see \cite{BFS,Gr}).

\begin{lemma}[{\cite[Lemma 17]{FGS1}}] \sl \label{e7}
For  all $\lambda < e_2$, there exists $\alpha_\lambda>0$ such that, for all $0 \leq \alpha \leq \alpha_\lambda$ and $n \in \mathbb{N} \cup \{ 0 \}$,
\begin{equation*}
\sup_{\sigma \geq 0} \big\Vert \langle x \rangle^n \mathds{1}_{ ( - \infty , \lambda ] }( H_{\alpha,\sigma} ) \big\Vert \leq \mathrm{C} ,
\end{equation*}
where $\mathrm{C}$ is a positive constant independent of $\sigma$.
\end{lemma}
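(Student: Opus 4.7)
My plan is to use a Combes--Thomas / Agmon style argument to establish exponential decay of the spectral projection $\mathds{1}_{(-\infty,\lambda]}(H_{\alpha,\sigma})$ in the electron position variable $x$, uniformly in $\sigma$. The statement $\Vert \langle x \rangle^n \mathds{1}_{(-\infty,\lambda]}(H_{\alpha,\sigma}) \Vert \leq \mathrm{C}$ would then follow immediately, since $\langle x \rangle^n \leq \mathrm{C}_{n,\beta} e^{\beta \langle x \rangle}$ for any $\beta > 0$.

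First I would fix $\lambda < \mu < e_2$ and perform the analytic dilation $H_{\alpha,\sigma}^\beta := e^{\beta \langle x \rangle} H_{\alpha,\sigma} e^{-\beta \langle x \rangle}$ for small $\beta > 0$. The crucial algebraic point is that $A_{\geq\sigma}(\alpha x)$ acts as a multiplication operator in the $x$-variable and therefore commutes with $e^{\beta \langle x \rangle}$, so
\begin{equation*}
H_{\alpha,\sigma}^\beta = \bigl( p - \ii \beta \nabla\langle x \rangle + \alpha^{3/2} A_{\geq\sigma}(\alpha x) \bigr)^2 + H_f + V(x).
\end{equation*}
Expanding the square, the difference $H_{\alpha,\sigma}^\beta - H_{\alpha,\sigma}$ consists of terms of the form $\beta \nabla\langle x \rangle \cdot (p + \alpha^{3/2} A_{\geq\sigma}(\alpha x))$ (and its adjoint) plus $\beta^2 |\nabla\langle x \rangle|^2$. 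Since $|\nabla\langle x \rangle|\leq 1$ and $(p + \alpha^{3/2} A_{\geq\sigma})(H_{0,\sigma}+\ii)^{-1/2}$ is bounded \emph{uniformly in $\sigma$} (by Lemma \ref{d10} applied to the form factor $h_{\geq\sigma}$, whose relevant norms are bounded by those of $h$), the perturbation is $\mathcal{O}(\beta)$ relative to $H_{\alpha,\sigma}$ uniformly in $\sigma$.

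The main step is then to show a spectral gap: for all $\alpha\leq\alpha_\lambda$, $\beta$ small, and $\sigma\geq 0$, one has $\mathrm{dist}(\mathrm{spec}(H_{\alpha,\sigma}^\beta),(-\infty,\lambda]) > 0$. In the uncoupled case $\alpha=0$, $\beta=0$ one has $H_{0,\sigma} = (-\Delta + V)\otimes\mathds{1} + \mathds{1}\otimes H_f$, whose spectrum below $e_2$ consists only of $\{e_1\}$; a standard argument (Combes--Thomas on $-\Delta + V$ on the subspace orthogonal to the bound states below $e_2$) combined with $H_f\geq 0$ shows that $H^\beta_{0,\sigma}$ has no spectrum in $[\lambda+\delta,\mu]$ for some $\delta>0$ and $\beta$ sufficiently small, uniformly in $\sigma$. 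Adding the perturbation of size $\mathcal{O}(\alpha^{3/2}+\beta)$ and choosing $\alpha,\beta$ small enough preserves this gap.

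The uniformity in $\sigma$ is the most delicate point, and the reason the argument goes through is precisely that the infrared cutoff in $A_{\geq\sigma}$ does \emph{not} degrade the relative bounds in Lemma \ref{d10}; the relevant quantity $\Vert |k|^{-1/2} h_{\geq\sigma}(x)\Vert$ is bounded by $\Vert |k|^{-1/2} h(x)\Vert$, which is finite and $x$-independent thanks to the ultraviolet cutoff $\kappa$. Once the uniform spectral gap is established, a standard Cauchy formula representation
\begin{equation*}
\mathds{1}_{(-\infty,\lambda]}(H_{\alpha,\sigma}) = \frac{1}{2\pi\ii} \oint_\gamma (H_{\alpha,\sigma}-z)^{-1}\,\d z
\end{equation*}
together with the bound $\Vert e^{\beta\langle x\rangle}(H_{\alpha,\sigma}-z)^{-1} e^{-\beta\langle x\rangle}\Vert = \Vert (H^\beta_{\alpha,\sigma}-z)^{-1}\Vert \leq \mathrm{C}$ on a suitable contour $\gamma$ surrounding $(-\infty,\lambda]$ and avoiding the gap yields
\begin{equation*}
\bigl\Vert e^{\beta\langle x\rangle}\mathds{1}_{(-\infty,\lambda]}(H_{\alpha,\sigma}) e^{-\beta\langle x\rangle}\bigr\Vert \leq \mathrm{C},
\end{equation*}
uniformly in $\sigma$, from which the polynomial weighted estimate follows by replacing $e^{-\beta\langle x\rangle}$ by $\langle x\rangle^n \cdot (\langle x\rangle^n e^{-\beta\langle x\rangle})$ and using that $\langle x\rangle^n e^{-\beta\langle x\rangle}$ is a bounded multiplication operator.
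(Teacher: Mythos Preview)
Your contour-integral/Combes--Thomas strategy breaks down at the ``spectral gap'' step. You claim that $\mathrm{dist}(\mathrm{spec}(H_{\alpha,\sigma}^\beta),(-\infty,\lambda])>0$, and that in the uncoupled case the spectrum of $H_{0,\sigma}$ below $e_2$ reduces to $\{e_1\}$. Neither is true. Since $H_{0,\sigma}=(-\Delta+V)\otimes\mathds{1}+\mathds{1}\otimes H_f$ and $\mathrm{spec}(H_f)=[0,\infty)$, one has $\mathrm{spec}(H_{0,\sigma})=[e_1,\infty)$; for small $\alpha$ the same holds for $H_{\alpha,\sigma}$, namely $\mathrm{spec}(H_{\alpha,\sigma})=[E_{\alpha,\sigma},\infty)$. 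There is therefore no spectral gap at $\lambda$ through which a contour $\gamma$ could pass, and the boosted operator $H_{\alpha,\sigma}^\beta$ does not acquire one either: the photon part $H_f$ commutes with $e^{\beta\langle x\rangle}$, so the $[0,\infty)$ contribution from $H_f$ is unaffected by the boost and continues to fill in the real axis above $E_{\alpha,\sigma}$. Consequently the representation $\mathds{1}_{(-\infty,\lambda]}(H_{\alpha,\sigma})=\frac{1}{2\pi\ii}\oint_\gamma(H_{\alpha,\sigma}-z)^{-1}\,\d z$ is not available, and the bound $\Vert(H_{\alpha,\sigma}^\beta-z)^{-1}\Vert\leq\mathrm{C}$ on such a contour cannot hold.

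The paper does not give its own proof but cites \cite[Lemma~17]{FGS1}, noting that it rests on the exponential decay results of \cite{BFS,Gr}. The mechanism there is an \emph{Agmon-type} estimate rather than a Combes--Thomas one: for $\psi$ in the range of $\mathds{1}_{(-\infty,\lambda]}(H_{\alpha,\sigma})$ one sets $\phi=e^{f}\psi$ with a bounded regularized weight $f$ approximating $\beta|x|$, and uses the identity $\mathrm{Re}\,\langle\phi,(H_{\alpha,\sigma}-\lambda)\phi\rangle=\langle\psi,e^{2f}(H_{\alpha,\sigma}-\lambda)\psi\rangle-\langle\phi,|\nabla f|^2\phi\rangle$ (the IMS localization formula for $(p+\alpha^{3/2}A_{\geq\sigma})^2$, where indeed $A_{\geq\sigma}$ commutes with $e^f$). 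The first term on the right is nonpositive by the spectral support of $\psi$, and one then exploits that $\lambda<e_2\leq\Sigma$ (the ionization threshold of $-\Delta+V$) to bound the left-hand side from below, yielding $\Vert e^f\psi\Vert\leq\mathrm{C}\Vert\psi\Vert$ for $\beta^2<\Sigma-\lambda$. The uniformity in $\sigma$ follows for exactly the reason you identified: the relative bounds on $A_{\geq\sigma}$ coming from Lemma~\ref{d10} are monotone in $\sigma$. Your observations about the algebra of the boost and about the $\sigma$-uniformity are correct; it is the global resolvent step that needs to be replaced by this direct energy estimate on the spectral subspace.
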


We now give the following result that will be useful in the next appendix.

\begin{lemma}\sl \label{d7}
For all $n \in \mathbb{N} \cup \{ 0 \}$, there exists $\mathrm{C}_n > 0$ such that, for all $\alpha \geq 0$, $0 \leq \sigma \leq e_{\mathrm{gap}}/2$, $\tau \geq 0$ and $z \in \mathbb{C}$, $0 < \pm \im z \leq 1$, the operator $\langle \sigma x \rangle^{-n} ( H_{\alpha,\tau} - z )^{-1} \langle \sigma x \rangle^n$ defined on $D(\langle x \rangle^n)$ extends by continuity to a bounded operator on $\mathcal{H}$  satisfying
\begin{align}\label{e8}
\big\Vert \langle \sigma x \rangle^{-n} ( H_{\alpha,\tau} - z )^{-1} \langle \sigma x \rangle^n \big\Vert \leq \mathrm{C}_n \Big\< \frac{\sigma}{\vert \im z \vert} \Big\>^{n} \frac{1}{\vert \im z \vert} .
\end{align}
Moreover, $\langle \sigma x \rangle^{-n} ( H_{\alpha,\tau} - z )^{-1} \langle \sigma x \rangle^n (H_{\alpha,\tau}-z)$ defined on $D(H_0)$ extends by continuity to a bounded operator on $\mathcal{H}$ satisfying
\begin{align}\label{e9}
\big\Vert \langle \sigma x \rangle^{-n} ( H_{\alpha,\tau} - z )^{-1} \langle \sigma x \rangle^n (H_{\alpha,\tau}-z) \big\Vert \leq  \mathrm{C}_n \Big\< \frac{\sigma}{\vert \im z \vert} \Big\>^{n} .
\end{align}
\end{lemma}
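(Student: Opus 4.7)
The plan is to prove both estimates simultaneously by induction on $n$, writing $R := (H_{\alpha,\tau} - z)^{-1}$ and $w_n := \langle \sigma x\rangle^n$. The base case $n=0$ is immediate: \eqref{e8} reduces to $\|R\| \leq 1/|\im z|$ and \eqref{e9} to $\|R(H_{\alpha,\tau} - z)\| = 1$.

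For $n \geq 1$, the starting point is the commutator expansion
\begin{align*}
w_{-n}\, R\, w_n &= R + w_{-n}\, R\, [w_n, H_{\alpha,\tau}]\, R, \\
w_{-n}\, R\, w_n\, (H_{\alpha,\tau}-z) &= 1 + w_{-n}\, R\, [w_n, H_{\alpha,\tau}],
\end{align*}
both valid on $D(H_0)$. Since $V$ and $H_f$ commute with $w_n$, only the kinetic term contributes, and a direct computation yields
\begin{equation*}
[w_n, H_{\alpha,\tau}] = [w_n, P^2] = \ii n \sigma^2 \big( P \cdot x w_{n-2} + x w_{n-2} \cdot P \big), \qquad P := p + \alpha^{\frac{3}{2}} A_{\geq \tau}(\alpha x).
\end{equation*}

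Three facts, all uniform in $\alpha$, $\tau$ and $0 < \sigma \leq e_{\mathrm{gap}}/2$, will drive the estimates. First, $\sigma x_j \langle \sigma x\rangle^{-2}$ is a bounded multiplication operator of norm at most $1/2$; more generally, $x_j \langle \sigma x\rangle^{-2}$ is a smooth function whose derivatives are bounded with the correct powers of $\sigma$. Second, $\|P R\| + \|R P\| \leq \mathrm{C}/|\im z|$: this follows from $P^2 \leq \mathrm{C}(H_{\alpha,\tau} + \mathrm{C}_0)$ (consequence of $H_f \geq 0$ and $V$ being $-\Delta$-bounded with relative bound zero) together with $\|(H_{\alpha,\tau}+\mathrm{C}_0)^{1/2} R\| \leq \mathrm{C}/|\im z|$ for bounded $\re z$ and $|\im z|\leq 1$. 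Third, the factorization $x w_{n-2} = (x w_{-2}) w_n$ combined with $[P, w_n] = -\ii n \sigma^2 x w_{n-2}$.

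Using these tools, I will rewrite $w_{-n}\, R\, [w_n, H_{\alpha,\tau}]\, R$ (and its variant without the trailing $R$) as a sum of terms of the form $A_1 \cdot (w_{-k}\, R\, w_k) \cdot A_2$ with $k \leq n-1$, where $A_1 A_2$ is a bounded product coming from $\sigma x w_{-2}$ and a single factor of $PR$ or $RP$. Tracking factors carefully, each such term carries an extra $\sigma/|\im z|$ (from one power of $\sigma$ in $[P, w_n]$ times one resolvent involving $P$) relative to $(w_{-k}Rw_k)$, so the induction hypothesis applied at level $k \leq n-1$ produces
\begin{equation*}
\big\Vert w_{-n}\, R\, [w_n, H_{\alpha,\tau}]\, R \big\Vert \leq \mathrm{C}_n \frac{\sigma}{|\im z|^2} \Big\langle \frac{\sigma}{|\im z|}\Big\rangle^{n-1},
\end{equation*}
from which \eqref{e8} follows upon combining with $\|R\| \leq 1/|\im z|$; the analogous argument without the trailing $R$ (and using $\|R\| \cdot 1$ rather than $\|R\|^2$ for the prefactor) produces \eqref{e9}.

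The main obstacle is the algebraic bookkeeping: both $P$ and $w_{n-2}$ are unbounded (the latter for $n \geq 3$), so the rewriting must be arranged so that every occurrence of $P$ ends up adjacent to a resolvent and every occurrence of the growing weight $w_{n-2}$ is split as $w_{-2} w_n$, with the outer $w_n$ then commuted past the neighboring operators (producing further $[P, w_n]$ terms of lower order that are absorbed) until it pairs with a $w_{-n}$ into a weighted resolvent of the form $w_{-k}\, R\, w_k$ to which the inductive hypothesis applies. The constants $\mathrm{C}_n$ produced this way grow at most polynomially in $n$, which is more than enough for the statement.
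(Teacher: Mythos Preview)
Your approach is essentially the same as the paper's: induction on $n$, the commutator expansion $w_{-n}Rw_n = R + w_{-n}R[w_n,H_{\alpha,\tau}]R$, the explicit formula $[w_n,P^2] = \ii n\sigma^2(P\cdot x w_{n-2} + xw_{n-2}\cdot P)$, and the bound $\|PR\|\le \mathrm{C}/|\im z|$. The one substantive difference is that the paper replaces $w_{n+1}$ by the bounded weight $w_{n+1}/(1+\varepsilon w_{n+1})$ throughout the inductive step and lets $\varepsilon\to 0$ at the end; this is precisely the device that makes your formal commutator identity rigorous and lets one pair $P$ with the adjacent resolvent without the domain juggling you describe as ``bookkeeping.'' If you adopt that regularization, the rest of your argument goes through verbatim and the iterated commutation you propose (moving $w_n$ past neighboring operators) becomes unnecessary.
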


\begin{proof}
We proceed by induction. For $n=0$, \eqref{e8} follows from the spectral theorem and \eqref{e9} is obvious. Now suppose that \eqref{e8}--\eqref{e9} hold for any $k \in \mathbb{N}\cup\{0\}$, $k \leq n$, where $n \in \mathbb{N}\cup\{0\}$. For any $\varepsilon > 0$, we can write
\begin{align}
\frac{ 1 }{ \langle \sigma x \rangle^{n+1} } ( H_{\alpha,\tau} - z )^{-1} & \frac{ \langle \sigma x \rangle^{n+1} }{ 1 + \varepsilon \langle \sigma x \rangle^{n+1} } = \frac{ 1 }{ 1 + \varepsilon \langle \sigma x \rangle^{n+1} } (H_{\alpha,\tau} - z )^{-1} \notag \\
 &- \frac{ 1 }{ \langle \sigma x \rangle^{n+1} } (H_{\alpha,\tau} - z)^{-1} \Big[ H_{\alpha,\tau} , \frac{ \langle \sigma x \rangle^{n+1} }{ 1 + \varepsilon \langle \sigma x \rangle^{n+1} } \Big] (H_{\alpha,\tau} - z)^{-1},
\end{align}
in the sense of quadratic forms on $\mathcal{H} \times \mathcal{H}$. We compute
\begin{align*}
\Big[ H_{\alpha,\tau} & , \frac{\langle \sigma x \rangle^{n+1}}{1 + \varepsilon \langle \sigma x \rangle^{n+1}} \Big] = - 2 \ii (n+1) \sigma \langle \sigma x \rangle^{n} \frac{ \sigma x }{ \langle \sigma x \rangle ( 1 + \varepsilon \langle \sigma x \rangle^{n+1} )^2 } \cdot \big( p + \alpha^{\frac{3}{2}} A_{\ge\tau}( \alpha x ) \big)    \\
&- \sigma^2 \langle \sigma x \rangle^{n-1} \Big ( \frac{ (n+1) }{ ( 1 + \varepsilon \langle \sigma x \rangle^{n+1} )^2 } - \frac{(n+1) (n+3) \sigma^2 x^2 }{ \langle \sigma x \rangle^2 ( 1 + \varepsilon \langle \sigma x \rangle^{n+1} )^2 } + \frac{ 2 (n+1)^2 \sigma^2 x^2 }{ \langle \sigma x \rangle^2 ( 1 + \varepsilon \langle \sigma x \rangle^{n+1} )^3 } \Big ),
\end{align*}
and
\begin{align*}
\Big[ H_{\alpha,\tau} & , \frac{\langle \sigma x \rangle^{n+1}}{1 + \varepsilon \langle \sigma x \rangle^{n+1}} \Big] = - 2 \ii (n+1) \sigma \langle \sigma x \rangle^{n} \big( p + \alpha^{\frac{3}{2}} A_{\ge\tau}( \alpha x ) \big) \cdot \frac{ \sigma x }{ \langle \sigma x \rangle ( 1 + \varepsilon \langle \sigma x \rangle^{n+1} )^2 }  \\
&+ \sigma^2 \langle \sigma x \rangle^{n-1} \Big ( \frac{ (n+1) }{ ( 1 + \varepsilon \langle \sigma x \rangle^{n+1} )^2 } - \frac{3 (n+1)^{2} \sigma^2 x^2 }{ \langle \sigma x \rangle^2 ( 1 + \varepsilon \langle \sigma x \rangle^{n+1} )^2 } + \frac{ 2 (n+1)^2 \sigma^2 x^2 }{ \langle \sigma x \rangle^2 ( 1 + \varepsilon \langle \sigma x \rangle^{n+1} )^3 } \Big ),
\end{align*}
in the sense of quadratic forms on $D( H_0 ) \times D( H_0 )$. Combining the induction hypothesis with the fact that
\begin{equation*}
\big\Vert (H_{\alpha,\tau} - z)^{-1} \big( p + \alpha^{\frac{3}{2}} A_{\ge\tau}( \alpha x ) \big) \big\Vert \leq \frac{ \mathrm{C} }{ | \im z| } ,
\end{equation*}
next letting $\varepsilon \to 0$, it is seen that \eqref{e8}--\eqref{e9} hold with $n+1$ substituted for $n$, which concludes the proof of the lemma.
\end{proof}

To conclude this section, we recall

\begin{proposition}[{\cite[Proposition 7]{FGS1}}]\sl \label{c10}
There exists $\alpha_c>0$ such that, for all function $\varphi \in \mathrm{C}_0^\infty( (-\infty, 1 ) ; \mathbb{R} )$, there exists $\mathrm{C}_\varphi > 0$ such that, for all $0 \leq \alpha \leq \alpha_c$ and $0 < \sigma \leq e_{\mathrm{gap}} / 2$,
\begin{equation*}
\big\Vert \varphi_\sigma( H_\alpha - E_\alpha ) - \varphi_\sigma( H_{\alpha,\sigma} - E_{\alpha,\sigma} ) \big\Vert \leq \mathrm{C}_\varphi \alpha^{ \frac{3}{2} } \sigma.
\end{equation*}
\end{proposition}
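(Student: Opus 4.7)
The natural route is the Helffer--Sjöstrand (H--S) functional calculus combined with the resolvent identity, exploiting the low-energy structure of the interaction $W_{\alpha,\sigma}$ encoded in \eqref{e6} and the smallness of the spectral shift $E_\alpha - E_{\alpha,\sigma}$ provided by Lemma \ref{e1}. Choose a fixed almost-analytic extension $\tilde\varphi$ of $\varphi$ satisfying $|\bar\partial\tilde\varphi(w)|\leq \mathrm{C}_N|\im w|^N$ for every $N$, and set $\tilde\varphi_\sigma(z):=\tilde\varphi(z/\sigma)$. Then $\tilde\varphi_\sigma$ is supported in $\{|z|\leq \mathrm{C}\sigma\}$ with $|\bar\partial\tilde\varphi_\sigma(z)|\leq \mathrm{C}_N\sigma^{-N-1}|\im z|^N$. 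Writing $R_\alpha(z):=(H_\alpha - E_\alpha - z)^{-1}$ and $R_{\alpha,\sigma}(z):=(H_{\alpha,\sigma}-E_{\alpha,\sigma}-z)^{-1}$, H--S gives
\[
\varphi_\sigma(H_\alpha - E_\alpha) - \varphi_\sigma(H_{\alpha,\sigma}-E_{\alpha,\sigma}) = \frac{1}{\pi}\int \bar\partial\tilde\varphi_\sigma(z)\bigl[R_\alpha(z) - R_{\alpha,\sigma}(z)\bigr]\d L(z).
\]

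The resolvent identity yields
\[
R_\alpha(z) - R_{\alpha,\sigma}(z) = -R_\alpha(z)\bigl[W_{\alpha,\sigma} - (E_\alpha - E_{\alpha,\sigma})\bigr]R_{\alpha,\sigma}(z).
\]
For the scalar part, Lemma \ref{e1} gives $|E_\alpha - E_{\alpha,\sigma}|\leq \mathrm{C}\alpha^{3/2}\sigma^2$, while $\|R_\alpha(z)\|\|R_{\alpha,\sigma}(z)\|\leq |\im z|^{-2}$. The rescaling $z=\sigma w$ produces $\int|\bar\partial\tilde\varphi_\sigma(z)|\,|\im z|^{-2}\d L(z) = \mathcal{O}(\sigma^{-1})$, so this contribution is $\mathcal{O}(\alpha^{3/2}\sigma^2\cdot\sigma^{-1})=\mathcal{O}(\alpha^{3/2}\sigma)$, as required.

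It remains to control $R_\alpha(z)W_{\alpha,\sigma}R_{\alpha,\sigma}(z)$ on $\supp\tilde\varphi_\sigma$. The point is that $W_{\alpha,\sigma}$ involves only low-energy photons through $A^{\le\sigma}(\alpha x)$: by Lemma \ref{d10} with $\rho=\sigma$, the bounds $\||k|^{-1/2}h^{\le\sigma}\|^2 \leq \mathrm{C}\sigma$ and $\|h^{\le\sigma}\|^2\leq \mathrm{C}\sigma^2$ yield $\|A^{\le\sigma}(\alpha x)(H_f^\sigma+\sigma)^{-1/2}\|\leq \mathrm{C}\sigma^{1/2}$. Since $H_f^\sigma$ commutes both with $p+\alpha^{3/2}A_{\ge\sigma}(\alpha x)$ (acting only on the electron and high-energy photons) and with $H_{\alpha,\sigma}$, and since $0\leq H_f^\sigma\leq H_{\alpha,\sigma}-E_{\alpha,\sigma}$, joint functional calculus gives sharp bounds on $\|(H_f^\sigma+\sigma)^{1/2}R_{\alpha,\sigma}(z)\|$ for $z$ in $\supp\tilde\varphi_\sigma$. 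Combining these with $\|(p+\alpha^{3/2}A_{\ge\sigma}(\alpha x))R_{\alpha,\sigma}(z)^{1/2}\|$-type estimates coming from $P_\alpha^2 \le H_{\alpha,\sigma}-E_{\alpha,\sigma}+\mathrm{C}$ produces, in aggregate, the relative-bound of the form $\|W_{\alpha,\sigma}(H_{\alpha,\sigma}-E_{\alpha,\sigma}+\sigma)^{-1}\|\leq \mathrm{C}\alpha^{3/2}$ (refined appropriately by the low-energy gain). Feeding this into the H--S integral and using once more the rescaling $z=\sigma w$ together with $|\bar\partial\tilde\varphi_\sigma|$ vanishing to infinite order at the real axis then delivers the bound $\mathrm{C}\alpha^{3/2}\sigma$.

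The main technical obstacle is precisely the bookkeeping in the last step: making sure that all the powers of $\sigma$ arising from Lemma \ref{d10} (through $A^{\le\sigma}$), from the functional calculus of $H_f^\sigma$ relative to $H_{\alpha,\sigma}-E_{\alpha,\sigma}$ on $\supp\tilde\varphi_\sigma$ (where relevant spectral values are $\mathcal{O}(\sigma)$), and from the rescaling of the H--S weight $\bar\partial\tilde\varphi_\sigma$ combine to give exactly one power of $\sigma$ and not less. The commutation $[H_f^\sigma,H_{\alpha,\sigma}]=0$ is crucial for closing this accounting, as it allows one to pass low-energy photon weights through the resolvent $R_{\alpha,\sigma}(z)$ without loss.
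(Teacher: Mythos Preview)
The paper does not prove this proposition; it simply cites \cite[Proposition~7]{FGS1}. That reference establishes the bound via a Pauli--Fierz (gauge) transformation, as the paper itself alludes to in the remark following Proposition~\ref{e37}. Your sketch follows the natural Helffer--Sj\"ostrand route without this transformation, and there is a genuine gap.

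The issue is the power of $\sigma$. Your relative bound $\|W_{\alpha,\sigma}(H_{\alpha,\sigma}-E_{\alpha,\sigma}+\sigma)^{-1}\|\le \mathrm{C}\alpha^{3/2}$ is correct (essentially \eqref{e4}), but feeding it into the H--S integral yields only $\mathcal{O}(\alpha^{3/2})$, not $\mathcal{O}(\alpha^{3/2}\sigma)$. Concretely, with $|z|\le \mathrm{C}\sigma$ one has $\|(H_{\alpha,\sigma}-E_{\alpha,\sigma}+\sigma)R_{\alpha,\sigma}(z)\|\le \mathrm{C}(1+\sigma/|\im z|)$, hence $\|R_\alpha(z)W_{\alpha,\sigma}R_{\alpha,\sigma}(z)\|\le \mathrm{C}\alpha^{3/2}(|\im z|^{-1}+\sigma|\im z|^{-2})$; after the rescaling $z=\sigma w$ the integral $\int|\bar\partial\tilde\varphi_\sigma|\,(|\im z|^{-1}+\sigma|\im z|^{-2})\,\d L(z)$ is $\mathcal{O}(1)$, not $\mathcal{O}(\sigma)$. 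Your phrase ``refined appropriately by the low-energy gain'' is precisely where the missing idea should go, and no amount of bookkeeping with Lemma~\ref{d10} and the commutation $[H_f^\sigma,H_{\alpha,\sigma}]=0$ will produce the extra $\sigma$: the dominant term $2\alpha^{3/2}A^{\le\sigma}(\alpha x)\cdot(p+\alpha^{3/2}A_{\ge\sigma}(\alpha x))$ carries a coupling $|k|^{-1/2}$ at low frequency, and the best one extracts from $A^{\le\sigma}$ via Lemma~\ref{d10} is $\sigma^{1/2}$, which is exactly cancelled by the $\sigma^{-1/2}$ coming from $(H_f^\sigma+\sigma)^{1/2}R_{\alpha,\sigma}(z)$ on the support of $\tilde\varphi_\sigma$.

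What is missing is the Pauli--Fierz unitary $U=\exp(-\ii\alpha^{3/2}x\cdot A^{\le\sigma}(0))$ (or a closely related gauge transformation). Conjugating by $U$ removes the infrared-singular part $A^{\le\sigma}(0)$ from the minimal coupling; the remaining interaction involves $A^{\le\sigma}(\alpha x)-A^{\le\sigma}(0)$, whose coupling function acquires an extra factor of $|k|$ from $|e^{\ii\alpha k\cdot x}-1|\le \alpha|k||x|$. This improves the low-frequency integrability by one power of $\sigma$ and, together with the exponential localization of Lemma~\ref{e7} to control the $|x|$, closes the estimate at $\mathrm{C}_\varphi\alpha^{3/2}\sigma$. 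Without this step your argument stalls at $\mathrm{C}_\varphi\alpha^{3/2}$.
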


\section{Uniform multiple commutators estimates}\label{c12}

We begin with recalling the following lemma.

\begin{lemma}\emph{(\cite[Proposition 9]{FGS1})}\sl \label{e16}
For all $s \in \mathbb{R}$ and $\sigma>0$, $e^{\ii s B^\sigma} D( H_0 ) \subset D(H_0)$.
\end{lemma}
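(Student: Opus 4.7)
The plan is to exploit the product structure of $H_0$ and reduce the invariance statement to a one-particle assertion about the flow associated with $b^\sigma$. Since $B^\sigma = \mathds{1}_{\mathcal{H}_\el} \otimes \d \Gamma(b^\sigma)$ commutes with $p$ and $V(x)$, the unitary group $e^{\ii s B^\sigma}$ trivially preserves $D(p^2) \cap D(V)$. Writing $H_0 = ( p^2 + V(x) ) \otimes \mathds{1} + \mathds{1} \otimes H_f$, the claimed invariance $e^{\ii s B^\sigma} D(H_0) \subset D(H_0)$ reduces to showing that $\Gamma(U_s)$, with $U_s := e^{\ii s b^\sigma}$ on $\mathrm{L}^2(\mathbb{R}^3 \times \{1,2\})$, preserves $D(H_f) = D(\d \Gamma(\omega))$.

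For this I would use a geometric description of $U_s$. From \eqref{d12}, $b^\sigma$ is exactly the symmetric first-order differential operator $\frac{\ii}{2} ( Y \cdot \nabla_k + \nabla_k \cdot Y )$ associated with the smooth, compactly supported real vector field $Y(k) := \eta_\sigma(k)^2 k$, and it is essentially self-adjoint on $\mathrm{C}_0^\infty(\mathbb{R}^3 \times \{1,2\})$. The flow $\phi_s$ of $Y$, defined by $\dot \phi_s = Y(\phi_s)$ with $\phi_0 = \mathrm{id}$, exists globally, equals the identity outside $\supp \eta_\sigma$, and maps $\{ \vert k \vert < \sigma \}$ diffeomorphically to itself. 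The method of characteristics then identifies $U_s$ as the weighted composition operator
\begin{equation*}
(U_s f)(k,\lambda) = \vert \det D\phi_{-s}(k) \vert^{1/2} f(\phi_{-s}(k), \lambda),
\end{equation*}
the Jacobian factor ensuring unitarity on $\mathrm{L}^2$.

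The crucial estimate then follows from the radial ODE: $\tfrac{\d}{\d s} \vert \phi_s(k) \vert^2 = 2 \eta_\sigma(\phi_s(k))^2 \vert \phi_s(k) \vert^2 \leq 2 \vert \phi_s(k) \vert^2$, whence Gronwall's inequality gives the pointwise bound $\vert \phi_s(k) \vert \leq e^{\vert s \vert} \vert k \vert$ for all $s \in \mathbb{R}$ and $k \in \mathbb{R}^3$. On the $n$-photon sector, the change of variables $k_i = \phi_s(k'_i)$ absorbs the Jacobians and yields
\begin{equation*}
\Big\Vert \sum_{i=1}^n \omega(k_i) U_s^{\otimes n} f \Big\Vert^2 = \int \Big( \sum_{i=1}^n \vert \phi_s(k'_i) \vert \Big)^{2} \vert f(k'_1, \ldots, k'_n) \vert^2 \, \d k' \leq e^{2 \vert s \vert} \Big\Vert \sum_{i=1}^n \omega(k_i) f \Big\Vert^2,
\end{equation*}
uniformly in $n$. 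Summing over $n$ then gives $\Vert H_f \Gamma(U_s) \Phi \Vert \leq e^{\vert s \vert} \Vert H_f \Phi \Vert$ for every $\Phi \in D(H_f)$, which combined with the reduction in the first paragraph proves $e^{\ii s B^\sigma} D(H_0) \subset D(H_0)$. The main technical obstacle is the rigorous identification of $U_s$ with the weighted composition operator above, which requires verifying essential self-adjointness of $b^\sigma$ on $\mathrm{C}_0^\infty$ and checking that the geometric ansatz satisfies the Schr\"odinger equation $\partial_s U_s f = \ii b^\sigma U_s f$ on this core; both are standard given the smoothness and compact support of $Y$.
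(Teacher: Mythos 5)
Your proof is correct. The paper does not supply an argument for this lemma (it is cited directly as Proposition~9 of \cite{FGS1}), so there is nothing in the paper itself to compare against; your geometric argument is essentially the standard route and presumably parallels \cite{FGS1}. The reduction to the Fock factor is valid since $e^{\ii s B^\sigma} = \mathds{1} \otimes \Gamma(U_s)$ commutes with $(p^2+V)\otimes\mathds{1}$, so one only needs $\Gamma(U_s)$ to preserve $D(\d\Gamma(\omega))$; the identification of $b^\sigma$ with the symmetric first-order operator for the smooth compactly supported vector field $Y(k)=\eta_\sigma(k)^2 k$ (which agrees with \eqref{d12}), the pointwise Gronwall bound $|\phi_s(k)|\le e^{|s|}|k|$ (for $s\ge 0$ by $\frac{\d}{\d s}|\phi_s|^2=2\eta_\sigma(\phi_s)^2|\phi_s|^2\le 2|\phi_s|^2$, for $s\le 0$ by monotonicity of $|\phi_s|$), and the change of variables absorbing the Jacobian factors all go through. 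Two minor points: the ``diffeomorphically onto $\{|k|<\sigma\}$'' claim is not needed (only invariance of the set, which follows from $\eta_\sigma=0$ on $\{|k|\ge\sigma\}$, and global existence of the flow matter), and the last step should spell out that the a priori bound $\|H_f\Gamma(U_s)\Phi\|\le e^{|s|}\|H_f\Phi\|$, once proven on a core such as $\Gamma_{\mathrm{fin}}(\mathrm{C}_0^\infty)$, extends to all of $D(H_f)$ via boundedness of $\Gamma(U_s)$ and closedness of $H_f$, which is precisely what gives $\Gamma(U_s)D(H_f)\subset D(H_f)$. An equivalent, more abstract route would be to use the operator inequality $[\omega,\ii b^\sigma]=\eta_\sigma^2\omega\le\omega$ (cf.~\eqref{g6}) together with standard invariance-of-domain criteria, but your explicit flow computation is more transparent.
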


For all $s \in \mathbb{R} \setminus \{0\}$, let $B^\sigma_s := ( e^{\ii s B^\sigma} - 1 ) / s$. The preceding lemma shows that the multiple commutators $\ad^{n}_{\ii B_s^\sigma} ( H_\alpha )$ are well-defined on $D(H_0)$ for all $n \in \mathbb{N}\cup\{0\}$. For $s=0$, we set $B^\sigma_0 := B^\sigma$. Mimicking the proof of \cite[Proposition 10]{FGS1}, one can verify the following lemma.

\begin{lemma}\sl \label{d8}
There exists $\alpha_c>0$ such that, for all $0 \leq \alpha \leq \alpha_c$, $\sigma \geq 0$, $n \in \mathbb{N}\cup\{0\}$ and $\Psi \in \mathcal{H}$, we have
\begin{align}
\lim_{s\to0} \langle x \rangle^{-n} \ad^{n}_{\ii B_s^\sigma} & ( H_\alpha ) ( H_0 + \ii )^{-1} \Psi = \langle x \rangle^{-n} \d \Gamma \big( \eta_{\sigma}^{n} (k) |k| \big) ( H_0 + \ii )^{-1} \Psi  \phantom{ \sum^n }  \notag \\
&+ (-1)^n \sum_{\fract{0 \leq j_1,j_2 \leq n}{j_1+j_2 = n}} \langle x \rangle^{-n} \big( \Phi^{(j_1)} \cdot \Phi^{(j_2)} + \Phi^{(j_2)} \cdot \Phi^{(j_1)} \big) ( H_0 + \ii )^{-1} \Psi. \label{e17}
\end{align}
Here $\eta_{\sigma}^{n} (k) = \eta^{n} ( k / \sigma )$ with $\eta^{n} \in \mathrm{C}_0^\infty( \{ \vert k \vert \leq 1 \} )$ and we have set
\begin{align}
\Phi^{(0)} &: = p + \alpha^{\frac{3}{2}} \Phi( h( \alpha x ) )    \label{g9}  \\
\Phi^{(j)} &: = \alpha^{\frac{3}{2}} \Phi ( \ii^j (b^\sigma)^j h (\alpha x) ) , \quad j \geq 1 .  \label{g10}
\end{align}
Moreover,
\begin{equation}
\sup_{ |s| \leq 1 } \big\Vert \langle x \rangle^{-n} \ad^{n}_{\ii B_s^\sigma} ( H_\alpha ) ( H_0 + \ii )^{-1} \big\Vert \leq \mathrm{C}_n(\sigma) ,
\end{equation}
where $\mathrm{C}_n(\sigma)$ is a positive constant depending on $n$ and $\sigma$, and for $s=0$,
\begin{equation}
\ad^{n}_{\ii B^\sigma} ( H_\alpha ) := \d \Gamma \big( \eta_{\sigma}^{n} (k) |k| \big) + (-1)^n \sum_{\fract{0 \leq j_1,j_2 \leq n}{j_1+j_2 = n}} \big( \Phi^{(j_1)} \cdot \Phi^{(j_2)} + \Phi^{(j_2)} \cdot \Phi^{(j_1)} \big),
\end{equation}
as an operator in $\mathcal{B}( D(H_0) ; D(\langle x \rangle^{n})^* )$.
\end{lemma}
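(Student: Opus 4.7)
The proof proceeds by induction on $n$, imitating the scheme used for \cite[Proposition 10]{FGS1}. The starting point is that $B^\sigma$ acts only on the Fock factor, so it commutes strongly with $p$, $V(x)$, and $\langle x\rangle$. Consequently, the only non-trivial commutators come from $H_f$ and $A(\alpha x)$, for which the one-step identities \eqref{g5}--\eqref{g6} are available. Decomposing $H_\alpha = (p+\alpha^{3/2}A(\alpha x))^2 + H_f + V(x) = (\Phi^{(0)})^2 + H_f + V(x)$, the idea is to iterate $\ad_{\ii B^\sigma}$ on each summand separately: the commutator drops $V$; on $H_f$ it produces $\d\Gamma(\eta_\sigma^2 |k|)$ and, iterating, $\d\Gamma(\eta_\sigma^n(k)|k|)$ where the functions $\eta^n$ remain smooth and compactly supported in $\{|k|\le 1\}$ (one checks by induction that each commutation with $b^\sigma$ preserves this class); on $(\Phi^{(0)})^2$ one uses the Leibniz rule together with \eqref{g5}, which converts any $\Phi^{(j)}$ into $\Phi^{(j+1)}$. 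Collecting the $n$-fold Leibniz expansion produces exactly the symmetric sum $\sum_{j_1+j_2=n}(\Phi^{(j_1)}\cdot \Phi^{(j_2)}+\Phi^{(j_2)}\cdot \Phi^{(j_1)})$ with overall sign $(-1)^n$.

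For the uniform bound by $\langle x\rangle^{-n}(H_0+\ii)^{-1}$, I would apply Lemma \ref{d10} to each $\Phi^{(j)} = \alpha^{3/2}\Phi(\ii^j (b^\sigma)^j h(\alpha x))$. The key observation is that each derivative $\nabla_k$ hitting $h(x,k,\lambda)=\kappa(k)|k|^{-1/2}\varepsilon_\lambda(k)e^{\ii k\cdot x}$ produces a factor $\ii x$, so $(b^\sigma)^j h(\alpha x,k,\lambda)$ is a sum of terms of the form $p_j(\alpha x)\,q_j(k)$ with $q_j$ smooth, supported in $\{|k|\le\sigma\}$, and $p_j$ polynomial of degree at most $j$ in $x$. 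The weighted $\mathrm{L}^2$ norms $\||k|^{-1/2}(b^\sigma)^j h(\alpha x)\|$ are therefore bounded by $C\langle x\rangle^j$ uniformly in $\sigma$, which combined with Lemma \ref{d10} and the fact that $\Phi^{(0)}(H_0+\ii)^{-1}$ is bounded yields the stated bound on $\langle x\rangle^{-n}\ad^n_{\ii B^\sigma}(H_\alpha)(H_0+\ii)^{-1}$.

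The genuinely delicate point is the convergence $s\to 0$. For $s\neq 0$, the operator $B^\sigma_s$ is bounded, so by Lemma \ref{e16} each iterated commutator $\ad^n_{\ii B^\sigma_s}(H_\alpha)$ is well defined on $D(H_0)$ as a finite algebraic expression. I would represent $\ad_{\ii B^\sigma_s}(T) = s^{-1}(e^{\ii s B^\sigma}T e^{-\ii s B^\sigma}-T)$ and use Duhamel to expand $e^{\ii s B^\sigma}T e^{-\ii s B^\sigma}-T = \int_0^s e^{\ii u B^\sigma}\ad_{\ii B^\sigma}(T)e^{-\ii u B^\sigma}\d u$; iterating this $n$ times produces the limiting expression plus remainders that are $O(s)$ in the strong topology on a core. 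Combined with the uniform bound established above and a standard density/equicontinuity argument, this upgrades strong convergence on the core $\Gamma_{\mathrm{fin}}(C_0^\infty(\mathbb{R}^3\times\{1,2\}))\otimes C_0^\infty(\mathbb{R}^3)$ to strong convergence on all of $\mathcal{H}$ (applied through $(H_0+\ii)^{-1}\Psi$). The main obstacle is precisely this bookkeeping: one must show that every commutator produced by the Duhamel expansion lands back in the same class of operators of the form described in \eqref{g9}--\eqref{g10}, with weights absorbable by $\langle x\rangle^{-n}$, so that the remainders vanish strongly; this is where the compact $k$-support in $\eta_\sigma$ and the polynomial-in-$x$ structure of $(b^\sigma)^j h(\alpha x)$ are used decisively.
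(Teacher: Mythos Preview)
Your approach is correct and coincides with the paper's, which simply states that the lemma is verified by ``mimicking the proof of \cite[Proposition 10]{FGS1}''; your induction via the Leibniz rule on $(\Phi^{(0)})^2 + H_f + V$, the bounds through Lemma~\ref{d10}, and the $s\to 0$ convergence by Duhamel are exactly that scheme. Two small inaccuracies worth fixing: the functions $q_j(k)$ arising from $(b^\sigma)^j h(\alpha x)$ are not smooth (the $|k|^{-1/2}$ and polarization-vector singularities persist), though they do lie in the weighted $\mathrm{L}^2$ spaces required by Lemma~\ref{d10}; and your claim of uniformity in $\sigma$ is stronger than what the statement asserts (only a $\sigma$-dependent $\mathrm{C}_n(\sigma)$ is needed).
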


\begin{remark}\sl
Lemma \ref{e16} and Lemma \ref{d8} show that, for all $m \in \mathbb{N}\cup\{0\}$ and $|s|\le1$, the operators $\ad^n_{B^\sigma_s}(H_\alpha) \langle x \rangle^m (H_0+\ii )^{-1}$ are well-defined on $D(\langle x \rangle^{m})$. Commuting $\langle x \rangle^m$ with $\ad^{n}_{\ii B_s^\sigma} ( H_\alpha )$ in a way similar to what was done in the proof of Lemma \ref{d7}, it is not difficult to verify that for all $n,m \in \mathbb{N}\cup\{0\}$,  $\langle x \rangle^{-(n+m)} \ad^{n}_{\ii B_s^\sigma} ( H_\alpha ) \langle x \rangle^{m} (H_0+\ii )^{-1}$ extend by continuity to bounded operators on $\mathcal{H}$. Moreover, as in Lemma \ref{d8}, we have that
\begin{align}
\lim_{s\to0} \langle x \rangle^{-(n+m)} & \ad^{n}_{\ii B_s^\sigma} ( H_\alpha ) \langle x \rangle^{m} ( H_0 + \ii )^{-1} \Psi = \langle x \rangle^{-n} \d \Gamma \big( \eta_{\sigma}^{n} (k) |k| \big) ( H_0 + \ii )^{-1} \Psi  \phantom{ \sum^n }  \notag \\
+ (- & 1 )^n \sum_{\fract{0 \leq j_1,j_2 \leq n}{j_1+j_2 = n}} \langle x \rangle^{-(n+m)} \big( \Phi^{(j_1)} \cdot \Phi^{(j_2)} + \Phi^{(j_2)} \cdot \Phi^{(j_1)} \big) \langle x \rangle^{m} ( H_0 + \ii )^{-1} \Psi, \label{e18}
\end{align}
for all $\Psi \in \mathcal{H}$, and
\begin{equation}
\sup_{ |s| \leq 1 } \big\Vert \langle x \rangle^{-(n+m)} \ad^{n}_{\ii B_s^\sigma} ( H_\alpha ) \langle x \rangle^{m} ( H_0 + \ii )^{-1} \big\Vert \leq \mathrm{C}_{n,m}(\sigma).
\end{equation}
Similarly, commuting now $\langle x \rangle^{-n+m}$ with $\ad^{n}_{\ii B_s^\sigma} ( H_\alpha )$, one verifies that for all $n,m \in \mathbb{N}\cup\{0\}$,  $(H_0+\ii )^{-1} \langle x \rangle^{-(n+m)} \ad^{n}_{\ii B_s^\sigma} ( H_\alpha ) \langle x \rangle^{m}$ extend by continuity to bounded operators on $\mathcal{H}$ such that
\begin{align}
\lim_{s\to0} ( H_0 + & \ii )^{-1} \langle x \rangle^{-(n+m)} \ad^{n}_{\ii B_s^\sigma} ( H_\alpha ) \langle x \rangle^{m} \Psi = ( H_0 + \ii )^{-1} \langle x \rangle^{-n} \d \Gamma \big( \eta_{\sigma}^{n} (k) |k| \big) \Psi  \phantom{ \sum^n }  \notag \\
&+ (-1)^n \sum_{\fract{0 \leq j_1,j_2 \leq n}{j_1+j_2 = n}} ( H_0 + \ii )^{-1} \langle x \rangle^{-(n+m)} \big( \Phi^{(j_1)} \cdot \Phi^{(j_2)} + \Phi^{(j_2)} \cdot \Phi^{(j_1)} \big) \langle x \rangle^{m} \Psi, \label{e19}
\end{align}
for all $\Psi \in \mathcal{H}$, and
\begin{equation}
\sup_{ |s| \leq 1 } \big\Vert ( H_0 + \ii )^{-1} \langle x \rangle^{-(n+m)} \ad^{n}_{\ii B_s^\sigma} ( H_\alpha ) \langle x \rangle^{m} \big\Vert \leq \mathrm{C}_{n,m}(\sigma).
\end{equation}
\end{remark}

\begin{lemma}\sl \label{c19}
There exists $\alpha_c>0$ such that for all $0 \leq \alpha \leq \alpha_c$, $\sigma \geq 0$, $n,m \in \mathbb{N}\cup\{0\}$, $0<|s|\leq 1$ and $z \in \mathbb{C} \setminus \mathbb{R}$, the operators $\langle x \rangle^{-(n+m)} \ad^n_{\ii B^\sigma_s} ( (H_\alpha-z)^{-1} ) \langle x \rangle^m$ defined on $D(\langle x \rangle^m )$ extend by continuity to bounded operators on $\mathcal{H}$, and we have
\begin{align}
\lim_{s\to 0} \< x & \>^{-(n+m)} \ad^n_{\ii B^\sigma_s} \big( (H_\alpha-z)^{-1} \big) \langle x \rangle^m \Psi \notag \\
&= \sum_{\fract{1 \leq j_1, \dots  , j_n \leq n}{j_1+ \dots +j_n =n}} c_{j_1,\dots,j_n} \langle x \rangle^{-(n+m)} ( H_\alpha - z )^{-1} \langle x \rangle^{n+m} \notag \\
&\phantom{ = \sum_{\fract{1 \leq j_1, \dots  , j_n \leq n}{j_1+ \dots +j_n =n}}} \prod_{1\leq l \leq n} \langle x \rangle^{-t_{l-1}}  \ad^{j_l}_{\ii B^\sigma}(H_\alpha) \langle x \rangle^{t_{l}} \langle x \rangle^{-t_{l}} ( H_\alpha - z )^{-1} \langle x \rangle^{t_{l}} \Psi, \label{e20}
\end{align}
for any $\Psi \in \mathcal{H}$, where $t_0= n + m$, $t_l := n+m - \sum_{i=1}^l j_i$ for $l \geq 1$, and $c_{j_1,\dots,j_n}$ are explicitly computable integers. Moreover,
\begin{align}\label{e21}
\sup_{ |s|\leq 1} \big\Vert  \langle x \rangle^{-(n+m)} \ad^n_{\ii B^\sigma_s} \big( (H_\alpha-z)^{-1} \big) \langle x \rangle^m \big\Vert \leq \frac{ \mathrm{C}_{n,m}(\sigma) }{ | \im z | } P_{n,m} ( | \im z |^{-1} ),
\end{align}
where $\mathrm{C}_{n,m}(\sigma)$ is a positive constant depending on $n$, $m$ and $\sigma$, $P_{n,m}$ is a polynomial with positive coefficients and degree $n+m+\sum_{l=1}^n t_l$, and, for $s=0$, $ \langle x \rangle^{-(n+m)} \ad^n_{\ii B^\sigma} ( (H_\alpha-z)^{-1} ) \langle x \rangle^m$ is defined as the bounded operator appearing in the right hand side of \eqref{e20}.
\end{lemma}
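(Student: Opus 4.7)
The plan is as follows. Let $R_z := (H_\alpha - z)^{-1}$. The case $n = 0$ of the lemma is the boundedness of $\langle x \rangle^{-m} R_z \langle x \rangle^m$ with the stated norm bound; this is a verbatim adaptation of Lemma \ref{d7} (replacing the weight $\langle \sigma x \rangle$ by $\langle x \rangle$ in the proof there, which relies only on the commutation of $H_\alpha$ with a polynomial in the electron position variable and does not depend on the specific scaling of the weight).

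For general $n \geq 1$, I would begin with the algebraic identity
\[
[\,\ii B^\sigma_s,\, R_z\,] \;=\; - R_z\, \ad_{\ii B^\sigma_s}(H_\alpha)\, R_z,
\]
valid as an operator identity on $D(H_0)$: indeed, $\ii B^\sigma_s$ is bounded and preserves $D(H_0)$ by Lemma \ref{e16}, while Lemma \ref{d8} ensures that $\ad_{\ii B^\sigma_s}(H_\alpha)$ is well-defined on $D(H_0)$. Iterating this identity and distributing commutators via Leibniz yields a finite expansion
\[
\ad^n_{\ii B^\sigma_s}(R_z) \;=\; \sum_{\substack{\ell \ge 1\\ j_1 + \cdots + j_\ell = n \\ j_i \ge 1}} c_{j_1, \ldots, j_\ell}\; R_z\, \ad^{j_1}_{\ii B^\sigma_s}(H_\alpha)\, R_z \cdots R_z\, \ad^{j_\ell}_{\ii B^\sigma_s}(H_\alpha)\, R_z,
\]
with explicit integer coefficients $c_{j_1,\ldots,j_\ell}$. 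Inserting the identity factors $\langle x \rangle^{t_l} \langle x \rangle^{-t_l}$ between consecutive factors, where $t_l := n+m - \sum_{i \le l} j_i$ decreases from $t_0 = n+m$ on the far left to $t_\ell = m$ on the far right, puts the expansion into the form \eqref{e20}.

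The uniform bound \eqref{e21} then follows by estimating each block of each product separately. The weighted resolvent blocks $\langle x \rangle^{-t_l} R_z \langle x \rangle^{t_l}$ are controlled by the base case $n=0$, each contributing a factor of order $|\im z|^{-1} \langle |\im z|^{-1}\rangle^{t_l}$. The weighted commutator blocks $\langle x \rangle^{-t_{l-1}} \ad^{j_l}_{\ii B^\sigma_s}(H_\alpha) \langle x \rangle^{t_l}$ are controlled uniformly for $|s| \le 1$ by the remark following Lemma \ref{d8}, since $t_{l-1} - t_l = j_l$ matches exactly the weight gain allowed by \eqref{e18}; the auxiliary $(H_0 + \ii)^{-1}$ factor produced there is absorbed into the adjacent resolvent via
\[
R_z \;=\; (H_0 + \ii)^{-1}\bigl[\,I + (H_0 - H_\alpha + z + \ii)\, R_z\,\bigr],
\]
generating only lower-order bounded terms. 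The passage to the limit $s \to 0$ in \eqref{e20} transfers term by term from the corresponding strong convergence for $\ad^{j}_{\ii B^\sigma_s}(H_\alpha)$ established in Lemma \ref{d8}, via dominated convergence applied to the finitely many summands.

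The principal technical obstacle is the weight bookkeeping. The choice $t_l = n+m - \sum_{i \le l} j_i$ is precisely the unique one that simultaneously saturates the outer weights $\langle x \rangle^{\pm(n+m)}$ and $\langle x \rangle^{\pm m}$ at the two ends of the product while matching the weight gain $j_l$ of each intervening commutator, so that every intermediate factor is bounded. Tracking how the auxiliary $(H_0 + \ii)^{-1}$'s commute through weights and resolvents without generating uncontrolled powers of $|\im z|^{-1}$ produces the announced polynomial $P_{n,m}$; its degree $n+m+\sum_l t_l$ is exactly the cumulative weight exponent accumulated when applying the base-case estimate to each weighted resolvent along the expansion.
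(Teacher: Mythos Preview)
Your proposal is correct and follows essentially the same approach as the paper: expand $\ad^n_{\ii B^\sigma_s}(R_z)$ into a sum of products $R_z\,\ad^{j_1}(H_\alpha)\,R_z\cdots\ad^{j_\ell}(H_\alpha)\,R_z$, insert the telescoping weights $\langle x\rangle^{\pm t_l}$, control each weighted resolvent by Lemma~\ref{d7} and each weighted commutator block by Lemma~\ref{d8} and its remark, and pass to the limit $s\to 0$ termwise. Your indexing over a variable number $\ell$ of parts is in fact cleaner than the paper's fixed-$n$ notation, and your explicit remark on absorbing the auxiliary $(H_0+\ii)^{-1}$ into the adjacent resolvent makes transparent a step the paper leaves implicit.
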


\begin{proof}
Let us prove \eqref{e20}. A straightforward computation gives
\begin{equation}
\ad^n_{\ii B^\sigma_s} \big( (H_\alpha-z)^{-1} \big) = \sum_{\fract{1 \leq j_1, \dots  , j_n \leq n}{j_1+ \dots +j_n =n}} c_{j_1,\dots,j_n} ( H_\alpha - z )^{-1} \prod_{1\leq l \leq n} \big( \ad^{j_l}_{\ii B^\sigma_s}(H_\alpha) ( H_\alpha - z)^{-1} \big),
\end{equation}
for some explicitly computable integers $c_{j_1,\dots,j_n}$, where the right hand side is a well-defined bounded operator on $\mathcal{H}$ according to Lemma \ref{e16}. Thus, $\langle x \rangle^{-(n+m)} \ad^n_{\ii B^\sigma_s} ( (H_\alpha-z)^{-1} ) \langle x \rangle^m$ is equal to the right hand side of \eqref{e20} with $B^\sigma_s$ in place of $B^\sigma$, and it remains to justify the strong convergence. By Lemma \ref{d7} and Lemma \ref{d8}, for all $1\leq l \leq n$, the operators $\langle x \rangle^{-t_{l-1}}  \ad^{j_l}_{\ii B^\sigma_s}(H_\alpha) \langle x \rangle^{t_{l}} \langle x \rangle^{-t_{l}} ( H_\alpha - z )^{-1} \langle x \rangle^{t_{l}}$ strongly converge as $s\to 0$, and are uniformly bounded on $|s|\leq 1$ by a constant of the form given in the right hand side of \eqref{e21}. It follows that
\begin{align*}
\slim_{s\to0} \prod_{1\leq l \leq n} \langle x \rangle^{-t_{l-1}} \ad^{j_l}_{\ii B^\sigma_s} & (H_\alpha) \langle x \rangle^{t_{l}} \langle x \rangle^{-t_{l}} ( H_\alpha - z )^{-1} \langle x \rangle^{t_{l}}  \\
&= \prod_{1\leq l \leq n} \langle x \rangle^{-t_{l-1}}  \ad^{j_l}_{\ii B^\sigma}(H_\alpha) \langle x \rangle^{t_{l}} \langle x \rangle^{-t_{l}} ( H_\alpha - z )^{-1} \langle x \rangle^{t_{l}},
\end{align*}
and that \eqref{e21} holds, which concludes the proof of the lemma.
\end{proof}

\begin{lemma}\sl \label{e22}
There exists $\alpha_c>0$ such that, for all $0 \leq \alpha \leq \alpha_c$, $\sigma \geq 0$, $n \in \mathbb{N}\cup\{0\}$ and $\varphi \in \mathrm{C}_0^\infty( ( - \infty , E_{\alpha} + e_{\mathrm{gap}} /2 ) ; \mathbb{R} )$, the quadratic forms $\ad^n_{B^\sigma}( \varphi ( H_\alpha ) )$ defined iteratively on $D( B^\sigma )$ extend by continuity to bounded quadratic forms on $\mathcal{H}$. The associated bounded operators on $\mathcal{H}$ are denoted by the same symbols. They satisfy
\begin{equation*}
\ad^n_{\ii B^\sigma} ( \varphi ( H_\alpha ) ) = \slim_{s\to0} \ad^n_{\ii B^\sigma_s}( \varphi ( H_\alpha ) ) .
\end{equation*}
\end{lemma}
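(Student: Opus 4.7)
The plan is to proceed by induction on $n$, using the Helffer--Sj\"ostrand functional calculus to reduce estimates on $\ad^n_{\ii B^\sigma_s}(\varphi(H_\alpha))$ to the resolvent estimates of Lemma \ref{c19}, and to absorb the unbounded weights $\langle x\rangle^n$ appearing there by the exponential-type localization provided by Lemma \ref{e7}. I would strengthen the inductive claim to: for every $\varphi \in \mathrm{C}_0^\infty((-\infty,E_\alpha + e_{\mathrm{gap}}/2);\R)$ and every $N_1,N_2\ge 0$, the operator $\langle x\rangle^{N_1} \ad^k_{\ii B^\sigma_s}(\varphi(H_\alpha)) \langle x\rangle^{N_2}$ extends by continuity from $D(\langle x\rangle^{N_2})$ to a bounded operator on $\mathcal{H}$ whose norm is uniformly bounded for $|s|\le 1$. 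The base case $k=0$ follows from Lemma \ref{e7} combined with the factorization $\varphi(H_\alpha) = \chi(H_\alpha)\varphi(H_\alpha)\chi(H_\alpha)$ for a cutoff $\chi \in \mathrm{C}_0^\infty((-\infty,E_\alpha + e_{\mathrm{gap}}/2);\R)$ with $\chi\equiv 1$ on $\supp\varphi$, since $\langle x\rangle^{N_1}\chi(H_\alpha)$ and $\chi(H_\alpha)\langle x\rangle^{N_2}$ are then bounded (the latter by duality).

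For the inductive step, expanding $\ad^n_{\ii B^\sigma_s}(\chi(H_\alpha) \varphi(H_\alpha) \chi(H_\alpha))$ by Leibniz yields a sum of terms of the form
\[
\ad^{k_1}_{\ii B^\sigma_s}(\chi(H_\alpha)) \, \ad^{k_2}_{\ii B^\sigma_s}(\varphi(H_\alpha)) \, \ad^{k_3}_{\ii B^\sigma_s}(\chi(H_\alpha)),
\]
with $k_1+k_2+k_3=n$. When $k_1+k_3\ge 1$ each of $k_1,k_2,k_3$ is strictly less than $n$, so the inductive hypothesis (applied separately to $\chi$ and $\varphi$) controls each factor with arbitrary $\langle x\rangle$-weights, which can be inserted between factors to organize the bound. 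The only remaining contribution is $\chi(H_\alpha)\ad^n_{\ii B^\sigma_s}(\varphi(H_\alpha))\chi(H_\alpha)$, which I would estimate via the Helffer--Sj\"ostrand representation
\[
\chi(H_\alpha) \ad^n_{\ii B^\sigma_s}(\varphi(H_\alpha)) \chi(H_\alpha) = \frac{1}{\pi}\int_\C \bar\partial\tilde\varphi(z)\, \chi(H_\alpha)\ad^n_{\ii B^\sigma_s}((H_\alpha - z)^{-1})\chi(H_\alpha)\, dL(z),
\]
where $\tilde\varphi$ is an almost-analytic extension of $\varphi$ of sufficiently high order. Inserting $\langle x\rangle^n \langle x\rangle^{-n}$ around the middle commutator, Lemma \ref{c19} (with $m=0$) controls the central factor by $\mathrm{C}|\im z|^{-1} P(|\im z|^{-1})$, while Lemma \ref{e7}, after a further factorization $\chi=\chi\tilde\chi$ if needed, provides boundedness of $\langle x\rangle^{N_1+n}\chi(H_\alpha)$ and $\chi(H_\alpha)\langle x\rangle^{N_2}$; choosing the order of $\tilde\varphi$ larger than the degree of $P$ makes the $z$-integral absolutely convergent, uniformly in $s$.

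The strong convergence $\ad^n_{\ii B^\sigma_s}(\varphi(H_\alpha)) \to \ad^n_{\ii B^\sigma}(\varphi(H_\alpha))$ as $s\to 0$ then follows by strong dominated convergence: Lemma \ref{c19} yields the pointwise (in $z$) strong convergence of the integrand, the uniform operator bound above furnishes the domination, and the inductive hypothesis handles the strong convergence of the Leibniz factors. The main obstacle is the careful combinatorial bookkeeping of $\langle x\rangle$-weights at each application of Leibniz, and the need to iterate the cutoff trick $\chi=\chi\tilde\chi$ so that every factor remains bounded once weights are inserted; once the weighted induction is set up correctly, the Helffer--Sj\"ostrand reduction together with Lemmas \ref{c19} and \ref{e7} (modelled on the proof of \cite[Proposition 10]{FGS1}) completes the argument.
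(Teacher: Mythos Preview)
Your proposal is correct and follows essentially the same route as the paper: induction on $n$, Leibniz with auxiliary cutoffs, Helffer--Sj\"ostrand reduction to Lemma~\ref{c19}, and absorption of the $\langle x\rangle$-weights via Lemma~\ref{e7}. The only organizational difference is that the paper, instead of strengthening the induction to carry arbitrary $\langle x\rangle^{N_1},\langle x\rangle^{N_2}$-weights, introduces $n+1$ nested cutoffs $\varphi_0,\dots,\varphi_n$ with $\varphi_l\varphi_{l+1}=\varphi_l$ so that by pigeonhole every Leibniz term contains at least one factor $\varphi_{l_0}(H_\alpha)$ with no commutator on it, which then absorbs all the weights directly via Lemma~\ref{e7}; this avoids your weighted induction at the cost of more cutoffs. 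One small point: you should make explicit (as the paper does in \eqref{e23}) the step identifying the quadratic form $\ad^n_{\ii B^\sigma}(\varphi(H_\alpha))$, defined iteratively on $D(B^\sigma)$, with the strong limit you construct---this uses the induction hypothesis together with $B^\sigma_s\Psi\to B^\sigma\Psi$ on $D(B^\sigma)$.
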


\begin{proof}
We prove the lemma by induction. For $n=0$, there is nothing to prove. Assume that the statement of the lemma is established with $n-1$ substituted for $n$, where $n \in \mathbb{N}$. For any $\Phi , \Psi \in D( B^\sigma )$, we have that
\begin{align}
\big\langle \Phi , \ad^{n}_{\ii B^\sigma} ( \varphi ( H_\alpha ) ) \Psi \big\rangle :=& \big\langle \Phi , \big[ \ad^{n-1}_{\ii B^\sigma} ( \varphi ( H_\alpha ) ) , \ii B^\sigma \big] \Psi \big\rangle \notag \\
=&\lim_{s\to0} \big\langle \Phi , \big[ \ad^{n-1}_{\ii B^\sigma_s} ( \varphi ( H_\alpha ) ) , \ii B^\sigma_s \big] \Psi \big\rangle \notag \\
=& \lim_{s\to0} \big\langle \Phi , \ad^{n}_{\ii B^\sigma_s} ( \varphi ( H_\alpha ) ) \Psi \big\rangle. \label{e23}
\end{align}
Set $\varphi_0 := \varphi$ and consider $\varphi_1, \dots , \varphi_n \in \mathrm{C}_0^\infty( ( - \infty , E_{\alpha} + e_{\mathrm{gap}} / 2 ) ; \mathbb{R} )$ such that $\varphi_l \varphi_{l+1} = \varphi_l$ for any $0 \leq l \leq n$. Leibniz' rule gives
\begin{align}\label{e24}
\ad^{n}_{\ii B^\sigma_s} ( \varphi ( H_\alpha ) ) = \sum_{\fract{0 \leq j_0 , \dots  , j_n \leq n}{j_0 + \dots +j_n =n}} c_{j_0 ,\dots,j_n} \prod_{0\leq l \leq n} \ad^{j_l}_{\ii B^\sigma_s} ( \varphi_l ( H_\alpha ) ) ,
\end{align}
for some explicitly computable integers $c_{j_0 ,\dots,j_n}$. For each term $\prod_{0\leq l \leq n} \ad^{j_l}_{\ii B^\sigma_s} ( \varphi_l ( H_\alpha ) )$ appearing in the sum, there is at least one $l_0 \in \{ 0,\dots,n \}$ such that $j_{l_0}=0$. Given this $l_0$, we write
\begin{align}
\prod_{0\leq l \leq n} \ad^{j_l}_{\ii B^\sigma_s} ( \varphi_l ( H_\alpha ) ) &= \prod_{0\leq l \leq l_0 - 1} \big(\langle x \rangle^{s_l} \ad^{j_l}_{\ii B^\sigma_s} ( \varphi_l ( H_\alpha ) ) \langle x \rangle^{-s_{l+1}} \big) \notag \\
& \big( \langle x \rangle^{s_{l_0}} \varphi_{l_0}( H_\alpha ) \langle x \rangle^{\widetilde{s}_{l_0}} \big) \prod_{l_0+1\leq l \leq n} \big( \langle x \rangle^{-\widetilde{s}_{l-1}} \ad^{j_l}_{\ii B^\sigma_s} ( \varphi_l ( H_\alpha ) ) \langle x \rangle^{\widetilde{s}_l} \big), \label{e25}
\end{align}
where $s_0=0$, $s_l = \sum_{i=0}^{l-1} j_i$, $\widetilde{s}_l = \sum_{i=l_0+1}^n j_i$, and $\widetilde{s}_n = 0$. From Lemma \ref{e7}, the operator $\langle x \rangle^{s_{l_0}} \varphi_{l_0}( H_\alpha ) \langle x \rangle^{\widetilde{s}_{l_0}}$ is bounded. Let $\widehat{\varphi}_l \in \mathrm{C}_0^{\infty}( \mathbb{C} )$ denote an almost analytic extension of $\varphi_l$ satisfying $| \partial_{\bar z}\widehat{\varphi}_l ( z ) | \leq \mathrm{C}_{\varphi_l}^{(m)} |y|^m$ where $m\in\mathbb{N}$ is fixed sufficiently large, and where $z = x + \ii y$ and $\partial_{\bar z} = \partial_x + \ii \partial_y$. Then by Lemma \ref{c19}, we can write
\begin{align}
\slim_{s\to 0} \big( \langle x \rangle^{s_l} & \ad^{j_l}_{\ii B^\sigma_s} ( \varphi_l ( H_\alpha ) ) \langle x \rangle^{-s_{l+1}} \big) \notag \\
&= - \frac{1}{\pi} \int_{ \mathbb{R}^2 } \partial_{\bar z} \widehat{\varphi}_l(z) \slim_{s\to0} \big( \langle x \rangle^{s_l} \ad^{j_l}_{\ii B^\sigma_s} \big( ( H_\alpha - z )^{-1} \big) \langle x \rangle^{-s_{l+1}} \big) \, \d x \, \d y, \label{e26}
\end{align}
where the strong convergence holds on $\mathcal{H}$. Moreover using \eqref{e21} and the properties of $\widehat{\varphi}_l$, we obtain
\begin{equation*}
\sup_{0 < |s|\leq 1} \big\Vert \langle x \rangle^{s_l} \ad^{j_l}_{\ii B^\sigma_s} ( \varphi_l ( H_\alpha ) ) \langle x \rangle^{-s_{l+1}} \big\Vert < \infty,
\end{equation*}
for any $0 \leq l \leq l_0-1$. The same holds for $\langle x \rangle^{-\widetilde{s}_{l-1}} \ad^{j_l}_{\ii B^\sigma_s} ( \varphi_l ( H_\alpha ) ) \langle x \rangle^{\widetilde{s}_l}$ in the case where $l_0+1 \leq l \leq n$. It follows that $\prod_{0\leq l \leq n} \ad^{j_l}_{\ii B^\sigma_s} ( \varphi_l ( H_\alpha ) )$ strongly converges as $s \to 0$ and is uniformly bounded on $|s| \leq 1$. Together with \eqref{e23}, this shows that
\begin{align}
\big| \big\langle \Phi , \ad^{n}_{\ii B^\sigma} ( \varphi ( H_\alpha ) ) \Psi \big\rangle \big| \leq \mathrm{C} \Vert \Phi \Vert \Vert \Psi \Vert ,
\end{align}
and that  $\ad^{n}_{\ii B^\sigma} ( \varphi ( H_\alpha ) ) = \slim_{s\to0}  \ad^{n}_{\ii B^\sigma_s} ( \varphi ( H_\alpha ) )$. Hence the statement of the lemma for $n$ is established, which concludes the proof.
\end{proof}

Lemma \ref{e22} shows that for all $\varphi \in \mathrm{C}_0^\infty( ( - \infty , e_{\mathrm{gap}} /2 ) ; \mathbb{R} )$, $\varphi(H_\alpha) \in \mathrm{C}^\infty( B^\sigma )$. We are now ready to prove the uniform bounds with respect to $\sigma$ on the commutators $\ad^{n}_{\ii B^\sigma} ( \varphi_\sigma ( H_\alpha - E_\alpha ) )$ given in Lemma \ref{c18}.

\begin{proof}[Proof of Lemma \ref{c18}]
We start as in the proof of Lemma \ref{e22} (see \eqref{e24}, \eqref{e25} and \eqref{e26}), considering $\varphi_\sigma(H_\alpha-E_\alpha) = \varphi( \sigma^{-1}( H_\alpha - E_\alpha ) )$ instead of $\varphi(H_\alpha)$, and introducing $\langle \sigma x \rangle$ instead of $\langle x \rangle$ everywhere. Whence the statement of the lemma will follow provided we estimate terms of the form
\begin{equation} \label{e35}
\int_{ \mathbb{R}^2 } \partial_{\bar z} \hat \varphi(z) \langle \sigma x \rangle^{m} \ad^{n}_{\ii B^\sigma} \big( \sigma^{-1} ( H_\alpha - E_\alpha ) - z )^{-1} \big) \langle \sigma x \rangle^{- ( n + m )} \, \d x \, \d y,
\end{equation}
uniformly in $\sigma$, for arbitrary $n,m \in \mathbb{N} \cup \{0\}$. By Lemma \ref{c19}, $\langle \sigma x \rangle^{m} \ad^{n}_{\ii B^\sigma} ( \sigma^{-1} ( H_\alpha - E_\alpha ) - z )^{-1} ) \langle \sigma x \rangle^{- ( n + m )}$ decomposes into a sum of terms of the form
\begin{align}
&\prod_{1 \leq l \leq n} \Big( \langle \sigma x \rangle^{t_{l-1}} \ad^{j_l}_{\ii B^\sigma} ( \sigma^{-1} H_\alpha ) \langle \sigma x \rangle^{- t_{l}} \big( \sigma^{-1} ( H_\alpha - E_\alpha )- z \big)^{-1}    \nonumber  \\
& \qquad \qquad \big( \sigma^{-1} ( H_\alpha - E_\alpha )- z \big) \langle \sigma x \rangle^{t_{l}} \big( \sigma^{-1} ( H_\alpha - E_\alpha )- z \big)^{-1} \langle \sigma x \rangle^{- t_{l}} \Big)   \notag \\
& \qquad \qquad \qquad \qquad \qquad \qquad \qquad \langle \sigma x \rangle^{n + m} \big( \sigma^{-1} ( H_\alpha - E_\alpha ) - z \big)^{-1} \langle \sigma x \rangle^{- ( n + m )} , \label{f1}
\end{align}
with $1\leq j_l \leq n$, $\sum_{l=1}^n j_l = n$, $t_0 = m$ and $t_l = m + \sum_{i=1}^l j_i$. From Lemma \ref{d7}, it follows
\begin{gather} \label{e27}
\big\Vert \big( \sigma^{-1} ( H_\alpha - E_\alpha )- z \big) \langle \sigma x \rangle^{t_{l}} \big( \sigma^{-1} ( H_\alpha - E_\alpha )- z \big)^{-1} \langle \sigma x \rangle^{- t_{l}} \big\Vert \leq \frac{\mathrm{C}_{t_l}}{| \im z |^{t_l}} ,  \\
\big\Vert \langle \sigma x \rangle^{n + m} \big( \sigma^{-1} ( H_\alpha - E_\alpha ) - z \big)^{-1} \langle \sigma x \rangle^{- ( n + m )} \big\Vert \leq \frac{\mathrm{C}}{| \im z |^{n + m + 1}} . \label{f2}
\end{gather}
It remains to estimate $\big\Vert \langle \sigma x \rangle^{t_{l-1}} \ad^{j_l}_{\ii B^\sigma} ( \sigma^{-1} H_\alpha ) \langle \sigma x \rangle^{-t_{l}} ( \sigma^{-1} ( H_\alpha - E_\alpha )- z )^{-1} \big\Vert $. To this end, we compute
\begin{align}
\langle \sigma x \rangle^{t_{l-1}}  \ad^{j_l}_{\ii B^\sigma} ( \sigma^{-1} H_\alpha ) \langle \sigma x \rangle^{- t_{l}} ={}& \sigma^{-1} \langle \sigma x \rangle^{-j_{l}} \d \Gamma \big( \eta_{\sigma}^{j_{l}} (k) |k| \big)     \notag \\
&+ (-1)^{j_l} \sigma^{-1} \langle \sigma x \rangle^{-j_l} \sum_{\fract{0 \leq p_1,p_2 \leq j_l}{p_1+p_2 = j_l}} \big( \Phi^{(p_1)} \cdot \Phi^{(p_2)} + \Phi^{(p_2)} \cdot \Phi^{(p_1)} \big) \notag \\
&+ 2\ii (-1)^{j_l+1} t_{l} \langle \sigma x \rangle^{- j_{l} - 1} \frac{\sigma x}{\langle \sigma x \rangle} \cdot \Phi^{(j_l)} . \label{e28}
\end{align}
where, recall that $\eta_{\sigma}^{j_l} (k) = \eta^{j_l} ( k / \sigma )$ with $\eta^{j_l} \in \mathrm{C}_0^\infty( \{ \vert k \vert \leq 1 \} )$ and that the $\Phi^{(j)}$'s are defined in \eqref{g9}--\eqref{g10}. For the first term in the right hand side of \eqref{e28}, we use Lemma \ref{e2} which implies
\begin{align}
\big\Vert \sigma^{-1} & \d \Gamma \big( \eta_{\sigma}^{j_{l}} (k) |k| \big) \big( \sigma^{-1} ( H_\alpha - E_\alpha )- z \big)^{-1} \big\Vert \notag \\
&\leq \big\Vert \sigma^{-1} ( \mathds{1}_{ \mathcal{H}_{\ge\sigma} } \otimes H_f ) \big( \sigma^{-1} ( H_\alpha - E_\alpha )- z \big)^{-1} \big\Vert \notag \\
&\leq \mathrm{C} \big\Vert \sigma^{-1} ( H_{\alpha} - E_{\alpha} ) \big( \sigma^{-1} ( H_\alpha - E_\alpha )- z \big)^{-1} \big\Vert + \mathrm{C} \big\Vert \big( \sigma^{-1} ( H_\alpha - E_\alpha )- z \big)^{-1} \big\Vert \notag \\
&\le \mathrm{C} + \frac{\mathrm{C}}{| \im z |} . \label{e29}
\end{align}
Next, using again Lemma \ref{d10}, one verifies that
\begin{equation} \label{e30}
\big\Vert \langle \sigma x \rangle^{-j_l} \Phi^{(p_1)} \cdot \Phi^{(p_2)} ( \mathds{1}_{ \mathcal{H}_{\ge\sigma}} \otimes H_f + \sigma )^{-1} \big\Vert \leq \mathrm{C} \sigma,
\end{equation}
for any $1 \leq p_1,p_2 \leq j_l$ such that $p_1+p_2=j_l$, that
\begin{equation} \label{e31}
\big\Vert \langle \sigma x \rangle^{-j_l} \Phi^{(j_l)} ( \mathds{1}_{ \mathcal{H}_{\ge\sigma}} \otimes H_f + \sigma )^{-\frac{1}{2}} \big\Vert \leq \mathrm{C} \sigma^{\frac{1}{2}} ,
\end{equation}
and that
\begin{align}
\big\Vert \langle \sigma x & \rangle^{-j_l} \Phi^{(j_l)} \cdot \Phi^{(0)} ( \mathds{1}_{ \mathcal{H}_{\ge\sigma}} \otimes H_f + \sigma )^{-\frac{1}{2}} ( H_{\alpha} - E_{\alpha} + 1 )^{-\frac{1}{2}} \big\Vert   \nonumber \\
\leq{}& \big\Vert \langle \sigma x \rangle^{-j_l} \Phi^{(j_l)} ( \mathds{1}_{ \mathcal{H}_{\ge\sigma}} \otimes H_f + \sigma )^{-\frac{1}{2}} \big\Vert \big\Vert \big( p + \alpha^{\frac{3}{2}} A_{\ge\sigma}( \alpha x ) \big) ( H_{\alpha} - E_{\alpha} + 1 )^{-\frac{1}{2}} \big\Vert \nonumber \\
&+ \big\Vert \langle \sigma x \rangle^{-j_l} \Phi^{(j_l)} \alpha^{\frac{3}{2}} A^{\le \sigma}( \alpha x ) ( \mathds{1}_{ \mathcal{H}_{\ge\sigma}} \otimes H_f + \sigma )^{- 1} \big\Vert \big\Vert ( \mathds{1}_{ \mathcal{H}_{\ge\sigma}} \otimes H_f + \sigma )^{\frac{1}{2}} ( H_{\alpha} - E_{\alpha} + 1 )^{-\frac{1}{2}} \big\Vert   \nonumber \\
\leq{}& \mathrm{C} \sigma^{\frac{1}{2}} ,   \label{e32}
\end{align}
since $\Phi^{(0)} = ( p + \alpha^{\frac{3}{2}} A_{\ge\sigma}( \alpha x ) ) + \alpha^{\frac{3}{2}} A^{\le \sigma}( \alpha x )$. Using Lemma \ref{e2}, it then follows from \eqref{e30} and \eqref{e32} that
\begin{equation} \label{e33}
\Big\Vert \langle \sigma x \rangle^{-j_l} \sum_{\fract{0 \leq p_1,p_2 \leq j_l}{p_1+p_2 = j_l}} \big( \Phi^{(p_1)} \cdot \Phi^{(p_2)} + \Phi^{(p_2)} \cdot \Phi^{(p_1)} \big) \big( \sigma^{-1} ( H_\alpha - E_\alpha )- z \big)^{-1} \Big\Vert \leq \frac{ \mathrm{C} \sigma}{ | \im z | },
\end{equation}
whereas \eqref{e31} implies that
\begin{equation} \label{e34}
\big\Vert \langle \sigma x \rangle^{- j_{l} - 1} \frac{\sigma x}{\langle \sigma x \rangle} \cdot \Phi^{(j_l)} \big( \sigma^{-1} ( H_\alpha - E_\alpha )- z \big)^{-1} \big\Vert \leq \frac{ \mathrm{C} \sigma }{ | \im z | }.
\end{equation}
Thus, combining \eqref{e28} with the estimates \eqref{e29}, \eqref{e33} and \eqref{e34}, we have shown
\begin{equation*}
\big\Vert \langle \sigma x \rangle^{t_{l-1}}  \ad^{j_l}_{\ii B^\sigma} ( \sigma^{-1} H_\alpha ) \langle \sigma x \rangle^{- t_{l}} \big( \sigma^{-1} ( H_\alpha - E_\alpha )- z \big)^{-1} \big\Vert \leq \frac{ \mathrm{C} }{ | \im z | },
\end{equation*}
which, combined with \eqref{f1}, \eqref{e27} and \eqref{f2}, leads to
\begin{equation*}
\big\Vert \langle \sigma x \rangle^{m} \ad^{n}_{\ii B^\sigma} \big( \sigma^{-1} ( H_\alpha - E_\alpha ) - z )^{-1} \big) \langle \sigma x \rangle^{- ( n + m )} \big\Vert \leq \frac{ \mathrm{C}_{n,m} }{ | \im z |^{ \gamma_{n,m} } } ,
\end{equation*}
where $\gamma_{n,m} := \sum_{l=1}^n t_l + 2 n + m + 1$. With \eqref{e35}, this concludes the proof of the lemma.
\end{proof}

\begin{remark}\sl \label{g8}
By similar (and simpler) arguments, one can also estimate the multiple commutators $\ad^n_{\ii B^\sigma}( \varphi(H_\alpha) )$ uniformly in $\sigma$. More precisely, for all $n \in \mathbb{N}\cup\{0\}$ and $\varphi \in \mathrm{C}_0^\infty( ( - \infty , E_\alpha + e_{\mathrm{gap}} / 2 ) ; \mathbb{R} )$, there exists $\mathrm{C}_{n,\varphi} > 0$ such that, for all $0 \leq \alpha \leq \alpha_c$ and $0 < \sigma \le e_{\mathrm{gap}} / 2$,
\begin{equation*}
\big\Vert \ad^n_{\ii B^\sigma} ( \varphi ( H_{\alpha} ) ) \big\Vert \leq \mathrm{C}_{n,\varphi} .
\end{equation*}
\end{remark}

The next lemma could be proven in the same way as Lemma \ref{c18}, using Lemma \ref{d7} with $\tau = \sigma$. The proof below is however much more simple, and simply follows from the commutation relation \eqref{g6}.

\begin{lemma}\sl \label{e36}
There exists $\alpha_c>0$ such that, for all $n \in \mathbb{N}\cup\{0\}$ and $\varphi \in \mathrm{C}_0^\infty( ( -\infty , 1 ) ; \mathbb{R} )$, there exists $\mathrm{C}_{n,\varphi} > 0$ such that, for all $0 \leq \alpha \leq \alpha_c$ and $0 < \sigma \leq e_{\mathrm{gap}}/2$,
\begin{equation*}
\big\Vert \ad^n_{\ii B^\sigma} ( \varphi_\sigma ( H_{\alpha,\sigma} - E_{\alpha,\sigma} ) ) \big\Vert \leq \mathrm{C}_{n,\varphi}.
\end{equation*}
\end{lemma}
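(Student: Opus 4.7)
My plan is to exploit the fact that, in the decomposition $\mathcal{H} = \mathcal{H}_{\ge\sigma}\otimes \mathcal{F}^{\le\sigma}$, the conjugate operator $B^\sigma$ acts nontrivially only on the second factor, and the gap property forces the first factor to contribute just the ground-state projection. This reduces the problem to a purely low-frequency estimate on the free photon field, which is then made $\sigma$-independent by rescaling.

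First I would extend the identity \eqref{g2} to $\varphi \in \mathrm{C}_0^\infty((-\infty,1);\R)$. Since $H_{\alpha,\sigma}-E_{\alpha,\sigma}\ge 0$, the spectral support of $\varphi_\sigma(H_{\alpha,\sigma}-E_{\alpha,\sigma})$ lies in some $[0,\sigma(1-\epsilon)]$. Writing $H_{\alpha,\sigma} = K_{\alpha,\ge\sigma}\otimes\mathds{1}+\mathds{1}\otimes H_f^{\le\sigma}$ (with $H_f^{\le\sigma}$ acting on $\mathcal{F}^{\le\sigma}$), Proposition \ref{g1} forces $K_{\alpha,\ge\sigma}-E_{\alpha,\sigma}\in\{0\}\cup[\sigma,\infty)$; any contribution with $K_{\alpha,\ge\sigma}-E_{\alpha,\sigma}\ge\sigma$ would push the total above $\sigma(1-\epsilon)$ and is therefore ruled out, so
\begin{equation*}
\varphi_\sigma(H_{\alpha,\sigma}-E_{\alpha,\sigma})=\Pi_{\alpha,\ge\sigma}\otimes\varphi_\sigma(H_f^{\le\sigma}).
\end{equation*}

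Second, I would observe that $b^\sigma$ annihilates $\mathrm{L}^2_{\ge\sigma}$: the two cutoff factors $\eta_\sigma$ in \eqref{g3} are supported in $\{|k|\le\sigma\}$, so $\eta_\sigma f=0$ a.e.\ for $f\in\mathrm{L}^2_{\ge\sigma}$, hence $b^\sigma$ maps $\mathrm{L}^2(\mathbb{R}^3\times\{1,2\})$ into $\mathrm{L}^2_{\le\sigma}$ and vanishes on $\mathrm{L}^2_{\ge\sigma}$. Second quantizing in the direct-sum decomposition $\mathrm{L}^2=\mathrm{L}^2_{\ge\sigma}\oplus\mathrm{L}^2_{\le\sigma}$ yields
\begin{equation*}
B^\sigma=\mathds{1}_{\mathcal{H}_{\ge\sigma}}\otimes B^{\sigma,\le}, \quad B^{\sigma,\le}:=\d\Gamma(b^\sigma|_{\mathrm{L}^2_{\le\sigma}}).
\end{equation*}
Since $B^\sigma$ therefore commutes with $\Pi_{\alpha,\ge\sigma}\otimes\mathds{1}$, the iterated commutators factorize as
\begin{equation*}
\ad^n_{\ii B^\sigma}(\varphi_\sigma(H_{\alpha,\sigma}-E_{\alpha,\sigma}))=\Pi_{\alpha,\ge\sigma}\otimes\ad^n_{\ii B^{\sigma,\le}}(\varphi_\sigma(H_f^{\le\sigma})),
\end{equation*}
and the problem reduces to uniformly bounding the right-hand side on $\mathcal{F}^{\le\sigma}$.

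Finally, I would remove the $\sigma$-dependence by the unitary dilation $(V_\sigma f)(u):=\sigma^{3/2}f(\sigma u)$ from $\mathrm{L}^2_{\le\sigma}$ to $\mathrm{L}^2_{\le 1}$, second-quantized to $\Gamma(V_\sigma):\mathcal{F}^{\le\sigma}\to\mathcal{F}^{\le 1}$. Under this dilation, $\eta_\sigma(k)\mapsto\eta(u)$, $b^\sigma\mapsto b^1$, and $|k|\mapsto\sigma|u|$, which at the Fock-space level gives $\Gamma(V_\sigma)B^{\sigma,\le}\Gamma(V_\sigma)^{-1}=B^{1,\le}$ and $\Gamma(V_\sigma)\varphi_\sigma(H_f^{\le\sigma})\Gamma(V_\sigma)^{-1}=\varphi(H_f^{\le 1})$. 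Unitary invariance of the norm then reduces the estimate to $\|\ad^n_{\ii B^{1,\le}}(\varphi(H_f^{\le 1}))\|$, which no longer depends on $\sigma$. To see this last quantity is finite, I would use Helffer-Sj\"ostrand to write $\varphi(H_f^{\le 1})$ as an integral over resolvents of $H_f^{\le 1}$, expand the commutators by the Leibniz rule as in Lemma \ref{c19}, and use \eqref{g6} together with $\ad^j_{\ii B^{1,\le}}(H_f^{\le 1})=\d\Gamma(\omega_j)$, where $\omega_0=|k|$ and $\omega_{j+1}=\eta^2(k\cdot\nabla_k\omega_j)$; an easy induction shows $\omega_j$ is smooth, supported in $|k|\le 1$, and bounded by $C_j|k|$, so that $\d\Gamma(\omega_j)\le C_j H_f^{\le 1}$ and the resolvent expansion is polynomially bounded in $|\mathrm{Im}\,z|^{-1}$, hence integrable. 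The main conceptual simplification compared with Lemma \ref{c18} is that the electron-photon interaction terms $\Phi^{(j_1)}\cdot\Phi^{(j_2)}$ never appear here, since for $H_{\alpha,\sigma}$ they are confined to the $K_{\alpha,\ge\sigma}$ factor, which $B^\sigma$ commutes with.
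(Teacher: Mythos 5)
Your proof is correct, and it follows the same first step as the paper: use \eqref{g2} (extended to $\varphi\in\mathrm{C}_0^\infty((-\infty,1))$, as you justify from $H_{\alpha,\sigma}-E_{\alpha,\sigma}\geq 0$ and Proposition \ref{g1}) to get the tensor factorisation, note that $B^\sigma=\mathds{1}_{\mathcal{H}_{\ge\sigma}}\otimes B^{\sigma,\le}$ because $b^\sigma$ vanishes on $\mathrm{L}^2_{\ge\sigma}$, and thus reduce the lemma to a bound on $\ad^n_{\ii B^{\sigma,\le}}(\varphi_\sigma(H_f^{\le\sigma}))$ on $\mathcal{F}^{\le\sigma}$. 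Where you genuinely diverge is in how you obtain $\sigma$-uniformity afterwards. The paper computes the iterated commutators directly (Helffer--Sj\"ostrand plus the commuting relation \eqref{g6}, whose right-hand side commutes with $H_f$), finds an explicit decomposition into pieces of the form $\sigma^{-j}\,\Pi_{\alpha,\geq\sigma}\otimes\bigl(\d\Gamma(\eta^1_\sigma|k|)\cdots\d\Gamma(\eta^j_\sigma|k|)\,(\varphi^{(j)})_\sigma(H_f)\bigr)$, and then cancels the $\sigma^{-j}$ against $j$ factors of $\d\Gamma(\eta^{\#}_\sigma|k|)\le H_f\lesssim\sigma$ on the support of $(\varphi^{(j)})_\sigma(H_f)$. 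You instead conjugate by the second-quantised dilation $\Gamma(V_\sigma):\mathcal{F}^{\le\sigma}\to\mathcal{F}^{\le 1}$, under which $b^\sigma\mapsto b^1$ and $\varphi_\sigma(H_f^{\le\sigma})\mapsto\varphi(H_f^{\le1})$, so the norm you must bound becomes $\|\ad^n_{\ii B^{1,\le}}(\varphi(H_f^{\le1}))\|$ and is $\sigma$-independent by unitarity. This is a genuinely different and rather clean way to organise the uniformity: the bookkeeping of powers of $\sigma$ disappears entirely and is replaced by a single exact scaling identity. The cost is that you still have to prove finiteness at $\sigma=1$, which requires essentially the same Helffer--Sj\"ostrand/Leibniz computation the paper carries out with $\sigma$ kept general, together with your pointwise bound $\omega_j\le C_j|k|$ (which, note, gives $\d\Gamma(\omega_j)\le C_j H_f^{\le1}$, exactly analogous to the paper's $\d\Gamma(\eta^{\#}_\sigma|k|)\le H_f$). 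Both routes are valid; the scaling argument buys conceptual transparency on the $\sigma$-uniformity, while the paper's direct computation gives the explicit decomposition with no auxiliary unitary.
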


\begin{proof}
A direct computation based on \eqref{g2}, \eqref{g6} and the Helffer-Sj{\"o}strand formula shows that the commutators $\ad^n_{\ii B^\sigma} ( \varphi_\sigma ( H_{\alpha,\sigma} - E_{\alpha,\sigma} ) )$ (defined iteratively in the sense of quadratic forms on $D(B^\sigma) \times D(B^\sigma)$) extend by continuity to bounded operators on $\mathcal{H}$, and that $\ad^n_{\ii B^\sigma} ( \varphi_\sigma ( H_{\alpha,\sigma} - E_{\alpha,\sigma} ) )$ decomposes into a sum of terms of the form
\begin{equation*}
\sigma^{- j} \Pi_{\alpha,\ge\sigma} \otimes \big( \d \Gamma \big( \eta_{\sigma}^{1} ( k ) |k| \big) \cdots \d \Gamma \big( \eta_{\sigma}^{j} ( k ) |k| \big) ( \varphi^{( j )} )_\sigma ( H_f ) \big),
\end{equation*}
with $j \in \N$ satisfying $1 \leq j \leq n$, $\eta_{\sigma}^{\#} (k) = \eta^{\#} ( k / \sigma )$ and $\eta^{\#} \in \mathrm{C}_0^\infty( \{ \vert k \vert \leq 1 \} )$. Using that $\d \Gamma \big( \eta_{\sigma}^{\#} ( k ) |k| \big)^{2} \leq H_f^{2}$, one easily obtains the required estimate.
\end{proof}

We conclude with the following proposition which was used in Section \ref{f4}.

\begin{proposition}\sl \label{e37}
There exists $\alpha_c>0$ such that, for all $\varphi \in \mathrm{C}_0^\infty( (-\infty, 1 ) ; \mathbb{R} )$, $n \in \mathbb{N}$ and $\delta > 0$, there exists $\mathrm{C}_{\varphi , n , \delta} > 0$ such that, for all $0 \leq \alpha \leq \alpha_c$ and $0 < \sigma \leq e_{\mathrm{gap}} / 2$,
\begin{equation*}
\big\Vert \ad_{\ii B^\sigma}^{n} \big( \varphi_\sigma( H_\alpha - E_\alpha ) - \varphi_\sigma( H_{\alpha,\sigma} - E_{\alpha,\sigma} ) \big) \big\Vert \leq \mathrm{C}_{\varphi , n , \delta} ( \alpha^{ \frac{3}{2} } \sigma )^{1-\delta}.
\end{equation*}
\end{proposition}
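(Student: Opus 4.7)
Set $D_{\alpha,\sigma} := \varphi_\sigma( H_\alpha - E_\alpha ) - \varphi_\sigma( H_{\alpha,\sigma} - E_{\alpha,\sigma} )$. My strategy is to combine, via a Landau--Kolmogorov type interpolation, two complementary bounds that are already available in the paper: a plain norm bound $\Vert D_{\alpha,\sigma} \Vert \leq \mathrm{C}_\varphi \alpha^{3/2} \sigma$ carrying the full smallness factor but involving no commutators, and uniform bounds $\Vert \ad^m_{\ii B^\sigma}( D_{\alpha,\sigma} ) \Vert \leq \mathrm{C}_{m,\varphi}$ at every commutator order $m \in \N$ that do not see the smallness factor. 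Interpolating between these two extreme cases will distribute part of the factor $\alpha^{3/2} \sigma$ to each intermediate commutator level.

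The first step is to collect the two ingredients. The norm bound is exactly Proposition~\ref{c10}. The commutator bounds follow at once by the triangle inequality from Lemma~\ref{c18} applied to $\varphi_\sigma( H_\alpha - E_\alpha )$ together with Lemma~\ref{e36} applied to $\varphi_\sigma( H_{\alpha,\sigma} - E_{\alpha,\sigma} )$. Possibly shrinking $\alpha_c$, we may and will assume $\alpha^{3/2} \sigma \leq 1$ throughout.

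The central step is the interpolation. For unit vectors $\Phi, \Psi \in \mathcal{H}$, the scalar function
$$
g(t) := \big\< \Phi , e^{\ii t B^\sigma} D_{\alpha,\sigma} e^{-\ii t B^\sigma} \Psi \big\>, \qquad t \in \R,
$$
is of class $\mathrm{C}^\infty( \R )$, since by the first step all iterated commutators $\ad^m_{\ii B^\sigma}( D_{\alpha,\sigma} )$ extend to bounded operators on $\mathcal{H}$. A direct computation gives $g^{(k)}(t) = (-1)^k \< \Phi , e^{\ii t B^\sigma} \ad^k_{\ii B^\sigma}( D_{\alpha,\sigma} ) e^{-\ii t B^\sigma} \Psi \>$, so the bounds of the first step translate into $\Vert g \Vert_{\infty} \leq \mathrm{C}_\varphi \alpha^{3/2} \sigma$ and $\Vert g^{(m)} \Vert_{\infty} \leq \mathrm{C}_{m,\varphi}$. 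The classical Landau--Kolmogorov inequality for bounded $\mathrm{C}^m$ functions on the real line then yields, for every $0 \leq n \leq m$,
$$
\big| g^{(n)}(0) \big| \leq \mathrm{K}_{n,m} \Vert g \Vert_{\infty}^{1 - n/m} \Vert g^{(m)} \Vert_{\infty}^{n/m},
$$
and taking the supremum over $\Vert \Phi \Vert = \Vert \Psi \Vert = 1$ produces the operator interpolation
$$
\big\Vert \ad^n_{\ii B^\sigma}( D_{\alpha,\sigma} ) \big\Vert \leq \mathrm{C}_{n,m,\varphi} \, ( \alpha^{3/2} \sigma )^{1 - n/m}.
$$
Given $n \in \N$ and $\delta > 0$, choosing $m \geq n / \delta$ and using $\alpha^{3/2} \sigma \leq 1$ gives $( \alpha^{3/2} \sigma )^{1 - n/m} \leq ( \alpha^{3/2} \sigma )^{1 - \delta}$, which is the required estimate.

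There is no real obstacle beyond bookkeeping: one must only check that the bounds of the first step are genuinely uniform in both $\alpha$ and $\sigma$, which is ensured by Lemmas~\ref{c18} and~\ref{e36}, and that the Landau--Kolmogorov constants $\mathrm{K}_{n,m}$ are finite, which is a classical fact.
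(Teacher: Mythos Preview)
Your proof is correct and follows essentially the same approach as the paper: both combine the base estimate from Proposition~\ref{c10} with the uniform commutator bounds of Lemmas~\ref{c18} and~\ref{e36}, and then interpolate via the Landau--Kolmogorov inequality applied to the function $t \mapsto \langle \cdot , e^{\ii t B^\sigma} D_{\alpha,\sigma} e^{-\ii t B^\sigma} \cdot \rangle$. The only cosmetic difference is that the paper uses a single unit vector (relying on the self-adjointness of $\ad^n_{\ii B^\sigma}(D_{\alpha,\sigma})$) where you use two, and you make explicit the harmless reduction to $\alpha^{3/2}\sigma \le 1$.
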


\begin{proof}
Let $\Phi \in \mathcal{H}$ be such that $\| \Phi \| = 1$ and let, for $s \in \mathbb{R}$,
\begin{equation*}
f(s) := \left \langle e^{\ii s B^\sigma} \Phi , \big( \varphi_\sigma(H_\alpha-E_\alpha) - \varphi_\sigma( H_{\alpha,\sigma} - E_{\alpha,\sigma} ) \big) e^{\ii s B^\sigma } \Phi \right \rangle.
\end{equation*}
It follows from Lemma \ref{c18}, Lemma \ref{e22} and Lemma \ref{e37} that 
\begin{equation*}
f^{(n)}(s) = \left \langle e^{\ii s B^\sigma} \Phi , \ad^n_{\ii B^\sigma} \big( \varphi_\sigma(H_\alpha-E_\alpha) - \varphi_\sigma( H_{\alpha,\sigma} - E_{\alpha,\sigma} ) \big) e^{\ii s B^\sigma } \Phi \right \rangle,
\end{equation*}
and that $ \| f^{(n)} \|_\infty \le \mathrm{C}_{\varphi,n}$ for all $n \in \mathbb{N}$. On the other hand, by Proposition \ref{c10}, we have $\| f \|_\infty \le \mathrm{C}_{\varphi} \alpha^{3/2} \sigma$. The Kolmogorov inequality then implies that
\begin{equation}
\big\| f^{(n)} \big\|_\infty \le \mathrm{C}_{n,m} \big\| f \big\|_\infty^{1-\frac{n}{m}} \big\| f^{(m)} \big\|_\infty^{ \frac{n}{m} }  \le \mathrm{C}_{\varphi,n,m} (\alpha^{\frac{3}{2}}\sigma)^{1-\frac{n}{m}}.
\end{equation}
for all $m \ge n$. Taking $m$ sufficiently large concludes the proof of the lemma.
\end{proof}

\begin{remark}\sl
Using a suitable Pauli-Fierz transformation as in the proof of \cite[Proposition 7]{FGS1}, one could presumably prove that
\begin{equation*}
\big\Vert \ad_{\ii B^\sigma}^{n} \big( \varphi_\sigma( H_\alpha - E_\alpha ) - \varphi_\sigma( H_{\alpha,\sigma} - E_{\alpha,\sigma} ) \big) \big\Vert \leq \mathrm{C}_{\varphi,n} \alpha^{ \frac{3}{2} } \sigma,
\end{equation*}
for all $n \in \mathbb{N} \cup \{0\}$. For the purpose of the present paper, however, the statement of Proposition \ref{e37} is sufficient.
\end{remark}

\bibliographystyle{amsalpha}

\end{document}